\documentclass[11pt]{article}
\usepackage[english]{babel}
\usepackage[margin=1in]{geometry}
\usepackage{setspace}
\setstretch{1.0}
\geometry{letterpaper}
\setlength{\parskip}{0.5em}
\usepackage{lmodern}

\usepackage{amsmath}
\usepackage{graphicx}
\usepackage[colorlinks=true, allcolors=blue]{hyperref}
\usepackage{mathrsfs}
\usepackage{multirow}

\usepackage{algorithm}
\usepackage{algpseudocode}
\usepackage{calc}
\usepackage{amsthm}
\usepackage{amsfonts}
\usepackage{tikz}
\usepackage{multicol}
\usetikzlibrary {arrows.meta, shapes}

\usepackage{thmtools}
\usepackage{thm-restate}

\usepackage{hyperref}

\usepackage{cleveref}

\newcommand\restr[2]{{
  \left.\kern-\nulldelimiterspace 
  #1 
  \littletaller 
  \right|_{#2} 
  }}
\newcommand{\myindent}[1]{
\newline\makebox[#1cm]{}}
\newcommand*{\Let}[2]{\State #1 $\gets$
\parbox[t]{\linewidth-\algorithmicindent-\widthof{ #1 $\gets$}}{#2\strut}}
\newcommand*{\LongState}[1]{\State
\parbox[t]{\linewidth-\algorithmicindent}{#1\strut}}

\newtheorem{lemma}{Lemma}
\newtheorem{theorem}{Theorem}
\newtheorem{claim}{Claim}
\newtheorem{corollary}{Corollary}

\newtheorem{definition}{Definition}
\newtheorem{observation}{Observation}

\author{Michael Elkin\\
        \vspace{-0.15cm}\small Department of Computer Science,\\
        \vspace{-0.15cm}\small Ben-Gurion University of the Negev$^{*}$,\\
        \vspace{-0.1cm}\small Beer-Sheva, Israel.\\
        \href{elkinm@cs.bgu.ac.il}{elkinm@cs.bgu.ac.il}
        \and Ariel Khuzman\\
        \vspace{-0.15cm}\small Department of Computer Science,\\
        \vspace{-0.15cm}\small Ben-Gurion University of the Negev$^{*}$,\\
        \vspace{-0.1cm}\small Beer-Sheva, Israel.\\
        \href{huzmana@post.bgu.ac.il}{huzmana@post.bgu.ac.il}}

\title{Efficient Parallel $\left(\Delta+1\right)$-Edge-Coloring}

\date{}
\begin{document}
\maketitle

\begin{abstract}
We study the $(\Delta+1)$-edge-coloring problem in the parallel ($\mathrm{PRAM}$) model of computation. The celebrated Vizing's theorem \cite{vizing1964estimate} states that every simple graph $G = (V,E)$ can be properly $(\Delta+1)$-edge-colored.
In a seminal paper, Karloff and Shmoys \cite{karloff1987efficient} devised a parallel algorithm with time $O\left(\Delta^5\cdot\log n\cdot\left(\log^3 n+\Delta^2\right)\right)$ and $O(m\cdot\Delta)$ processors. This result was improved by Liang et al. \cite{liang1996parallel} to time $O\left(\Delta^{4.5}\cdot \log^3\Delta\cdot \log n + \Delta^4 \cdot\log^4 n\right)$ and $O\left(n\cdot\Delta^{3} +n^2\right)$ processors. 
\cite{liang1996parallel} claimed $O\left(\Delta^{3.5} \cdot\log^3\Delta\cdot \log n + \Delta^3\cdot \log^4 n\right)$ time, but we point out a flaw in their analysis, which once corrected, results in the above bound.
We devise a faster parallel algorithm for this fundamental problem. Specifically, our algorithm uses $O\left(\Delta^4\cdot \log^4 n\right)$ time and $O(m\cdot \Delta)$ processors. Another variant of our algorithm requires $O\left(\Delta^{4+o(1)}\cdot\log^2 n\right)$ time, and $O\left(m\cdot\Delta\cdot\log n\cdot\log^{\delta}\Delta\right)$ processors, for an arbitrarily small $\delta>0$.
We also devise a few other tradeoffs between the time and the number of processors, and devise an improved algorithm for graphs with small arboricity.
On the way to these results, we also provide a very fast parallel algorithm for updating $(\Delta+1)$-edge-coloring. Our algorithm for this problem is dramatically faster and simpler than the previous state-of-the-art algorithm (due to \cite{liang1996parallel}) for this problem.
\end{abstract}
\renewcommand{\thefootnote}{*}
\footnotetext{This research is supported by the ISF grant 3413/25.}
\renewcommand{\thefootnote}{\arabic{footnote}}

\newpage
\section{Introduction}

\subsection{General Graphs}
Given an $n$-vertex $m$-edge undirected simple graph $G=(V,E)$, an edge-coloring 
$\varphi:E\to\mathbb{N}$ is called \emph{proper} if 
$\varphi(e)\neq\varphi(e')$ for every pair of distinct edges $e\neq e'\in E$ that share an endpoint. 
If a (proper) coloring $\varphi$ employs only colors 
$\{1,2,\ldots,t\}$, for a positive integer $t$, 
then it is called a (proper) \emph{$t$-edge-coloring} of $G$.
The celebrated Vizing's theorem \cite{vizing1964estimate} states that any (simple) graph $G$ admits a $(\Delta+1)$-edge-coloring, where $\Delta$ is the maximum degree of $G$. 
Efficient computation of 
$(\Delta+1)$-edge-coloring in various computational models~\cite{misra1992constructive, bernshteyn2022fast, bernshteyn2023fast, gabow1985algorithms, bhattacharya2025even, assadi2024faster, karloff1987efficient, liang1996parallel,furer1996parallel, liang1997parallel, barenboim2011distributed, barenboim2017deterministic, ghaffari2020improved, kowalik2025planar, chrobak1989fast, chrobak1990improved, kowalik2024edge} constitutes a fundamentally important algorithmic problem.

In this paper we focus on the edge-coloring problem in the \emph{parallel} ($\mathrm{PRAM}$) model of computation. In a classical paper, Karloff and Shmoys~\cite{karloff1987efficient} devised a $(\Delta+1)$-edge-coloring algorithm of running time $O\left(\Delta^5\cdot\log n\cdot\left(\log^3 n+\Delta^2\right)\right)$ and $O(m\cdot\Delta)$ processors, assuming the fastest currently known algorithm for computing maximal independent sets (MIS)\footnote{See Definition \ref{def: independent set} for the definition of independent set.} \cite{goldberg1989constructing} is employed as a subroutine.
Further progress was achieved by Liang et al.~\cite{liang1996parallel}. Their algorithm requires $O\left(\Delta^{4.5}\cdot\log^3\Delta\cdot\log n+\Delta^4\cdot\log^4 n\right)$ time and $O\left(n\cdot\Delta^3+n^2\right)$ processors. (In fact, they claimed time $O\left(\Delta^{3.5}\cdot\log^3\Delta\cdot\log n+\Delta^3\log^4 n\right)$, but we believe that there is a flaw in their argument (see Appendix \ref{app: Computing a Large Collection of Pairwise-Disjoint Fans}). Once corrected, their running time becomes as stated above.)
A different tradeoff was provided by~\cite{liang1997parallel}: their algorithm has running time $O\left(\Delta^9\cdot\log^2 n\right)$, but uses $O(m\cdot\Delta)$ processors.

We significantly improve upon previous bounds, and devise a host of $(\Delta+1)$-edge-coloring algorithms whose respective running times and numbers of processors are summarized in Table \ref{table: A summary of (Delta+1)-edge-coloring routines}. In particular, one variant of our algorithm has running time $O\left(\Delta^4\cdot\log^4 n\right)$ and uses $O(m\cdot\Delta)$ processors.
For $\Delta=\omega\left(\log^6 n\right)$ we have $\Delta^4\cdot\log^4 n\ll\Delta^{4.5}\cdot\log^3\Delta\cdot\log n$, i.e., our running time improves that of~\cite{liang1996parallel} in that range. The number of processors that this algorithm uses ($O(m \cdot \Delta)$) is also smaller than the number of processors $O\left(n\cdot\Delta^3+n^2 \cdot\log\Delta\right)$ of~\cite{liang1996parallel}. (Indeed, $n\cdot\Delta^3\geq m\cdot\Delta^2\gg m\cdot\Delta$.)
For polylogarithmic $\Delta$, i.e., $\Delta=\log^{O(1)} n$, one can use another variant of our algorithm that has running time $\Delta^{4+o(1)}\cdot\log^2 n$ and employs $O\left(m\cdot\Delta\cdot\frac{\log^{\delta}\Delta\cdot\log n}{\log(\Delta\cdot\log n)}\right)$ processors. In this range, our running time becomes $\Delta^4\cdot\log^{2+o(1)} n$, and the number of processors is $\tilde{O}(n)$, while the algorithm of~\cite{liang1996parallel} has running time $O\left(\Delta^4\cdot\log^4 n\right)$ and uses $O\left(n^2\right)$ processors. Another variant of our algorithm has running time $O\left(\Delta^{6}\cdot\log^2\Delta\cdot\log n\right)$ and employs $O(m \cdot\Delta)$ processors. This running time is strictly better than $O\left(\Delta^9\cdot \log^2 n\right)$, which is the running time of~\cite{liang1997parallel}. (The latter is incomparable with the aforementioned result of~\cite{liang1996parallel}.) The number of processors in the algorithm of~\cite{liang1997parallel} is also $O(m\cdot\Delta)$, like in this variant of our algorithm.

\begin{table}[h!]
    \begin{center}
    \addtolength{\leftskip} {-2cm}
    \addtolength{\rightskip}{-2cm}
    \begin{tabular}{|c | c | c | c|} 
    \hline
     Algorithm & Running Time & \# Processors \\ [0.5ex] 
     \hline
     \cite{karloff1987efficient} &  $O\left(\Delta^5\cdot\log n\cdot\left(\log^3 n+\Delta^2\right)\right)$ & $O\left(m\cdot\Delta\right)$  \\
     \hline
     \cite{liang1996parallel} &  $O\left(\Delta^{4.5}\cdot\log^3\Delta\cdot\log n+\Delta^4\cdot\log^4 n\right)$ & $O\left(n\cdot\Delta^3+n^2\right)$  \\
     \hline
     \cite{liang1997parallel} &  $O\left(\Delta^{9}\cdot\log^2 n\right)$ & $O\left(m\cdot\Delta\right)\,\,\,\,\,^{(*)}$  \\\hline
     \textbf{Ours} & $O\left(\Delta^4\cdot\log^4 n\right)$ & $O\left(m\cdot\Delta\right)$\\
     \hline
     \textbf{Ours} & $O\left(\Delta^4\cdot\log^2 n+\Delta^6\cdot\log^2\Delta\cdot\log n\right)$ & $O\left(m\cdot\Delta\right)$\\
     \hline
     \textbf{Ours} & $O\left(a^2\cdot\Delta^4\cdot\log\Delta\cdot\log^2 n\right)$ & $O\left(m\cdot\Delta\right)$\\
     \hline
     \textbf{Ours} & $O\left(\Delta^{3+o(1)}\cdot a^{1+o(1)}\cdot\log^2 n\right)=O\left(\Delta^{4+o(1)}\cdot\log^2 n\right)$ & $O\left(m\cdot\Delta\cdot\frac{\log^{\delta}\Delta\cdot\log n}{\log(\Delta\cdot\log n)}\right)\,\,\,\,\,\,^{(**)}$ \\
     \hline
     \textbf{Ours} & $O\left(\Delta^5\cdot\log^2 n\right)$ & $O\left(m\cdot\left(\Delta\cdot\log\Delta+\frac{\sqrt{\Delta}\cdot\log n}{\log(\Delta\cdot\log n)}\right)\right)$\\
     [1ex] 
     \hline
    \end{tabular}
    \end{center}
    \caption{A summary of new and existing parallel $(\Delta+1)$-edge-coloring algorithms for $n$-vertex, $m$-edge\ graphs with maximum degree $\Delta$ and arboricity $a$.\\
    $^{(*)}$ In~\cite{liang1997parallel}, the authors claim that the number of processors is $O(n+m)$.\\
    $^{(**)}$ The result holds for any constant $\delta>0$.}
    \label{table: A summary of (Delta+1)-edge-coloring routines}
\end{table}

\subsection{Graphs with Bounded Arboricity}

\emph{Arboricity} $a=a(G)$ of a graph $G=(V,E)$ is defined by
$$a(G) = \max_{\substack{U \subseteq V\\ |U| \ge 2}} \frac{|E(U)|}{|U| - 1}.$$
By Nash-Williams’ theorem, it is equal to the minimum number of edge-disjoint forests required to cover the edge set $E$ of $G$.
Graphs with bounded arboricity is an important graph family that contains planar graphs, graphs with bounded genus, graphs that exclude any fixed minor, graphs that admit sublinear recursive separators, and many other families of sparse graphs.
Zhou et al.~\cite{zhou1994edge} devised a parallel $O\left(\log^3 n\right)$-time $\Tilde{O}(n)$-work algorithm for $\Delta$-edge-coloring graphs with $\Delta\geq\left\lceil\frac{(a+2)^2}{2}\right\rceil-1$, as long as $\Delta=O(1)$.
We provide improved parallel $(\Delta+1)$-edge-coloring algorithms for graphs with $a = o(\Delta)$. Specifically, one variant of our algorithm provides time $O\left(a^2\cdot\Delta^4\cdot\log\Delta\cdot\log^2 n\right)$ and $O(m \cdot \Delta)$ processors, and another has time $O\left(\Delta^{3+o(1)}\cdot a^{1+o(1)}\cdot\log^2 n\right)$ and $O\left(m\cdot\Delta\cdot\frac{\log^{\delta}\Delta\cdot\log n}{\log(\Delta\cdot\log n)}\right)$ processors.
See Table \ref{table: A summary of (Delta+1)-edge-coloring routines} for a concise summary of these algorithms.

\subsection{$(1+\varepsilon)\Delta$-Edge-Coloring}

Building upon $(\Delta+1)$-edge-coloring algorithms of~\cite{liang1996parallel,liang1997parallel}, Liang~\cite{liang1995fast} and Furer and Raghavachari~\cite{furer1996parallel} devised two $(1+\varepsilon)\Delta$-edge-coloring algorithms, for a parameter $\varepsilon>0$. The algorithm of~\cite{liang1995fast} has running time $O\left(\varepsilon^{-4.5}\cdot\log^3\varepsilon^{-1}\cdot\log n+\varepsilon^{-4}\cdot\log^4 n\right)$ and $O\left(n\cdot\varepsilon^{-3}+n^2\right)$ processors, while that of~\cite{furer1996parallel} has time $O\left(\varepsilon^{-9}\cdot\log^2 n\right)$ and $O\left(m\cdot\varepsilon^{-1}\right)$ processors. We also employ our improved $(\Delta+1)$-edge-coloring algorithms to derive a number of $(1+\varepsilon)\Delta$-edge-coloring algorithms. In particular, we provide time $O\left(\varepsilon^{-4}\cdot\log^4 n\right)$ and $O\left(m\cdot\varepsilon^{-1}\right)$ processors, or time $O\left(\varepsilon^{-4-o(1)}\cdot\log^2 n\right)$ time and $O\left(m\cdot\varepsilon^{-1}\cdot\frac{\log^{\delta}\left(\varepsilon^{-1}\right)\cdot\log n}{\log\left(\varepsilon^{-1}\cdot\log n\right)}\right)$ processors.
See Table \ref{table: A summary of (1+eps)Delta-edge-coloring routines.} for other tradeoffs that we achieve, and a concise summary of previous and new bounds for the $(1+\varepsilon)\Delta$-edge-coloring problem.

\begin{table}[h!]
    \begin{center}
    \addtolength{\leftskip} {-2.37cm}
    \addtolength{\rightskip}{-2cm}
    \begin{tabular}{|c | c | c | c|} 
     \hline
     Algorithm & Running Time & \# Processors \\ [0.5ex] 
     \hline
     \cite{liang1995fast} &  $O\left(\varepsilon^{-4.5}\cdot\log^3\varepsilon^{-1}\cdot\log n+\varepsilon^{-4}\cdot\log^4 n\right)$ & $O\left(n\cdot\varepsilon^{-3}+n^2\right)$  \\
     \hline
     \cite{furer1996parallel}+\cite{liang1997parallel} $^{(*)}$& $O\left(\varepsilon^{-9}\cdot\log^2 n\right)$ & $O\left(m\cdot \varepsilon^{-1}\right)$\\
     \hline
     \textbf{Ours} & $O\left(\varepsilon^{-4}\cdot\log^4 n\right)$ & $O\left(m\cdot\varepsilon^{-1}\right)$\\
     \hline
     \textbf{Ours} & $O\left(\varepsilon^{-4}\cdot \log^2 n+\varepsilon^{-6}\cdot\log^2\varepsilon^{-1}\cdot\log n\right)$ & $O\left(m\cdot\varepsilon^{-1}\right)$\\
     \hline
     \textbf{Ours} & $O\left(\varepsilon^{-4-o(1)}\cdot\log^2 n\right)$ & $O\left(m\cdot\varepsilon^{-1}\cdot\frac{\log^{\delta}\left(\varepsilon^{-1}\right)\cdot\log n}{\log\left(\varepsilon^{-1}\cdot\log n\right)}\right)\,\,\,\,\,^{(**)}$ \\
     \hline
     \textbf{Ours} & $O\left(\varepsilon^{-5}\cdot\log^2 n\right)$ & $O\left(m\cdot\left(\varepsilon^{-1}\cdot\log\varepsilon^{-1}+\frac{\varepsilon^{-1/2}\cdot\log n}{\log\left(\varepsilon^{-1}\cdot\log n\right)}\right)\right)$\\
     [1ex] 
     \hline
    \end{tabular}
    \end{center}
    \caption{A summary of new and existing $(1+\varepsilon)\Delta$-edge-coloring algorithms.\\
    $^{(*)}$ The dependence on $\varepsilon$ is implicit in~\cite{furer1996parallel}. They show that when $\varepsilon$ is constant, the running time is $O\left(\log^2 n\right)$ and the number of processors is $O(m+n)$.\\
    $^{(**)}$ The result holds for any constant $\delta>0$.
}
    \label{table: A summary of (1+eps)Delta-edge-coloring routines.}
\end{table}

\subsection{Edge-Coloring Update Problem}\label{Sec: Edge-Coloring Update Problem}
A central ingredient in the algorithm of~\cite{liang1996parallel}, which is of interest by its own right, is the following dynamic version of the $(\Delta+1)$-edge-coloring problem: given a graph $G=(V,E)$ equipped with a proper $(\Delta+1)$-edge-coloring, suppose that a new vertex $v \notin V$ is added to the graph, along with at most $\Delta$ edges connecting it to the existing vertices of $G$. The \emph{dynamic update} algorithm of~\cite{liang1996parallel} solves this problem in $O\left(\Delta^{\frac{3}{2}}\cdot\log^3\Delta+\Delta\cdot\log n\right)$ time, using $O\left(n\cdot\Delta+\Delta^3\right)$ processors. We devise a completely different and a far more efficient solution for this problem. Our algorithm for it requires $O(\Delta\cdot\log n)$ time, using $O(n)$ processors. In fact, stronger than that, we can process every single edge update within $O(\log n)$ parallel time, using $O(n)$ processors.

\subsection{Technical Overview}\label{Section: Technical Overview}
\paragraph{The Algorithm of Karloff-Shmoys}
The pioneering parallel $(\Delta+1)$-edge-coloring algorithm of Karloff and Shmoys~\cite{karloff1987efficient} consists of $O\left(\Delta^5\cdot\log n\right)$ phases, each of which colors a large fraction ($\Omega\left(\frac{1}{\Delta^5}\right)$) of the remaining uncolored edges. The algorithm maintains a set $\mathcal{A}\subseteq V$ of \emph{active} vertices, that is, vertices that are incident to at least one uncolored edge. 
It constructs a graph $G_{\mathcal{A}}$ on these vertices: two vertices $u$ and $v$ are adjacent if they are at distance at most $2$ in the original graph $G$. Then the algorithm computes a maximal independent set $I$ in $G_A$, which has cardinality at least $\frac{|\mathcal{A}|}{\Delta^2}$. These vertices compute fans in parallel (see Section \ref{sec: Vizing Theorem} for the definition of fans and Figure \ref{fig:fan} for an illustration). Fan is the basic structure in Vizing's~\cite{vizing1964estimate} algorithm, and in all its subsequent efficient implementations (see, e.g.,~\cite{misra1992constructive}). Fans are characterized by a pair of admissible colors, and a useful property of fans characterized by the same pair of colors is that they can be processed in parallel. (Processing a fan colors at least one uncolored edge, while possibly recoloring some colored edges.) The algorithm of~\cite{karloff1987efficient} selects a pair of colors that characterizes the largest number of constructed fans. As there are $O\left(\Delta^2\right)$ possible pairs of admissible colors, the collection of selected fans has size $\Omega\left(\frac{|\mathcal{A}|}{\Delta^4}\right)$. These fans are then processed in parallel. As a result, $\Omega\left(\frac{|\mathcal{A}|}{\Delta^4}\right)$ uncolored edges become colored. As originally the number of uncolored edges was $O(|\mathcal{A}|\cdot\Delta)$, it follows that at least $\Omega\left(\frac{1}{\Delta^5}\right)$-fraction of uncolored edges are colored on each phase. Each phase of~\cite{karloff1987efficient} requires $O\left(\log^3 n+\Delta^2\right)$ time, resulting in an overall time of $O\left(\Delta^5\cdot\log n\cdot\left(\log^3 n+\Delta^2\right)\right)$.

\paragraph{The Algorithm of Liang-Shen-Hu}
The algorithm of Liang et al.~\cite{liang1996parallel} starts with splitting the graph into $\approx \Delta$ edge-disjoint subgraphs of constant degree. Each of these subgraphs $G_1,G_2,\ldots,G_\Delta$ is colored by a Vizing coloring (i.e., a coloring that employs $\Delta'+1$ colors, where $\Delta'$ is the maximum degree of the particular subgraph), and then these subgraphs are paired into $\left(G_1,G_2\right),\left(G_3,G_4\right),\ldots$. The edge-coloring obtained for each pair has now a surplus of at most two colors, and the algorithm of~\cite{liang1996parallel} eliminates one color from each such edge-coloring. Now we obtain $\approx \frac{\Delta}{2}$ subgraphs $G_{1,2},G_{3,4},\ldots$ of larger degree, each of which is colored by a Vizing coloring. They are again paired, one color is eliminated, etc. The most time-consuming step in this algorithm is color-elimination. Consider a graph $G=(V,E)$ equipped by a $(\Delta+2)$-edge-coloring. Edges of color $\Delta+2$ are uncolored. Denote this set of edges (that forms a matching) by $F$. Now the algorithm of Liang et al.~\cite{liang1996parallel} creates a graph on $F$, similar to the graph that the algorithm of~\cite{karloff1987efficient} creates on the active vertices, and computes an independent set $I_F$ in this graph. We believe that there is a mistake in the construction and analysis of~\cite{liang1996parallel} of this graph (see Appendix \ref{app: Computing a Large Collection of Pairwise-Disjoint Fans}), and as a result they bound $|I_F|$ by $\Omega\left(\frac{|F|}{\Delta}\right)$, while after correcting the construction, it becomes $\Omega\left(\frac{|F|}{\Delta^2}\right)$. The algorithm of~\cite{liang1996parallel} then creates fans around edges of $I_F$, and finds the largest subcollection of them that are characterized by the same pair of colors. This is done in a similar way to the algorithm of~\cite{karloff1987efficient}, and the fans are processed in a way analogous to that of~\cite{karloff1987efficient}. The cardinality of this collection is then $\Omega\left(\frac{|I_F|/\Delta^2}{\Delta^2}\right) = \Omega\left(\frac{|I_F|}{\Delta^4}\right)$. (They claim erroneously that it is $\Omega\left(\frac{|F|}{\Delta^3}\right)$.) Therefore, the algorithm of~\cite{liang1996parallel} requires $O\left(\Delta^4\cdot\log n\right)$ phases. On each phase they construct fans via an elaborate routine that reduces the problem to the edge-coloring update problem (see Section \ref{Sec: Edge-Coloring Update Problem}), reduce the latter problem to the problem of computing a set of maximal node-disjoint paths, and invoke an algorithm of~\cite{goldberg1993sublinear} for solving the latter problem. This results in running time of $\Tilde{O}\left(\sqrt{\Delta}\right)$ for this step, and overall complexity of $O\left(\Delta^{4.5}\cdot\log^3\Delta\cdot\log n+\Delta^4\cdot\log^4 n\right)$ (because of the aforementioned mistake, they claimed time $O\left(\Delta^{3.5}\cdot\log^3\Delta\cdot\log n+\Delta^4\cdot\log^4 n\right)$).

\paragraph{Our Algorithm}
Our algorithm employs the general framework of~\cite{liang1996parallel}, i.e., we also split the graph $G$ into subgraphs $G_1,\ldots,G_{\Delta}$, pair them, compute colorings with a surplus of two colors for each merged subgraph $G_{1,2},G_{3,4},\ldots$, reduce one color from the coloring of each of these subgraphs, and then proceed to the next iteration by pairing the resulting subgraphs, etc. We first diverge from~\cite{liang1996parallel} in the way that we define a graph $G^{(F)}$ on the uncolored edge set $F$: two edges $e,e'\in F$ are connected if they are at distance at most $2$ from one another (see Equation~(\ref{Equation: edges distance})). This definition guarantees that an independent set $I_F\subseteq F$ will have the property that two fans that correspond to two distinct edges of $I_F$ can be processed in parallel.
The second difference is in the way that the fans are constructed. We devise a direct and very efficient fan-constructing procedure (that also solves the edge update problem much faster than in~\cite{liang1996parallel}).
This procedure starts by constructing a graph $G_{\mathrm{fan}}^{(v,u)}$, where $(v,u)$ is an uncolored edge incident to the center vertex $v$ of the future fan. The vertex set of the graph is $\{u\} \cup \{1,2,\ldots,\Delta+1\}$. The vertex $u$ is connected to an arbitrary missing color $\beta_u$ of $u$ (i.e., a color not used by any edge incident to it) via an arc $\langle u,\beta_u\rangle$. Also, for any color $\alpha$ such that an edge $(v,w)$ incident on $v$ is $\alpha$-colored, we connect $\alpha$ (in $G_{\mathrm{fan}}^{(v,u)}$) to a missing color $\beta_w$ of $w$ via an arc $\langle \alpha,\beta_w\rangle$. It is not hard to see that a maximal path in this graph translates directly to a maximal fan centered at $u$. Such a path can also be very efficiently computed in $\mathrm{PRAM}$. Once this efficient procedure for building fans is employed, the dominating term in the running time of every single phase of 
our algorithm is the time required to compute an MIS in the graph $G^{(F)}$.
We observe that instead of an MIS, it is sufficient to construct a large independent set, and develop a number of efficient procedures for building large independent sets in $G^{(F)}$. To this end, we adapt a number of distributed vertex-coloring algorithms~\cite{barenboim2008sublogarithmic,barenboim2011deterministic, doi:10.1137/12088848X, barenboim2016deterministic, barenboim2018locally} to the parallel setting. These different procedures give rise to various tradeoffs that we obtain for the $(\Delta+1)$-edge-coloring problem.
In particular, if we use the fastest known parallel deterministic MIS algorithm due to~\cite{goldberg1989constructing}, which requires $O\left(\log^3 n\right)$ time and $O\left(\frac{n+ m}{\log n}\right)$ processors, we obtain $O\left(\Delta^4\cdot\log^4 n\right)$ time and $O(m\cdot\Delta)$ processors.
This already improves previous bounds in a wide range of parameters. (The algorithm of~\cite{liang1996parallel} also uses the MIS algorithm by~\cite{goldberg1989constructing}. Our analysis above of the algorithm of~\cite{karloff1987efficient} also assumes that the algorithm of~\cite{goldberg1989constructing} is used as a subroutine.)
But we can also compute an independent set of size $\frac{n}{a^{1+o(1)}}$, where $a$ is the arboricity, in time $O\left(\log^{2+\delta}a\cdot\log n\right)$ time, using $O\left(m\cdot\frac{\log^{\delta}a\cdot\log n}{\log(a\cdot\log n)}\right)$ processors. This results in $(\Delta+1)$-edge-coloring in $\Delta^{3+o(1)}\cdot a^{1+o(1)}\cdot\log^2 n$ time using $O\left(m\cdot\Delta\cdot\frac{\log^{\delta}\Delta\cdot\log n}{\log(\Delta\cdot\log n)}\right)$ processors, for an arbitrarily small $\delta>0$.
We also provide a few additional tradeoffs. See Table \ref{table: A summary of (Delta+1)-edge-coloring routines}.

Generally, time and work complexities of our algorithm depend on the respective complexities of the subroutine for computing large independent sets that it employs. For an $n$-vertex $m$-edge graph $G$ with maximum degree $\Delta$ and arboricity $a$, and a parameter $\lambda$, polynomial in $\Delta$ and/or $a$, we denote the time required for computing a $\lambda$-large independent set of $G$ (i.e., an independent set of size $\Omega\left(\frac{n}{\lambda}\right)$) by $IST_{\lambda}(n,\Delta,a)$, and denote the number of processors that it uses by $m\cdot ISP_{\lambda}(n,\Delta,\alpha)$. In terms of these expressions, our $(\Delta+1)$-edge-coloring algorithm requires $O\left(\lambda\left(\Delta^{2},a\cdot\Delta\right)\cdot \Delta^{2}\cdot\log n \cdot IST_{\lambda}\left(n,\Delta^{2},a\cdot\Delta\right)\right)$ time using $O\left(m\cdot \Delta \cdot ISP_{\lambda}\left(n,\Delta^{2},a\cdot\Delta\right)\right)$ processors.
In these terms, the algorithm of~\cite{karloff1987efficient} requires $O\left(\lambda\left(\Delta^2,\Delta^2\right)\cdot\Delta^3\cdot\log n\cdot\left(IST_{\lambda}\left(n,\Delta^2,\Delta^2\right)+\Delta^2\right)\right)$ time using $O\left(m\cdot\Delta\cdot ISP_{\lambda}\left(n,\Delta^2,\Delta^2\right)\right)$ processors, and the algorithm of~\cite{liang1996parallel} requires\\
$O\left(\lambda\left(\Delta^2,a\cdot\Delta\right)\cdot\Delta^2\cdot\log n\cdot\left(IST_{\lambda}\left(n,\Delta^2,a\cdot\Delta\right)+\sqrt{\Delta}\cdot\log^3\Delta\right)\right)$ time using\\
$O\left(n^2+n\cdot\Delta^3+m\cdot\Delta\cdot ISP_{\lambda}\left(n,\Delta^2,a\cdot\Delta\right)\right)=O\left(n^2+n\cdot\Delta^3\right)$ processors\footnote{The second term dominates the third one whenever $\Delta\geq ISP_{\lambda}\left(n,\Delta^2,a\cdot\Delta\right)$. In all Known routines (see Theorem~\ref{Large independent set alg}), $ISP_{\lambda}\left(n,\Delta^2,a\cdot\Delta\right)\leq \max\{\Delta,\textrm{poly}(\log n)\}$. If $\Delta\leq \textrm{poly}(\log n)$, then the first term dominates the second and the third ones.}. See Table~\ref{table: A summary of new and old parallel (Delta+1)-edge-coloring algorithms, in terms of IST and ISP}.

\begin{table}[h!]
    \begin{center}
    \addtolength{\leftskip} {-2.37cm}
    \addtolength{\rightskip}{-2cm}
    \begin{tabular}{|c | c | c | c|} 
     \hline
     Algorithm & Running Time & \# Processors \\ [0.5ex] 
     \hline
     \cite{karloff1987efficient} &  $O\left(\lambda\left(\Delta^2,\Delta^2\right)\cdot\Delta^3\cdot\log n\cdot\left(IST_{\lambda}\left(n,\Delta^2,\Delta^2\right)+\Delta^2\right)\right)$ & $O\left(m\cdot\Delta\cdot ISP_{\lambda}\left(n,\Delta^2,\Delta^2\right)\right)$  \\
     \hline
     \cite{liang1996parallel} & $O\left(\lambda\left(\Delta^2,a\cdot\Delta\right)\cdot\Delta^2\cdot\log n\cdot\left(IST_{\lambda}\left(n,\Delta^2,a\cdot\Delta\right)+\sqrt{\Delta}\cdot\log^3\Delta\right)\right)$ & $O\left(n^2+n\cdot\Delta^3\right)$\\
     \hline
     \textbf{Ours} & $O\left(\lambda\left(\Delta^{2},a\cdot\Delta\right)\cdot \Delta^{2}\cdot\log n \cdot IST_{\lambda}\left(n,\Delta^{2},a\cdot\Delta\right)\right)$ & $O\left(m\cdot \Delta \cdot ISP_{\lambda}\left(n,\Delta^{2},a\cdot\Delta\right)\right)$\\
     [1ex] 
     \hline
    \end{tabular}
    \end{center}
    \caption{A summary of new and old parallel $(\Delta+1)$-edge-coloring algorithms, in terms of the expressions $IST$ and $ISP$.}
    \label{table: A summary of new and old parallel (Delta+1)-edge-coloring algorithms, in terms of IST and ISP}
\end{table}

\subsection{Related Work}

Edge-coloring problem is a subject of very intensive investigation in the area of distributed computing. See, e.g., \cite{barenboim2011distributed, barenboim2017deterministic, bernshteyn2022fast, balliu2022distributed, bernshteyn2023fast, su2019towards, christiansen2023power}, and the references therein. See also \cite{barenboim2013distributed} for a survey of older work on this fascinating subject. However, only recently the first distributed $(\Delta+1)$-edge-coloring algorithms were devised~\cite{bernshteyn2022fast, christiansen2023power, bernshteyn2023fast}. To the best of our understanding, these algorithms do not translate into efficient parallel algorithms for this fundamental problem. Consider, for example, the state-of-the-art $(\Delta+1)$-edge-coloring algorithm of~\cite{bernshteyn2023fast}. The algorithm hinges on an auxiliary randomized routine (Theorem 8.1,~\cite{bernshteyn2023fast}) that in $O\left(\Delta^{16}\cdot\log n\right)$ distributed time outputs a subset $W$ of expected size $|W|=\Omega\left(\frac{|U|}{\Delta^{20}}\right)$, where $U$ is the set of edges that still need to be colored, along with connected pairwise disjoint $e$-augmenting subgraphs $H_e$ for every $e \in W$. (These subgraphs are \emph{multi-step Vizing chains} - for the sake of this discussion one can think of them as of extensions of classical Vizing chains. See~\cite{bernshteyn2023fast} for details.) Using these subgraphs, one can augment the current edge-coloring so that edges of $W$ will be colored too. As a result, in overall $O\left(\Delta^{20}\cdot\log n\right)$ iterations, each requiring at least $O\left(\Delta^{16}\cdot\log n\right)$ time, one would obtain a $(\Delta+1)$-edge-coloring. However, to compute this edge set $W$, one uses the full power of the distributed $\mathsf{LOCAL}$ model. For every edge $e \in E$, one collects its $O\left(\Delta^{16}\cdot\log n\right)$-neighborhood, and uses it to locally compute a multi-step Vizing chain, i.e., an $e$-augmenting subgraph. Even if this process can be efficiently implemented (say, in $O\left(\Delta^{16}\cdot\log n\right)$ time and $O(m)$ processors), one needs to execute it for all edges $e \in E$ in parallel, blowing up the number of processors to at least $\Omega\left(|E|^2\right)$. In addition, once these $e$-augmenting subgraphs $\{H_e\mid e\in E\}$ are computed, the algorithm of~\cite{bernshteyn2023fast} builds a graph in which two $H_e$'s are connected if and only if they intersect. As the graph contains $|E|$ vertices, computing an MIS on it would require (using the state-of-the-art parallel MIS algorithm of~\cite{goldberg1989constructing}) $O\left(\log^3 n\right)$ time and $O\left(|E|^2\right)$ processors. As a result, one could plausibly obtain a randomized parallel algorithm with running time $O\left(\Delta^{36}\cdot\log^5 n\right)$ and $O\left(|E|^2\right)$ processors, while we propose deterministic parallel algorithms with much smaller running time and number of processors.

\subsection{Structure of the Paper}
In Section~\ref{pram edge coloring} we focus on the $(\Delta+1)$-edge-coloring problem. In Section~\ref{Our alg} we use our new parallel $(\Delta+1)$-edge-coloring algorithm to build a more efficient $(1+\varepsilon)\Delta$-edge-coloring algorithm. Section~\ref{Sections: The Edge-Coloring Update Algorithm} is devoted to our edge-coloring update algorithm. In Appendix~\ref{app: Computing a Large Collection of Pairwise-Disjoint Fans} we describe the flaw in the algorithm of \cite{liang1996parallel}. Some proofs from Section 3 are deferred to Appendix~\ref{App: Some Proofs from Section 3}. Appendix~\ref{App: coloring} contains standard routines for edge-coloring paths and cycles, and Appendix~\ref{maximal path} is devoted to parallel computation of maximal paths. Our adaptations of distributed vertex-coloring algorithms to the parallel setting are provided in Appendix~\ref{Vertex-Coloring Algorithm}.

\section{Preliminaries}\label{sec preliminaries}

Unless stated otherwise, all the graphs in this paper are undirected.

Let $G=(V,E)$ be an undirected graph. For a vertex $v\in V$, denote the set neighbor of $v$ by $N(v)$, and its degree in $G$ by $\deg_G(v)=|N(v)|$.
We denote $|V|=n$, $|E|=m$ and the maximum degree of $G$ by $\Delta(G)=\max_{v\in V}\deg(v)$ (or $\Delta$, if the graph $G$ is clear from the context).

Let $v\in V$ and $e\in E$.
Denote by $G\setminus e$ the graph $G'=(V,E')$, where $E'=E\setminus\{e\}$.\\
Denote by $G\setminus v$ the graph $G'=(V',E')$, where $V'=V\setminus\{v\}$, and $E'=\{e\in E\mid v\notin e\}$.\\
We use the notation $v\in G$ if $v\in V$, and $e\in G$ if $e\in E$.

For a directed graph $G=(V,E)$, we say that a vertex $u$ is an \emph{outgoing neighbor} of $v$, if $\langle v,u\rangle\in E$. For $v\in V$, we denote by $\deg_{\text{out}}(v)$ the out-degree of $v$ in $G$, that is, the number of outgoing neighbors of $v$ in $G$.

\begin{definition}[Eulerian graph]
    A graph $G=(V,E)$ is called \emph{Eulerian} if and only if all its vertices have even degrees.
\end{definition}

\begin{definition}[Adjacent edges]
    Given a graph $G=(V,E)$, we say that two edges $e,e'\in E$ are adjacent, or neighbors of each other, if $e\neq e'$ and they share an endpoint.
\end{definition}

\begin{definition}[Proper edge-coloring]
    A proper $k$-edge-coloring of a graph $G=(V,E)$ is a map $\varphi:E\rightarrow\{1,2,...,k\}$, such that $\varphi(e)\neq\varphi(e')$ for every pair of adjacent edges $e,e'$.
    A proper partial $k$-edge-coloring of a graph $G=(V,E)$ is a proper edge-coloring of a graph $G'=(V,F)$ for some $F\subseteq E$.
\end{definition}

\begin{definition}[Arboricity]
    Given a graph $G=(V,E)$, the \emph{arboricity} $a(G)$ is the minimal number of edge-disjoint forests into which the graph $G$ can be decomposed. Equivalently, $a(G)=\max_{U\subseteq V,|U|\geq 2}\left\{\frac{|E(U)|}{|U|-1}\right\}$~\cite{nash1964decomposition}.
\end{definition}

\begin{definition}[Orientation]
Let $G=(V,E)$ be an undirected graph. An \emph{orientation} $\mu$ of $G$ is an assignment of directions either $\langle u,v\rangle$ or $\langle v,u\rangle$ to each edge $(u,v)$ of the graph.
For a vertex $v\in V$, the \emph{out-degree} of $v$ in the orientation is the number of outgoing edges incident to $v$.
The \emph{out-degree} of an orientation $\mu$ is the largest out-degree among all vertices in the graph.
\end{definition}

\begin{claim}\label{claim: arboricity and orientation}
    Let $G=(V,E)$ be a graph equipped with an orientation with out-degree at most $k$. The arboricity of $G$ is at most $2k$.
\end{claim}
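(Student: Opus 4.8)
The plan is to combine the Nash--Williams characterization of arboricity recorded above with an elementary counting argument on the given orientation. First I would fix an arbitrary subset $U\subseteq V$ with $|U|\ge 2$, and let $E(U)$ denote the set of edges having both endpoints in $U$. Under the given orientation, every edge of $E(U)$ has a well-defined tail, and that tail lies in $U$; moreover two distinct edges of $E(U)$ sharing a common tail $v$ are distinct outgoing edges at $v$. Therefore each edge of $E(U)$ is charged exactly once when we sum out-degrees over $U$, giving $|E(U)|\le\sum_{v\in U}\deg_{\mathrm{out}}(v)\le k\cdot|U|$, where the final inequality uses that the out-degree of the orientation is at most $k$.

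Next I would divide through by $|U|-1$. Since $|U|\ge 2$ we have $\tfrac{|U|}{|U|-1}\le 2$, and hence $\tfrac{|E(U)|}{|U|-1}\le\tfrac{k\cdot|U|}{|U|-1}\le 2k$. As $U$ was an arbitrary subset of size at least $2$, the formula $a(G)=\max_{U\subseteq V,\,|U|\ge 2}\bigl\lceil\tfrac{|E(U)|}{|U|-1}\bigr\rceil$ (Nash--Williams) immediately yields $a(G)\le 2k$; the ceiling causes no loss because $2k$ is an integer. This completes the proof.

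There is essentially no hard step here; the two points that merit a line of care are (i) the observation that summing $\deg_{\mathrm{out}}$ over $U$ counts each internal edge exactly once (so we obtain $|E(U)|\le k|U|$ rather than $2|E(U)|\le k|U|$), which is exactly where the orientation is used, and (ii) the small case $|U|=2$, for which $\tfrac{|U|}{|U|-1}=2$ is tight, so this argument cannot push the bound below $2k$. If desired, one could also invoke the other direction of Nash--Williams (that $G$ decomposes into $a(G)$ forests) to phrase the conclusion in terms of a forest decomposition, but the inequality $a(G)\le 2k$ as stated follows directly from the displayed estimate.
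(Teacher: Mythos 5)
Your proof is correct and follows essentially the same route as the paper's: bound $|E(U)|\le k\cdot|U|$ by charging each internal edge to its tail via the out-degree assumption, then use $\frac{|U|}{|U|-1}\le 2$ for $|U|\ge 2$ in the Nash--Williams formula. The paper's version is just a terser statement of the same counting argument.
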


\begin{proof}
Let $G=(V,E)$ be a graph equipped with an orientation with out-degree at most $k$.
Consider an induced subgraph $G[S]=(S,F)$ of $G$, for $S\subseteq V$, $|S|\geq 2$. Since each vertex in $S$ has at most $k$ outgoing edges, we have $|F|\leq k\cdot |S|$. We conclude that $$a(G)=\max_{U\subseteq V,|U|\geq 2}\left\{\frac{|E(U)|}{|U|-1}\right\}\leq 2k.$$
\end{proof}

\begin{definition}[Degeneracy]
Let $G=(V,E)$ be a graph. The \emph{degeneracy} of $G$ is the smallest integer $d$ such that there exists an ordering of its vertices $(v_1,v_2,\dots,v_n)$ such that each vertex $v_i$ has at most $d$ neighbors among $\{v_{i+1},v_{i+2},\dots,v_{n}\}$.
\end{definition}

It is well-known (see, e.g.,~\cite{10.5555/2534493}, Chapter 1) that the degeneracy of a graph is at most twice its arboricity.

\begin{claim}[A bound on the degeneracy]\label{claim: arboricity and degeneracy}
    In a graph $G=(V,E)$ with arboricity $a$ and degeneracy $d$, we have $d\leq 2a-1$.
\end{claim}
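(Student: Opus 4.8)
The plan is to extract a vertex ordering witnessing degeneracy $d \le 2a-1$ by repeatedly peeling off a low-degree vertex, where "low" is controlled by the arboricity bound. The only fact I need is the ratio form of the arboricity, namely that for every $U \subseteq V$ with $|U| \ge 2$ we have $|E(U)| \le a\,(|U|-1)$; this is immediate from $a(G)=\max_{U}\frac{|E(U)|}{|U|-1}$ (or, equivalently, from the Nash--Williams decomposition into $a$ edge-disjoint forests, since each forest contributes at most $|U|-1$ edges inside $U$). I will use the ratio form directly, so no separate appeal to Nash--Williams is needed.

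First I would establish the key local statement: every subgraph has a vertex of degree at most $2a-1$. Indeed, fix any $U \subseteq V$. If $|U| \le 1$ the induced subgraph has no edges and the claim is vacuous; if $|U| = k \ge 2$, then summing degrees in $G[U]$ gives $\sum_{v \in U}\deg_{G[U]}(v) = 2|E(U)| \le 2a(k-1) < 2ak$, so the minimum degree in $G[U]$ is strictly less than $2a$, i.e.\ at most $2a-1$.

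Next I would build the ordering by reverse greedy peeling. Set $U_1 = V$ and, for $i = 1, 2, \ldots, n$, let $v_i$ be a vertex of minimum degree in $G[U_i]$ and put $U_{i+1} = U_i \setminus \{v_i\}$; note $U_i = \{v_i, v_{i+1}, \ldots, v_n\}$. By the previous paragraph, $\deg_{G[U_i]}(v_i) \le 2a-1$. Since every neighbor of $v_i$ that lies in $U_i$ is one of $v_{i+1}, \ldots, v_n$, the vertex $v_i$ has at most $2a-1$ neighbors among $\{v_{i+1}, \ldots, v_n\}$. Hence the ordering $(v_1, \ldots, v_n)$ certifies that the degeneracy of $G$ is at most $2a-1$, which is the claim.

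There is essentially no hard step here; the one point that warrants care is the boundary case $|U| \le 1$ (and, relatedly, making sure the strict inequality $2a(k-1) < 2ak$ is what forces minimum degree $\le 2a-1$ rather than $\le 2a$). Everything else is a routine instantiation of the standard "peel the minimum-degree vertex" argument for degeneracy, with the sparsity input supplied by the definition of arboricity.
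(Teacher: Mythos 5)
Your proof is correct. The paper itself does not prove this claim -- it only cites it as well-known (Chapter 1 of the Barenboim--Elkin monograph) -- and your argument is exactly the standard one that the citation points to: every induced subgraph $G[U]$ has $|E(U)|\le a(|U|-1)$ edges, hence minimum degree strictly below $2a$, and greedily peeling minimum-degree vertices yields an ordering witnessing degeneracy at most $2a-1$. One small remark: the step ``strictly less than $2a$, i.e.\ at most $2a-1$'' uses that $a$ is an integer (the Nash--Williams forest count, which is the paper's primary definition); under the pure ratio form $a=\max_U |E(U)|/(|U|-1)$ the value can be fractional and the stated bound $d\le 2a-1$ would not even hold (e.g.\ $C_4$ has ratio $4/3$ but degeneracy $2$), so it is worth saying explicitly that you take $a$ to be the integer arboricity.
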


We will assume without loss of generality that there are no isolated vertices in the graph, i.e, $m\geq \frac{n}{2}$. Otherwise, we can remove isolated vertices from the graph on each level of recursion.

All our results are stated for the $\mathrm{ARBITRARY\,\,CRCW\,\,PRAM}$ model.

We represent the input graph by adjacency lists, with a separate processor designated to every vertex and to every edge.
To store an edge-coloring $\varphi$, for every edge $e$ the processor dedicated to $e$ stores $\varphi(e)$.
For every vertex $v$, the processor $p_v$ designated to $v$ stores a color missing at $v$ (or $\perp$, if there is no such color).
For every vertex $v$, we also store two hash tables $\mathrm{Color2Edge}(v)$ and $\mathrm{Edge2Color}(v)$, both of length $\deg(v)$.
Given a color $c$, the table $\mathrm{Color2Edge}(v)$ returns in $O(1)$ time whether there is an edge $e$ incident on $v$ which is $\varphi$-colored by $c$.
Given an edge $e$ that is incident on $v$, the table
$\mathrm{Edge2Color}(v)$ returns in $O(1)$ time the color $\varphi(e)$.

Let $G=(V,E)$ be an undirected graph and let $\varphi$ be some fixed partial proper edge-coloring of $G$ with at least $\Delta+1$ colors, that we will use in the sequel. For a vertex $v$ and a color $\alpha\in\{1,2,...,\Delta+1\}$, we say that the color $\alpha$ is \emph{free at $v$} if there is no edge incident on $v$, which is colored (under $\varphi$) by $\alpha$. Denote $M(v)=\{\alpha \text{ $|$ the color $\alpha$ is free at $v$}\}$. Note that for each $v\in V$ there is always at least one free color at $v$.

\section{$(\Delta+1)$-Edge-Coloring}\label{pram edge coloring}
In this section we describe our algorithm for $(\Delta+1)$-edge-coloring problem and analyse it. 

In Section \ref{sec: Vizing Theorem} we present the basic concepts needed for proving Vizing's theorem constructively. Most notably, we define fans, and present a routine for building a singe fan in parallel. Next, in Section \ref{sec: Parallel Fan-Recoloring}, we show how to build many fans in parallel. 
We then proceed (Section \ref{sec: The Edge-Coloring Algorithm}) to describing our parallel $(\Delta+1)$-edge-coloring algorithm.

\subsection{Manipulating with Fans}\label{sec: Vizing Theorem}
The constructive proof of Vizing theorem~\cite{vizing1964estimate} iteratively colors the edges of the graph, and uses a structure called \emph{fan}, defined below. See Figure \ref{fig:fan} for an example of a fan.

\begin{definition}[Fan]
    Let $G=(V,E)$ be an undirected graph and let $\varphi$ be some fixed partial proper edge-coloring of $G$. Let $v\in V$ be a vertex called the fan center. A \emph{fan} $\langle u_1,...,u_k\rangle $ of $v$ with (designated) missing colors $m(v)=m_{\varphi}(v)$ and $\langle m(u_1),...,m(u_k)\rangle=\langle m_{\varphi}(u_1),...,m_{\varphi}(u_k)\rangle$ is an ordered sequence of vertices that satisfies all the following conditions:
    \begin{enumerate}
        \item[(i)] $\langle u_1,...,u_k\rangle$ is a nonempty sequence of distinct neighboring vertices of $v$.
        \item[(ii)] For each $w\in\{v,u_1,...,u_k\}$, the color $m(w)$ is free at $w$.
        \item[(iii)] The edge $(v, u_1)$ is uncolored, and for any $i\in\{2,3,...,k\}$, the edge $(v, u_i)$ is colored $m(u_{i-1})$.
    \end{enumerate}
    \begin{itemize}
        \item The edges $\{(v,u_i)\,|\,i\in\{1,2,...,k\}\}$ are called the edges of the fan.
        \item A fan $\langle u_1,...,u_k\rangle $ of $v$ with missing colors $m(v)$ and $\langle m(u_1),...,m(u_k)\rangle$ is called a \emph{maximal fan} if it cannot be extended, that is, either $m(u_k)$ is free at $v$, or the incident edge of $v$ that is colored $m(u_k)$ is already in the fan.
        \item We say that a fan $\langle u_1,...,u_k\rangle $ of $v$ with missing colors $m(v)$ and $\langle m(u_1),...,m(u_k)\rangle$ is characterized by a pair of colors denoted by $(\alpha(v), \beta(v))$, for $\alpha(v)=m(v)$ and $\beta(v)=m(u_k)$.
    \end{itemize}
\end{definition}

\begin{figure}
    \centering
    \begin{tikzpicture}
        \node[circle] at ({180}:0.3cm)  {$v$};
        \node[circle] at ({90}:2.33cm)  {$u_1$};
        \node[circle] at ({70}:2.3cm)  {$u_2$};
        \node[circle] at ({50}:2.35cm)  {$u_3$};
        \node[circle] at ({30}:2.38cm)  {$u_4$};
        \node[circle] at ({10}:2.4cm)  {$u_5$};
        \node[circle] at ({350}:2.4cm)  {$u_6$};
        \node[circle] at ({330}:2.4cm)  {$u_7$};
        \node[circle,fill=violet] at (360:0mm) (center) {};
        \node[circle,fill=blue] at ({90}:2cm) (n1) {};
        \node[circle,fill=red] at ({70}:2cm) (n2) {};
        \node[circle,fill=pink] at ({50}:2cm) (n3) {};
        \node[circle,fill=olive] at ({30}:2cm) (n4) {};
        \node[circle,fill=cyan] at ({10}:2cm) (n5) {};
        \node[circle,fill=brown] at ({350}:2cm) (n6) {};
        \node[circle,fill=pink] at ({330}:2cm) (n7) {};
    
        \draw[dotted, line width=0.7mm] (center)--(n1);
        \draw[line width=0.7mm, blue] (center)--(n2);
        \draw[line width=0.7mm, red] (center)--(n3);
        \draw[line width=0.7mm, pink] (center)--(n4);
        \draw[line width=0.7mm, olive] (center)--(n5);
        \draw[line width=0.7mm, cyan] (center)--(n6);
        \draw[line width=0.7mm, brown] (center)--(n7);
    \end{tikzpicture}

    \caption{In all figures, the color of each vertex $u_i\in\langle u_1,u_2,...,u_7\rangle$ represents a free (i.e., missing) color at $u_i$. An uncolored edge is represented by a dotted line. This figure depicts a maximal fan $\langle u_1,u_2,...,u_7\rangle$ centered at $v$ with missing colors $\color{violet}\bullet\color{black}$ and $\langle\color{blue}\bullet\color{black},\color{red}\bullet\color{black},\color{pink}\bullet\color{black},\color{olive}\bullet\color{black},\color{cyan}\bullet\color{black},\color{brown}\bullet\color{black},\color{pink}\bullet\color{black}\rangle$, that is characterized by $(\alpha,\beta)=(\color{violet}\bullet\color{black}, \color{pink}\bullet\color{black})$.}
    \label{fig:fan}
\end{figure}

In order to compute a maximal fan with a center $v\in V$, we define an auxiliary directed graph $G^{(v,u)}_{fan}$, where $u$ is a neighbor of $v$ such that the edge $(v,u)$ is uncolored. This graph will have the property that a maximal path in this graph (that is, a simple path that cannot be extended without encountering a vertex that is already on it) starting at the vertex $u\in G^{(v,u)}_{fan}$ corresponds to a maximal fan of $v$ with the uncolored edge $(v,u)$.

\begin{definition}\label{the auxiliary graph G_fan}$\left(\mathrm{The\,\, auxiliary\,\, graph\,\,} G^{(v,u)}_{fan}\right)$\textbf{.}
    Let $G=(V,E)$ be an undirected graph and let $v\in V$ a vertex. Let $\varphi$ be some fixed partial proper edge-coloring of $G$ and let $u\in V$ be a neighbor of $v$ such that the edge $(v,u)$ is uncolored. We define an auxiliary directed graph $G^{(v,u)}_{fan}=\left(V_{fan}^{(v,u)}, E_{fan}^{(v,u)}\right)$ of $v$ over the vertex set $V^{(v,u)}_{fan}=\{u,1,2,...,\Delta+1\}$. The edges of the graph are defined as follows:
    \begin{enumerate}
        \item[(1)] For every $\alpha\in \{1,2,...,\Delta+1\}$, if there exists a neighbor $w\in V$ of $v$ such that the edge $(v, w)$ (in $G$) is $\varphi$-colored $\alpha$, we choose an arbitrary color $\beta\in M(w)$, and define a single directed edge $\langle\alpha, \beta\rangle$.
        \item[(2)] There is a single directed edge $\langle u, \beta_u\rangle$ for an arbitrary color $\beta_u\in M(u)$.
    \end{enumerate}
    Note that $G_{fan}^{(v,u)}$ has maximum out-degree 1.\\
\end{definition}

Before analysing the relationship between $G^{(v,u)}_{fan}$ and a maximal fan of $v$ with an uncolored edge $(v,u)$, we devise an efficient algorithm that given a directed graph $G=(V,E)$, with maximum out-degree at most 1, and a vertex $r\in V$, finds a maximal path in $G$ starting at $r$. We will use this algorithm in the construction of a fan. The description of this algorithm and its analysis appear in Appendix \ref{maximal path}.

\begin{restatable}[Maximal-path algorithm]{lemma}{maxPath}
\label{maximal path algorithm}
    Let $G=(V,E)$ be an $n$-vertex directed graph with maximum out-degree at most 1, and let $r\in V$ a vertex. Procedure \textsc{Maximal-Path} computes a maximal path in $G$ starting at $r$ in $O(\log n)$ time using $O(n)$ processors.
\end{restatable}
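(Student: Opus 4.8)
The plan is to reduce the maximal-path problem to the classical technique of \emph{pointer jumping} (path doubling), exploiting the crucial structural fact that a directed graph with maximum out-degree at most $1$ is a \emph{functional graph}: from the root $r$, following the unique outgoing arc repeatedly traces out a (finite) walk that either terminates at a vertex with out-degree $0$, or eventually enters a cycle. In both cases the maximal simple path starting at $r$ is precisely the initial segment of this walk up to (but not including) the first repeated vertex, or the entire walk if no vertex repeats. So the real task is twofold: (a) detect where this walk first stops being simple, and (b) list the vertices of that initial segment in order, all within $O(\log n)$ time and $O(n)$ processors.

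First I would set up the successor function: since each vertex $v$ has at most one outgoing arc, define $\mathrm{next}(v)$ to be its unique out-neighbor (or $v$ itself, a self-loop sentinel, if $\deg_{\mathrm{out}}(v)=0$). Reading this off the adjacency representation takes $O(1)$ time per vertex with $O(n)$ processors. Then I perform $\lceil \log_2 n\rceil$ rounds of pointer jumping: in round $i$, every vertex updates $\mathrm{next}(v)\gets \mathrm{next}(\mathrm{next}(v))$, while simultaneously accumulating the number of steps this pointer represents (doubling the hop-count each round). After $O(\log n)$ rounds, for the root $r$ we know the vertex reached after $2^k$ steps for every relevant $k$, which lets us compute, by binary search over the hop-counts, the length $L$ of the walk before the first repeat. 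Concretely, the walk of length $n$ from $r$ must contain a repeated vertex (pigeonhole), so I can identify the first index $j$ at which the vertex at step $j$ equals some earlier vertex; detecting equality is done by, e.g., marking for each vertex the smallest step index at which it is visited, which again is handled by pointer jumping combined with a min-operation (legal in CRCW). A clean alternative is: run pointer jumping to find, for each vertex on the walk, the vertex reached after exactly $n$ further steps; a vertex lies on the terminal cycle iff this equals the vertex reached after $2n$ steps, and then one isolates the first cycle vertex encountered from $r$. Either way, $L$ is determined in $O(\log n)$ time.

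Once $L$ is known, I output the path $r=x_0, x_1,\dots, x_L$ in order. This is the standard ``list the first $L$ elements of a linked list'' subproblem: assign to the vertex at step $t$ the rank $t$; ranks for the power-of-two positions are already known from pointer jumping, and the rank of an arbitrary position is recovered by writing $t$ in binary and composing the corresponding power-of-two jumps — equivalently, one runs a pointer-jumping/pebbling scheme that in $O(\log n)$ rounds labels each of the first $L$ walk-vertices with its distance from $r$. Collecting the vertices sorted by this label (e.g.\ by writing vertex $x_t$ into cell $t$ of an output array, which each of the $L\le n$ responsible processors can do once it knows its own label) produces the maximal path. Total time $O(\log n)$, total processors $O(n)$; correctness follows because the walk from $r$ is forced (out-degree $\le 1$), so the maximal simple path is unique and is exactly the prefix of length $L$ we extracted.

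The main obstacle I anticipate is the careful handling of the cyclic case and the precise definition of ``maximal path'': when the walk from $r$ runs into a cycle, the path must stop at the last vertex before re-entering an already-visited vertex, and one must be sure the doubling-based rank computation does not silently wrap around the cycle and produce an inconsistent labeling. The fix is to first compute $L$ exactly (the stopping index), truncate all pointers beyond step $L$ to a sentinel, and only then run the ranking/listing phase on the now-acyclic prefix; with that ordering of steps the analysis is routine. A minor secondary point is ensuring all the per-round operations (pointer updates, hop-count doubling, the min/equality tests for repeat-detection) are genuinely $O(1)$ on ARBITRARY CRCW PRAM with $O(n)$ processors, which they are since each touches a bounded number of memory cells per vertex.
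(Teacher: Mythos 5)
Your proposal is correct and follows essentially the same route as the paper's proof: pointer jumping from $r$ for $\lceil\log n\rceil$ rounds, recovering the vertex at an arbitrary position $t$ by composing power-of-two jumps according to the binary expansion of $t$, detecting the first repeated vertex via per-vertex first-occurrence indices (the paper implements your ``min-operation'' by sorting the $(\mathrm{vertex},\mathrm{position})$ pairs, which fits the $O(\log n)$ time, $O(n)$ processor budget), and truncating the walk to the prefix ending at the largest first-occurrence index. No gaps beyond that implementation detail, which the paper resolves exactly as your plan allows.
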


In the next lemma we present and analyse an efficient algorithm for constructing a maximal fan of $v$ with an uncolored edge $(v,u)$ using the graph $G^{(v,u)}_{fan}$.

\begin{lemma}[Construction of a maximal fan]\label{Fan construction}
    Let $G=(V,E)$ be a graph with maximum degree $\Delta$ and $v\in V$ be a vertex. Let $\varphi$ be a partial proper edge-coloring of $G$ and let $u_1\in V$ be a neighbor of $v$ such that the edge $(v,u_1)$ is uncolored. A maximal fan centered at $v$ with an uncolored edge $(v,u_1)$ can be computed in $O\left(\log\Delta\right)$ time using $O\left(\deg(v)\right)$ processors.
\end{lemma}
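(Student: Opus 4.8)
The plan is to establish a tight correspondence between maximal paths in the auxiliary directed graph $G^{(v,u_1)}_{fan}$ (from Definition~\ref{the auxiliary graph G_fan}) and maximal fans of $v$ with uncolored edge $(v,u_1)$, and then to invoke the \textsc{Maximal-Path} algorithm of Lemma~\ref{maximal path algorithm}. First I would note that $G^{(v,u_1)}_{fan}$ has $O(\Delta)$ vertices and at most one outgoing edge per vertex, so it can be constructed in $O(1)$ time with $O(\deg(v))$ processors: each of the $\deg(v)$ processors assigned to edges incident on $v$ reads the color $\alpha$ of its edge $(v,w)$ (via $\mathrm{Edge2Color}(v)$ in $O(1)$ time), looks up a free color $\beta\in M(w)$ stored at $p_w$, and writes the arc $\langle\alpha,\beta\rangle$; a single processor handles the arc $\langle u_1,\beta_{u_1}\rangle$. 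Reading a free color of a neighbor $w$ is $O(1)$ since $p_w$ stores one such color explicitly. Hence the graph is available in $O(1)$ time and $O(\deg(v))$ work.

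Next I would run \textsc{Maximal-Path} from the source vertex $u_1$ in $G^{(v,u_1)}_{fan}$. Since the graph has $n' = \Delta + 2 = O(\Delta)$ vertices and maximum out-degree at most $1$, Lemma~\ref{maximal path algorithm} gives a maximal path starting at $u_1$ in $O(\log \Delta)$ time using $O(\Delta) = O(\deg(v))$ processors — the latter bound holding because in the non-trivial case $v$ has degree $\Omega(\Delta)$ when the fan is large; more carefully, the auxiliary graph only has arcs emanating from $u_1$ and from the $\deg(v)$ colors present on edges at $v$, so the reachable portion has $O(\deg(v))$ vertices and we can restrict \textsc{Maximal-Path} to it. I would then read off the fan: a maximal path $u_1, \beta_1 = m(u_1), \beta_2 = m(u_2), \ldots$ in the auxiliary graph, where $\beta_i$ is the color chosen for the $i$-th vertex, determines the fan vertices $u_2, u_3, \ldots$ by $u_{i+1}$ being the neighbor $w$ of $v$ with $\varphi(v,w) = \beta_i$ (found in $O(1)$ via $\mathrm{Color2Edge}(v)$), with designated missing colors $m(v)$ (any free color of $v$) and $m(u_i) = \beta_i$.

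The main content of the argument — and the step I expect to require the most care — is proving the bijective correspondence, namely that this readout is well-defined and yields a \emph{maximal} fan, and conversely. For the forward direction: condition (i) (distinctness of $u_1,\ldots,u_k$) follows because the path is simple, so the colors $\beta_1,\ldots,\beta_{k-1}$ are distinct, and distinct colors at $v$ correspond to distinct neighbors; condition (ii) holds by construction since each $\beta_i\in M(u_i)$ and $m(v)\in M(v)$; condition (iii) holds since $(v,u_1)$ is uncolored and $\varphi(v,u_{i+1})=\beta_i=m(u_i)$ by the definition of the arcs. For maximality: the path stops either because the last vertex $\beta_{k-1}$ has no outgoing arc — meaning no edge at $v$ is colored $\beta_{k-1}$, i.e. $m(u_k)$ (where $u_k$ is the vertex with $\varphi(v,u_k) = \beta_{k-2}$... one must track indices carefully here) is free at $v$ — or because the arc from $\beta_{k-1}$ leads to a vertex already on the path, meaning the edge at $v$ colored $m(u_k)$ is already a fan edge; these are exactly the two stopping conditions in the definition of a maximal fan. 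The one subtlety to handle explicitly is the degenerate possibility that the path has length $0$ (just $u_1$) or that $\beta_{u_1}$ is already free at $v$, giving $k=1$; the definition permits $k=1$, so this is fine. I would also remark that the arbitrary choices of $\beta\in M(w)$ only affect \emph{which} maximal fan is produced, not correctness — any consistent choice yields a valid maximal fan. Combining the $O(1)$ construction, the $O(\log\Delta)$-time $O(\deg(v))$-processor call to \textsc{Maximal-Path}, and the $O(1)$-time $O(\deg(v))$-processor readout gives the claimed bounds.
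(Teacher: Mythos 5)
Your overall route is the same as the paper's: build $G^{(v,u_1)}_{fan}$ in $O(1)$ time with $O(\deg(v))$ processors, run \textsc{Maximal-Path} from $u_1$ (Lemma~\ref{maximal path algorithm}), and read the fan off the path via $\mathrm{Color2Edge}(v)$; the complexity accounting is fine. The problem is in your maximality argument for the second termination case, and it is not merely an index-bookkeeping issue. Set the path to be $(u_1,\beta_1,\dots,\beta_t)$ with $\beta_i=m(u_i)$ and $\varphi(v,u_{i+1})=\beta_i$, so the fan you read off is $\langle u_1,\dots,u_t\rangle$ and its last designated missing color is $\beta_t=m(u_t)$. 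If the path stopped because the arc out of $\beta_t$ leads to a color already on the path, then by construction there \emph{is} an edge $(v,w)$ with $\varphi(v,w)=\beta_t$, and since $\beta_1,\dots,\beta_t$ are distinct and $(v,u_1)$ is uncolored, $w\notin\{u_1,\dots,u_t\}$. So the edge at $v$ colored $m(u_t)$ is \emph{not} a fan edge — the opposite of what you assert — and the fan $\langle u_1,\dots,u_t\rangle$ is not maximal: it can still be extended by $w$.

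The paper's proof handles exactly this by appending one extra vertex beyond what the simple path supplies: it sets $u_{t+1}=w$, notes $u_{t+1}\notin\{u_1,\dots,u_t\}$ (same distinctness argument), and observes that $m(u_{t+1})$ is the repeated color $\beta'$, whose edge at $v$ is already in the fan, so $\langle u_1,\dots,u_t,u_{t+1}\rangle$ is maximal; this appending costs only $O(1)$ extra time. Your readout never adds this vertex (the \textsc{Maximal-Path} output is a simple path and does not contain the repeated color), so in this case your algorithm outputs a non-maximal fan. Maximality matters downstream — it is what makes the special vertices $x(v),y(v),z(v)$ and Procedure \textsc{Recolor-Fan} well-defined — so you need to add this case explicitly rather than claim the two stopping conditions of the path coincide verbatim with the two maximality conditions of the fan.
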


\begin{proof}
    Let $G^{(v,u_1)}_{fan}$ be the graph defined in Definition \ref{the auxiliary graph G_fan}, let $P=\langle u_1,\alpha_1,\alpha_2,...,\alpha_k\rangle$ be a maximal path in the graph $G^{(v,u_1)}_{fan}$ that starts from $u_1$, and let $\alpha\in \{1,2,...,\Delta+1\}$ be a free color at $v$. By the construction of $G^{(v,u_1)}_{fan}$, we know that:
    \begin{itemize}
        \item $\alpha_1$ is free at $u_1$.
        \item For each $i\in \{2,3,...,k\}$, there is a vertex $u_i\in V$ such that the edge $(v,u_i)$ is colored $\alpha_{i-1}$ and $\alpha_{i}$ is free at $u_i$.
    \end{itemize}
    Hence $f=\langle u_1,u_2,...,u_{k}\rangle$ is a fan of $v$ with missing colors $\alpha$ and $\langle \alpha_1,\alpha_2,...,\alpha_{k}\rangle$. This fan is, however, not necessarily a maximal one.\\
    Since $\alpha_k$ is the last vertex in the maximal path $P$ in $G^{(v,u_1)}_{fan}$, then one of the following conditions holds: Either, there is no $\alpha_{k+1}\in \{1,2,...,\Delta+1\}$ such that $\langle\alpha_k,\alpha_{k+1}\rangle\in E_{fan}^{(v,u_1)}$, or there exists $\langle\alpha_k,\alpha_{k+1}\rangle\in E_{fan}^{(v,u_1)}$, such that $\alpha_{k+1}\in \{\alpha_1,\alpha_2,...,\alpha_{k}\}$. We next analyse these two cases.
    \begin{itemize}
        \item[(1)] In the first case $\alpha_{k}$ is free at $v$. (Otherwise, there is a vertex $u_{k+1}$ such that the edge $(v,u_{k+1})$ is colored $\alpha_k$. But then there exists a missing color $\alpha_{k+1}\in M\left(u_{k+1}\right)$ such that $\left\langle\alpha_k,\alpha_{k+1}\right\rangle\in E^{(v,u_1)}_{fan}$.) Hence $\langle u_1,u_2,...,u_{k}\rangle$ is a maximal fan of $v$ with missing colors $\alpha$ and $\langle \alpha_1,\alpha_2,...,\alpha_{k}\rangle$.
        \item[(2)] In the second case there is an edge $(v,u_{k+1})$ that is colored $\alpha_k$, and the color $\alpha_{k+1}$ is free at $u_{k+1}$. In addition, since $\alpha_{k+1}$ already appeared before in the path, the incident edge of $v$ that is colored $\alpha_{k+1}$ is already in the fan. Observe that since $\varphi(v,u_{k+1})=\alpha_k$, and $\alpha_1,\alpha_2,...,\alpha_k$ are all distinct colors $\left(\text{as $P$ is a simple path in $G^{(v,u_1)}_{fan}$}\right)$, it follows that $u_{k+1}\notin\{u_1,u_2,...,u_k\}$. Hence $f'=f\circ(u_{k+1})=\langle u_1,u_2,...,u_{k},u_{k+1}\rangle$ is a maximal fan of $v$ with missing colors $\alpha$ and $\langle \alpha_1,\alpha_2,...,\alpha_{k}, \alpha_{k+1}\rangle$ (see Figure \ref{fig:fan construction}).
    \end{itemize}
    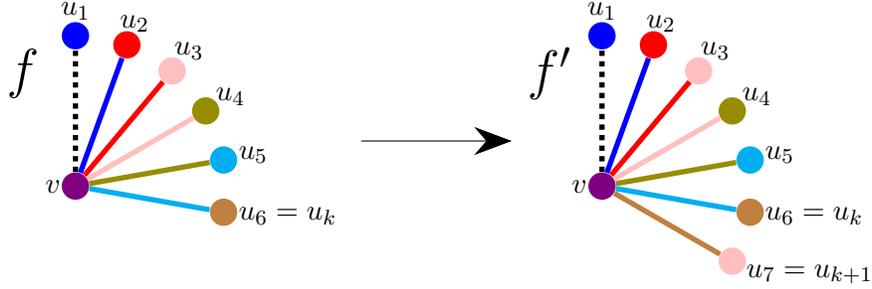
\begin{figure}
    \centering
    \begin{tikzpicture}
        \begin{scope}[xshift=-5cm]
        \node[circle] at ({180}:0.3cm)  {$v$};
        \node[circle] at ({90}:2.33cm)  {$u_1$};
        \node[circle] at ({70}:2.3cm)  {$u_2$};
        \node[circle] at ({50}:2.35cm)  {$u_3$};
        \node[circle] at ({30}:2.38cm)  {$u_4$};
        \node[circle] at ({10}:2.4cm)  {$u_5$};
        \node[circle] at ({352}:2.85cm)  {$u_6=u_k$};
        \node[circle,fill=violet] at (360:0mm) (center) {};
        \node[circle,fill=blue] at ({90}:2cm) (n1) {};
        \node[circle,fill=red] at ({70}:2cm) (n2) {};
        \node[circle,fill=pink] at ({50}:2cm) (n3) {};
        \node[circle,fill=olive] at ({30}:2cm) (n4) {};
        \node[circle,fill=cyan] at ({10}:2cm) (n5) {};
        \node[circle,fill=brown] at ({350}:2cm) (n6) {};
    
        \draw[dotted, line width=0.7mm] (center)--(n1);
        \draw[line width=0.7mm, blue] (center)--(n2);
        \draw[line width=0.7mm, red] (center)--(n3);
        \draw[line width=0.7mm, pink] (center)--(n4);
        \draw[line width=0.7mm, olive] (center)--(n5);
        \draw[line width=0.7mm, cyan] (center)--(n6);

        \end{scope}

        \begin{scope}[xshift=2cm]
        \node[circle] at ({180}:0.3cm)  {$v$};
        \node[circle] at ({90}:2.33cm)  {$u_1$};
        \node[circle] at ({70}:2.3cm)  {$u_2$};
        \node[circle] at ({50}:2.35cm)  {$u_3$};
        \node[circle] at ({30}:2.38cm)  {$u_4$};
        \node[circle] at ({10}:2.4cm)  {$u_5$};
        \node[circle] at ({352}:2.85cm)  {$u_6=u_k$};
        \node[circle] at ({337}:3cm)  {$u_7=u_{k+1}$};
        \node[circle,fill=violet] at (360:0mm) (center) {};
        \node[circle,fill=blue] at ({90}:2cm) (n1) {};
        \node[circle,fill=red] at ({70}:2cm) (n2) {};
        \node[circle,fill=pink] at ({50}:2cm) (n3) {};
        \node[circle,fill=olive] at ({30}:2cm) (n4) {};
        \node[circle,fill=cyan] at ({10}:2cm) (n5) {};
        \node[circle,fill=brown] at ({350}:2cm) (n6) {};
        \node[circle,fill=pink] at ({330}:2cm) (n7) {};
    
        \draw[dotted, line width=0.7mm] (center)--(n1);
        \draw[line width=0.7mm, blue] (center)--(n2);
        \draw[line width=0.7mm, red] (center)--(n3);
        \draw[line width=0.7mm, pink] (center)--(n4);
        \draw[line width=0.7mm, olive] (center)--(n5);
        \draw[line width=0.7mm, cyan] (center)--(n6);
        \draw[line width=0.7mm, brown] (center)--(n7);

        \end{scope}
        
        \draw [-{Stealth[length=5mm]}] (-1.2,0.6) -- (0.8,0.6);
        
        \node at (-5.7,1.5) {\huge $f$};
        \node at (1.3,1.5) {\huge $f'$};

    \end{tikzpicture}

    \caption{The last step of the construction process of a maximal fan. In this example, $\alpha_k=\color{brown}\bullet\color{black}$,  $\alpha_{k+1}=\alpha_3=\color{pink}\bullet\color{black}$ and $(v,u_4)$ is colored $\color{pink}\bullet\color{black}$ and is already in the fan. Also, $(v,u_{k+1})$ is colored $\color{brown}\bullet\color{black}$. This color did not appear in the fan $f$.}
    \label{fig:fan construction}
\end{figure}

    We now analyse the complexity of this construction. 
    \begin{itemize}
        \item For the construction of $G^{(v,u_1)}_{fan}$, for each incident edge $(v,w)$ of $v$, we designate a processor. If the edge is colored $\alpha$, the processor chooses an arbitrary color $\beta\in M(w)$, and adds the directed edge $\langle \alpha,\beta\rangle$ to $G^{(v,u_1)}_{fan}$. In addition, we designate a processor that chooses an arbitrary color $\beta\in M(u_1)$, and add the directed edge $\langle u_1,\beta\rangle$ to $G^{(v,u_1)}_{fan}$. This process requires $O(1)$ time and $O(\deg(v))$ processors.
        \item Observe that the number of vertices in $G^{(v,u_1)}_{fan}$ is $\deg(v)+2\leq \Delta+2$.
        Hence by Lemma \ref{maximal path algorithm}, computing a maximal path in $G^{(v,u_1)}_{fan}$ starting at $u_1$ requires $O\left(\log\Delta\right)$ time using $O(\deg(v))$ processors.
        \item We designate a processor $p(\alpha_i)$ to every color $\alpha_i$ in the computed maximal  path $P=(u_1,\alpha_1,\alpha_2,...,\alpha_k)$. The processor $p(\alpha_1)$ adds the edge $(v,u_1)$ to the fan. Also, the processor $p(\alpha_i)$, for $i\in \{1,2,...,k-1\}$ fetches the vertex $u_{i+1}$ such that $\varphi(v,u_{i+1})=\alpha_i$ (and $\alpha_{i+1}\in M(u_{i+1})$) from the data structure $\mathrm{Color2Edge}(v)$, and adds the edge $(v,u_{i+1})$ to the ($(i+1)$st place of the) fan. If step 2 of the construction was invoked, then the vertex $u_{k+1}$ that was computed on this step is appended to the end of the fan sequence within additional $O(1)$ time. This process requires $O(1)$ time using $O(\deg(v))$ processors.
    \end{itemize}
    We conclude that the construction of a maximal fan centered at $v$ with an uncolored edge $(v,u_1)$ requires $O\left(\log\deg(v)\right)=O\left(\log\Delta\right)$ time using $O\left(\deg (v)\right)$ processors.
\end{proof}

We next describe how fans are used. Recall that a fan contains exactly one uncolored edge. We will use a maximal fan in order to color its uncolored edge. Before we describe this coloring procedure, we present some definitions and notations that we will use later.\\
For a fan $f=\langle u_1,u_2,...,u_k\rangle$ of $v\in V$, we define a \textit{rotation} of this fan by recoloring each edge $(v,u_i)$, for $i\in\{1,2,...,k-1\}$, with the color of the edge $(v,u_{i+1})$ and uncoloring the edge $(v,u_k)$. See Figure \ref{fig:rotation} for an illustration of a rotation. Observe that by the definition of a fan (the color of $(v,u_{i+1})$ must be missing at $u_i$), this process defines a new proper partial edge-coloring of $G$ in which the edge $(v,u_k)$ becomes uncolored, $(v,u_1)$ becomes colored (and all the other colored edges of $f$ get recolored).

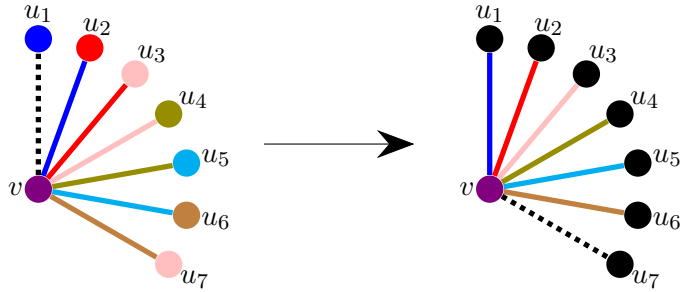
\begin{figure}
    \centering
    \begin{tikzpicture}

        \begin{scope}[xshift=-5cm]
        \node[circle] at ({180}:0.3cm)  {$v$};
        \node[circle] at ({90}:2.33cm)  {$u_1$};
        \node[circle] at ({70}:2.3cm)  {$u_2$};
        \node[circle] at ({50}:2.35cm)  {$u_3$};
        \node[circle] at ({30}:2.38cm)  {$u_4$};
        \node[circle] at ({10}:2.4cm)  {$u_5$};
        \node[circle] at ({350}:2.4cm)  {$u_6$};
        \node[circle] at ({330}:2.4cm)  {$u_7$};
        \node[circle,fill=violet] at (360:0mm) (center) {};
        \node[circle,fill=blue] at ({90}:2cm) (n1) {};
        \node[circle,fill=red] at ({70}:2cm) (n2) {};
        \node[circle,fill=pink] at ({50}:2cm) (n3) {};
        \node[circle,fill=olive] at ({30}:2cm) (n4) {};
        \node[circle,fill=cyan] at ({10}:2cm) (n5) {};
        \node[circle,fill=brown] at ({350}:2cm) (n6) {};
        \node[circle,fill=pink] at ({330}:2cm) (n7) {};
    
        \draw[dotted, line width=0.7mm] (center)--(n1);
        \draw[line width=0.7mm, blue] (center)--(n2);
        \draw[line width=0.7mm, red] (center)--(n3);
        \draw[line width=0.7mm, pink] (center)--(n4);
        \draw[line width=0.7mm, olive] (center)--(n5);
        \draw[line width=0.7mm, cyan] (center)--(n6);
        \draw[line width=0.7mm, brown] (center)--(n7);

        \end{scope}

        \begin{scope}[xshift=1cm]
        \node[circle] at ({180}:0.3cm)  {$v$};
        \node[circle] at ({90}:2.33cm)  {$u_1$};
        \node[circle] at ({70}:2.3cm)  {$u_2$};
        \node[circle] at ({50}:2.35cm)  {$u_3$};
        \node[circle] at ({30}:2.38cm)  {$u_4$};
        \node[circle] at ({10}:2.4cm)  {$u_5$};
        \node[circle] at ({350}:2.4cm)  {$u_6$};
        \node[circle] at ({330}:2.4cm)  {$u_7$};
        \node[circle,fill=violet] at (360:0mm) (center) {};
        \node[circle, fill=black] at ({90}:2cm) (n1) {};
        \node[circle, fill=black] at ({70}:2cm) (n2) {};
        \node[circle, fill=black] at ({50}:2cm) (n3) {};
        \node[circle, fill=black] at ({30}:2cm) (n4) {};
        \node[circle, fill=black] at ({10}:2cm) (n5) {};
        \node[circle, fill=black] at ({350}:2cm) (n6) {};
        \node[circle, fill=black] at ({330}:2cm) (n7) {};
    
        \draw[line width=0.7mm, blue] (center)--(n1);
        \draw[line width=0.7mm, red] (center)--(n2);
        \draw[line width=0.7mm, pink] (center)--(n3);
        \draw[line width=0.7mm, olive] (center)--(n4);
        \draw[line width=0.7mm, cyan] (center)--(n5);
        \draw[line width=0.7mm, brown] (center)--(n6);
        \draw[dotted, line width=0.7mm] (center)--(n7);
        \end{scope}

        \draw [-{Stealth[length=5mm]}] (-2,0.6) -- (0,0.6);
    \end{tikzpicture}

    \caption{Rotation of a fan $\langle u_1,u_2,...,u_7\rangle$ centered at $v$. In all figures, black color indicates a "wild-card", i.e., an undetermined missing color.}
    \label{fig:rotation}
\end{figure}

Let $\alpha,\beta\in \{1,2,...,\Delta+1\}$ be two colors. We define the graph $G_{\alpha,\beta}=(V,E_{\alpha,\beta})$, where $E_{\alpha,\beta}=\{e\in E\,|\,\varphi(e)=\alpha \text{ or } \varphi(e)=\beta\}$. Observe that since $\varphi$ is a proper edge-coloring, then $\Delta\left(G_{\alpha,\beta}\right)\leq 2$ (each vertex has at most one incident edge of each color). Hence $G_{\alpha,\beta}$ consists of a collection of simple paths and even length cycles. We call a path connected component in $G_{\alpha,\beta}$ an \textit{$\alpha\beta$-path} (an $\alpha\beta$-path might be a single vertex, for vertices in which both $\alpha$ and $\beta$ are missing). For an $\alpha\beta$-path $P=(w_1,w_2,...,w_l)$ in $G_{\alpha,\beta}$, we define \textit{exchanging} of this path to be recoloring all $\alpha$-colored edges of $P$ by $\beta$, and vice versa. For a color $\gamma\in\{\alpha,\beta\}$, let $\overline{\gamma}$ denote the color such that $\{\overline{\gamma}\}=\{\alpha,\beta\}\setminus\{\gamma\}$. An $\alpha\beta$-path $P$ is necessarily a maximal one, i.e., $\overline{\varphi(w_1,w_2)}\in M(w_1)$ and $\overline{\varphi(w_{l-1},w_l)}\in M(w_l)$. Exchanging of $P$ defines a new partial proper edge-coloring of $G$, in which $\varphi(w_1,w_2)$ is now free at $w_1$, $\varphi(w_{l-1},w_l)$ is free at $w_l$, and $M(w_i)$ is unchanged for each $i\in\{2,3,...,l-1\}$. Note that after the exchanging, the edge-coloring is still proper. 
\begin{observation}\label{path endpoint}
    For a vertex $v\in V$ such that $\left|\{\alpha, \beta\}\cap M(v)\right|\geq 1$, $v$ is an endpoint of an $\alpha\beta$-path $P_{\alpha,\beta}$ in $G_{\alpha,\beta}$. (If $\alpha,\beta$ are both in $M(v)$, then $P_{\alpha,\beta}=\{v\}$.)
\end{observation}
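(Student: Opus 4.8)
The plan is to exploit the structural facts about $G_{\alpha,\beta}$ that were established just before the statement: namely, that $\Delta(G_{\alpha,\beta})\leq 2$, so that every connected component of $G_{\alpha,\beta}$ is either a simple path (possibly a single vertex) or an even-length cycle. Given this, the whole argument reduces to a degree count at $v$.

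First I would observe that if $|\{\alpha,\beta\}\cap M(v)|\geq 1$, then at least one of the colors $\alpha,\beta$ is free at $v$, and hence there is at most one edge of $E_{\alpha,\beta}$ incident on $v$ (at most one incident edge colored $\alpha$ and at most one colored $\beta$, and one of these two possibilities is excluded). Therefore $\deg_{G_{\alpha,\beta}}(v)\in\{0,1\}$. Next I would split into the two cases. If $\deg_{G_{\alpha,\beta}}(v)=0$, then $\{v\}$ is itself a connected component of $G_{\alpha,\beta}$, which is a (degenerate) $\alpha\beta$-path, and $v$ is trivially its endpoint; moreover this case occurs precisely when both $\alpha$ and $\beta$ lie in $M(v)$, giving $P_{\alpha,\beta}=\{v\}$. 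If $\deg_{G_{\alpha,\beta}}(v)=1$, then the component $C$ of $G_{\alpha,\beta}$ containing $v$ cannot be a cycle, since every vertex of a cycle in $G_{\alpha,\beta}$ has degree exactly $2$; hence $C$ is a simple path, i.e. an $\alpha\beta$-path, and a vertex of degree $1$ in a path is one of its two endpoints. In either case $v$ is an endpoint of an $\alpha\beta$-path $P_{\alpha,\beta}$, as claimed.

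I do not anticipate any real obstacle here; the only thing to be slightly careful about is the degenerate single-vertex case, making sure that the definition of ``$\alpha\beta$-path'' as given (a path connected component, possibly a single vertex) indeed covers it, and that the parenthetical claim ``if $\alpha,\beta$ are both in $M(v)$ then $P_{\alpha,\beta}=\{v\}$'' is matched exactly to the $\deg_{G_{\alpha,\beta}}(v)=0$ case. Everything else is an immediate consequence of $\Delta(G_{\alpha,\beta})\leq 2$.
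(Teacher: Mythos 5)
Your argument is correct and matches the paper's (implicit) justification: the observation is stated without a separate proof precisely because, as you note, it follows immediately from $\Delta(G_{\alpha,\beta})\leq 2$ together with the degree count at $v$ (degree $0$ or $1$ since at least one of $\alpha,\beta$ is free at $v$), which forces $v$ onto a path component as an endpoint. Your case split, including matching the single-vertex path to the case $\alpha,\beta\in M(v)$, is exactly the intended reasoning.
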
 
For an endpoint $w$ of an $\alpha\beta$-path $P$, we may refer to $P$ as \textit{the $\alpha\beta$-path of $w$}. For a fan $f=\langle u_1,u_2,...,u_k\rangle$ centered at some $v\in V$ and characterized by $(\alpha,\beta)$, we also refer to the $\alpha\beta$-path of $v$ as \textit{the $\alpha\beta$-path of $f$} (see Figure \ref{fig:alpha beta path}).

    \begin{figure}
    \centering
    \begin{tikzpicture}
    \begin{scope}[xshift=-3cm]
        \node[circle] at ({180}:0.3cm)  {$v$};
        \node[circle] at ({90}:2.33cm)  {$u_1$};
        \node[circle] at ({70}:2.3cm)  {$u_2$};
        \node[circle] at ({50}:2.35cm)  {$u_3$};
        \node[circle] at ({25}:2.38cm)  {$u_4$};
        \node[circle] at ({10}:2.4cm)  {$u_5$};
        \node[circle] at ({350}:2.4cm)  {$u_6$};
        \node[circle] at ({330}:2.4cm)  {$u_7$};
        \node[circle,fill=violet] at (360:0mm) (center) {};
        \node[circle,fill=blue] at ({90}:2cm) (n1) {};
        \node[circle,fill=red] at ({70}:2cm) (n2) {};
        \node[circle,fill=pink] at ({50}:2cm) (n3) {};
        \node[circle,fill=olive] at ({30}:2cm) (n4) {};
        \node[circle,fill=cyan] at ({10}:2cm) (n5) {};
        \node[circle,fill=brown] at ({350}:2cm) (n6) {};
        \node[circle,fill=pink] at ({330}:2cm) (n7) {};
    
        \draw[dotted, line width=0.7mm] (center)--(n1);
        \draw[line width=0.7mm, blue] (center)--(n2);
        \draw[line width=0.7mm, red] (center)--(n3);
        \draw[line width=0.7mm, pink] (center)--(n4);
        \draw[line width=0.7mm, olive] (center)--(n5);
        \draw[line width=0.7mm, cyan] (center)--(n6);
        \draw[line width=0.7mm, brown] (center)--(n7);

        \node[circle, fill=black] at (3,2) (n8) {};
        \node[circle, fill=black] at (3.5,0.5) (n9) {};
        \node[circle, fill=black] at (4,2) (n10) {};
        \node[circle, fill=black] at (4.5,0.5) (n11) {};
        \node[circle, fill=black] at (5,2) (n12) {};
        \node[circle, fill=black] at (5.5,0.5) (n13) {};

        \draw[line width=0.7mm, violet] (n4)--(n8);
        \draw[line width=0.7mm, pink] (n8)--(n9);
        \draw[line width=0.7mm, violet] (n9)--(n10);
        \draw[line width=0.7mm, pink] (n10)--(n11);
        \draw[line width=0.7mm, violet] (n11)--(n12);
        \draw[line width=0.7mm, pink] (n12)--(n13);
    \end{scope}

        \node at (-3.8,1) {\Large$f$};
        \node at (1,2.8) {\Large$P$};
    \end{tikzpicture}

    \caption{$P$ is the $\alpha\beta$-path of a maximal fan $f$. The fan is characterized by $(\alpha,\beta)=(\color{violet}\bullet\color{black},\color{pink}\bullet\color{black})$.}
    \label{fig:alpha beta path}
\end{figure}

\begin{definition}\label{def: special vertices}
    For a maximal fan $\langle u_1,u_2,...,u_k\rangle$ centered at $v\in V$ and characterized by $(\alpha,\beta)$, if there exists an index $i\in \{2,3,...,k-1\}$ such that $(v,u_i)$ is colored $\beta$, define $x(v)=u_i$, $y(v)=u_k$ and $z(v)=u_{i-1}$. Otherwise, there are no edges colored $\beta$ that are incident on $v$. (Recall that $m(u_k)=\beta$, and thus $\varphi(v,u_k)\neq \beta$. Also, if there were an edge $(v,w)$ colored by $\beta$, it could be added to the fan $f$ as $w=u_{k+1}$, contradicting the maximality of $f$.) In this case we define $x(v)=\emptyset$, $y(v)=u_k$ and $z(v)=\emptyset$.
\end{definition}
 Note that if $x(v),z(v)\neq\emptyset$, then $x(v)$ and $z(v)$ are two consecutive vertices in the fan ($z(v)=u_{i-1}$ and $x(v)=u_i$ for some $i\in\{2,3,...,k-1\}$), and $x(v),y(v)$ and $z(v)$ are three distinct vertices in the fan. See Figure \ref{fig:special vertices} for an illustration.

\begin{figure}
    \centering
    \begin{tikzpicture}

        \begin{scope}[xshift=-4cm]
        \node[circle] at ({180}:0.3cm)  {$v$};
        \node[circle] at ({90}:2.33cm)  {$u_1$};
        \node[circle] at ({70}:2.33cm)  {$u_2$};
        \node[circle] at ({37}:2.8cm)  {$u_3=z(v)$};
        \node[circle] at ({22}:2.9cm)  {$u_4=x(v)$};
        \node[circle] at ({10}:2.4cm)  {$u_5$};
        \node[circle] at ({350}:2.4cm)  {$u_6$};
        \node[circle] at ({338}:2.9cm)  {$u_7=y(v)$};
        \node[circle,fill=violet] at (360:0mm) (center) {};
        \node[circle,fill=blue] at ({90}:2cm) (n1) {};
        \node[circle,fill=red] at ({70}:2cm) (n2) {};
        \node[circle,fill=pink] at ({50}:2cm) (n3) {};
        \node[circle,fill=olive] at ({30}:2cm) (n4) {};
        \node[circle,fill=cyan] at ({10}:2cm) (n5) {};
        \node[circle,fill=brown] at ({350}:2cm) (n6) {};
        \node[circle,fill=pink] at ({330}:2cm) (n7) {};
    
        \draw[dotted, line width=0.7mm] (center)--(n1);
        \draw[line width=0.7mm, blue] (center)--(n2);
        \draw[line width=0.7mm, red] (center)--(n3);
        \draw[line width=0.7mm, pink] (center)--(n4);
        \draw[line width=0.7mm, olive] (center)--(n5);
        \draw[line width=0.7mm, cyan] (center)--(n6);
        \draw[line width=0.7mm, brown] (center)--(n7);
    
        \end{scope}

        \begin{scope}[xshift=2cm]
        \node[circle] at ({180}:0.3cm)  {$v$};
        \node[circle] at ({90}:2.33cm)  {$u_1$};
        \node[circle] at ({70}:2.33cm)  {$u_2$};
        \node[circle] at ({47}:2.39cm)  {$u_3$};
        \node[circle] at ({28}:2.4cm)  {$u_4$};
        \node[circle] at ({10}:2.4cm)  {$u_5$};
        \node[circle] at ({350}:2.4cm)  {$u_6$};
        \node[circle] at ({338}:2.9cm)  {$u_7=y(v)$};
        \node[circle,fill=violet] at (360:0mm) (center) {};
        \node[circle,fill=blue] at ({90}:2cm) (n1) {};
        \node[circle,fill=red] at ({70}:2cm) (n2) {};
        \node[circle,fill=purple] at ({50}:2cm) (n3) {};
        \node[circle,fill=olive] at ({30}:2cm) (n4) {};
        \node[circle,fill=cyan] at ({10}:2cm) (n5) {};
        \node[circle,fill=brown] at ({350}:2cm) (n6) {};
        \node[circle,fill=pink] at ({330}:2cm) (n7) {};
    
        \draw[dotted, line width=0.7mm] (center)--(n1);
        \draw[line width=0.7mm, blue] (center)--(n2);
        \draw[line width=0.7mm, red] (center)--(n3);
        \draw[line width=0.7mm, purple] (center)--(n4);
        \draw[line width=0.7mm, olive] (center)--(n5);
        \draw[line width=0.7mm, cyan] (center)--(n6);
        \draw[line width=0.7mm, brown] (center)--(n7);
    
        \end{scope}

        \node at(5.2,1.5){$x(v)=\emptyset$};
        \node at(5.2,2){$z(v)=\emptyset$};

    \end{tikzpicture}

    \caption{Two examples of a fan $\langle u_1,u_2,...,u_7\rangle$ centered at $v$ and characterized by $(\alpha,\beta)=(\color{violet}\bullet,\color{pink}\bullet\color{black})$ with its special vertices $x(v)$, $y(v)$ and $z(v)$, in the two possible cases.}
    \label{fig:special vertices}
\end{figure}
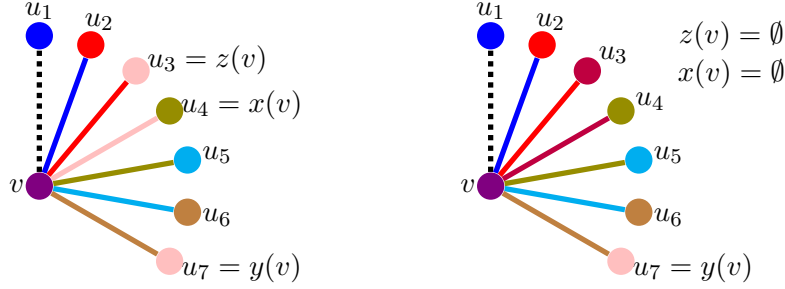

In order to compute $\alpha\beta$-paths, we use a result of Shiloach and Vishkin~\cite{shiloach1982logn} that computes connected components of a graph.

\begin{restatable}[Connected components~\cite{shiloach1982logn}]{lemma}{connComp}
\label{Connected components algorithm}
    Let $G=(V,E)$ be an $n$-vertex $m$-edge graph. There is a deterministic $\mathrm{CRCW\,\,PRAM}$ algorithm that computes connected components of $G$ in $O\left(\log n\right)$ time using $O\left(n+m\right)$ processors.
\end{restatable}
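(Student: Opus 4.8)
The plan is to reproduce the classical Shiloach--Vishkin connectivity algorithm~\cite{shiloach1982logn}, which I now sketch. We maintain a single array $D$ of parent pointers, with one processor per vertex and one per edge; initially $D[v]=v$ for all $v\in V$, so that each vertex forms a singleton rooted tree. The algorithm proceeds in $O(\log n)$ rounds, each a constant number of \emph{hooking} and \emph{shortcutting} sub-steps. In a hooking sub-step, each edge-processor for $(u,v)$ attempts, whenever $D[u]$ and $D[v]$ are distinct roots, to set $D[\max\{D[u],D[v]\}]\gets\min\{D[u],D[v]\}$; choosing the direction by index guarantees acyclicity, and simultaneous writes into the same root are resolved by the $\mathrm{ARBITRARY\,\,CRCW}$ convention. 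To avoid stagnation (when index-based hooking makes no progress because every tree already carries its locally-minimal root) one interleaves a \emph{conditional hooking} sub-step that hooks a stagnant tree onto a neighbouring tree in the opposite direction. In a shortcutting sub-step, every vertex-processor performs the pointer jump $D[v]\gets D[D[v]]$, which at least halves the depth of every rooted tree.

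For correctness one argues, as usual, that $D$ always encodes a forest whose trees refine the connected components of $G$, that hooking only ever merges two trees lying in the same component, and that upon termination (when no hooking step can fire) each connected component is a single rooted star, whose root serves as the component identifier. For the round bound, the key invariant is that after the hooking and shortcutting of a round, every tree that is not yet a star has an appropriate potential — e.g.\ $\sum_{T}\lceil\log(\mathrm{height}(T)+1)\rceil$ over the non-star trees $T$ — reduced by a constant fraction, while the number of trees never increases; together these force termination within $O(\log n)$ rounds. Each round takes $O(1)$ time and $O(n+m)$ processors, giving the claimed $O(\log n)$ time and $O(n+m)$ processors.

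The main obstacle is the round-count analysis: one must simultaneously control the number of trees and their heights and, in particular, handle the stagnant case, where unconditional (index-based) hooking alone can fail to merge any pair of trees even though the forest is not yet a union of component-stars. The standard remedy of interleaving conditional hooking requires its own (routine but slightly delicate) argument that some non-star tree makes progress in every round; this is where the bulk of the work lies, and I would simply follow~\cite{shiloach1982logn} there.
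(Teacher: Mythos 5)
The paper does not prove this lemma at all: it is imported verbatim as a black-box result of Shiloach and Vishkin~\cite{shiloach1982logn}, so your sketch of hooking, conditional (stagnant) hooking, and pointer-jumping with an $O(\log n)$ potential argument is just an outline of the cited algorithm, and you correctly defer the only delicate part (progress in the stagnant case) to that source. Nothing more is required here, and your sketch is consistent with the cited result and with how the paper uses it.
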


We are now ready to describe how we use maximal fans and $\alpha\beta$-paths in order to color a single uncolored edge.
Let $\langle u_1,...,u_k\rangle$ be a maximal fan of $v\in V$ that is characterized by $(\alpha,\beta)$. We present an algorithm by Misra and Gries~\cite{misra1992constructive} (based on Vizing's constructive proof~\cite{vizing1964estimate}) that defines a new partial proper edge-coloring $\varphi'$ of the graph $G$. The support of $\varphi'$ (i.e., the set of colored edges) contains the support of the original coloring $\varphi$, and also one more edge $(v,u_1)$. The algorithm is described below in Procedure \textsc{Recolor-Fan}. See also Figure \ref{fig:fan recolor} for an illustration of the algorithm.\\
\\$\textsc{Procedure}$ $\textsc{Recolor-Fan}\text{ }(G=(V,E),\varphi, \left(v,\langle u_1,...,u_k\rangle, (\alpha,\beta)\right))$
    \begin{description}
        \item[\textbf{Step 1.}] Exchange the $\alpha\beta$-path $P_v$ of $v$ so that $\beta$ becomes free at $v$. (By Observation \ref{path endpoint}, such a path is well-defined.)
        \item[\textbf{Step 2.}] Let $i\in\{1,2,...,k\}$ 
        be an index such that $\beta$ is the missing color of $u_i$, under both the original coloring $\varphi$, and the coloring $\varphi'$ obtained as a result of the exchanging of $P_v$. (If $z(v)\neq \emptyset$, and $\beta$ is still free at $z(v)$ after step 1, then this is the vertex $z(v)$. Otherwise, it is the vertex $y(v)$.) Then $\langle u_1,...,u_i\rangle$ is a fan of $v$. Rotate the fan $\langle u_1,...,u_i\rangle$, and color the edge $(v,u_i)$ with the color $\beta$.
    \end{description}

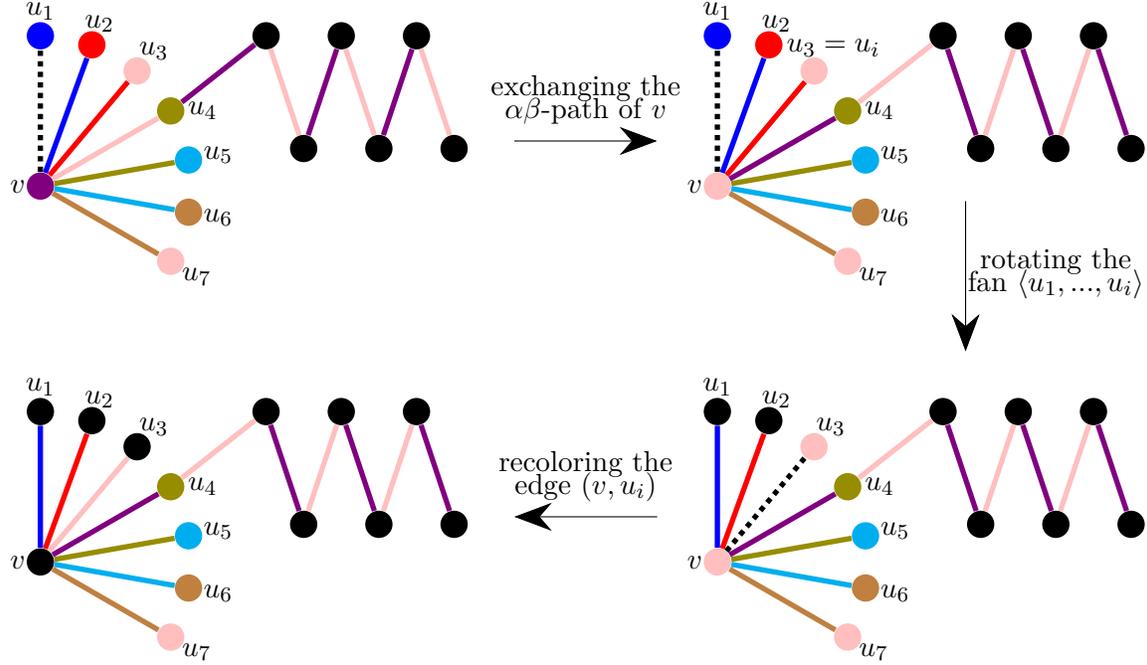
\begin{figure}
    \centering
    \begin{tikzpicture}

        \begin{scope}[xshift=-7cm]
        \node[circle] at ({180}:0.3cm)  {$v$};
        \node[circle] at ({90}:2.33cm)  {$u_1$};
        \node[circle] at ({70}:2.3cm)  {$u_2$};
        \node[circle] at ({50}:2.35cm)  {$u_3$};
        \node[circle] at ({25}:2.38cm)  {$u_4$};
        \node[circle] at ({10}:2.4cm)  {$u_5$};
        \node[circle] at ({350}:2.4cm)  {$u_6$};
        \node[circle] at ({330}:2.4cm)  {$u_7$};
        \node[circle,fill=violet] at (360:0mm) (center) {};
        \node[circle,fill=blue] at ({90}:2cm) (n1) {};
        \node[circle,fill=red] at ({70}:2cm) (n2) {};
        \node[circle,fill=pink] at ({50}:2cm) (n3) {};
        \node[circle,fill=olive] at ({30}:2cm) (n4) {};
        \node[circle,fill=cyan] at ({10}:2cm) (n5) {};
        \node[circle,fill=brown] at ({350}:2cm) (n6) {};
        \node[circle,fill=pink] at ({330}:2cm) (n7) {};
    
        \draw[dotted, line width=0.7mm] (center)--(n1);
        \draw[line width=0.7mm, blue] (center)--(n2);
        \draw[line width=0.7mm, red] (center)--(n3);
        \draw[line width=0.7mm, pink] (center)--(n4);
        \draw[line width=0.7mm, olive] (center)--(n5);
        \draw[line width=0.7mm, cyan] (center)--(n6);
        \draw[line width=0.7mm, brown] (center)--(n7);

        \node[circle, fill=black] at (3,2) (n8) {};
        \node[circle, fill=black] at (3.5,0.5) (n9) {};
        \node[circle, fill=black] at (4,2) (n10) {};
        \node[circle, fill=black] at (4.5,0.5) (n11) {};
        \node[circle, fill=black] at (5,2) (n12) {};
        \node[circle, fill=black] at (5.5,0.5) (n13) {};

        \draw[line width=0.7mm, violet] (n4)--(n8);
        \draw[line width=0.7mm, pink] (n8)--(n9);
        \draw[line width=0.7mm, violet] (n9)--(n10);
        \draw[line width=0.7mm, pink] (n10)--(n11);
        \draw[line width=0.7mm, violet] (n11)--(n12);
        \draw[line width=0.7mm, pink] (n12)--(n13);

        \end{scope}

        \begin{scope}[xshift=2cm]
        \node[circle] at ({180}:0.3cm)  {$v$};
        \node[circle] at ({90}:2.33cm)  {$u_1$};
        \node[circle] at ({70}:2.3cm)  {$u_2$};
        \node[circle] at ({50}:2.4cm)  {$u_3=u_i$};
        \node[circle] at ({25}:2.38cm)  {$u_4$};
        \node[circle] at ({10}:2.4cm)  {$u_5$};
        \node[circle] at ({350}:2.4cm)  {$u_6$};
        \node[circle] at ({330}:2.4cm)  {$u_7$};
        \node[circle,fill=pink] at (360:0mm) (center) {};
        \node[circle,fill=blue] at ({90}:2cm) (n1) {};
        \node[circle,fill=red] at ({70}:2cm) (n2) {};
        \node[circle,fill=pink] at ({50}:2cm) (n3) {};
        \node[circle,fill=olive] at ({30}:2cm) (n4) {};
        \node[circle,fill=cyan] at ({10}:2cm) (n5) {};
        \node[circle,fill=brown] at ({350}:2cm) (n6) {};
        \node[circle,fill=pink] at ({330}:2cm) (n7) {};
    
        \draw[dotted, line width=0.7mm] (center)--(n1);
        \draw[line width=0.7mm, blue] (center)--(n2);
        \draw[line width=0.7mm, red] (center)--(n3);
        \draw[line width=0.7mm, violet] (center)--(n4);
        \draw[line width=0.7mm, olive] (center)--(n5);
        \draw[line width=0.7mm, cyan] (center)--(n6);
        \draw[line width=0.7mm, brown] (center)--(n7);

        \node[circle, fill=black] at (3,2) (n8) {};
        \node[circle, fill=black] at (3.5,0.5) (n9) {};
        \node[circle, fill=black] at (4,2) (n10) {};
        \node[circle, fill=black] at (4.5,0.5) (n11) {};
        \node[circle, fill=black] at (5,2) (n12) {};
        \node[circle, fill=black] at (5.5,0.5) (n13) {};

        \draw[line width=0.7mm, pink] (n4)--(n8);
        \draw[line width=0.7mm, violet] (n8)--(n9);
        \draw[line width=0.7mm, pink] (n9)--(n10);
        \draw[line width=0.7mm, violet] (n10)--(n11);
        \draw[line width=0.7mm, pink] (n11)--(n12);
        \draw[line width=0.7mm, violet] (n12)--(n13);

        \end{scope}

        \begin{scope}[yshift=-5cm]
        \begin{scope}[xshift=2cm]
        \node[circle] at ({180}:0.3cm)  {$v$};
        \node[circle] at ({90}:2.33cm)  {$u_1$};
        \node[circle] at ({70}:2.3cm)  {$u_2$};
        \node[circle] at ({50}:2.35cm)  {$u_3$};
        \node[circle] at ({25}:2.38cm)  {$u_4$};
        \node[circle] at ({10}:2.4cm)  {$u_5$};
        \node[circle] at ({350}:2.4cm)  {$u_6$};
        \node[circle] at ({330}:2.4cm)  {$u_7$};
        \node[circle,fill=pink] at (360:0mm) (center) {};
        \node[circle, fill=black] at ({90}:2cm) (n1) {};
        \node[circle, fill=black] at ({70}:2cm) (n2) {};
        \node[circle, fill=pink] at ({50}:2cm) (n3) {};
        \node[circle, fill=olive] at ({30}:2cm) (n4) {};
        \node[circle, fill=cyan] at ({10}:2cm) (n5) {};
        \node[circle, fill=brown] at ({350}:2cm) (n6) {};
        \node[circle, fill=pink] at ({330}:2cm) (n7) {};

        \node[circle, fill=black] at (3,2) (n8) {};
        \node[circle, fill=black] at (3.5,0.5) (n9) {};
        \node[circle, fill=black] at (4,2) (n10) {};
        \node[circle, fill=black] at (4.5,0.5) (n11) {};
        \node[circle, fill=black] at (5,2) (n12) {};
        \node[circle, fill=black] at (5.5,0.5) (n13) {};

        \draw[line width=0.7mm, pink] (n4)--(n8);
        \draw[line width=0.7mm, violet] (n8)--(n9);
        \draw[line width=0.7mm, pink] (n9)--(n10);
        \draw[line width=0.7mm, violet] (n10)--(n11);
        \draw[line width=0.7mm, pink] (n11)--(n12);
        \draw[line width=0.7mm, violet] (n12)--(n13);

        \draw[line width=0.7mm, blue] (center)--(n1);
        \draw[line width=0.7mm, red] (center)--(n2);
        \draw[line width=0.7mm, violet] (center)--(n4);
        \draw[line width=0.7mm, olive] (center)--(n5);
        \draw[line width=0.7mm, cyan] (center)--(n6);
        \draw[line width=0.7mm, brown] (center)--(n7);
        \draw[dotted, line width=0.7mm] (center)--(n3);
        \end{scope}
        \end{scope}

        \begin{scope}[yshift=-5cm]
        \begin{scope}[xshift=-7cm]
        \node[circle] at ({180}:0.3cm)  {$v$};
        \node[circle] at ({90}:2.33cm)  {$u_1$};
        \node[circle] at ({70}:2.3cm)  {$u_2$};
        \node[circle] at ({50}:2.35cm)  {$u_3$};
        \node[circle] at ({25}:2.38cm)  {$u_4$};
        \node[circle] at ({10}:2.4cm)  {$u_5$};
        \node[circle] at ({350}:2.4cm)  {$u_6$};
        \node[circle] at ({330}:2.4cm)  {$u_7$};
        \node[circle,fill=black] at (360:0mm) (center) {};
        \node[circle, fill=black] at ({90}:2cm) (n1) {};
        \node[circle, fill=black] at ({70}:2cm) (n2) {};
        \node[circle, fill=black] at ({50}:2cm) (n3) {};
        \node[circle, fill=olive] at ({30}:2cm) (n4) {};
        \node[circle, fill=cyan] at ({10}:2cm) (n5) {};
        \node[circle, fill=brown] at ({350}:2cm) (n6) {};
        \node[circle, fill=pink] at ({330}:2cm) (n7) {};

        \node[circle, fill=black] at (3,2) (n8) {};
        \node[circle, fill=black] at (3.5,0.5) (n9) {};
        \node[circle, fill=black] at (4,2) (n10) {};
        \node[circle, fill=black] at (4.5,0.5) (n11) {};
        \node[circle, fill=black] at (5,2) (n12) {};
        \node[circle, fill=black] at (5.5,0.5) (n13) {};

        \draw[line width=0.7mm, pink] (n4)--(n8);
        \draw[line width=0.7mm, violet] (n8)--(n9);
        \draw[line width=0.7mm, pink] (n9)--(n10);
        \draw[line width=0.7mm, violet] (n10)--(n11);
        \draw[line width=0.7mm, pink] (n11)--(n12);
        \draw[line width=0.7mm, violet] (n12)--(n13);

        \draw[line width=0.7mm, blue] (center)--(n1);
        \draw[line width=0.7mm, red] (center)--(n2);
        \draw[line width=0.7mm, violet] (center)--(n4);
        \draw[line width=0.7mm, olive] (center)--(n5);
        \draw[line width=0.7mm, cyan] (center)--(n6);
        \draw[line width=0.7mm, brown] (center)--(n7);
        \draw[line width=0.7mm, pink] (center)--(n3);
        \end{scope}
        \end{scope}

        \node at (0.25,1.3) {exchanging the};
        \node at (0.25,1) {$\alpha\beta$-path of $v$};
        \node at (6.5,-1) {rotating the};
        \node at (6.5,-1.3) {fan $\langle u_1,...,u_i\rangle$};
        \node at (0.25,-3.7) {recoloring the};
        \node at (0.25,-4) {edge $(v,u_i)$};
        \draw [-{Stealth[length=5mm]}] (-0.7,0.6) -- (1.2,0.6);
        \draw [-{Stealth[length=5mm]}] (1.2,-4.4) -- (-0.7,-4.4);
        \draw [-{Stealth[length=5mm]}] (5.3,-0.2) -- (5.3,-2.2);

    \end{tikzpicture}

    \caption{Procedure \textsc{Recolor-Fan} applied on a fan $\langle u_1,u_2,...,u_7\rangle$ centered at $v$ and characterized by $(\alpha,\beta)=(\color{violet}\bullet\color{black}, \color{pink}\bullet\color{black})$.}
    \label{fig:fan recolor}
\end{figure}

We now show that Procedure \textsc{Recolor-Fan} is well-defined (i.e., colors the edges of the fan and the $\alpha\beta$-path properly), and analyse its complexity.
\begin{theorem}[Procedure \textsc{Recolor-Fan}]\label{Procedure Recolor-Fan}
    Let $G$ be an $n$-vertex $m$-edge graph, $\varphi$ be a partial proper edge-coloring of $G$, and $\langle u_1,...,u_k\rangle$ be a maximal fan of $v\in V$ characterized by a pair of colors $(\alpha,\beta)$. Then Procedure \textsc{Recolor-Fan} colors the edges of the fan and its $\alpha\beta$-path using colors from the palette $\{1,2,...,\Delta+1\}$. The resulting partial coloring $\varphi'$ is proper as well. In addition, step 1 of Procedure \textsc{Recolor-Fan} can be applied on various $\alpha\beta$-paths in parallel and it requires $O(\log n)$ time using $O(n)$ processors, and step 2 of Procedure \textsc{Recolor-Fan} requires $O(1)$ time using $O(\deg(v))$ processors.
\end{theorem}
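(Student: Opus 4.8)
The plan is to separate the two correctness claims (properness of $\varphi'$, and use of only the colors $\{1,\dots,\Delta+1\}$) from the two complexity claims, handling the former by the classical Vizing / Misra--Gries analysis and the latter by a connected-components computation together with purely local bookkeeping. The heart of the correctness part is a structural lemma describing the state after Step~1: \emph{after exchanging the $\alpha\beta$-path $P_v$ of $v$, the color $\beta$ is free at $v$, it is free at at least one of $z(v)$ and $y(v)=u_k$, and the fan prefix $\langle u_1,\dots,u_i\rangle$ ending at that vertex (the one selected in Step~2) still satisfies conditions (i)--(iii) of the fan definition after re-designating $m(v)$ and, if necessary, $m(z(v))$.} To prove this I would combine three observations. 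First, since $\alpha=m(v)$ is free at $v$, Observation~\ref{path endpoint} shows $v$ is an endpoint of $P_v$, and the only $\{\alpha,\beta\}$-colored edge possibly incident to $v$ is the (unique) $\beta$-colored edge $(v,x(v))$; hence exchanging $P_v$ makes $\beta$ free at $v$. Second, $\beta=m(u_k)$ is free at $u_k$, and when $z(v)=u_{i-1}\neq\emptyset$ we have $\varphi(v,u_i)=m(u_{i-1})=\beta$, so $\beta$ is also free at $z(v)$. Third, exchanging an $\alpha\beta$-path changes missing colors only at its two endpoints, and among $u_1,\dots,u_k$ the only vertices whose missing color lies in $\{\alpha,\beta\}$ are $z(v)$ (if nonempty) and $y(v)$ --- both with missing color $\beta$ --- because $m(u_j)=\alpha$ is impossible (it would force $(v,u_{j+1})$ to be $\alpha$-colored, contradicting $\alpha$ being free at $v$) and $m(u_j)=\beta$ with $j<k$ forces $(v,u_{j+1})$ to be the $\beta$-edge at $v$, i.e.\ $u_j=z(v)$. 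Since $P_v$ has only one endpoint besides $v$, at most one of $z(v),y(v)$ is disturbed, and on the other the color $\beta$ survives; if $y(v)$ has to be selected because $z(v)$ was disturbed, then $z(v)$'s single $\{\alpha,\beta\}$-edge flips from color $\alpha$ to color $\beta$, so one re-designates $m(z(v)):=\alpha$, which is consistent with $(v,u_i)$ having just turned $\alpha$-colored by the same exchange.

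Given this lemma, Step~2 is short. Rotating $\langle u_1,\dots,u_i\rangle$ recolors each $(v,u_j)$, $j<i$, with $\varphi(v,u_{j+1})=m(u_j)$, a color free at $u_j$; so after the rotation $(v,u_i)$ is the unique uncolored fan edge and the coloring is still proper. The color $\beta$ is free at $u_i$ by the lemma, and it is free at $v$ after the rotation because no rotated edge receives color $\beta$ (a vertex with missing color $\beta$ must be $z(v)$, which is $\emptyset$, or the last vertex of the rotated prefix, or carries the re-designated missing color $\alpha\neq\beta$). Hence setting $\varphi'(v,u_i)=\beta$ keeps $\varphi'$ proper, and since $(v,u_1)$ becomes colored while no colored edge is uncolored, the support of $\varphi'$ is exactly that of $\varphi$ together with $(v,u_1)$. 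Finally, every color used by Steps~1 and~2 is $\alpha$, $\beta$, some $m(u_j)$, or a color already appearing in $\varphi$, all of which lie in $\{1,\dots,\Delta+1\}$.

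For the complexity, Step~1 is implemented by forming $G_{\alpha,\beta}$, which has at most $n$ vertices and --- since $\Delta(G_{\alpha,\beta})\le 2$ --- at most $n$ edges; it is read off the $\mathrm{Color2Edge}$ tables in $O(1)$ time with $O(n)$ processors, and its connected components are computed via Lemma~\ref{Connected components algorithm} in $O(\log n)$ time with $O(n)$ processors. Every fan center has $\alpha$ free, hence lies on a path component, which is precisely its $\alpha\beta$-path; for a family of vertex-disjoint such paths (as guaranteed in the calling context) one processor per edge of $G_{\alpha,\beta}$ swaps, in $O(1)$ time and with no write conflicts, the color of each edge belonging to a marked component, and updates the $O(1)$ affected entries of $\mathrm{Color2Edge}$, $\mathrm{Edge2Color}$ and the stored free colors at the (disjoint) path endpoints --- $O(\log n)$ time and $O(n)$ processors overall. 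Step~2 touches only the $\le\deg(v)$ fan edges $(v,u_1),\dots,(v,u_i)$: with one processor per fan edge we locate $i$ in $O(1)$ time (checking whether $z(v)\neq\emptyset$ and $\beta$ is still free at $z(v)$), perform the rotation in $O(1)$ time (each processor first reads the color of the succeeding fan edge and then writes it), set $\varphi'(v,u_i)=\beta$, and update the $O(\deg(v))$ affected data-structure entries in $O(1)$ time using the already-stored missing colors $m(u_j)$ of the fan.

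The main obstacle, I expect, is precisely the case analysis behind the Step~1 structural lemma: one has to identify exactly which fan vertices can serve as the far endpoint of $P_v$ and argue that the resulting missing-color perturbation is harmless --- that the selected prefix remains a legal fan under a suitable re-designation of $m(v)$ and $m(z(v))$. This is the delicate point inherited from Vizing's proof; everything else --- the rotation step, the palette bound, and the parallel bookkeeping --- is routine once it is in place.
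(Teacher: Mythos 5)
Your proposal is correct and follows essentially the same route as the paper's proof: the same observations that exchanging $P_v$ frees $\beta$ at $v$ and can only disturb the designated missing colors of $z(v)$ or $y(v)$ (since no other fan vertex has $\alpha$ or $\beta$ missing), the same case analysis on which of these is the far endpoint of $P_v$ (with the re-designation $m(z(v)):=\alpha$ matching the paper's case where the path ends at $z(v)$), and the same complexity accounting via Shiloach--Vishkin connected components for Step~1 and one processor per fan edge for Step~2. The only differences are presentational, e.g.\ you make explicit that the rotation never places $\beta$ on an edge at $v$, which the paper leaves implicit.
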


\begin{proof}
    First, observe that after the exchanging of the $\alpha\beta$-path of $v$, the color $\beta$ is free at $v$. Recall that the exchanging of a path affects only free colors of its endpoints.
    Observe that by definition of fan, for each $u_j\in\{u_1,...,u_{k-1}\}\setminus \{z(v)\}$, we have $m(u_j)\neq \beta$. (As only $(v,x(v))$ may be $\varphi$-colored $\beta$, and for $j\in\{1,2,...,k-1\}$, $m(u_j)=\beta$ implies $\varphi(v,u_{j+1})=\beta$.) Also, for each $u_j\in \{ u_1,...,u_{k-1}\}$, we have $m(u_j)\neq \alpha$. (As $\alpha$ is free at $v$, and for $j\in\{1,2,...,k-1\}$, $m(u_j)=\alpha$ implies $\varphi(v,u_{j+1})=\alpha$, contradiction.) Hence exchanging of the $\alpha\beta$-path of $v$ does not affect the missing color $m(u_j)$ of each $u_j\in\{ u_1,...,u_{k-1}\}\setminus \{z(v)\}$. We now consider three cases:
    \begin{itemize}
        \item[(1)] If $z(v)$ and $x(v)$ do not exist (i.e., $x(v)=z(v)=\emptyset$), then there is no incident edge on $v$ that is $\varphi$-colored $\beta$. Hence the $\alpha\beta$-path of $v$ is empty, and its exchanging changes nothing. Hence $\langle u_1,...,u_k\rangle$ is still a fan of $v$, and $\beta$ remains free at $u_k$. Therefore, step 2 of Procedure \textsc{Recolor-Fan} is well-defined and can be performed.
        \item[(2)] Consider the case that $z(v)$ is an endpoint of the $\alpha\beta$-path $P_{\alpha,\beta}$ of $v$. (Recall that $m(z(v))=\beta$ and $\varphi(v,x(v))=\beta$. So if $z(v)$ is an endpoint of the $\alpha\beta$-path of $v$, it means that it has an edge colored $\alpha$ that belongs to the $\alpha\beta$-path incident on it.) Then after the exchanging of the $\alpha\beta$-path of $v$, $\alpha$ will become free at $z(v)$, and the edge $(v,x(v))$ will be colored $\alpha$. Hence $\langle u_1,...,u_k\rangle$ is still a fan of $v$. (Indeed, $z(v)$ and $x(v)$ are consecutive vertices in the fan, $\alpha=m_{\varphi'}(z(v))$ is the $\varphi'$-color of $(v,x(v))$, and the colors of edges $(v,u_j)$ and missing colors $m(u_j)$ of other vertices $u_j$ in the fan are unchanged. Note that $\beta$ is missing at $y(v)$, and thus $y(v)$ cannot appear on $P_{\alpha,\beta}$). Since the $\alpha\beta$-path of $v$ ends at $z(v)$, the color $\beta$ remains free at $u_k\neq z(v)$ (see Figure \ref{fig:alpha beta path collision} for an illustration). Also, $\beta$ is now free at $v$. Hence, step 2 of Procedure \textsc{Recolor-Fan} is well-defined and can be performed.
            \begin{figure}
    \centering
    \begin{tikzpicture}
    \begin{scope}[xshift=-7cm]
        \node[circle] at ({180}:0.3cm)  {$v$};
        \node[circle] at ({90}:2.33cm)  {$u_1$};
        \node[circle] at ({70}:2.3cm)  {$u_2$};
        \node[circle] at ({35}:2.75cm)  {$u_3=z(v)$};
        \node[circle] at ({22}:2.9cm)  {$u_4=x(v)$};
        \node[circle] at ({10}:2.4cm)  {$u_5$};
        \node[circle] at ({350}:2.4cm)  {$u_6$};
        \node[circle] at ({338}:2.9cm)  {$u_7=y(v)$};
        \node[circle,fill=violet] at (360:0mm) (center) {};
        \node[circle,fill=blue] at ({90}:2cm) (n1) {};
        \node[circle,fill=red] at ({70}:2cm) (n2) {};
        \node[circle,fill=pink] at ({50}:2cm) (n3) {};
        \node[circle,fill=olive] at ({30}:2cm) (n4) {};
        \node[circle,fill=cyan] at ({10}:2cm) (n5) {};
        \node[circle,fill=brown] at ({350}:2cm) (n6) {};
        \node[circle,fill=pink] at ({330}:2cm) (n7) {};
    
        \draw[dotted, line width=0.7mm] (center)--(n1);
        \draw[line width=0.7mm, blue] (center)--(n2);
        \draw[line width=0.7mm, red] (center)--(n3);
        \draw[line width=0.7mm, pink] (center)--(n4);
        \draw[line width=0.7mm, olive] (center)--(n5);
        \draw[line width=0.7mm, cyan] (center)--(n6);
        \draw[line width=0.7mm, brown] (center)--(n7);

        \node[circle, fill=black] at (3.5,0.5) (n8) {};
        \node[circle, fill=black] at (5.5,0.7) (n9) {};
        \node[circle, fill=black] at (5,2.5) (n10) {};
        \node[circle, fill=black] at (3,2.7) (n11) {};

        \draw[line width=0.7mm, violet] (n4)--(n8);
        \draw[line width=0.7mm, pink] (n8)--(n9);
        \draw[line width=0.7mm, violet] (n9)--(n10);
        \draw[line width=0.7mm, pink] (n10)--(n11);
        \draw[line width=0.7mm, violet] (n11)--(n3);
    \end{scope}

        \begin{scope}[xshift=2cm]
        \node[circle] at ({180}:0.3cm)  {$v$};
        \node[circle] at ({90}:2.33cm)  {$u_1$};
        \node[circle] at ({70}:2.3cm)  {$u_2$};
        \node[circle] at ({35}:2.75cm)  {$u_3=z(v)$};
        \node[circle] at ({22}:2.9cm)  {$u_4=x(v)$};
        \node[circle] at ({10}:2.4cm)  {$u_5$};
        \node[circle] at ({350}:2.4cm)  {$u_6$};
        \node[circle] at ({338}:2.9cm)  {$u_7=y(v)$};
        \node[circle,fill=pink] at (360:0mm) (center) {};
        \node[circle,fill=blue] at ({90}:2cm) (n1) {};
        \node[circle,fill=red] at ({70}:2cm) (n2) {};
        \node[circle,fill=violet] at ({50}:2cm) (n3) {};
        \node[circle,fill=olive] at ({30}:2cm) (n4) {};
        \node[circle,fill=cyan] at ({10}:2cm) (n5) {};
        \node[circle,fill=brown] at ({350}:2cm) (n6) {};
        \node[circle,fill=pink] at ({330}:2cm) (n7) {};
    
        \draw[dotted, line width=0.7mm] (center)--(n1);
        \draw[line width=0.7mm, blue] (center)--(n2);
        \draw[line width=0.7mm, red] (center)--(n3);
        \draw[line width=0.7mm, violet] (center)--(n4);
        \draw[line width=0.7mm, olive] (center)--(n5);
        \draw[line width=0.7mm, cyan] (center)--(n6);
        \draw[line width=0.7mm, brown] (center)--(n7);

        \node[circle, fill=black] at (3.5,0.5) (n8) {};
        \node[circle, fill=black] at (5.5,0.7) (n9) {};
        \node[circle, fill=black] at (5,2.5) (n10) {};
        \node[circle, fill=black] at (3,2.7) (n11) {};

        \draw[line width=0.7mm, pink] (n4)--(n8);
        \draw[line width=0.7mm, violet] (n8)--(n9);
        \draw[line width=0.7mm, pink] (n9)--(n10);
        \draw[line width=0.7mm, violet] (n10)--(n11);
        \draw[line width=0.7mm, pink] (n11)--(n3);
    \end{scope}

        \draw [-{Stealth[length=5mm]}] (-0.7,0.6) -- (1.2,0.6);

        \node[circle,fill=white] at (-4.85,2.15) {};
        \node[circle,fill=white] at (-6,0.55) {};
        \node at (-4.85,2.15) {$\alpha$};
        \node at (-6,0.55) {$\beta$};

        \node[circle,fill=white] at (4.15,2.15) {};
        \node[circle,fill=white] at (3,0.55) {};
        \node at (4.15,2.15) {$\beta$};
        \node at (3,0.55) {$\alpha$};
    \end{tikzpicture}

    \caption{The $\alpha\beta$-path of the maximal fan ends at $z(v)$. On the right-hand side we depicted the coloring obtained as a result of exchanging this path.}
    \label{fig:alpha beta path collision}
    \end{figure}
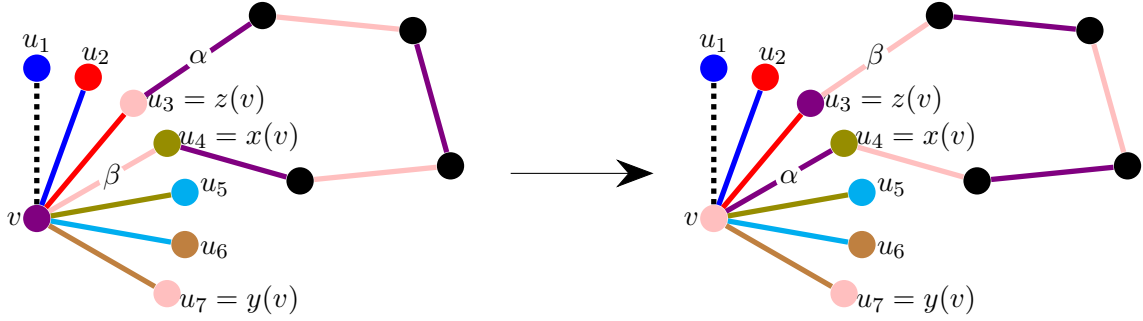

        \item[(3)] Otherwise, after the exchanging of the $\alpha\beta$-path of $v$, the color $\beta$ is still free at $z(v)$ (and the colors of edges $(v,u_j)$ and missing colors of vertices $u_j$ in the beginning of the fan, $\langle u_1,...,u_{i}=z(v)\rangle$, $1\leq j\leq i$, are unchanged). Hence $\langle u_1,...,u_{i}=z(v)\rangle$ is a fan of $v$, and step 2 of Procedure \textsc{Recolor-Fan} is well-defined and can be performed.
    \end{itemize}
    We conclude that after step 1 of Procedure \textsc{Recolor-Fan}, the color $\beta$ is free at both $v$ and $u_i$ (see step 2). Hence after the rotation of the fan $\langle u_1,...,u_i\rangle$, we can properly color the edge $(v,u_i)$ with the color $\beta$. Note that the colors of the edges of the fan and the colors $\alpha$ and $\beta$ are all from the palette $\{1,2,...,\Delta+1\}$. Hence Procedure \textsc{Recolor-Fan} properly recolors the edges of the fan and its $\alpha\beta$-path using colors from the palette $\{1,2,...,\Delta+1\}$. The resulting partial edge-coloring is proper, and the number of colored edges increases by 1.

    We now analyse the complexity of step 1 of Procedure \textsc{Recolor-Fan}. To this end, we first compute the connected components of the graph $G_{\alpha,\beta}$ using Lemma \ref{Connected components algorithm} in $O(\log n)$ time using $O(n)$ processors, and for each edge $e$ in the $\alpha\beta$-path of $v$, we assign a processor, that recolors $e$ with $\beta$ if $e$ is colored $\alpha$, and vice versa. This process requires $O(1)$ time using $O(n)$ processors. Observe that this process can be applied on various $\alpha\beta$-paths (that are characterized by the same pair of colors $(\alpha,\beta)$ or $(\beta,\alpha)$) in parallel with the same complexity. (Note that they are necessarily vertex-disjoint.)
    
    For computing the vertex $u_i$, we assign a processor to each incident edge of $v$. If there is no edge $(v,u_j)$ that is $\varphi'$-colored $\alpha$ (before the exchanging this edge was $\varphi$-colored $\beta$, i.e., $u_j=x(v)$), then we set $u_i=u_k=y(v)$. Otherwise, if such an index $j$ exists, and $u_{j-1}=z(v)$ is the other endpoint (other than $v$) of the $\alpha\beta$-path of $v$, then we set $u_i=u_k=y(v)$, and otherwise $u_i=u_{j-1}=z(v)$. In either case, the vertex $z(v)$ can at this point be determined in $O(1)$ time. Next, for the rotation of the fan, we assign a processor to each $l\in\{1,2,...,i-1\}$ that recolors $(v,u_l)$ with the color of $(v,u_{l+1})$. And finally, we color $(v,u_i)$ with the color $\beta$. Hence, step 2 of Procedure \textsc{Recolor-Fan} requires $O(1)$ time using $O(\deg(v))$ processors.
\end{proof}

\subsection{Parallel Fan-Recoloring}\label{sec: Parallel Fan-Recoloring}

As we have seen in the previous section, in the constructive proof of Vizing theorem, one can properly color edges one after another by $\Delta+1$ colors by repetitive applications of Procedure \textsc{Recolor-Fan}. In this section we parallelize executions of Procedure \textsc{Recolor-Fan} on a large collection of fans, i.e., we would like to find a large collection of fans that Procedure \textsc{Recolor-Fan} can recolor in parallel, so that different recolorings do not interfere with one another.\\

We say that a fan $\langle u_1,...,u_k\rangle$ of $v\in V$ and a fan $\langle u^*_1,...,u^*_l\rangle$ of $v^*\in V$ are \textit{disjoint fans} if $\{v,u_1,...,u_k\}\cap \{v^*,u^*_1,...,u^*_l\}=\emptyset$. Otherwise, we say that these fans are \textit{intersecting}.\\
As a first part of finding a large collection of fans that Procedure \textsc{Recolor-Fan} can be applied on in parallel, we will find a large collection of pairwise disjoint fans. To this end, let $F$ be a fixed subset of the set of the uncolored edges in the input graph $G$. We define an auxiliary graph (which we call \emph{fan-graph}) $G^{(F)}$ over the set $F$. This graph will have the property that two edges $\left(\text{vertices in $G^{(F)}$}\right)$ are connected if they might be a part of two intersecting fans. First, we define the \textit{distance} between a pair of edges $e_1=(v_1,u_1)\in E$ and $e_2=(v_2,u_2)\in E$ in $G$ by 
\begin{equation}\label{Equation: edges distance}
    \text{dist}_G(e_1,e_2)=\min\left\{\text{dist}_G(u_1, u_2),\text{dist}_G(u_1, v_2),\text{dist}_G(v_1, u_2),\text{dist}_G(v_1, v_2)\right\}.
\end{equation}

\begin{definition}\label{def: G^F}$\left(\mathrm{The\,\, fan\text{-}graph\,\,} G^{(F)}\right)$\textbf{.}
    Let $G=(V,E)$ be a graph, $\varphi$ be a partial proper edge-coloring of $G$, and let $F$ be a fixed subset of the set of uncolored edges in $G$.
    We define an auxiliary graph $G^{(F)}=\left(F, E^{(F)}\right)$, where for $e_1\neq e_2\in F$, an edge $(e_1, e_2)$ is in $E^{(F)}$ if $\mathrm{dist}_G(e_1, e_2)\leq 2$.
\end{definition}

As we argue next, if the distance between two uncolored edges is more than 2, they cannot be a part of intersecting fans. See Figure \ref{fig:intersection fans} for an example of two intersecting fans in which the distance between the uncolored edges of these fans is 2.

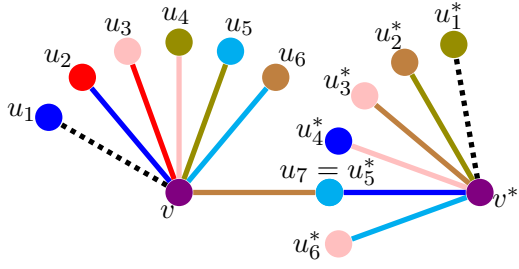
\begin{figure}
    \centering
    \begin{tikzpicture}

        \begin{scope}[xshift=-1cm, rotate=60]
        \node[circle] at ({180}:0.3cm)  {$v$};
        \node[circle] at ({93}:2.35cm)  {$u_1$};
        \node[circle] at ({72}:2.4cm)  {$u_2$};
        \node[circle] at ({50}:2.35cm)  {$u_3$};
        \node[circle] at ({31}:2.35cm)  {$u_4$};
        \node[circle] at ({10}:2.35cm)  {$u_5$};
        \node[circle] at ({350}:2.35cm)  {$u_6$};
        \node[circle] at ({308}:2cm)  {$u_7=u^*_5$};
        \node[circle,fill=violet] at (360:0mm) (center) {};
        \node[circle,fill=blue] at ({90}:2cm) (n1) {};
        \node[circle,fill=red] at ({70}:2cm) (n2) {};
        \node[circle,fill=pink] at ({50}:2cm) (n3) {};
        \node[circle,fill=olive] at ({30}:2cm) (n4) {};
        \node[circle,fill=cyan] at ({10}:2cm) (n5) {};
        \node[circle,fill=brown] at ({350}:2cm) (n6) {};
        \node[circle,fill=cyan] at ({300}:2cm) (n7) {};
    
        \draw[dotted, line width=0.7mm] (center)--(n1);
        \draw[line width=0.7mm, blue] (center)--(n2);
        \draw[line width=0.7mm, red] (center)--(n3);
        \draw[line width=0.7mm, pink] (center)--(n4);
        \draw[line width=0.7mm, olive] (center)--(n5);
        \draw[line width=0.7mm, cyan] (center)--(n6);
        \draw[line width=0.7mm, brown] (center)--(n7);

        \end{scope}

        \begin{scope}[xshift =3cm, yscale=1,xscale=-1, rotate=10]
        \node[circle] at ({180}:0.35cm)  {$v^*$};
        \node[circle] at ({70}:2.39cm)  {$u^*_1$};
        \node[circle] at ({50}:2.42cm)  {$u^*_2$};
        \node[circle] at ({28}:2.4cm)  {$u^*_3$};
        \node[circle] at ({10}:2.4cm)  {$u^*_4$};
        \node[circle] at ({333}:2.4cm)  {$u^*_6$};
        \node[circle,fill=violet] at (360:0mm) (center) {};
        \node[circle,fill=olive] at ({70}:2cm) (n2) {};
        \node[circle,fill=brown] at ({50}:2cm) (n3) {};
        \node[circle,fill=pink] at ({30}:2cm) (n4) {};
        \node[circle,fill=blue] at ({10}:2cm) (n5) {};
        \node[circle,fill=cyan] at ({350}:2cm) (n6) {};
        \node[circle,fill=pink] at ({330}:2cm) (n7) {};
    
        \draw[dotted, line width=0.7mm] (center)--(n2);
        \draw[line width=0.7mm, olive] (center)--(n3);
        \draw[line width=0.7mm, brown] (center)--(n4);
        \draw[line width=0.7mm, pink] (center)--(n5);
        \draw[line width=0.7mm, blue] (center)--(n6);
        \draw[line width=0.7mm, cyan] (center)--(n7);

        \end{scope}
        
    \end{tikzpicture}

    \caption{Two intersecting fans, $\langle u_1,u_2,...,u_7\rangle$ centered at $v$, and $\langle u^*_1,u^*_2,...,u^*_6\rangle$ centered at $v^*$, in which the distance between the uncolored edges of the fans is 2.}
    \label{fig:intersection fans}
\end{figure}

\begin{lemma}\label{G_3 property}$\left(\mathrm{A \,\,property\,\, of\,\,} G^{(F)}\right)$\textbf{.}
    Let $G=(V,E)$ be a graph and let $\varphi$ be a partial proper edge-coloring of $G$. Let $F$ be a fixed subset of the set of uncolored edges in $G$, and let $G^{(F)}=\left(F,E^{(F)}\right)$ be the fan-graph, defined above. For each pair of distinct vertices (edges in $G$) $e,e^*\in F$ such that $(e,e^*)\notin E^{(F)}$, any fan with an uncolored edge $e$ and any fan with an uncolored edge $e^*$ are disjoint.
\end{lemma}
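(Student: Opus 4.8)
The plan is to prove the contrapositive: if a fan $f$ with uncolored edge $e$ and a fan $f^*$ with uncolored edge $e^*$ intersect, then $\mathrm{dist}_G(e,e^*)\le 2$, i.e.\ $(e,e^*)\in E^{(F)}$.

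First I would record the elementary structural fact that drives everything. If $f=\langle u_1,\dots,u_k\rangle$ is a fan centered at a vertex $v$ with uncolored edge $(v,u_1)$, then by condition (i) of the definition of a fan all of $u_1,\dots,u_k$ are neighbors of $v$, so the vertex set $\{v,u_1,\dots,u_k\}$ of $f$ is contained in the closed ball $B_1(v)=\{v\}\cup N(v)$ of radius $1$ around $v$; moreover $v$ is one of the two endpoints of the uncolored edge $e=(v,u_1)$. The same applies to $f^*$: its vertex set lies in $B_1(v^*)$, where $v^*$ is the center of $f^*$ and hence an endpoint of $e^*$.

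Next, suppose $f$ and $f^*$ share a vertex $w$. Then $w\in B_1(v)\cap B_1(v^*)$, so $\mathrm{dist}_G(v,w)\le 1$ and $\mathrm{dist}_G(w,v^*)\le 1$, and the triangle inequality gives $\mathrm{dist}_G(v,v^*)\le 2$. Since $v$ is an endpoint of $e$ and $v^*$ is an endpoint of $e^*$, the pair $(v,v^*)$ is one of the four endpoint pairs appearing in the minimum in Equation~(\ref{Equation: edges distance}), whence $\mathrm{dist}_G(e,e^*)\le \mathrm{dist}_G(v,v^*)\le 2$. Therefore $(e,e^*)\in E^{(F)}$, contradicting the assumption $(e,e^*)\notin E^{(F)}$; consequently the two fans must be disjoint.

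This lemma is essentially a one-line observation once the right structural fact is isolated, so I do not expect a genuine obstacle. The only points that require a little care are (a) making sure the center of each fan is identified with an actual endpoint of its uncolored edge, so that the bound on $\mathrm{dist}_G(v,v^*)$ really bounds $\mathrm{dist}_G(e,e^*)$ through the correct term of the minimum, and (b) noting that the argument is uniform over all the ways the fans could meet --- whether they share the two centers, a center and a fan vertex, or two fan vertices --- because the containment of each fan in a ball of radius $1$ around its center handles every case at once, so no case analysis is needed.
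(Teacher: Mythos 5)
Your proposal is correct and follows essentially the same route as the paper's own proof: every fan lies in the radius-$1$ ball around its center (an endpoint of its uncolored edge), so a shared vertex plus the triangle inequality forces $\mathrm{dist}_G(v,v^*)\le 2$ and hence $\mathrm{dist}_G(e,e^*)\le 2$, contradicting $(e,e^*)\notin E^{(F)}$. The paper phrases it as a direct contradiction rather than a contrapositive, but the argument is identical.
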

The proof is simple. For completeness we provide it in Appendix \ref{App: Some Proofs from Section 3}.\\

In our algorithm we will focus on a special case, where the set of the uncolored edges $F$ is a \textit{matching} in $G$.

\begin{definition}[Matching]
    A \emph{matching} $M$ in a graph $G=(V,E)$ is a set of edges such that any vertex in the graph $G$ is incident to at most one edge in $M$.
\end{definition}
From now on, we assume that the fixed subset $F$ of uncolored edges is a matching in $G$.
Now we bound the number of edges, the maximum degree and the arboricity of $G^{(F)}$.

\begin{claim}\label{Maximum degree of G^F}$\left(\mathrm{Bounds\,\,on\,\,the\,\,number\,\,of\,\,edges,\,\,maximum\,\, degree\,\,and\,\,arboricity\,\,of\,\,} G^{(F)}\right)$\textbf{.}
    Let $G=(V,E)$ be an $m$-edge graph with maximum degree $\Delta$ and arboricity $a$, and let $\varphi$ be a partial proper edge-coloring of $G$. Let $F$ be a fixed subset of the set of uncolored edges in $G$ and assume that $F$ is a matching in $G$, and let $G^{(F)}=\left(F,E^{(F)}\right)$ be the fan-graph of $G$ with respect to $F$. 
    The number of edges in $G^{(F)}$ is at most $4m\Delta$, its maximum degree is bounded by $2(\Delta-1)^2$ and its arboricity is at most $16\Delta a$.
\end{claim}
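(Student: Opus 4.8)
The plan is to prove the three bounds separately; throughout I use that $F$ is a matching, so distinct edges of $F$ are vertex-disjoint in $G$ and hence $\mathrm{dist}_G(e,e')\in\{1,2\}$ whenever $(e,e')\in E^{(F)}$. First I would record a reformulation of $G^{(F)}$: for $w\in V$ put $F_w=\{e\in F:\ e\cap N_G[w]\neq\emptyset\}$, where $N_G[w]=\{w\}\cup N(w)$ is the closed neighborhood; since $F$ is a matching, $|F_w|\le|N_G[w]|\le\Delta+1$, and $\mathrm{dist}_G(e,e')\le 2$ holds iff $e,e'\in F_w$ for some $w$ (take $w$ to be an endpoint of a witnessing edge, or the internal vertex of a witnessing path of length $2$; the converse is the triangle inequality). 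Hence $G^{(F)}=\bigcup_{w\in V}K(F_w)$, a union of cliques, each on at most $\Delta+1$ vertices.

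For the number of edges I would charge each edge $\{e_1,e_2\}$ of $G^{(F)}$ to an edge of $G$ on a fixed shortest witnessing walk. A distance-$1$ pair is charged to the $G$-edge $(x_1,x_2)$ with $x_i\in e_i$; each endpoint of a $G$-edge lies on at most one edge of $F$, so each edge of $G$ is charged by at most one distance-$1$ pair, for at most $m$ pairs in total. A distance-$2$ pair is charged to the first edge $(x_1,z)$ of a witnessing path $x_1-z-y$ (with $x_1\in e_1$, $y\in e_2$); for a fixed $G$-edge $(p,q)$, the vertex $x_1$ is $p$ or $q$, which determines $e_1$ uniquely and forces $e_2$ to have an endpoint in $N_G(q)$ (resp.\ $N_G(p)$), leaving at most $\Delta$ choices of $e_2$ per case; so each $G$-edge is charged by at most $2\Delta$ distance-$2$ pairs, for at most $2m\Delta$ in total. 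Altogether $|E^{(F)}|\le m+2m\Delta\le 4m\Delta$.

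For the maximum degree, fix $e=(u,v)\in F$ and let $e'=(u',v')$ be any neighbor of $e$ in $G^{(F)}$, so that $u,v,u',v'$ are four distinct vertices. I claim some endpoint of $e'$ lies in $W:=\big(N_G(N_G(u)\setminus\{v\})\cup N_G(N_G(v)\setminus\{u\})\big)\setminus\{u,v\}$. Indeed, assuming w.l.o.g.\ $\mathrm{dist}_G(u,u')\le 2$: if $u'\in N_G(u)$ then $u'\in N_G(u)\setminus\{v\}$ and $v'\in N_G(u')\setminus\{u,v\}\subseteq W$; if $\mathrm{dist}_G(u,u')=2$ via a common neighbor $z\neq v$ then $u'\in N_G(z)\setminus\{u,v\}\subseteq W$; and if $v$ is the only common neighbor of $u$ and $u'$ then $u'\in N_G(v)\setminus\{u\}$ and $v'\in N_G(u')\setminus\{u,v\}\subseteq W$. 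Since $F$ is a matching, the degree of $e$ is at most $|W|\le\sum_{z\in N_G(u)\setminus\{v\}}|N_G(z)\setminus\{u\}|+\sum_{z\in N_G(v)\setminus\{u\}}|N_G(z)\setminus\{v\}|\le 2(\Delta-1)^2$, each inner sum having at most $\Delta-1$ terms, each of size at most $\Delta-1$.

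The arboricity bound is the part I expect to be most delicate, and the one place where the arboricity of $G$ (not just $\Delta$) has to be used — a bound in $\Delta$ alone, like the one just proved, only yields arboricity $O(\Delta^2)$. By Claim~\ref{claim: arboricity and orientation} it suffices to orient $G^{(F)}$ with out-degree $O(\Delta a)$. I would fix an orientation $\vec{G}$ of $G$ of out-degree at most $a$ (one exists: orient each of the $a$ covering forests toward its roots). Call $e\in F_w$ \emph{cheap for $w$} if $w\in e$ or some endpoint of $e$ is an out-neighbor of $w$ in $\vec{G}$, and \emph{expensive} otherwise; distinct cheap members of $F_w$ use distinct out-edges of $w$, so $F_w$ has at most $\mathrm{outdeg}_{\vec{G}}(w)+1\le a+1$ cheap members. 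Assign every edge of $G^{(F)}$ to one $w$ with both endpoints in $F_w$, and orient the clique $K(F_w)$ so that every cheap–expensive edge points toward the cheap member, with an arbitrary linear order among the cheap members and among the expensive members; then in $K(F_w)$ a cheap member has out-degree at most $a$ and an expensive member at most $(|F_w|-1)+(a+1)\le\Delta+a+1$. Finally I would count, for a fixed $e=(u,v)$, that $e$ is cheap for $w$ only if $w\in\{u,v\}$ or $w$ is an in-neighbor of $u$ or $v$ (at most $2+2\Delta$ vertices $w$), and expensive for $w$ only if $w$ is an out-neighbor of $u$ or $v$ in $\vec{G}$ (at most $2a$ vertices $w$); so the out-degree of $e$ in the oriented $G^{(F)}$ (each edge oriented according to the clique it is assigned to) is at most $(2+2\Delta)a+2a(\Delta+a+1)=O(\Delta a)$, which is at most $8\Delta a$ after tracking constants and using $a\le\Delta$. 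Claim~\ref{claim: arboricity and orientation} then gives $a\big(G^{(F)}\big)\le 16\Delta a$.
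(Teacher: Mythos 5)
Your proof is correct. For the edge bound and the degree bound you are doing essentially what the paper does: charging each edge of $G^{(F)}$ to a short witness in $G$, and, for a fixed $e\in F$, counting vertices within distance two of its endpoints (your set $W$), with the matching property making the charging injective; if anything, your case analysis for the degree bound is a bit more careful than the paper's about neighbors $e'$ both of whose endpoints are at distance $1$ from $e$. The arboricity bound is where you take a genuinely different route. The paper fixes a degeneracy ordering of $G$ (at most $2a$ right neighbors per vertex, via Claim~\ref{claim: arboricity and degeneracy}) and orients each edge of $G^{(F)}$ toward the $F$-edge containing the later endpoint of a pair of endpoints at distance exactly two, bounding the out-degree by $8\Delta a$; you instead use the clique cover $G^{(F)}=\bigcup_{w\in V}K(F_w)$, an out-degree-$a$ orientation of $G$ obtained from the Nash--Williams forests, and a cheap/expensive split of each $F_w$ (at most $a+1$ cheap members), orienting clique edges from expensive to cheap and breaking ties by linear orders. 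Both routes arrive at out-degree $O(\Delta a)$ (at most $8\Delta a$) and finish with Claim~\ref{claim: arboricity and orientation}. Your version is longer but more robust: it orients every edge of $G^{(F)}$ unconditionally, whereas the paper's rule presupposes that every edge of $G^{(F)}$ admits a pair of endpoints at distance exactly two, which can fail (if every endpoint of $e$ is adjacent to every endpoint of $e'$, all cross distances are $1$); the paper's rule, when it applies, avoids your cheap/expensive bookkeeping. Two harmless loose ends in your constants: $(2+2\Delta)a+2a(\Delta+a+1)\le 8\Delta a$ uses $a\le\Delta$ together with $\Delta\ge 2$, and $m+2m\Delta\le 4m\Delta$ uses $\Delta\ge 1$; the excluded cases are trivial, since there $G^{(F)}$ has no edges.
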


\begin{proof}
    For a vertex $z$, denote by $h(z)$ the only edge in $F$ (if exists) that is incident on $z$. Observe that for each edge $e=(u,v)$ and a neighbor $w$ of $u$ or $v$, there are at most two pairs of edges $(e_1,e_2)$ in $F$, $(e_1,e_2)\in E^{(F)}$, such that $e\cap e_1\neq \emptyset$ and $w\in e_2$. Indeed, such edges must be $(h(u),h(w))$ or $(h(v),h(w))$. On the other hand, every edge in $G^{(F)}$ can be represented in this way. Therefore, the number of edges in $G^{(F)}$ is at most $4m\Delta$.
    
    We now bound the maximum degree of $G^{(F)}$. Let $e\in F$.
    Since $F$ is a matching, then for each vertex $v\in V$ that is at distance 2 of an endpoint of $e$, there is at most one edge in $F$ that contains this vertex. In addition, each edge $e'\in F$ such that $dist_G(e,e')\leq 2$ must contain such a vertex. Since there are at most $2(\Delta-1)^2$ such edges $e'$, we get that $\deg_{G^{(F)}}(e)\leq 2(\Delta-1)^2$.\footnote{An edge $e'=(v,u)\in F$ which is incident on a neighbor $v$ of an endpoint of $e$ is also incident on a vertex $u$ whose distance from $e$ is 2.}

    Next, we bound the arboricity of $G^{(F)}$. Let $(v_1,v_2,...,v_n)$ be a degeneracy ordering of $G$, i.e., by Claim \ref{claim: arboricity and degeneracy}, for each $i\in\{1,2,...,n\}$, the vertex $v_i$ has at most $2a$ \emph{right neighbors}, that is, neighbors $v_j$, with $j>i$. We will show that this ordering induces an orientation of $G^{(F)}$ in which each vertex has out-degree at most $8\Delta a$. By Claim \ref{claim: arboricity and orientation} it then follows that the arboricity of $G^{(F)}$ is at most $16\Delta a$.
    Consider an edge $(e,e')\in E^{(F)}$. We orient this edge towards $e'$ if there exist endpoints $v_i$ of $e$ and $v_j$ of $e'$ such that $\text{dist}(v_i, v_j) = 2$ and $i < j$. Note that for each $(e,e')\in E^{(F)}$ there exists at least one such pair of endpoints at distance two. Also, note that an edge might be oriented in both directions. (In this case the edge can be oriented arbitrarily.)
    We now bound the outdegree of this orientation. First, observe that for any vertex $v_i$, the number of vertices $v_j$ with $j>i$ and $\text{dist}(v_i,v_j)=2$, is at most $4\Delta a$. Specifically, for each such $v_j$, there exists a vertex $v_k$ such that either:
    \begin{itemize}
        \item $v_k$ is a right neighbor of $v_i$ and $v_j$ is a neighbor of $v_k$, or
        \item $v_k$ is a neighbor of $v_i$ and $v_j$ is a right neighbor of $v_k$.
    \end{itemize}
    In each case, there are at most $2\Delta a$ such vertices $v_j$, leading to a total of at most $4\Delta a$ such vertices.
    Now consider an edge $e=(v_i,v_j)\in F$. As established, each of its endpoints has at most $4\Delta a$ vertices $w$ that are to the right of it and at distance two from it. Moreover, every outgoing edge of $e$ is oriented towards an edge of the form $h(w)$, for such $w$. Therefore, the outdegree of each vertex in this orientation is at most $8\Delta a$.
\end{proof}

Next, we analyse the complexity of constructing $G^{(F)}$. 
\begin{lemma}\label{G_3 construction}$\left(\mathrm{Construction\,\, of\,\,} G^{(F)}\right)$\textbf{.}
    Let $G=(V,E)$ be an $n$-vertex $m$-edge graph with maximum degree $\Delta$ and let $\varphi$ be a partial proper edge-coloring of $G$. Let $F$ be a fixed subset of the set of uncolored edges in $G$, and assume that $F$ is a matching in $G$. The construction of the graph $G^{(F)}$, with some edges possibly appearing more than once, requires $O(1)$ time using $O(m\cdot\Delta)$ processors. Deletion of possible duplications of the edges requires additional $O(\log n)$ time and $O(m\cdot\Delta)$ processors.
\end{lemma}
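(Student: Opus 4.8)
The plan is to build $G^{(F)}$ by a brute-force parallel enumeration of all pairs of $F$-edges at distance at most $2$, writing the outcomes into a fixed-size array, and then to remove duplicates by sorting. Throughout, for a vertex $z$ let $h(z)$ denote the unique edge of $F$ incident on $z$, if one exists. First I would compute $h$: the processor of each vertex $v$ initializes $h(v)\gets\perp$ in a length-$n$ array, and then the processor dedicated to each edge $e=(x,y)\in F$ sets $h(x)\gets e$ and $h(y)\gets e$. Since $F$ is a matching, no two $F$-edges share an endpoint, so these writes do not conflict; this takes $O(1)$ time and $O(m)$ processors (recall $n\le 2m$).

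For the generation step, recall (as in the proof of Claim~\ref{Maximum degree of G^F}) that two distinct edges $e_1,e_2\in F$ satisfy $\mathrm{dist}_G(e_1,e_2)\le 2$ if and only if either some endpoint of $e_1$ is adjacent in $G$ to some endpoint of $e_2$, or some endpoint of $e_1$ and some endpoint of $e_2$ share a common neighbor in $G$ (the sub-case $\mathrm{dist}_G(e_1,e_2)=0$ cannot occur, as $F$ is a matching). Accordingly, for each edge of $G$, each of its two orientations $x\to z$, and each index $i\in\{0,1,\dots,\Delta\}$ I would allocate one processor; there are $2m(\Delta+1)=O(m\Delta)$ of them, and each processor's identifier directly encodes this triple together with the single array slot that it owns. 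The processor for $(x\to z,0)$ considers the pair $(h(x),h(z))$ (valid since $x\sim z$); the processor for $(x\to z,i)$ with $1\le i\le\deg_G(z)$ reads the $i$-th neighbour $y$ of $z$ from $z$'s adjacency list and considers the pair $(h(x),h(y))$; processors with $i>\deg_G(z)$ stay idle. In each case the processor writes the ordered pair into its own slot if both coordinates are defined and distinct, and writes $\perp$ otherwise. Each of the two ordered pairs corresponding to every edge $\{e_1,e_2\}$ of $G^{(F)}$ is produced this way (pick a distance-$\le 2$ witness and the matching orientation), and each processor performs only $O(1)$ array and hash-table lookups and writes once to a private cell, so this step runs in $O(1)$ time with $O(m\Delta)$ processors. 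The resulting array of length $2m(\Delta+1)=O(m\Delta)$, with $\perp$ entries ignored, represents $G^{(F)}$ with edges possibly appearing more than once; this proves the first assertion.

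For the second assertion I would deduplicate this array. Index the $|F|\le m$ vertices of $G^{(F)}$ by $1,\dots,|F|$ and encode each surviving ordered pair $(e_1,e_2)$ as the integer $e_1\cdot|F|+e_2$, an $O(\log n)$-bit number since $m<n^2$; then sort the $O(m\Delta)$ resulting numbers with a standard parallel sorting algorithm (which sorts $N$ items in $O(\log N)$ time using $O(N)$ processors), taking here $O(\log(m\Delta))=O(\log n)$ time and $O(m\Delta)$ processors. In the sorted array, equal pairs and $\perp$'s are contiguous, so in $O(1)$ further time each element can be marked for deletion whenever it equals its predecessor (or is $\perp$), and a parallel prefix-sum compaction ($O(\log n)$ time, $O(m\Delta)$ processors) yields the array of distinct edges. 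Since this array is sorted by first coordinate, the neighbours of each $e\in F$ occupy one contiguous block, whose boundaries and offsets are again found by a prefix sum; this delivers $G^{(F)}$ in the standard adjacency-list representation. Altogether the deduplication (and assembly of adjacency lists) costs an additional $O(\log n)$ time and $O(m\Delta)$ processors, and one may check for consistency that the resulting graph has at most $4m\Delta$ edges and maximum degree at most $2(\Delta-1)^2$, as guaranteed by Claim~\ref{Maximum degree of G^F}.

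The only genuinely non-trivial point is the deduplication: a single edge of $G^{(F)}$ is produced by many witnesses, so duplicates really do occur, and removing them seems to inherently require a global operation; this is exactly why the second half of the statement carries an $O(\log n)$ factor rather than being $O(1)$. The one thing to verify is that $O(m\Delta)$ elements can be sorted in $O(\log n)$ time with $O(m\Delta)$ processors, which holds because $\log(m\Delta)=O(\log n)$ when $m<n^2$ and $\Delta\le n$.
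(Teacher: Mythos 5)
Your proposal is correct and follows essentially the same approach as the paper: assign $O(m\cdot\Delta)$ processors to (edge, neighbor)-type witnesses, have each write a candidate pair $\left(h(\cdot),h(\cdot)\right)$ in $O(1)$ time, and then remove duplicates by parallel sorting in $O(\log n)$ time with $O(m\cdot\Delta)$ processors. Your write-up is merely more explicit about the processor-to-slot bookkeeping, the distance-$1$ versus distance-$2$ witnesses, and the prefix-sum compaction, none of which changes the argument.
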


\begin{proof}
    Since $F$ is a matching, each vertex $v$ will store at most one edge $h(v)\in F$, such that $v$ is one of its endpoints (if exists). We designate a processor $p_e$ to each edge $e\in E$ and a processor $p_{e,w}$, to each edge-vertex pair $(e,w)$, with an edge $e=(v,u)$ and a neighbor $w$ of an endpoint of $e$, i.e., a neighbor of either $v$ or $u$. This latter processor adds the edge $(h(u),h(w))$ to $G^{(F)}$, if $w$ is a neighbor of $v$ and both $h(u)$ and $h(w)$ are defined, and symmetrically, this processor adds the edge $(h(v),h(w))$ to $G^{(F)}$, if $w$ is a neighbor of $u$ and both $h(v)$ and $h(w)$ are defined.
    Hence the construction of $G^{(F)}$ can be implemented in $O(1)$ time using $O(m\cdot\Delta)$ processors.\\
    
    Note that in the construction above it can happen that multiple processors add the same edge $(e,e')$ in parallel to $G^{(F)}$. However, by Claim \ref{Maximum degree of G^F}, the overall size of the constructed multigraph $G^{(F)}$ is $O(m\cdot \Delta)$. Redundancies can now be removed via sorting within additional $O(\log n)$ time using $O(m\cdot \Delta)$ processors.

\end{proof}

By Lemma \ref{G_3 property}, in order to find a large collection of pairwise disjoint fans, it is enough to find a "large" \emph{independent set} of $G^{(F)}$.

\begin{definition}[Independent set]\label{def: independent set}
    For a graph $G=(V,E)$ with maximum degree $\Delta$ and arboricity $a$, an \emph{independent set} $I\subseteq V$ is a set of vertices in $G$ such that for each $v,u\in I$, $(v,u)\notin E$. Let $\lambda(\cdot,\cdot)$ be a polynomial function of two variables, i.e., it depends polynomially both on the first and the second variable. An independent set is called \emph{$\lambda(\Delta,a)$-large} (or $\lambda$-large, if $\Delta$ and $a$ are clear from the context) if it has size $\Omega\left(\frac{n}{\lambda(\Delta,a)}\right)$.
\end{definition}

For our edge-coloring algorithm, we consider numerous algorithms for computing an independent set. These algorithms are based on different vertex-coloring algorithms, which are presented in Appendix \ref{Vertex-Coloring Algorithm}. 
These results are summarized in the next theorem.

\begin{restatable}[Computing an independent set]{theorem}{independentSetAlg}\label{Large independent set alg}
    Let $G=(V,E)$ be an $n$-vertex $m$-edge graph with maximum degree $\Delta$ and arboricity $a$. 
    \begin{enumerate}
        \item[(1)] (\cite{goldberg1989constructing}) A $\left(\Delta+1\right)$-large independent set, can be computed in $O\left(\log^3 n\right)$ time using $O\left(\frac{m}{\log n}\right)$ processors.
        \item[(2)] (Using the vertex-coloring algorithm from Theorem \ref{procedure vertex-coloring properties}) A $\left(\Delta+1\right)$-large independent set, can be computed in $O\left(\log n+\Delta\cdot\log^2\Delta\right)$ time using $O(m)$ processors.
        \item[(3)] (Using the vertex-coloring algorithm from Theorem \ref{Adaptation of BE08}) An $O\left(a\right)$-large independent set, can be computed in $O\left((a\cdot\log a+\log\Delta)\cdot\log n\right)$ time using $O(m)$ processors.
        \item[(4)] (Using the vertex-coloring algorithm from Theorem \ref{Adaptation of BE11 arboricity}) For any constant $\delta>0$, a $\left(a^{1+o(1)}\right)$-large independent set, can be computed in $O\left(\log^{2+\delta} a\cdot\log n+\log\Delta\cdot\log n\right)$ time using $O\left(m\cdot\frac{\log^{\delta}a\cdot\log n}{\log(a\cdot\log n)}\right)$ processors.
        \item[(5)] (Using the vertex-coloring algorithm from Theorem \ref{Adaptation of Bar16}) An $O\left(\Delta\right)$-large independent set, can be computed in $O\left(\sqrt{\Delta}\cdot\left(\log\Delta+\frac{\log n}{\log\Delta\cdot\log(\Delta\cdot\log n)}\right)+\log n\right)$ time using $O\left(m\cdot\left(\sqrt{\Delta}\cdot\log\Delta+\frac{\Delta^{1/4}\cdot\log n}{\log(\Delta\cdot\log n)}\right)\right)$ processors.
    \end{enumerate}
\end{restatable}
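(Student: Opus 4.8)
The plan is to derive each of the five items from a corresponding proper \emph{vertex}-coloring of $G$, via the elementary observation that a proper $c$-vertex-coloring of $G$ yields an independent set of size at least $n/c$: by pigeonhole some color class has at least $n/c$ vertices, and any single color class is independent since the coloring is proper. So for parts (2)--(5) it suffices to (a) run the appropriate coloring routine from Appendix~\ref{Vertex-Coloring Algorithm}, (b) count the vertices of each color (a parallel counting / integer-sort over the $n$ vertices: $O(\log n)$ time, $O(n+m)$ processors), and (c) return the largest class (a parallel maximum over $c\le n$ values: $O(\log n)$ time, $O(n)$ processors). In each item this extra $O(\log n)$ term is absorbed into (or is already present in) the stated running time, and the $O(n+m)=O(m)$ processors (using $m\ge n/2$) are dominated, so the complexity of each item equals that of its underlying coloring routine; and the size guarantee matches because the coloring uses $\lambda$ colors with $\lambda$ equal to the claimed parameter.

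Carrying this out: part~(2) uses the $(\Delta+1)$-vertex-coloring of Theorem~\ref{procedure vertex-coloring properties} ($O(\log n+\Delta\log^2\Delta)$ time, $O(m)$ processors), yielding a $(\Delta+1)$-large independent set; part~(3) uses the vertex-coloring of Theorem~\ref{Adaptation of BE08}, which uses $O(a)$ colors, yielding an $O(a)$-large independent set within $O((a\log a+\log\Delta)\log n)$ time and $O(m)$ processors; part~(4) uses the $a^{1+o(1)}$-vertex-coloring of Theorem~\ref{Adaptation of BE11 arboricity}, yielding an $a^{1+o(1)}$-large independent set within $O\left(\log^{2+\delta}a\cdot\log n+\log\Delta\cdot\log n\right)$ time and $O\left(m\cdot\frac{\log^{\delta}a\cdot\log n}{\log(a\cdot\log n)}\right)$ processors; and part~(5) uses the $O(\Delta)$-vertex-coloring of Theorem~\ref{Adaptation of Bar16}, yielding an $O(\Delta)$-large independent set within the bounds quoted there. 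Part~(1) is handled directly: invoke the deterministic MIS algorithm of~\cite{goldberg1989constructing}, which runs in $O(\log^3 n)$ time using $O((n+m)/\log n)=O(m/\log n)$ processors. A maximal independent set $I$ is a dominating set, so every vertex lies in $I$ or is adjacent to a vertex of $I$; since each vertex has at most $\Delta$ neighbors, $n\le |I|+\Delta\cdot|I|=(\Delta+1)|I|$, hence $|I|=\Omega(n/(\Delta+1))$, i.e.\ $I$ is $(\Delta+1)$-large.

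The real content of the theorem therefore lies in the four vertex-coloring routines, which are PRAM adaptations of $\mathsf{LOCAL}$-model algorithms of Barenboim, Elkin, and others~\cite{barenboim2008sublogarithmic,barenboim2011deterministic,doi:10.1137/12088848X,barenboim2016deterministic,barenboim2018locally}; the main obstacle sits there, not in the reduction above. The delicate parts are: for the arboricity-based colorings (parts (3)--(5)), computing in parallel a low-out-degree acyclic orientation of $G$ (out-degree $O(a)$), which proceeds by an iterated peeling of low-degree vertices and is the source of the extra $\log\Delta$ (and, in part~(5), $\sqrt{\Delta}$-type) factors; and, throughout, organizing the work so the processor count stays near $O(m)$ up to poly-logarithmic factors rather than blowing up to the $n\cdot\Delta$-type bounds that a naive distributed-to-parallel translation would incur. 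Once these colorings are established with the claimed resource bounds, all five statements follow from the pigeonhole extraction described above.
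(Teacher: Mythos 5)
Your proof is correct and follows essentially the same route as the paper: the paper likewise extracts the largest color class from the corresponding vertex-coloring routines (via a parallel sort, adding $O(\log n)$ time and $O(m)$ processors, which are absorbed into the stated bounds) and handles item (1) directly with the MIS algorithm of Goldberg--Spencer, whose output is $(\Delta+1)$-large by the standard domination argument you give.
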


Note that the time complexity of all of the results above is a function of $n$, $\Delta$ and $a$. Also, the number of processors these algorithms require is a function of $n$ and $\Delta$ multiplied by $m$. Hence, to simplify the analysis of our edge-coloring algorithm, we present a notation for these terms. For an $n$-vertex $m$-edge graph $G=(V,E)$ with maximum degree $\Delta$, arboricity $a$, and a parameter $\lambda$ polynomial in $\Delta$ and $a$, we denote the time that is required for computing a $\lambda$-large independent set of $G$ by $IST_{\lambda}\left(n,\Delta,a\right)$, and denote the number of processors required for this process by $m\cdot ISP_{\lambda}\left(n,\Delta,a\right)$. 
Note that for the algorithms from Theorem \ref{Large independent set alg}, we have $IST_{\lambda}\left(O(n),O(\Delta),O(a)\right)=O\left(IST_{\lambda}\left(n,\Delta,a\right)\right)$ and $ISP_{\lambda}\left(O(n),O(\Delta),O(a)\right)=O(ISP_{\lambda}\left(n,\Delta,a\right))$, and that $IST_{\lambda}$ and $ISP_{\lambda}$ are monotonic increasing. Also, note that $IST_{\lambda}\left(n,\Delta,a\right)=\Omega(\log n)$. 

Suppose that we have constructed a $\lambda$-large independent set $I=I^{(F)}$ of the fan-graph $G^{(F)}$, for some polynomial function $\lambda$. We now show how we can restrict this collection of disjoint fans into a collection of fans that can be recolored in parallel. (Recoloring two disjoint fans in parallel may still interfere with one another.) For each edge $e\in I$, we choose arbitrarily an endpoint $v_e$ of $e$, and define $I^*$ to be the set of the selected endpoints. Next, for each edge $e\in I$, and its endpoint $v_e\in I^*$, we construct a fan $f(v_e)=f_e$, centered at $v_e$, with the uncolored edge $e$. By Lemma \ref{G_3 property}, we obtain a collection $C_F=\{f_e\,|\,e\in I\}$ of pairwise disjoint fans. As was mentioned above, we would like to find a collection of fans such that Procedure \textsc{Recolor-Fan} can be applied on all of them in parallel, so that different invocations do not interfere with one another. In the first step of Procedure \textsc{Recolor-Fan}, 
given a fan $\langle u_1,u_2,...,u_k\rangle$ of $v=v_e$ characterized by $(\alpha,\beta)$, we exchange the $\alpha\beta$-path of $v$. In order to avoid intersections between such $\alpha\beta$-paths of different vertices in $I^*$, we focus on vertices in $I^*$ whose fans are characterized by a fixed pair of colors $(\alpha,\beta)$ or $(\beta,\alpha)$, for $\alpha,\beta\in \{1,2,...,\Delta+1\}$. For a pair of colors $\gamma,\delta\in \{1,2,...,\Delta+1\}$, denote 
\begin{equation}\label{IAlpBet}
    I_{\gamma,\delta}=\{v \text{ $|$ } v\in I^*\text{, and the maximal fan $f(v)$ of $v$ is characterized by $(\gamma, \delta)$ or $(\delta,\gamma)$}\}    
\end{equation} 
Let $\alpha,\beta\in\{1,2,...,\Delta+1\}$, such that $|I_{\alpha,\beta}|=\max_{\delta,\gamma\in \{1,2,...,\Delta+1\}}|I_{\delta,\gamma}|$. Denote by $f\left(I_{\alpha,\beta}\right)$ the collection of fans of vertices from $I_{\alpha,\beta}$.
We summarize the properties of this collection of fans in the next lemma:
\begin{lemma}[Disjoint fans and paths]
    The fans in $f\left(I_{\alpha,\beta}\right)$ are pairwise disjoint. In addition, consider two $\alpha\beta$-paths $P$ and $P'$ of two distinct vertices $v\in I_{\alpha,\beta}$ and $v'\in I_{\alpha,\beta}$, respectively. These two paths are either vertex-disjoint or $P=P'$.
\end{lemma}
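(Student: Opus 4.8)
The plan is to prove the two assertions separately, both relying on Lemma~\ref{G_3 property} and on the fact that $I=I^{(F)}$ is an independent set in the fan-graph $G^{(F)}$.

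First I would establish that the fans in $f(I_{\alpha,\beta})$ are pairwise disjoint. Take two distinct vertices $v,v'\in I_{\alpha,\beta}\subseteq I^*$. By construction, $v$ is the chosen endpoint $v_e$ of some edge $e\in I$, and $v'$ is the chosen endpoint $v_{e'}$ of some edge $e'\in I$, with $e\neq e'$ (distinct endpoints come from distinct edges of the matching $F$, since $F$ is a matching and each edge of $I$ contributes exactly one endpoint to $I^*$). The fan $f_e=f(v_e)$ has uncolored edge $e$ and the fan $f_{e'}=f(v_{e'})$ has uncolored edge $e'$. Since $e,e'$ are distinct vertices of the independent set $I$ in $G^{(F)}$, we have $(e,e')\notin E^{(F)}$, i.e., $\mathrm{dist}_G(e,e')>2$. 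By Lemma~\ref{G_3 property}, any fan with uncolored edge $e$ and any fan with uncolored edge $e'$ are disjoint; in particular $f_e$ and $f_{e'}$ are disjoint. This already gives the first assertion, and it holds for all of $C_F$, not just $f(I_{\alpha,\beta})$.

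Next I would handle the $\alpha\beta$-paths. Let $v,v'\in I_{\alpha,\beta}$ be distinct, let $f(v)=\langle u_1,\dots,u_k\rangle$ be characterized by $(\alpha,\beta)$ or $(\beta,\alpha)$, and similarly for $f(v')$; so both $f(v)$ and $f(v')$ have their characteristic pair in $\{(\alpha,\beta),(\beta,\alpha)\}$. Recall that the $\alpha\beta$-path $P$ of $v$ (the $\alpha\beta$-path of the fan $f(v)$) is defined as the connected component of $v$ in $G_{\alpha,\beta}$, where $G_{\alpha,\beta}$ is the subgraph of colored edges with colors $\alpha$ or $\beta$; it is well-defined as $v$ is an endpoint of such a path by Observation~\ref{path endpoint} (since $\alpha=m(v)$ or $\beta=m(v)$, so $|\{\alpha,\beta\}\cap M(v)|\geq 1$). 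Recall also that $G_{\alpha,\beta}$ has maximum degree at most $2$, so its connected components are vertex-disjoint simple paths and even cycles. Therefore two $\alpha\beta$-paths are either equal or vertex-disjoint as components of $G_{\alpha,\beta}$: if $P$ and $P'$ share a vertex then they are the same connected component of $G_{\alpha,\beta}$, hence $P=P'$; otherwise they are vertex-disjoint. This is exactly the claimed dichotomy, and the same-pair restriction (both fans characterized by $(\alpha,\beta)$ or $(\beta,\alpha)$) is what guarantees the two paths live in the \emph{same} graph $G_{\alpha,\beta}$, so that being distinct components forces vertex-disjointness.

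I do not expect a serious obstacle here; the main point to get right is the bookkeeping that $v$ and $v'$ really are endpoints of distinct edges of $I$ (so that Lemma~\ref{G_3 property} applies with $e\neq e'$), and the observation that the $\alpha\beta$-path of a fan is literally a connected component of the degree-$\le 2$ graph $G_{\alpha,\beta}$, so that the path-versus-path dichotomy is immediate from the structure of $G_{\alpha,\beta}$ rather than requiring any new argument. One could optionally remark that it is possible for $P=P'$ even with $v\neq v'$ (the two fan centers can be the two endpoints of a common $\alpha\beta$-path), which is why the statement allows the $P=P'$ case; this is not strictly needed for the proof but clarifies why the conclusion is not simply "vertex-disjoint."
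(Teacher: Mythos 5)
Your proof is correct and follows exactly the reasoning the paper intends: the paper states this lemma without an explicit proof, the first assertion being immediate from Lemma~\ref{G_3 property} together with the independence of $I^{(F)}$ in $G^{(F)}$ (as noted just before the lemma), and the second from the fact that $\alpha\beta$-paths of fans characterized by $(\alpha,\beta)$ or $(\beta,\alpha)$ are connected components of the single degree-at-most-$2$ graph $G_{\alpha,\beta}$, hence equal or vertex-disjoint. Your bookkeeping that distinct vertices of $I_{\alpha,\beta}$ arise from distinct edges of $I$ (since $F$ is a matching) is the right detail to make Lemma~\ref{G_3 property} applicable, so nothing is missing.
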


The only interference that might still appear when recoloring fans in $f\left(I_{\alpha,\beta}\right)$ in parallel are intersections between an $\alpha\beta$-path of one vertex in $I_{\alpha,\beta}$ with a fan of other vertex in $I_{\alpha,\beta}$. To avoid this situation, we define another auxiliary graph $G_{\text{conflict}}=(I_{\alpha,\beta},E_{\text{conflict}})$, in which two distinct vertices $v,v^*\in I_{\alpha,\beta}$ are connected if their fans and $\alpha\beta$-paths might interfere with one another. Namely, $(v,v^*)$ is in $E_{\text{conflict}}$ if an endpoint of the $\alpha\beta$-path of $v$ is a vertex $u^*_i$ in the fan $\langle u^*_1,...,u^*_k\rangle$ of $v^*$ and $\beta(v^*)$ \footnote{$\beta(v^*)$ is the missing color of $u_k^*$. Note that this fan might be characterized by ($\beta,\alpha$), and then $\beta(v^*)=\alpha$. Otherwise $\beta(v^*)=\beta$.} is the (designated) missing color of $u^*_i$ in this fan. We will see later that indeed, only in this case, the recoloring of the fans of $v$ and $v^*$ might interfere with one another. See Figure \ref{fig:disturbing fans} for an example of two such fans.

\begin{figure}
    \centering
    \begin{tikzpicture}

        \begin{scope}[xshift=-4cm, yshift=1cm, rotate=330]
        \node[circle] at ({180}:0.3cm)  {$v$};
        \node[circle] at ({90}:2.33cm)  {$u_1$};
        \node[circle] at ({70}:2.33cm)  {$u_2$};
        \node[circle] at ({52}:2.35cm)  {$u_3$};
        \node[circle] at ({28}:2.38cm)  {$u_4$};
        \node[circle] at ({10}:2.4cm)  {$u_5$};
        \node[circle] at ({350}:2.4cm)  {$u_6$};
        \node[circle] at ({330}:2.4cm)  {$u_7$};
        \node[circle,fill=violet] at (360:0mm) (center) {};
        \node[circle,fill=blue] at ({90}:2cm) (n1) {};
        \node[circle,fill=red] at ({70}:2cm) (n2) {};
        \node[circle,fill=pink] at ({50}:2cm) (n3) {};
        \node[circle,fill=olive] at ({30}:2cm) (n4) {};
        \node[circle,fill=cyan] at ({10}:2cm) (n5) {};
        \node[circle,fill=brown] at ({350}:2cm) (n6) {};
        \node[circle,fill=pink] at ({330}:2cm) (n7) {};
    
        \draw[dotted, line width=0.7mm] (center)--(n1);
        \draw[line width=0.7mm, blue] (center)--(n2);
        \draw[line width=0.7mm, red] (center)--(n3);
        \draw[line width=0.7mm, pink] (center)--(n4);
        \draw[line width=0.7mm, olive] (center)--(n5);
        \draw[line width=0.7mm, cyan] (center)--(n6);
        \draw[line width=0.7mm, brown] (center)--(n7);

        \node[circle, fill=black] at (1.6,2.8) (n8) {};
        \node[circle, fill=black] at (30:3.5cm) (n9) {};
        \node[circle, fill=black] at (2.9,3.5) (n10) {};
        \node[circle, fill=black] at (30:5cm) (n11) {};
        \node[circle, fill=black] at (4.1,4.2) (n12) {};
        \node[circle, fill=violet] at (30:6.5cm) (n13) {};

        \draw[line width=0.7mm, violet] (n4)--(n8);
        \draw[line width=0.7mm, pink] (n8)--(n9);
        \draw[line width=0.7mm, violet] (n9)--(n10);
        \draw[line width=0.7mm, pink] (n10)--(n11);
        \draw[line width=0.7mm, violet] (n11)--(n12);
        \draw[line width=0.7mm, pink] (n12)--(n13);
        \end{scope}

        \begin{scope}[xshift =4.377cm, yshift=0.312cm, yscale=1,xscale=-1, rotate=10]
        \node[circle] at ({180}:0.4cm)  {$v^*$};
        \node[circle] at ({70}:2.4cm)  {$u^*_1$};
        \node[circle] at ({50}:2.4cm)  {$u^*_2$};
        \node[circle] at ({28}:2.4cm)  {$u^*_3$};
        \node[circle] at ({10}:2.4cm)  {$u^*_4$};
        \node[circle] at ({350}:2.4cm)  {$u^*_5$};
        \node[circle] at ({330}:2.4cm)  {$u^*_6$};
        \node[circle,fill=pink] at (360:0mm) (center) {};
        \node[circle,fill=red] at ({70}:2cm) (n2) {};
        \node[circle,fill=cyan] at ({50}:2cm) (n3) {};
        \node[circle,fill=olive] at ({30}:2cm) (n4) {};
        \node[circle,fill=violet] at ({10}:2cm) (n5) {};
        \node[circle,fill=brown] at ({350}:2cm) (n6) {};
        \node[circle,fill=violet] at ({330}:2cm) (n7) {};
    
        \draw[dotted, line width=0.7mm] (center)--(n2);
        \draw[line width=0.7mm, red] (center)--(n3);
        \draw[line width=0.7mm, cyan] (center)--(n4);
        \draw[line width=0.7mm, olive] (center)--(n5);
        \draw[line width=0.7mm, violet] (center)--(n6);
        \draw[line width=0.7mm, brown] (center)--(n7);

        \end{scope}
    \end{tikzpicture}

    \caption{Two interfering fans - $f(v)=\langle u_1,u_2,...,u_7\rangle$ centered at $v$ and characterized by $(\alpha,\beta)=(\color{violet}\bullet\color{black},\color{pink}\bullet\color{black})$ and $f(v^*)=\langle u^*_1,u^*_2,...,u^*_6\rangle$ centered at $v^*$ and characterized by $(\beta,\alpha)=(\color{pink}\bullet\color{black},\color{violet}\bullet\color{black})$ - an endpoint of the $(\color{violet}\bullet\color{pink}\bullet\color{black})$-path of $v$ is a vertex $u^*_i=u^*_4$ in the fan of $v^*$ with missing color $\beta=\color{violet}\bullet$.}
    \label{fig:disturbing fans}
\end{figure}

We now formally define the graph $G_{\text{conflict}}$. 

\begin{definition}[The auxiliary graph $G_{\text{conflict}}$]\label{The auxiliary graph G_conflict}
    Let $G=(V,E)$ be a graph and $\alpha,\beta\in\{1,2,...,\Delta+1\}$ be a pair of colors. Let $I_{\alpha,\beta}$ be as defined above (see Equation (\ref{IAlpBet})) and let $G_{\alpha,\beta}$ be as defined above ($G_{\alpha,\beta}=(V,E_{\alpha,\beta})$, where $E_{\alpha,\beta}=\{e\in E\,|\,\varphi(e)=\alpha \text{ or } \varphi(e)=\beta\}$). We define the graph $G_{\mathrm{conflict}}=(I_{\alpha,\beta}, E_{\mathrm{conflict}})$ over the vertices in $I_{\alpha,\beta}$, where for distinct vertices $v,u\in I_{\alpha,\beta}$, the edge $(u,v)$ is in $E_{\mathrm{conflict}}$ if there exists a path connected component in $G_{\alpha,\beta}$ such that its endpoints are either $v$ and $y(u)$ or $v$ and $z(u)$, or symmetrically $u$ and $y(v)$ or $u$ and $z(v)$. (Recall that the special vertices $x(v)$, $y(v)$ and $z(v)$ are defined in Definition \ref{def: special vertices}.)
\end{definition}

We first show that indeed, for two vertices $v,u\in I_{\alpha,\beta}$ that are not connected in $G_{\text{conflict}}$, the fans of $v$ and $u$ can be recolored by Procedure \textsc{Recolor-Fan} in parallel.

\begin{lemma}[A property of $G_{\text{conflict}}$]\label{G_conflict property}
    Let $G=(V,E)$ be a graph and let $\varphi$ be a partial proper edge-coloring of $G$. Let $I_{\alpha,\beta}$ and $G_{\mathrm{conflict}}$ be as defined above in Equation (\ref{IAlpBet}) and Definition \ref{The auxiliary graph G_conflict}, respectively. For each pair of vertices $v,v^*\in I_{\alpha,\beta}$ such that $(v,v^*)\notin E_{\mathrm{conflict}}$, the fans of $v$ and $v^*$ can be recolored in parallel.
\end{lemma}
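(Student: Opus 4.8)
The plan is to first re-express non-adjacency in $G_{\text{conflict}}$ as a statement purely about the $\alpha\beta$-paths of $v$ and $v^*$, and then to check, operation by operation, that executing Procedure \textsc{Recolor-Fan} on the fan $f=f(v)=\langle u_1,\dots,u_k\rangle$ of $v$ and on the fan $f^*=f(v^*)=\langle u_1^*,\dots,u_\ell^*\rangle$ of $v^*$ concurrently yields the same proper coloring as executing them one after the other. Write $P$ and $P^*$ for the $\alpha\beta$-paths of $v$ and $v^*$, and let $w$, $w^*$ be their respective ``far'' endpoints ($w=v$ if $P$ is a single vertex, and likewise for $w^*$). Throughout I use three facts: by the preceding lemma on disjoint fans and paths, $f$ and $f^*$ are vertex-disjoint and $P,P^*$ are either vertex-disjoint or equal; exchanging an $\alpha\beta$-path changes only the colors of the edges on that path and which of $\alpha,\beta$ is free at its two endpoints; and (from the proof of Theorem~\ref{Procedure Recolor-Fan}) after exchanging $P$, Step~2 for $v$ takes $u_i=z(v)$ when $z(v)\ne\emptyset$ and $\beta$ is still free at $z(v)$, and $u_i=y(v)$ otherwise — in all cases $\beta$ is then free at both $v$ and $u_i$, the truncated fan $\langle u_1,\dots,u_i\rangle$ is still valid, and its designated missing colors $m(u_j)$, $j<i$, lie outside $\{\alpha,\beta\}$ except possibly at $z(v)$.

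\emph{Translation.} Since $v^*$ has one of $\alpha,\beta$ free, it is (by Observation~\ref{path endpoint}) the endpoint of exactly one $\alpha\beta$-path, namely $P^*$, so a path component of $G_{\alpha,\beta}$ with endpoints $v^*$ and $y(v)$ (respectively $v^*$ and $z(v)$) exists iff $w^*=y(v)$ (respectively $w^*=z(v)$); moreover, the disjointness of the fans gives $v^*\notin\{v,y(v),z(v)\}$ and $v\notin\{v^*,y(v^*),z(v^*)\}$. Substituting into Definition~\ref{The auxiliary graph G_conflict}, the hypothesis $(v,v^*)\notin E_{\text{conflict}}$ says exactly that $w^*\notin\{y(v),z(v)\}$ and $w\notin\{y(v^*),z(v^*)\}$.

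\emph{Concurrent execution.} Run both invocations together: first perform the Step~1 exchanges (exchange $P$ and $P^*$; if $P=P^*$ it is exchanged once, which serves both invocations simultaneously), getting a coloring $\varphi'$ that is proper because it is a composition of exchanges on vertex-disjoint (or identical) $\alpha\beta$-paths; then perform the two Step~2 operations. The key point is that exchanging $P^*$ does not disturb $v$'s Step~2, and symmetrically. Exchanging $P^*$ changes free colors only at $v^*$ and $w^*$; since $v^*\ne v$ and $v^*$ is not a fan vertex of $v$, and $w^*\notin\{y(v),z(v)\}$ by the translation, the $\alpha\beta$-status of $v$, of $z(v)$ and of $y(v)$ is the same under $\varphi'$ as after exchanging $P$ alone. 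Hence the index $i$ chosen for $v$ is unchanged, $\beta$ remains free at $v$ and at $u_i$, and $\langle u_1,\dots,u_i\rangle$ is still a valid fan of $v$: its edges $(v,u_j)$ are not on $P^*$ (they are incident to $v$, and $v$ lies on $P^*$ only when $P=P^*$, a case absorbed by the single exchange), and each designated missing color $m(u_j)$, $j<i$, is either outside $\{\alpha,\beta\}$ (hence untouched by exchanging $P^*$) or equal to $m(z(v))$, whose $\alpha\beta$-status we just showed is preserved. Thus $v$'s Step~2 is well-defined and proper, and symmetrically for $v^*$. Finally, the two Step~2 operations commute: $v$'s Step~2 only recolors $(v,u_1),\dots,(v,u_i)$ and colors one of them $\beta$, while $v^*$'s only touches edges incident to $v^*$ or to its fan vertices; since $f$ and $f^*$ are vertex-disjoint, these edge sets and the relevant missing-color sets are disjoint, so neither operation invalidates the other. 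The concurrent execution therefore produces a proper coloring in which both $(v,u_1)$ and $(v^*,u_1^*)$ get colored, which is what it means for the fans of $v$ and $v^*$ to be recolored in parallel.

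\emph{Main obstacle.} The delicate part is the middle step: one must use ``exchanging a path touches only the two endpoints' free colors'' at full strength, and verify that the non-conflict hypothesis rules out exactly the harmful coincidences ($w^*\in\{y(v),z(v)\}$ and its mirror) while disjointness of the fans rules out the rest ($v^*$ being a fan vertex of $v$, the two Step~2's clashing on an edge or a missing color). The degenerate case $P=P^*$ also warrants a separate check: then $w^*=v$ and $w=v^*$, so neither center is the far endpoint of the common path, which forces both invocations into the $u_i=z(\cdot)$ branch of Step~2 (case~(3) in the proof of Theorem~\ref{Procedure Recolor-Fan}); consequently each rotation stops at $z(\cdot)$, before the shared path-edge $(\cdot,x(\cdot))$, and exchanging the common path exactly once is consistent with both invocations.
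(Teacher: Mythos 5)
Your proof is correct and uses the same ingredients as the paper's argument: that exchanging an $\alpha\beta$-path changes only the $\{\alpha,\beta\}$-status of its two endpoints, fan-disjointness, paths being disjoint or identical, and the translation of $(v,v^*)\notin E_{\mathrm{conflict}}$ into ``the far endpoint of each path avoids $y(\cdot),z(\cdot)$ of the other fan,'' from which each center's local view after the joint Step~1 coincides with its view in a solo run, so Theorem~\ref{Procedure Recolor-Fan} applies and vertex-disjointness keeps the two rotations from interfering. The only differences are organizational -- you reduce to the single-fan theorem and handle the shared-path case $P=P^*$ directly (both invocations land in the $u_i=z(\cdot)$ branch), whereas the paper re-runs the three-case analysis and excludes $P=P^*$ inside its case (2) -- and your phrase ``neither center is the far endpoint of the common path'' is a slip that should say neither path ends at a special vertex of the other fan, which is what your argument actually uses.
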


\begin{proof}
    Let $\langle u_1,...,u_k\rangle$ be the fan of $v$, and let $\langle u^*_1,...,u^*_l\rangle$ be the fan of $v^*$. Since the $\alpha\beta$-paths of $v$ and $v^*$ are either vertex disjoint or constitute the same path, the path exchanging of the $\alpha\beta$-paths of $v$ and $v^*$ can be done in parallel.
    We show now that the rotation of the fan of $v$ (and $v^*$) can be performed properly after these exchangings (for $v^*$ the proof is symmetric).
    Assume, without loss of generality, that the fan of $v$ is characterized by $(\alpha,\beta)$ (and not $(\beta,\alpha)$). First, observe that after the exchanging of the $\alpha\beta$-path of $v$, the color $\beta$ is free at $v$. Recall that the exchanging of a path affects only the free colors of its endpoints.
    Observe that by the definition of a fan, for each $u_j\in\{u_1,...,u_{k-1}\}\setminus \{z(v)\}$, we have $m(u_j)\neq \beta$ (as only $(v,x(v))$ is $\varphi$-colored $\beta$) and for each $u_j\in \{ u_1,...,u_{k-1}\}$, we have $m(u_j)\neq \alpha$. (Note that $\alpha$ is free at $v$. On the other hand, $m(u_j)=\alpha$ implies $\varphi(v,u_{j+1})=\alpha$, contradiction.) Hence, the exchanging of the $\alpha\beta$-paths of $v$ and $v^*$ does not affect the (designated) missing colors of the vertices in $\{u_1,...,u_{k-1}\}\setminus \{z(v)\}$. We now consider three cases:
    \begin{itemize}
        \item[(1)] If $z(v)$ and $x(v)$ do not exist (see Figure \ref{fig:special vertices}), then there is no incident edge on $v$ that is $\varphi$-colored $\beta$. Hence, by the definition of a fan, the color $\beta$ is not the missing color of vertices in $\{u_1,...,u_{k-1}\}$, and the $\alpha\beta$-path of $v$ is empty. Therefore, the exchanging of the $\alpha\beta$-paths of $v$ does not affect the missing colors of the vertices in $\{u_1,...,u_{k-1}\}$. In addition, the $\alpha\beta$-path of $v^*$ cannot end at $u_k=y(v)$, as $(v,v^*)\notin E_{\text{conflict}}$. 
        Hence, after its exchanging, $\beta$ is still missing at $u_k$. So we get that $\langle u_1,...,u_k\rangle$ is still a fan of $v$, and $\beta$ remains free at $u_k=y(v)$. Hence, step 2 of Procedure \textsc{Recolor-Fan} (the rotations) is well-defined and can be performed.
        \item[(2)] Consider the case that $z(v)$ is an endpoint of the $\alpha\beta$-path of $v$. (The other endpoint is $v$ itself. See Figure \ref{fig:alpha beta path collision}.) Then after the exchanging of the $\alpha\beta$-path of $v$, the color $\alpha$ is free at $z(v)$, and the edge $(v,x(v))$ is colored $\alpha$ (and the colors and missing colors of the rest of the fan are unchanged). Hence $\langle u_1,...,u_k\rangle$ is still a fan of $v$ ($z(v)$ and $x(v)$ are consecutive vertices in the fan). Since $(v,v^*)\notin E_{\mathrm{conflict}}$, the $\alpha\beta$-path of $v^*$ cannot end at $u_k=y(v)$ or $z(v)$. Hence exchanging the $\alpha\beta
        $-path of $v^*$ does not affect the missing colors of $u_k$ and $z(v)$. Observe also that if the $\alpha\beta$-paths of $v$ and $v^*$ are equal to one another then $v^*=z(v)$. On the other hand, $v$ and $v^*$ are two distinct vertices in $I_{\alpha,\beta}\subseteq I^*$, i.e., they are endpoints of two edges $e$ and $e^*$, respectively, $e,e^*\in I^{(F)}$. Since $I^{(F)}$ is an independent set in $G^{(F)}$, it means that $dist_G(e,e^*)>2$, but $v$ and $v^*=z(v)$ are neighbors in $G$, contradiction. Hence, the $\alpha\beta$-path of $v^*$ and the $\alpha\beta$-path of $v$ are disjoint. Then the colors of the edges of the fan of $v$ are also unchanged when exchanging the $\alpha\beta$-path of $v^*$ (the $\alpha\beta$-path of $v^*$ cannot contain the edge $(v,x(v))$ or the vertex $z(v)$). Hence $\langle u_1,...,u_k\rangle$ is still a fan of $v$, and step 2 of Procedure \textsc{Recolor-Fan} is well-defined and can be performed.
        \item[(3)] Otherwise, since $(v,v^*)\notin E_{\mathrm{conflict}}$, then $z(v)$ is not an endpoint of the $\alpha\beta$-path of $v^*$. (In addition, it is not an endpoint of the $\alpha\beta$-path of $v$.) After the exchanging of the $\alpha\beta$-paths of $v$ and $v^*$, the color $\beta$ is still the missing color of $z(v)$, and the colors and missing colors of the sub-fan, $\langle u_1,...,z(v)\rangle$, are unchanged). Hence $\langle u_1,...,z(v)\rangle$ is also a fan of $v$, and step 2 of Procedure \textsc{Recolor-Fan} is well-defined and can be performed.
    \end{itemize}
\end{proof}

We now analyse the maximum degree of $G_{\mathrm{conflict}}$.

\begin{claim}[Maximum degree of $G_{\mathrm{conflict}}$]\label{Maximum degree of G_conflict}
    Let $G=(V,E)$ be a graph and let $\varphi$ be a partial proper edge-coloring of $G$. Let $G_{\mathrm{conflict}}$ be the graph defined in Definition \ref{The auxiliary graph G_conflict}. The maximum degree of $G_{\mathrm{conflict}}$ is at most 3.
\end{claim}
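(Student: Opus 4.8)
The plan is to fix an arbitrary vertex $v \in I_{\alpha,\beta}$ and to show that it has at most three neighbours in $G_{\mathrm{conflict}}$, by partitioning, according to Definition~\ref{The auxiliary graph G_conflict}, the ways in which a vertex $u$ can be adjacent to $v$. By that definition, $(u,v) \in E_{\mathrm{conflict}}$ iff some $\alpha\beta$-path of $G_{\alpha,\beta}$ has endpoint set equal to one of $\{v, y(u)\}$, $\{v, z(u)\}$, $\{u, y(v)\}$, $\{u, z(v)\}$. I will show that the first two of these possibilities together produce at most one neighbour $u$, that the possibility involving $y(v)$ produces at most one, and that the one involving $z(v)$ produces at most one, for a total of at most three.

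The structural fact underlying the argument is that each of the (at most) three vertices $v$, $y(v)$, $z(v)$ has degree at most $1$ in $G_{\alpha,\beta}$. For $v$ this holds because $v \in I_{\alpha,\beta}$ means its maximal fan is characterized by $(\alpha,\beta)$ or $(\beta,\alpha)$, so $m(v) \in \{\alpha,\beta\}$; for $y(v) = u_k$ and (when it exists) $z(v) = u_{i-1}$ it holds because, by Definition~\ref{def: special vertices} together with fan property~(iii), both of these vertices have designated missing colour $\beta(v) \in \{\alpha,\beta\}$. Hence each of $v$, $y(v)$, $z(v)$ misses one of the colours $\alpha, \beta$, so is incident to at most one edge of $G_{\alpha,\beta}$, and by Observation~\ref{path endpoint} is an endpoint of a unique $\alpha\beta$-path of $G_{\alpha,\beta}$.

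Now for the counting. Let $P_v$ be the unique $\alpha\beta$-path having $v$ as an endpoint, and let $p_v$ be its other endpoint (with $p_v = v$ in the degenerate case $P_v = \{v\}$). If the disjunct $\{v, y(u)\}$ or $\{v, z(u)\}$ holds for some $u$, the $\alpha\beta$-path in question must be $P_v$, so $y(u) = p_v$ or $z(u) = p_v$; since the fans $f(u)$ for $u \in I_{\alpha,\beta}$ are pairwise disjoint, the single vertex $p_v$ lies in at most one such fan, and inside that fan can coincide with at most one of $y(\cdot), z(\cdot)$ (these being distinct vertices of the fan), while $p_v \ne v$ whenever this identification is non-vacuous, since $v$ is a fan centre and so lies in no other fan. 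Thus at most one $u$ arises from these two disjuncts. For the disjunct $\{u, y(v)\}$: since $y(v)$ is a fixed endpoint of a unique $\alpha\beta$-path $Q$, writing $q$ for the other endpoint of $Q$, the disjunct forces $u = q$, so at most one $u$; the disjunct $\{u, z(v)\}$ is handled identically with $z(v)$ in place of $y(v)$. Summing, $\deg_{G_{\mathrm{conflict}}}(v) \le 3$, as claimed.

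I do not expect a serious obstacle. The two points that need care are (i) the degree-at-most-one claim for $y(v)$ and $z(v)$, which relies on identifying their designated missing colours with $\beta(v)$ via fan property~(iii), and (ii) the disposal of degenerate coincidences — the ``other endpoint'' of an $\alpha\beta$-path equalling $v$ itself, or the vertex $p_v$ playing the role of $y$ in one fan and of $z$ in another — both of which are ruled out by the pairwise-disjointness of the fans in $f(I_{\alpha,\beta})$ and the fact that $y$ and $z$ of a single fan are distinct vertices.
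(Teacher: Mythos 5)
Your proof is correct and follows essentially the same route as the paper: the three cases you count (the unique $u$ whose fan contains the other endpoint $p_v$ of $v$'s $\alpha\beta$-path, the other endpoint of the $\alpha\beta$-path of $y(v)$, and the other endpoint of the $\alpha\beta$-path of $z(v)$) are exactly the three potential neighbours the paper lists, with your argument simply spelling out the uniqueness facts (degree at most $1$ of $v$, $y(v)$, $z(v)$ in $G_{\alpha,\beta}$ and pairwise disjointness of the fans) that the paper leaves implicit.
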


\begin{proof}
    Let $v\in I_{\alpha,\beta}$ a vertex in $G_{\mathrm{conflict}}$. By the definition of $G_{\mathrm{conflict}}$, $v$ can be only connected to the other endpoint of the $\alpha\beta$-path of $y(v)$, the other endpoint of the $\alpha\beta$-path of $z(v)$, and to the center of the fan containing the other endpoint of the $\alpha\beta$-path of $v$. Hence $\Delta(G_{\mathrm{conflict}})\leq 3$.
\end{proof}

Next, we analyse the complexity of the construction of $G_{\mathrm{conflict}}$.

\begin{lemma}[Construction of $G_{\text{conflict}}$]\label{G_conflice construction}
    Let $G=(V,E)$ be an $n$-vertex graph with maximum degree $\Delta$, and let $\varphi$ be a partial proper edge-coloring of $G$. Let $I_{\alpha,\beta}$ and $G_{\mathrm{conflict}}$ be as defined above (see Equation (\ref{IAlpBet}) and Definition \ref{The auxiliary graph G_conflict}, respectively). The construction of the graph $G_{\mathrm{conflict}}$ requires $O(\log n)$ time using $O(n)$ processors. 
\end{lemma}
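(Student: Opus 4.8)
The plan is to assemble $G_{\mathrm{conflict}}$ in three stages: (i) compute the connected components of $G_{\alpha,\beta}$ and, for each $v\in I_{\alpha,\beta}$, the other endpoint of its $\alpha\beta$-path; (ii) compute the special vertices $y(v),z(v)$ of each fan; and (iii) combine this information through an auxiliary ownership array to emit the conflict edges. Throughout I will lean on the fact that $\varphi$ is proper, so $\Delta(G_{\alpha,\beta})\le 2$ and $G_{\alpha,\beta}$ has at most $n$ edges, and on the fact that the fans $f(I_{\alpha,\beta})$ are pairwise disjoint, so their vertex sets sum to at most $n$.

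\emph{Stage (i).} Using the $\mathrm{Color2Edge}(\cdot)$ tables, each vertex $w$ locates in $O(1)$ time its (at most one) incident $\alpha$-colored edge and its (at most one) incident $\beta$-colored edge; this builds $G_{\alpha,\beta}$ with $O(n)$ processors in $O(1)$ time. I then invoke the connected-components algorithm of Lemma~\ref{Connected components algorithm}, which costs $O(\log n)$ time and $O(n)$ processors since $|E_{\alpha,\beta}|\le n$. A vertex $w$ is an endpoint of a path component iff $\deg_{G_{\alpha,\beta}}(w)\le 1$, checkable in $O(1)$. For each path component, its (at most two) endpoints record themselves in an array cell indexed by the component id via two $O(1)$ rounds of $\mathrm{ARBITRARY\ CRCW}$ writes: in round one every endpoint attempts to write itself into a ``first'' field; in round two every endpoint not equal to the stored ``first'' value writes itself into a ``second'' field (a single-vertex component leaves ``second'' undefined, which we treat as equal to ``first''). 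Every $v\in I_{\alpha,\beta}$ has $\alpha$ or $\beta$ missing at it (its fan is characterized by $(\alpha,\beta)$ or $(\beta,\alpha)$, so $m(v)\in\{\alpha,\beta\}$), hence by Observation~\ref{path endpoint} it lies in a path component as one of its endpoints; it reads its component cell and sets $o(v)$ to the recorded endpoint distinct from itself ($o(v)=v$ when the component is a single vertex).

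\emph{Stage (ii).} For each $v\in I_{\alpha,\beta}$ we set $y(v):=u_k$ (the last fan vertex), and to find $x(v)$ (hence $z(v)$) we assign one processor to each edge of the fan of $v$; the unique processor whose edge $(v,u_i)$ is $\varphi$-colored $\beta(v)$ with $i\in\{2,\dots,k-1\}$ reports $i$, giving $x(v)=u_i$ and $z(v)=u_{i-1}$; if no such $i$ exists then $x(v)=z(v)=\emptyset$, as in Definition~\ref{def: special vertices}. There is no write conflict because $\varphi$ is proper (at most one incident edge of $v$ is colored $\beta(v)$). Since the fans of distinct vertices of $I_{\alpha,\beta}$ are pairwise disjoint, the total number of fan edges over all $v\in I_{\alpha,\beta}$ is $O(n)$, so this stage runs in $O(1)$ time with $O(n)$ processors.

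\emph{Stage (iii).} Form an array $\mathrm{Owner}[\cdot]$ over $V$ by setting $\mathrm{Owner}[y(v)]\gets v$, and $\mathrm{Owner}[z(v)]\gets v$ whenever $z(v)\neq\emptyset$, for every $v\in I_{\alpha,\beta}$. Because the fans are pairwise disjoint and $y(v)\neq z(v)$ (both are neighbors of $v$, and they are distinct vertices of the fan), all these writes target distinct cells, so $\mathrm{Owner}$ is well defined and is built in $O(1)$ time with $O(n)$ processors. Finally each $v\in I_{\alpha,\beta}$ looks up $u:=\mathrm{Owner}[o(v)]$ and, if $u$ is defined and $u\neq v$, adds the edge $(v,u)$. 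By Definition~\ref{The auxiliary graph G_conflict}, $(v,u)\in E_{\mathrm{conflict}}$ exactly when $o(v)\in\{y(u),z(u)\}$ or $o(u)\in\{y(v),z(v)\}$; the first alternative is detected when processing $v$ and the second when processing $u$, so every conflict edge is produced (possibly twice). Since $\Delta(G_{\mathrm{conflict}})\le 3$ by Claim~\ref{Maximum degree of G_conflict}, there are at most $3n/2$ such edges, and a single parallel sort removes duplicates in $O(\log n)$ time with $O(n)$ processors. Overall the construction costs $O(\log n)$ time and $O(n)$ processors, dominated by the connected-components call.

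The only delicate point — and the step I expect to require the most care — is isolating the ``other endpoint'' $o(v)$ of each $\alpha\beta$-path and guaranteeing that the auxiliary arrays (the per-component endpoint cells and $\mathrm{Owner}[\cdot]$) are populated without $\mathrm{CRCW}$ write conflicts; this rests entirely on $\Delta(G_{\alpha,\beta})\le 2$ and on the pairwise disjointness of the fans of $I_{\alpha,\beta}$, so it is bookkeeping rather than anything deep, and the processor budget stays at $O(n)$ precisely because disjointness bounds the total fan size by $n$.
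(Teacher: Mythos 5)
Your proof is correct and follows essentially the same route as the paper's: compute the connected components of $G_{\alpha,\beta}$ in $O(\log n)$ time with $O(n)$ processors, record the two endpoints of each path component, and match those endpoints against the special vertices $y(\cdot),z(\cdot)$ of the fans (the paper lets each special vertex store its fan center, which is exactly your $\mathrm{Owner}$ array). Your extra steps (explicitly computing $y(v),z(v)$ from the disjoint fans and deduplicating via a sort) are harmless bookkeeping within the stated $O(\log n)$ time and $O(n)$ processor bounds.
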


\begin{proof}
    The construction of $G_{\text{conflict}}$ can be done in three steps: 
    \begin{itemize}
        \item First, we compute connected components of the graph $G_{\alpha,\beta}=(V,E_{\alpha,\beta})$, that was defined above ($G_{\alpha,\beta}=(V,E_{\alpha,\beta})$, where $E_{\alpha,\beta}=\{e\in E\,|\,\varphi(e)=\alpha \text{ or } \varphi(e)=\beta\}$), using the algorithm from Lemma \ref{Connected components algorithm}~\cite{shiloach1982logn} in $O(\log n)$ time using $O\left(n\right)$ processors. (Note that there are at most $n$ edges in $G_{\alpha,\beta}$.) Let $l\leq n$ be the resulting number of connected components.
        \item Next, we define an auxiliary array $L$ of size $l$, whose entries are pairs. We then assign a processor to each vertex $v\in V$. Recall that the connected components of $G_{\alpha,\beta}$ are either simple cycles or simple paths. For each $v\in V$, if $v$ is an endpoint of a path connected component indexed by $i$, the processor adds $v$ to $L[i]$. This step requires $O(1)$ time and $O(n)$ processors.
        \item Finally, we assign a processor to each connected component $i\in\{1,2,...,l\}$ of $G_{\alpha,\beta}$. Observe that for a non-empty path connected component $i$ of $G_{\alpha,\beta}$, the entry $L[i]$ will contain the two endpoints $v_i$ and $u_i$ of this path. If $v_i\in I_{\alpha,\beta}$ and $u_i=y(u_i^*)$ or $u_i=z(u_i^*)$ for some $u_i^*\in I_{\alpha,\beta}$ then the processor defines the edge $(v_i,u^*_i)$ in $G_{\text{conflict}}$ (every vertex of the form $x(v)$, $y(v)$, or $z(v)$ will store its fan center $v$). This step requires $O(1)$ time and $O(n)$ processors.
    \end{itemize}
    Hence the construction of $G_{\text{conflict}}$ requires $O(\log n)$ time using $O\left(n\right)$ processors.
\end{proof}

By Lemma \ref{G_conflict property}, in order to find a collection of fans that can be recolored in parallel, it is sufficient to find an independent set of $G_{\mathrm{conflict}}$. Next, we present Procedure \textsc{Parallel-Fans} that computes a collection of fans that can be recolored in parallel.\\
\\$\textsc{Procedure}$ $\textsc{Parallel-Fans}\text{ }(G=(V,E),\varphi)$
    \begin{description}
        \item[\textbf{Step 1.}] Let $F$ be a fixed subset of the set of uncolored edges in $G$. Construct the graph $G^{(F)}$ (see Definition \ref{def: G^F}) using the algorithm from Lemma \ref{G_3 construction}.
        \item[\textbf{Step 2.}] Compute a $\lambda$-large independent set $I^{(F)}$ of $G^{(F)}$ using one of the algorithms from Theorem \ref{Large independent set alg}.
        \item[\textbf{Step 3.}] For each $e\in I^{(F)}$ in parallel, choose an arbitrarily endpoint $v$ of $e$ and compute a maximal fan of $v$ with the uncolored edge $e$, using Lemma \ref{Fan construction}. Denote by $I^*$ the set of the selected endpoints.
        \item[\textbf{Step 4.}] Let $\alpha,\beta\in\{1,2,...,\Delta+1\}$ be such that $|I_{\alpha,\beta}|=\max_{\delta,\gamma\in \{1,2,...,\Delta+1\}}|I_{\delta,\gamma}|$. Construct the graph $G_{\text{conflict}}=(I_{\alpha,\beta},E_{\mathrm{conflict}})$ using the algorithm from Lemma \ref{G_conflice construction}.
        \item[\textbf{Step 5.}] Compute an $\left(\Delta\left(G_{(\text{conflict})}\right)+1\right)$-large independent set $I_{\text{fan}}$ of $G_{\text{conflict}}$ using the algorithm from Theorem \ref{Large independent set alg}(2).
        \item[\textbf{return}] the fans of the vertices in $I_{\text{fan}}$.
    \end{description}

We next analyse the complexity of Procedure \textsc{Parallel-Fans}.
It is parametrized by the parameter $\lambda$, which determines how large is the independent set that the procedure computes on step 2. We denote Procedure \textsc{Parallel-Fans} with a parameter $\lambda$ by Procedure \textsc{Parallel-Fans$_{\lambda}$}. We omit the subscript $\lambda$ when the statement at hand is independent of the parameter.\\

We start with analyzing the computation of $I_{\alpha,\beta}$ (on step 4 of this procedure). See Equation (\ref{IAlpBet}) for the definition of $I_{\alpha,\beta}$.

\begin{claim}[Computation of $I_{\alpha,\beta}$]\label{computation of i_ab}
    Let $G=(V,E)$ be an $n$-vertex graph with maximum degree $\Delta$ and let $\varphi$ be a partial proper edge-coloring of $G$. Let $F$ be a fixed subset of the set of uncolored edges in $G$. Let $I^*$ be the collection of fan centers as defined in Procedure \textsc{Parallel-Fans}. The computation of $I_{\alpha,\beta}$ requires $O(\log n)$ time using $O\left(n\right)$ processors.
\end{claim}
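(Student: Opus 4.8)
The plan is to reduce the computation of $I_{\alpha,\beta}$ to a \emph{mode-finding} problem on $O(n)$ labelled items and to solve that by parallel sorting. Recall that $I^{*}$ contains at most one chosen endpoint per edge of $I^{(F)}\subseteq F$, and since $F$ is a matching we have $|I^{*}|\le|F|\le n/2<n$; hence we may afford one processor per element of $I^{*}$, i.e., $O(n)$ processors in total.

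First I would have each fan center $v\in I^{*}$ read off, in $O(1)$ time, the pair of colors characterizing its maximal fan $f(v)=\langle u_1,\dots,u_k\rangle$, namely $\alpha(v)=m(v)$ and $\beta(v)=m(u_k)$. Both of these designated missing colors are stored as part of the fan data produced in Step~3 of Procedure \textsc{Parallel-Fans} (see Lemma~\ref{Fan construction}), and the last element $u_k$ of the fan sequence is accessible in $O(1)$ time, so the processor of $v$ can form the \emph{unordered} key $\kappa(v)=\bigl(\min\{\alpha(v),\beta(v)\},\ \max\{\alpha(v),\beta(v)\}\bigr)$. By construction, $v,v'\in I^{*}$ lie in the same class $I_{\gamma,\delta}$ if and only if $\kappa(v)=\kappa(v')$. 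Note that we cannot afford to bucket these keys into an array indexed by all $(\Delta+1)^2$ possible pairs, since $\Delta^2$ may be much larger than $n$; this is the one point that needs care, and sorting sidesteps it by touching only the at most $|I^*|\le n$ keys that actually occur.

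Next I would sort the multiset $\{\kappa(v):v\in I^{*}\}$, carrying the vertex identities along, using a standard $O(\log n)$-time, $O(n)$-processor parallel sorting routine. In the sorted array the elements with a common key occupy a contiguous segment, so a segmented prefix computation (or pointer jumping) computes, in $O(\log n)$ time with $O(n)$ processors, the length of every maximal run of equal keys and broadcasts that length to all members of the run. A parallel maximum over the run lengths (again $O(\log n)$ time, $O(n)$ processors, breaking ties arbitrarily, e.g., by taking the run of smallest starting index) identifies a run of maximum size; letting $(\alpha,\beta)$ denote its common key, the set $I_{\alpha,\beta}$ is exactly the set of vertices $v\in I^{*}$ lying in that run, which we mark in $O(1)$ additional time. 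Summing the costs of all steps gives $O(\log n)$ time and $O(n)$ processors, as claimed. The only mild obstacle is the one already mentioned — avoiding a $\Theta(\Delta^2)$-sized table — which is resolved by sorting; the remaining work is routine parallel bookkeeping.
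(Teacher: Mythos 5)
Your proposal is correct and follows essentially the same route as the paper: sort the fan centers of $I^{*}$ by their characterizing color pair and then locate the largest contiguous block in the sorted array, thereby avoiding any $\Theta(\Delta^{2})$-sized table (the paper first gives the $O(n+\Delta^{2})$-processor variant and then refines it to $O(n)$ processors exactly by having, for each occurring pair, a representative vertex compute its range — the same bookkeeping you perform via segmented prefix computation). Your use of the unordered key $\bigl(\min\{\alpha(v),\beta(v)\},\max\{\alpha(v),\beta(v)\}\bigr)$ is a nice explicit treatment of the ``$(\gamma,\delta)$ or $(\delta,\gamma)$'' clause in the definition of $I_{\gamma,\delta}$, but otherwise the argument coincides with the paper's.
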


\begin{proof}
    First, we describe a simple routine that requires $O\left(n+\Delta^2\right)$ processors. We assign a processor to each vertex $v\in I^*$ and assign a processor to each pair of colors $\alpha,\beta\in\{1,2,...,\Delta+1\}$. Using the processors of the vertices of $I^*$, we sort their respective fans according to their colors. Denote by $\mathcal{A}$ the resulting sorted array. Then, for each pair of colors $\alpha,\beta\in\{1,2,...,\Delta+1\}$, its processor will compute the size of $I_{\alpha,\beta}$ by computing the range of this pair of colors in the array $\mathcal{A}$. Finally, we return the largest such a set. Overall, we have used $O\left(n+\Delta^2\right)$ processors and $O(\log n)$ time.
    
    In fact, this task can also be implemented within the same time using $O(n)$ processors. For each pair of colors $\alpha,\beta$, let $v_{\alpha,\beta}\in I^*$ be the vertex whose fan $f=f(v_{\alpha,\beta})$ is characterized by $\alpha,\beta$, and such that $f$ is the first among all fans characterized by $\alpha,\beta$ in the array $\mathcal{A}$. Now, the processors $v_{\alpha,\beta}$, for all pair of colors $\alpha,\beta$, for which there is a fan characterized by this pair in $\mathcal{A}$, find the range of their respective pairs of colors in $\mathcal{A}$, and also find the maximal range. As a result, the overall number of processors required for this computation is $O(n)$.
\end{proof}

Now we proceed to analysing the complexity of Procedure \textsc{Parallel-Fans}.

\begin{lemma}[Complexity of Procedure \textsc{Parallel-Fans}]\label{Procedure Parallel-Fans complexity}
    Let $G=(V,E)$ be an $n$-vertex $m$-edge graph with maximum degree $\Delta$ and arboricity $a$, and let $\varphi$ be a partial proper edge-coloring of $G$. Let $F$ be a fixed subset of the set of uncolored edges in $G$ and assume that $F$ is a matching in $G$. Procedure \textsc{Parallel-Fans$_{\lambda}$} requires $O\left(IST_{\lambda}\left(n,\Delta^2,a\cdot\Delta\right)\right)$ time using $O\left(m\cdot\Delta\cdot ISP_{\lambda}\left(n,\Delta^2,a\cdot\Delta\right)+m\cdot\Delta\right)$ processors.
\end{lemma}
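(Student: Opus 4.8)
The plan is to walk through the five steps of Procedure \textsc{Parallel-Fans} one at a time, bound the time and the processor count of each, and then take the maximum of the times and the sum of the processor counts. Throughout, I will repeatedly invoke the bounds on the fan-graph established in Claim~\ref{Maximum degree of G^F} (namely $|E^{(F)}| = O(m\Delta)$, $\Delta(G^{(F)}) \le 2(\Delta-1)^2 = O(\Delta^2)$, and $a(G^{(F)}) \le 16\Delta a = O(a\cdot\Delta)$), and I will use the monotonicity of $IST_\lambda$ and $ISP_\lambda$ together with the stated fact that $IST_\lambda$ and $ISP_\lambda$ are preserved (up to constants) under replacing their arguments by quantities of the same order.

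First I would handle Steps 1 and 4, the graph constructions. By Lemma~\ref{G_3 construction}, building $G^{(F)}$ (and removing duplicate edges) takes $O(\log n)$ time and $O(m\cdot\Delta)$ processors; since $|F| \le m$ and the graph has $O(m\Delta)$ edges, this is within budget. Step~4 has two sub-tasks: computing $I_{\alpha,\beta}$ (the largest class of fans sharing a characterizing color pair), which by Claim~\ref{computation of i_ab} costs $O(\log n)$ time and $O(n)$ processors; and constructing $G_{\mathrm{conflict}}$, which by Lemma~\ref{G_conflice construction} costs $O(\log n)$ time and $O(n)$ processors. Next, Step~3: for each $e \in I^{(F)}$ in parallel we build a maximal fan around a chosen endpoint $v_e$ via Lemma~\ref{Fan construction}, which for a single center $v$ costs $O(\log\Delta)$ time and $O(\deg(v))$ processors; running all of these simultaneously costs $O(\log\Delta)$ time and $\sum_{e\in I^{(F)}} O(\deg(v_e)) = O(m)$ processors (in fact $O(n)$, since $F$ is a matching and the centers are distinct). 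Since $IST_\lambda(n,\Delta^2,a\cdot\Delta) = \Omega(\log n) = \Omega(\log\Delta)$, all of these times are dominated by the independent-set computation.

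The two genuinely expensive steps are Steps 2 and 5, the independent-set computations. Step~5 computes a $\left(\Delta(G_{\mathrm{conflict}})+1\right)$-large independent set of $G_{\mathrm{conflict}}$, a graph on at most $n$ vertices with maximum degree at most $3$ by Claim~\ref{Maximum degree of G_conflict}; using Theorem~\ref{Large independent set alg}(2) this costs $O(\log n + \Delta(G_{\mathrm{conflict}})\cdot\log^2\Delta(G_{\mathrm{conflict}})) = O(\log n)$ time and $O(n)$ processors, again dominated by the main term. Step~2 computes a $\lambda$-large independent set of $G^{(F)}$, which has $O(m\Delta)$ edges, maximum degree $O(\Delta^2)$, and arboricity $O(a\cdot\Delta)$; by definition of $IST_\lambda$ and $ISP_\lambda$ this costs $O\!\left(IST_\lambda(n, \Delta^2, a\cdot\Delta)\right)$ time and $O\!\left(m\Delta\cdot ISP_\lambda(n,\Delta^2,a\cdot\Delta)\right)$ processors, where the $m\Delta$ factor comes because $G^{(F)}$ has $\Theta(m\Delta)$ edges and the processor count in Theorem~\ref{Large independent set alg} scales with the edge count. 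Combining: the total time is the maximum over the five steps, which is $O\!\left(IST_\lambda(n,\Delta^2,a\cdot\Delta)\right)$ since every other step is $O(\log n)$ or $O(\log\Delta)$ and $IST_\lambda = \Omega(\log n)$; the total processor count is the sum, which is $O\!\left(m\Delta\cdot ISP_\lambda(n,\Delta^2,a\cdot\Delta) + m\Delta\right)$, absorbing the $O(m\Delta)$ and $O(n)$ contributions of the remaining steps.

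The only subtlety — the step I expect to require the most care — is making the reduction in Step~2 precise: one must argue that computing a $\lambda$-large independent set of $G^{(F)}$ genuinely falls under the complexity bounds $IST_\lambda(\cdot,\cdot,\cdot)$ and $m\cdot ISP_\lambda(\cdot,\cdot,\cdot)$ stated in Theorem~\ref{Large independent set alg}, with the "$m$" there being the number of edges of $G^{(F)}$, which is $\Theta(m\Delta)$ and not $\Theta(m)$. This is why the processor bound carries the extra $\Delta$ factor. One also needs the observation, noted just before the lemma, that $ISP_\lambda(O(n), O(\Delta^2), O(a\Delta)) = O(ISP_\lambda(n,\Delta^2,a\Delta))$ and similarly for $IST_\lambda$, so that the $O(\cdot)$ slack in the parameters coming from Claim~\ref{Maximum degree of G^F} does not change the stated bounds. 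Everything else is a routine tally of the per-step costs against the claimed budget.
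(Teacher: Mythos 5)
Your proof is correct and follows essentially the same route as the paper's: a step-by-step tally of the five steps, with Step~2 charged $O\left(IST_{\lambda}\left(n,\Delta^2,a\cdot\Delta\right)\right)$ time and $O\left(m\cdot\Delta\cdot ISP_{\lambda}\left(n,\Delta^2,a\cdot\Delta\right)\right)$ processors via Claim~\ref{Maximum degree of G^F}, all other steps dominated since $IST_{\lambda}=\Omega(\log n)$, and the additive $m\cdot\Delta$ term retained because $ISP_{\lambda}$ may be sublinear (even below $1$). The only nit is your parenthetical claim that Step~3 uses $O(n)$ processors: the sum $\sum_{e\in I^{(F)}}\deg(v_e)$ over distinct centers is bounded by $O(m)$, not $O(n)$ in general, but since you only use the $O(m)$ bound this does not affect the argument.
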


\begin{proof}
    \begin{itemize}
        \item By Lemma \ref{G_3 construction}, the construction of $G^{(F)}$ in step 1 requires $O(\log n)$ time using $O(m\cdot\Delta)$ processors.    
        \item Observe that since $F$ is a matching, there are at most $n$ vertices in $G^{(F)}$. By Claim \ref{Maximum degree of G^F}, there are $m^{(F)}=O(m\cdot\Delta)$ edges in $G^{(F)}$, its maximum degree is $\Delta^{(F)}=O(\Delta^2)$, and its arboricity is $a^{(F)}=O(a\cdot\Delta)$. Hence, the computation of a $\lambda$-large independent set $I^{(F)}$ of $G^{(F)}$ in step 2 requires $O\left(IST_{\lambda}\left(n,\Delta^2,a\cdot\Delta\right)\right)$ time using $O\left(m^{(F)}\cdot ISP_{\lambda}\left(n,\Delta^2,a\cdot\Delta\right)\right)=O\left(m\cdot\Delta\cdot ISP_{\lambda}\left(n,\Delta^2,a\cdot\Delta\right)\right)$ processors.
        \item By Lemma \ref{Fan construction}, for all $(v,u)\in I^{(F)}$ and $v\in I^*$ in parallel, the construction of a maximal fan of $v$ with the uncolored edge $(v,u)$ requires $O(\log\Delta)$ time using $\sum_{v\in I^*}O\left(\deg(v)\right)=O\left(m\right)$ processors altogether.
        \item By Claim \ref{computation of i_ab}, the computation of $I_{\alpha,\beta}$ requires $O(\log n)$ time using $O\left(n\right)$ processors.
        \item By Lemma \ref{G_conflice construction}, the construction of $G_{\mathrm{conflict}}$ requires $O(\log n)$ time using $O(n)$ processors.
        \item Recall that by Claim \ref{Maximum degree of G_conflict}, $\Delta(G_{\mathrm{conflict}})\leq 3$.
        Hence by Theorem \ref{Large independent set alg}(2), the computation of a $(\Delta(G_{\mathrm{conflict}})+1)$-large independent set of $G_{\mathrm{conflict}}$ in step 5 requires $O(\log n)$ time using $O\left(n\right)$ processors.
    \end{itemize}
    To summarize, Procedure \textsc{Parallel-Fans} requires $O\left(IST_{\lambda}\left(n,\Delta^2,a\cdot\Delta\right)+\log n\right)=O\left(IST_{\lambda}\left(n,\Delta^2,a\cdot\Delta\right)\right)$ time using $O\left(m\cdot\Delta\cdot ISP_{\lambda}\left(n,\Delta^2,a\cdot\Delta\right)+m\cdot\Delta\right)$ processors. (Note that $ISP_{\lambda}\left(n,\Delta^2,a\cdot\Delta\right)$ might be less than 1. See for example Theorem~\ref{Large independent set alg}(1))
\end{proof}

Finally, we analyse the size of the collection of fans that is returned by Procedure \textsc{Parallel-Fans}.

\begin{lemma}[Properties of Procedure \textsc{Parallel-Fans}]
    Let $G=(V,E)$ be an $n$-vertex graph with maximum degree $\Delta$ and arboricity $a$, and let $\varphi$ be a partial proper edge-coloring of $G$. Let $F$ be a fixed subset of the set of uncolored edges in $G$ and assume that $F$ is a matching in $G$. Procedure \textsc{Parallel-Fans$_{\lambda}$} outputs a collection of $\Omega\left(\frac{|F|}{\lambda\left(\Delta^2,a\cdot\Delta\right)\cdot\Delta^2}\right)$ fans that can be recolored in parallel.
\end{lemma}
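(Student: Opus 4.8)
The plan is to follow Procedure \textsc{Parallel-Fans} step by step, tracking a lower bound on the number of fans surviving each step, and then to assemble the ``recolored in parallel'' part from the structural lemmas already proved.

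For the size bound I would first apply Claim \ref{Maximum degree of G^F}: the fan-graph $G^{(F)}$ has exactly $|F|$ vertices, maximum degree $O(\Delta^2)$, and arboricity $O(a\Delta)$. Hence the $\lambda$-large independent set $I^{(F)}$ produced in step 2 (by any of the routines of Theorem \ref{Large independent set alg}) has $|I^{(F)}| = \Omega\!\left(|F|/\lambda(\Delta^2, a\Delta)\right)$, using that $\lambda$ is polynomial so $\lambda(O(\Delta^2),O(a\Delta)) = O(\lambda(\Delta^2, a\Delta))$. Next I would observe that the map $e \mapsto v_e$ of step 3 is injective: distinct $e,e' \in I^{(F)}$ are non-adjacent in $G^{(F)}$, so $\mathrm{dist}_G(e,e') > 2$ and they share no endpoint, whence $|I^*| = |I^{(F)}|$. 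In step 4, every vertex of $I^*$ lies in exactly one class $I_{\gamma,\delta}$ (indexed by the unordered pair of colors characterizing its fan), and there are only $O(\Delta^2)$ unordered pairs $\{\gamma,\delta\} \subseteq \{1,\dots,\Delta+1\}$, so averaging gives $|I_{\alpha,\beta}| = \Omega\!\left(|I^*|/\Delta^2\right)$. Finally, since $\Delta(G_{\mathrm{conflict}}) \le 3$ by Claim \ref{Maximum degree of G_conflict}, step 5 produces $I_{\mathrm{fan}}$ with $|I_{\mathrm{fan}}| = \Omega(|I_{\alpha,\beta}|)$. Chaining these four estimates yields $|I_{\mathrm{fan}}| = \Omega\!\left(|F|/(\lambda(\Delta^2,a\Delta)\cdot\Delta^2)\right)$, which is the asserted size of the output collection $\{f(v) : v \in I_{\mathrm{fan}}\}$.

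For correctness I would invoke three facts: the fans of $I_{\mathrm{fan}} \subseteq I_{\alpha,\beta}$ are pairwise vertex-disjoint (Lemma \ref{G_3 property}); they are all characterized by $(\alpha,\beta)$ or $(\beta,\alpha)$, and their $\alpha\beta$-paths are pairwise vertex-disjoint or identical (the ``Disjoint fans and paths'' lemma); and $I_{\mathrm{fan}}$ is an independent set of $G_{\mathrm{conflict}}$. Step 1 of \textsc{Recolor-Fan} only exchanges $\alpha\beta$-paths, which by Theorem \ref{Procedure Recolor-Fan} can be done for all of them in parallel; since exchanging a path alters missing colors only at its two endpoints, Lemma \ref{G_conflict property} (applied to each pair in $I_{\mathrm{fan}}$, all non-conflicting) shows that every fan $\langle u_1,\dots,u_k\rangle$ of $I_{\mathrm{fan}}$ remains a valid fan after all these exchanges, with $\beta$ free at its center and at its stopping vertex. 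Step 2 (rotations and coloring $(v,u_i)$) touches only edges incident to the respective centers, which are disjoint across the vertex-disjoint fans, so it parallelizes as well. Hence the collection is recolored in parallel, producing a proper partial $(\Delta+1)$-edge-coloring in which each fan's uncolored base edge becomes colored.

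The step I expect to be the main obstacle — and where the write-up needs care — is passing from the \emph{pairwise} guarantee of Lemma \ref{G_conflict property} to the \emph{simultaneous} recoloring of the whole collection $I_{\mathrm{fan}}$. The resolution hinges on locality: an $\alpha\beta$-path exchange affects only the two endpoints of that path, and a fan rotation affects only edges at the center, so the total effect on any single fan's recoloring is the union of its harmless pairwise interactions, with all relevant endpoints and centers pairwise distinct by the disjointness lemmas. I would also make explicit that when two centers of $I_{\alpha,\beta}$ share the same $\alpha\beta$-path, that path is exchanged exactly once (the intended semantics of step 1), which is precisely the $P = P'$ case permitted by the ``Disjoint fans and paths'' lemma.
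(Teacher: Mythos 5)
Your proposal is correct and follows essentially the same route as the paper: the size bound is obtained by the identical chain $|I^{(F)}|=\Omega\bigl(|F|/\lambda(\Delta^{(F)},a^{(F)})\bigr)$, then dividing by the $O(\Delta^2)$ color pairs to bound $|I_{\alpha,\beta}|$, then losing only a constant factor via the $(\Delta(G_{\mathrm{conflict}})+1)$-large independent set since $\Delta(G_{\mathrm{conflict}})\leq 3$, and finally substituting $\Delta^{(F)}=O(\Delta^2)$, $a^{(F)}=O(a\Delta)$ using that $\lambda$ is polynomial. The paper's own proof likewise delegates the ``recolored in parallel'' claim to the earlier disjointness and $G_{\mathrm{conflict}}$ lemmas; your extra paragraph making the pairwise-to-simultaneous step explicit (locality of path exchanges and rotations, and exchanging a shared $\alpha\beta$-path only once) is a harmless and reasonable elaboration of the same argument.
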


\begin{proof}
    Let $\Delta^{(F)}=\Delta\left(G^{(F)}\right)$. The size of the $\lambda$-large independent set $I^{(F)}$ of $G^{(F)}$ that is computed in step 2 of the algorithm is at least $\Omega\left(\frac{|F|}{\lambda\left(\Delta^{(F)},a^{(F)}\right)}\right)$, i.e., $$\left|I^*\right|=\left|I^{(F)}\right|=\Omega\left(\frac{|F|}{\lambda\left(\Delta^{(F)},a^{(F)}\right)}\right).$$
    (Recall that $I^*$ contains one endpoint from each edge $e\in I^{(F)}$.) Observe that there are $\binom{\Delta+1}{2}+(\Delta+1)=\frac{\Delta^2+3\Delta+2}{2}$ pairs of (not necessarily distinct) missing colors $\gamma,\delta\in \{1,2,...,\Delta+1\}$. Note also that $I^*=\cup_{\alpha,\beta\in\{1,2,...,\Delta\}}I_{\alpha,\beta}$, and that for distinct pairs $(\alpha,\beta)\neq(\alpha',\beta')$ of colors, the respective subsets $I_{\alpha,\beta}$ and $I_{\alpha',\beta'}$ are disjoint. (See Equation (\ref{IAlpBet}).) Since $I_{\alpha,\beta}$ is the largest among these sets, we have $$|I_{\alpha,\beta}|\geq\frac{|I^*|}{\frac{\Delta^2+3\Delta+2}{2}}\geq\frac{|F|}{c\cdot\lambda\left(\Delta^{(F)},a^{(F)}\right)\cdot \Delta^2},$$
    for a constant $c>0$. Since the maximum degree of $G_{\text{conflict}}$ is at most 3, by Theorem \ref{Large independent set alg}(2), the size of the independent set $I_{\text{fan}}$ of $G_{\text{conflict}}$ that is computed in step 5 of the algorithm is at least $\left|I_{\text{fan}}\right|\geq\frac{\left|I_{\alpha,\beta}\right|}{4}\geq\frac{|F|}{4\cdot c\cdot\lambda\left(\Delta^{(F)},a^{(F)}\right)\cdot\Delta^2}$.
    By Claim \ref{Maximum degree of G^F}, we have $\Delta^{(F)}\leq 2\Delta^2-4\Delta+2$, and $a^{(F)}=O(a\cdot\Delta)$. Hence, Procedure \textsc{Parallel-Fans} outputs a collection of at least $\Omega\left(\frac{|F|}{\lambda\left(\Delta^{2},a\cdot\Delta\right)\cdot\Delta^2}\right)$ fans that can be recolored in parallel. (Since $\lambda$ is a polynomial function, we have $\lambda(O(\Delta^2),O(a\cdot\Delta))=O\left(\lambda\left(\Delta^2,a\cdot\Delta\right)\right)$.)
\end{proof}

We summarize the properties of Procedure \textsc{Parallel-Fans} in the following theorem.

\begin{theorem}[Procedure \textsc{Parallel-Fans}]\label{procedure parallel fans}
    Let $G=(V,E)$ be an $n$-vertex $m$-edge graph with maximum degree $\Delta$ and arboricity $a$, and let $\varphi$ be a partial proper edge-coloring of $G$. Let $F$ be a fixed subset of the set of uncolored edges in $G$ and assume that $F$ is a matching of $G$. Procedure \textsc{Parallel-Fans$_{\lambda}$} outputs a collection of at least $\Omega\left(\frac{|F|}{\lambda\left(\Delta^{2},a\cdot\Delta\right)\cdot\Delta^2}\right)$ fans that can be recolored in parallel. It does so in $O\left(IST_{\lambda}\left(n,\Delta^2,a\cdot\Delta\right)\right)$ time using $O\left(m\cdot\Delta\cdot ISP_{\lambda}\left(n,\Delta^2,a\cdot\Delta\right)+m\cdot\Delta\right)$ processors.
\end{theorem}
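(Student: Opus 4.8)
The plan is to assemble the theorem from the lemmas established in this subsection, since Procedure \textsc{Parallel-Fans} has already been dissected step by step. First I would verify correctness, i.e., that the returned collection --- the fans of the vertices in $I_{\text{fan}}$ --- is indeed a set of fans on which Procedure \textsc{Recolor-Fan} can be invoked in parallel. For this I would appeal to Lemma~\ref{G_conflict property}: any two vertices $v, v^* \in I_{\alpha,\beta}$ that are non-adjacent in $G_{\text{conflict}}$ have fans that can be recolored in parallel. Since $I_{\text{fan}}$ is an independent set of $G_{\text{conflict}}$ (computed on step~5 via Theorem~\ref{Large independent set alg}(2)), every pair of fans in the output is pairwise recolorable; this upgrades to a simultaneous recoloring because the relevant $\alpha\beta$-paths are pairwise vertex-disjoint or identical and the fans themselves are pairwise disjoint, so no cascading conflict can arise.

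Next I would bound the size of the output. This is exactly the content of the ``Properties of Procedure \textsc{Parallel-Fans}'' lemma: the $\lambda$-large independent set $I^{(F)}$ of $G^{(F)}$ has size $\Omega(|F|/\lambda(\Delta^{(F)}, a^{(F)}))$; partitioning $I^* = \bigcup_{\gamma,\delta} I_{\gamma,\delta}$ over the $O(\Delta^2)$ color pairs and taking the largest class $I_{\alpha,\beta}$ costs a factor $O(\Delta^2)$; and extracting a $(\Delta(G_{\text{conflict}})+1)$-large independent set of the degree-$\leq 3$ graph $G_{\text{conflict}}$ costs only a constant factor. Plugging in $\Delta^{(F)} = O(\Delta^2)$ and $a^{(F)} = O(a\cdot\Delta)$ from Claim~\ref{Maximum degree of G^F} and using that $\lambda$ is polynomial, so $\lambda(O(\Delta^2), O(a\Delta)) = O(\lambda(\Delta^2, a\Delta))$, the output size is $\Omega(|F|/(\lambda(\Delta^2, a\Delta)\cdot\Delta^2))$.

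Finally I would read the complexity off Lemma~\ref{Procedure Parallel-Fans complexity}, which already sums the per-step costs: steps~1 and 3--5 cost $O(\log n)$ time and $O(m\cdot\Delta)$ processors, while step~2 --- the $\lambda$-large independent set on $G^{(F)}$, whose parameters are $n$ vertices, $O(m\cdot\Delta)$ edges, maximum degree $O(\Delta^2)$, arboricity $O(a\cdot\Delta)$ --- dominates the running time with $O(IST_\lambda(n, \Delta^2, a\cdot\Delta))$ time and $O(m\cdot\Delta\cdot ISP_\lambda(n, \Delta^2, a\cdot\Delta))$ processors. Since $IST_\lambda(n,\Delta^2,a\cdot\Delta) = \Omega(\log n)$, the additive $\log n$ terms are absorbed, giving the stated $O(IST_\lambda(n,\Delta^2,a\cdot\Delta))$ time and $O(m\cdot\Delta\cdot ISP_\lambda(n,\Delta^2,a\cdot\Delta) + m\cdot\Delta)$ processors.

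I do not anticipate any genuine obstacle: the theorem is a summary, and the substantive work --- the structural properties of $G^{(F)}$ and $G_{\text{conflict}}$, the non-interference of parallel recolorings, and the per-step complexity accounting --- is all discharged by the preceding lemmas. The only point warranting a sentence of care is confirming that ``pairwise recolorable in parallel'' genuinely upgrades to ``simultaneously recolorable'', which is immediate from the vertex-disjointness (or equality) of the participating $\alpha\beta$-paths together with the pairwise disjointness of the fans.
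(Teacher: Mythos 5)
Your proposal is correct and follows essentially the same route as the paper: Theorem~\ref{procedure parallel fans} is stated there as a direct summary of Lemma~\ref{Procedure Parallel-Fans complexity} (time and processors) and the preceding lemma on the size of the output collection, whose own proofs rest on Claim~\ref{Maximum degree of G^F}, Lemma~\ref{G_conflict property}, and Theorem~\ref{Large independent set alg}(2) exactly as you describe. Your extra remark that pairwise non-interference upgrades to simultaneous recolorability via the disjointness of the fans and the vertex-disjointness (or equality) of the $\alpha\beta$-paths is a fair point of care that the paper handles implicitly through the ``Disjoint fans and paths'' lemma and the proof of Lemma~\ref{G_conflict property}.
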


\subsection{The Edge-Coloring Algorithm}\label{sec: The Edge-Coloring Algorithm}

In this section we present our parallel edge-coloring algorithm.
We start by describing Procedure \textsc{Reduce-Color}, that given a graph $G=(V,E)$ with maximum degree $\Delta$, and a proper edge-coloring $\varphi$ of $G$ that uses $k>\Delta+1$ colors, returns a proper edge-coloring of $G$ that uses $k-1$ colors. The algorithm uncolors a color class in the graph, and iteratively recolors a large subset of the uncolored edges in parallel, using Procedures \textsc{Parallel-Fans} and \textsc{Recolor-Fan}.\\
\\$\textsc{Procedure}$ $\textsc{Reduce-Color}\text{ }(G=(V,E),\varphi)$\Comment{$|\varphi|=k$}
    \begin{description}
        \item[\textbf{Step 1.}] Uncolor the edges in $G$ that are colored $k$. Denote the set of uncolored edges by $F$, and let $\varphi$ be the resulting partial proper edge-coloring.
        \item \textbf{Repeat}
        \item[\textbf{Step 2.}] Compute a collection of fans that can be recolored in parallel using Procedure \textsc{Parallel-Fans} with input $(G,\varphi)$, and for each such fan in parallel, apply Procedure \textsc{Recolor-Fan} to recolor the uncolored edge of this fan. Let $H$ denote the set of edges of $F$ colored on this step.
        \item[\textbf{Step 3.}] $F= F\setminus H$
        \item\textbf{Until} $F=\emptyset$
    \end{description}

We analyse Procedure \textsc{Reduce-Color} and then proceed to the main edge-coloring algorithm. 

The complexity of Procedure \textsc{Reduce-Color} depends on the parameter $\lambda$ of Procedure \textsc{Parallel-Fans}, invoked on step 2 of the former procedure. We denote Procedure \textsc{Reduce-Color} that is invoked with a parameter $\lambda$ by \textsc{Reduce-Color$_{\lambda}$}. We may omit the parameter $\lambda$ whenever the statement at hand does not depend on it.

\begin{theorem}[Properties of Procedure \textsc{Reduce-Color}]\label{Procedure Reduce-Color properties}
    Let $G=(V,E)$ be an $n$-vertex $m$-edge graph with maximum degree $\Delta$ and arboricity $a$. Let $k>\Delta+1$ be an integer and let $\varphi$ be a proper $k$-edge-coloring of $G$. Procedure \textsc{Reduce-Color$_{\lambda}$} computes a proper $(k-1)$-edge-coloring of $G$ in $O\left(\lambda\left(\Delta^2,a\cdot\Delta\right)\cdot\Delta^2\cdot \log n\cdot IST_{\lambda}\left(n,\Delta^2,a\cdot\Delta\right)\right)$ time using $O\left(m\cdot\Delta\cdot ISP_{\lambda}\left(n,\Delta^2,a\cdot\Delta\right)+m\cdot\Delta\right)$ processors.
\end{theorem}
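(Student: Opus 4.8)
The plan is to prove the theorem in three stages: (i) a correctness argument based on a loop invariant, (ii) a bound on the number of iterations of the loop, and (iii) an accounting of the per-iteration cost, which is then multiplied by the iteration count. For correctness, note that after Step 1 the set $F$ of uncolored edges is exactly the color-$k$ class of $\varphi$, hence a matching, while the remaining edges are properly colored with colors from $\{1,\dots,k-1\}$; since $k>\Delta+1$ we have $\Delta+1\le k-1$, so every vertex retains a free color in $\{1,\dots,\Delta+1\}$ and $M(v)\neq\emptyset$ throughout. I would maintain as a loop invariant that $F$ equals the current set of uncolored edges, that $F$ is a matching, and that $\varphi$ is a proper partial edge-coloring using only colors from $\{1,\dots,k-1\}$. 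For the inductive step: by Theorem~\ref{procedure parallel fans}, Step 2 returns a collection of pairwise disjoint maximal fans characterized by a common pair $(\alpha,\beta)$ whose $\alpha\beta$-paths are pairwise vertex-disjoint or identical and whose fans and paths do not conflict (Lemmas~\ref{G_3 property}, \ref{G_conflict property} and the lemma on disjoint fans and paths); hence all invocations of Procedure \textsc{Recolor-Fan} commute and run in parallel. By Theorem~\ref{Procedure Recolor-Fan}, each invocation keeps the coloring proper, uses only colors from $\{1,\dots,\Delta+1\}\subseteq\{1,\dots,k-1\}$, and alters the uncolored set only by removing the single uncolored edge of its fan (it never uncolors an edge). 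Thus after Steps 2--3 the set $F$ loses exactly the newly colored edges $H$, remains a matching equal to the uncolored set, and the invariant persists; when the loop exits with $F=\emptyset$, every edge is colored with a color from $\{1,\dots,k-1\}$, yielding the desired proper $(k-1)$-edge-coloring.

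For the iteration count, by Theorem~\ref{procedure parallel fans} each execution of Step 2 colors $|H|=\Omega(|F|/(\lambda(\Delta^2,a\cdot\Delta)\cdot\Delta^2))$ of the current uncolored edges; moreover, whenever $F\neq\emptyset$ the graph $G^{(F)}$ is nonempty, so the $\lambda$-large independent set $I^{(F)}$, the set $I_{\alpha,\beta}$ and the independent set $I_{\mathrm{fan}}$ are all nonempty, and therefore $|H|\ge1$. Setting $\Lambda = c\cdot\lambda(\Delta^2,a\cdot\Delta)\cdot\Delta^2$ for a suitable constant $c$, as long as $|F|\ge\Lambda$ the quantity $|F|$ shrinks by a factor $1-\Omega(1/\Lambda)$ each iteration, so $O(\Lambda\log m)=O(\Lambda\log n)$ iterations bring it below $\Lambda$; each subsequent iteration removes at least one edge, so at most $\Lambda$ further iterations finish. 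In total the loop runs $O(\lambda(\Delta^2,a\cdot\Delta)\cdot\Delta^2\cdot\log n)$ times.

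For the per-iteration cost, one iteration consists of: the call to \textsc{Parallel-Fans}, costing $O(IST_\lambda(n,\Delta^2,a\cdot\Delta))$ time and $O(m\cdot\Delta\cdot ISP_\lambda(n,\Delta^2,a\cdot\Delta)+m\cdot\Delta)$ processors by Theorem~\ref{procedure parallel fans}; the parallel application of \textsc{Recolor-Fan} to the returned fans, costing $O(\log n)$ time and $O(m)$ processors by Theorem~\ref{Procedure Recolor-Fan} (the connected components of $G_{\alpha,\beta}$ are computed once and shared, and the rotations use $\sum_v O(\deg v)=O(m)$ processors altogether); and the assignment $F\gets F\setminus H$, costing $O(\log n)$ time and $O(m)$ processors. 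Since $IST_\lambda(n,\Delta^2,a\cdot\Delta)=\Omega(\log n)$, a single iteration needs $O(IST_\lambda(n,\Delta^2,a\cdot\Delta))$ time and $O(m\cdot\Delta\cdot ISP_\lambda(n,\Delta^2,a\cdot\Delta)+m\cdot\Delta)$ processors. Multiplying the time by the number of iterations (processor counts are reused across iterations, not summed) gives the claimed $O(\lambda(\Delta^2,a\cdot\Delta)\cdot\Delta^2\cdot\log n\cdot IST_\lambda(n,\Delta^2,a\cdot\Delta))$ time and $O(m\cdot\Delta\cdot ISP_\lambda(n,\Delta^2,a\cdot\Delta)+m\cdot\Delta)$ processors.

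The main obstacle I expect is the iteration-count analysis: because \textsc{Parallel-Fans} only delivers a $\lambda$-large rather than maximal independent set, the clean geometric-decrease argument fails once $|F|$ drops below $\Lambda$, so one must treat that ``tail'' regime separately and invoke non-emptiness of the computed independent sets to guarantee strictly positive progress in each iteration. A secondary point requiring care is the verification that the concurrent \textsc{Recolor-Fan} invocations genuinely do not interfere; this rests on correctly chaining the structural properties of $G^{(F)}$ and $G_{\mathrm{conflict}}$ established in Lemmas~\ref{G_3 property} and~\ref{G_conflict property}.
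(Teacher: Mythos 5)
Your proposal is correct and takes essentially the same route as the paper's proof: per-iteration progress of $\Omega\left(\frac{|F|}{\lambda\left(\Delta^2,a\cdot\Delta\right)\cdot\Delta^2}\right)$ newly colored edges via Theorem~\ref{procedure parallel fans} and Theorem~\ref{Procedure Recolor-Fan}, a geometric-decrease bound of $O\left(\lambda\left(\Delta^2,a\cdot\Delta\right)\cdot\Delta^2\cdot\log n\right)$ iterations, and per-iteration cost $O\left(IST_{\lambda}\left(n,\Delta^2,a\cdot\Delta\right)\right)$ time with $O\left(m\cdot\Delta\cdot ISP_{\lambda}\left(n,\Delta^2,a\cdot\Delta\right)+m\cdot\Delta\right)$ processors, multiplied together exactly as in the paper. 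The only (cosmetic) divergence is the termination bookkeeping: the paper runs the single geometric bound until fewer than one uncolored edge remains (hence zero, since $|F|\leq\frac{n}{2}$ and the count is an integer), whereas you split off a tail regime below $\Lambda$ with one-edge-per-iteration progress; both yield the same iteration count, so your concern that the geometric argument ``fails'' for small $|F|$ is unnecessary, though your non-emptiness observation is a valid alternative way to close that point.
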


\begin{proof}
    Let $F$ the set of uncolored edges as described in step 1. Observe that $F$ is a matching in $G$ of size at most $\frac{n}{2}$. By Theorem \ref{procedure parallel fans}, in each iteration of step 2, we compute a collection of $\Omega\left( \frac{|F|}{\lambda\left(\Delta^{2},a\cdot\Delta\right)\cdot\Delta^2}\right)$ fans that can be recolored in parallel. This computation requires $O\left(IST_{\lambda}\left(n,\Delta^2,a\cdot\Delta\right)\right)$ time using $O\left(m\cdot\Delta\cdot ISP_{\lambda}\left(n,\Delta^2,a\cdot\Delta\right)+m\cdot\Delta\right)$ processors. Hence, by Theorem \ref{Procedure Recolor-Fan}, in each iteration we recolor $\Omega\left(\frac{|F|}{\lambda\left(\Delta^{2},a\cdot\Delta\right)\cdot\Delta^2}\right)$ edges with colors from $\{1,2,...,\Delta+1\}\subseteq\{1,2,...,k-1\}$ in $O(\log n)$ time, using $O(m)$ processors. Let $c>0$ be a sufficiently small constant such that in each iteration of the main loop of Procedure \textsc{Reduce-Color} we color at least $c\cdot\frac{|F|}{\lambda\left(\Delta^{2},a\cdot\Delta\right)\cdot\Delta^2}$ edges.\\
    For each integer $i$, after $i$ iterations of the main loop of Procedure \textsc{Reduce-Color} we are left with at most $|F|\cdot\left(1-\frac{c}{\lambda\left(\Delta^{2},a\cdot\Delta\right)\cdot\Delta^2}\right)^i$ uncolored edges. Hence after $\frac{\lambda\left(\Delta^{2},a\cdot\Delta\right)\cdot\Delta^2\cdot\ln n}{c}$ iterations, we are left with at most $|F|\cdot\left(1-\frac{c}{\lambda\left(\Delta^{2},a\cdot\Delta\right)\cdot\Delta^2}\right)^{\lambda\left(\Delta^{2},a\cdot\Delta\right)\cdot\Delta^2\cdot\ln n}\leq |F|\cdot\frac{1}{n}<1$ uncolored edges. Hence after $O\left(\lambda\left(\Delta^2,a\cdot\Delta\right)\cdot\Delta^2\cdot\log n\right)$ iterations of the main loop the algorithm terminates. We conclude that Procedure \textsc{Reduce-Color} computes a proper $(k-1)$-edge-coloring of $G$ in $O(\lambda\left(\Delta^2,a\cdot\Delta\right)\cdot\Delta^2\cdot \log n\cdot$ $IST_{\lambda}\left(n,\Delta^2,a\cdot\Delta\right))$ time using $O\left(m\cdot\Delta\cdot ISP_{\lambda}\left(n,\Delta^2,a\cdot\Delta\right)+m\cdot\Delta\right)$ processors.
\end{proof}

We are now ready to present the main $(\Delta+1)$-edge-coloring algorithm. We start by explaining the idea of the algorithm for the simple case where the input graph $G=(V,E)$ is an Eulerian graph, and has a maximum degree $\Delta=2^h$, for a positive integer $h$. The algorithm employs a divide-and-conquer approach. We partition the graph $G$ into $\frac{\Delta}{2}$ edge-disjoint subgraphs $G^{(0)}_1,G^{(0)}_2,...,G^{(0)}_{\frac{\Delta}{2}}$ with maximum degree 2 each, rapidly solve the problem for each of them, and merge these solutions into a solution for the input graph $G$. See Figure \ref{fig: procedure edge coloring} for an illustration. We now describe each of these parts:
\begin{description}
    \item[\textbf{Partition:}] We compute a directed Eulerian cycle in $G$, and form a bipartite graph $G_B=\left(V^{(in)}\cup V^{(out)},E_B\right)$, where $V^{(in)}=\left\{v^{(in)}|v\in V\right\}$, $V^{(out)}=\left\{v^{(out)}|v\in V\right\}$, and there is an edge $\left(v^{(out)},u^{(in)}\right)\in E_B$ if the directed edge $\langle v,u\rangle$ is in the Eulerian cycle. Observe that the maximum degree of $G_B$ is $\frac{\Delta}{2}$, and we can edge-color $G_B$ using $\frac{\Delta}{2}$ colors, using the algorithm of~\cite{lev1981fast} (see Lemma \ref{bipartite coloring} below). Define the edge-disjoint subgraphs $G^{(0)}_1,G^{(0)}_2,...,G^{(0)}_{\frac{\Delta}{2}}$ according to the color classes of the edge-coloring. Namely, for each $i\in\left\{1,2,...,\frac{\Delta}{2}\right\}$, we define $G^{(0)}_i=\left(V,E^{(0)}_i\right)$, where $E^{(0)}_i=\left\{(v,u)\in E\,|\, \left(v^{(out)},u^{(in)}\right)\, \text{is colored $i$ in $G_B$}\right\}$.
    \item[\textbf{Color:}] Observe that the maximum degree of each such subgraph is 2. Hence we can 3-edge-color each of these subgraphs using a simple edge-coloring algorithm. (See the algorithm from Theorem \ref{3 edge-coloring}.)
    \item[\textbf{Merge:}] This part is composed of $h=\log\left(\frac{\Delta}{2}\right)$ iterations. In each iteration $i\geq 1$, we are given the subgraphs $G^{(i-1)}_1,G^{(i-1)}_2,...,G^{(i-1)}_{\frac{\Delta}{2^{i}}}$, that are colored using $\Delta\left(G^{(i-1)}_j\right)+1$ colors, $j\in\left\{1,2,...,\frac{\Delta}{2^i}\right\}$, respectively. We merge pairs of subgraphs $G^{(i)}_j=G^{(i-1)}_{2j-1}\cup G^{(i-1)}_{2j}$, using disjoint palettes, and as we will see, we will receive a $\left(\Delta\left(G^{(i)}_j\right)+2\right)$-edge-coloring of $G^{(i)}_j$. Next, we reduce a color from each of these colorings in parallel using Procedure \textsc{Reduce-Color} to obtain a $\left(\Delta\left(G^{(i)}_j\right)+1\right)$-edge-coloring of each $G^{(i)}_j$. After these $h$ iterations, we receive a $\left(\Delta\left(G^{(h)}_1\right)+1\right)=(\Delta(G)+1)$-edge-coloring of $G^{(h)}_1=G$.
\end{description}

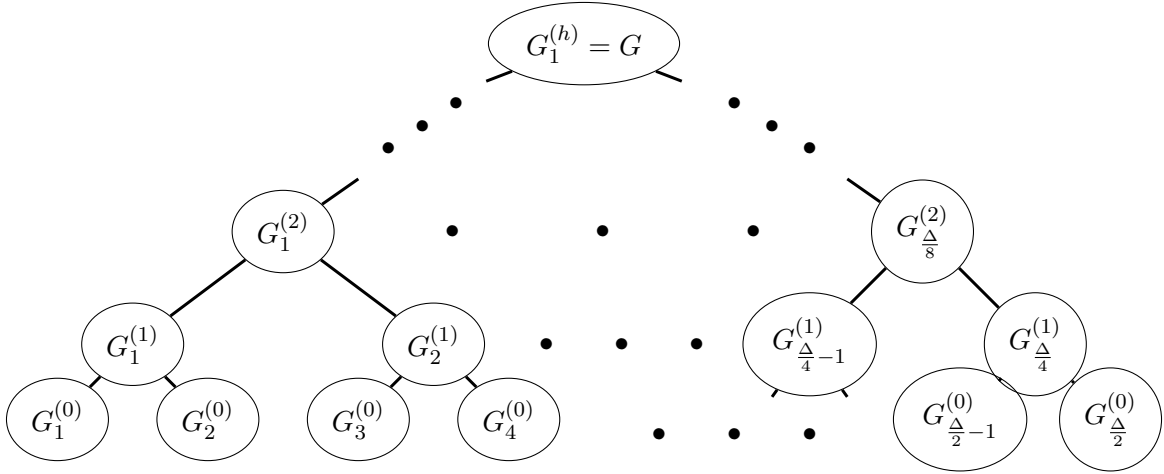
\begin{figure}
    \centering
    \begin{tikzpicture}
        \node[draw, ellipse] (11) at (0,0){$G^{(0)}_1$};
        \node[draw, ellipse] (12) at (2,0){$G^{(0)}_2$};
        \node[draw, ellipse] (13) at (4,0){$G^{(0)}_3$};
        \node[draw, ellipse] (14) at (6,0){$G^{(0)}_4$};
        \node[draw, ellipse] (1n1) at (12,0){$G^{(0)}_{\frac{\Delta}{2}-1}$};
        \node[draw, ellipse] (1n) at (14,0){$G^{(0)}_{\frac{\Delta}{2}}$};
        
        \node[draw, ellipse] (21) at (1,1){$G^{(1)}_1$};
        \node[draw, ellipse] (22) at (5,1){$G^{(1)}_2$};
        \node[draw, ellipse] (2n1) at (10,1){$G^{(1)}_{\frac{\Delta}{4}-1}$};
        \node[draw, ellipse] (2n) at (13,1){$G^{(1)}_{\frac{\Delta}{4}}$};
        
        \node[draw, ellipse] (31) at (3,2.5){$G^{(2)}_1$};
        \node[draw, ellipse] (3n) at (11.5,2.5){$G^{(2)}_{\frac{\Delta}{8}}$};
        
        \node[draw, ellipse] (nn) at (7,5){$G^{(h)}_1=G$};

        \node at (4.4,3.6){$\bullet$};
        \node at (4.85,3.9){$\bullet$};
        \node at (5.3,4.2){$\bullet$};
        
        \node at (10,3.6){$\bullet$};
        \node at (9.5,3.9){$\bullet$};
        \node at (9,4.2){$\bullet$};
        
        \node at (5.25,2.5){$\bullet$};
        \node at (7.25,2.5){$\bullet$};
        \node at (9.25,2.5){$\bullet$};
        
        \node at (6.5,1){$\bullet$};
        \node at (7.5,1){$\bullet$};
        \node at (8.5,1){$\bullet$};
        
        \node at (8,-0.2){$\bullet$};
        \node at (9,-0.2){$\bullet$};
        \node at (10,-0.2){$\bullet$};
        
        \draw[line width=0.4mm] (11)--(21);
        \draw[line width=0.4mm] (12)--(21);
        \draw[line width=0.4mm] (13)--(22);
        \draw[line width=0.4mm] (14)--(22);
        \draw[line width=0.4mm] (1n1)--(2n);
        \draw[line width=0.4mm] (1n)--(2n);
        \draw[line width=0.4mm] (9.5,0.3)--(2n1);
        \draw[line width=0.4mm] (10.5,0.3)--(2n1);
        \draw[line width=0.4mm] (21)--(31);
        \draw[line width=0.4mm] (22)--(31);
        \draw[line width=0.4mm] (2n1)--(3n);
        \draw[line width=0.4mm] (2n)--(3n);
        \draw[line width=0.4mm] (31)--(4,3.2);
        \draw[line width=0.4mm] (3n)--(10.5,3.2);
        \draw[line width=0.4mm] (5.7,4.5)--(nn);
        \draw[line width=0.4mm] (8.3,4.5)--(nn);

    \end{tikzpicture}

    \caption{The merging process of Procedure \textsc{Edge-Coloring}.}
    \label{fig: procedure edge coloring}
\end{figure}

Before presenting Procedure \textsc{Edge-Coloring}, we recall a result due to Atallah and Vishkin~\cite{atallah1984finding} for computing Eulerian cycle in an Eulerian graph, state a result of Lev, Pippenger and Valiant~\cite{lev1981fast} for $\Delta$-edge-coloring bipartite graphs, and present an algorithm for $(\Delta+1)$-edge-coloring graphs with maximum degree at most 2 (the full description of this algorithm is provided in Appendix \ref{App: coloring}).

\begin{restatable}[Computing an Eulerian cycle~\cite{atallah1984finding}]{lemma}{eulerianCyc}\label{eulerian cycle alg}
    Let $G=(V,E)$ be an $n$-vertex $m$-edge Eulerian graph. There is a deterministic $\mathrm{CRCW\,\, PRAM}$ algorithm that finds an Eulerian cycle in $G$ in $O\left(\log n\right)$ time using $O\left(n+m\right)$ processors.
\end{restatable}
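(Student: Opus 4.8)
The plan is to re-establish the algorithm of Atallah and Vishkin~\cite{atallah1984finding}. We may assume $G$ is connected (otherwise no single Eulerian cycle exists, and in any case the components can first be extracted via Lemma~\ref{Connected components algorithm}). The algorithm has two phases: (i) compute an \emph{Euler partition} of $G$, i.e., a decomposition of $E$ into edge-disjoint closed walks, which we call \emph{circuits}; and (ii) repeatedly splice together circuits that share a vertex of $G$ until a single circuit — the desired Eulerian cycle — remains.

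\emph{Phase (i).} Represent each edge $\{x,y\}$ by its two darts $x \to y$ and $y \to x$. At each vertex $v$ of $G$, list its $\deg_G(v)$ incident edges and pair them arbitrarily into $\deg_G(v)/2$ pairs; this is possible since $\deg_G(v)$ is even. Declare that the dart entering $v$ along an edge $e$ is followed by the dart leaving $v$ along the edge paired with $e$ at $v$. This defines a permutation $\sigma$ on the $2m$ darts whose orbits are exactly the circuits. Using $O(\log n)$ rounds of pointer jumping on $\sigma$ (propagating the minimum dart identifier), we assign to every dart a canonical representative of its orbit, in $O(\log n)$ time and $O(n+m)$ processors; in particular every edge gets labelled by the circuit that contains it.

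\emph{Phase (ii).} Form the \emph{circuit graph} $H$ whose vertices are the circuits: for every vertex $v$ of $G$ and every two consecutive darts at $v$ in the fixed listing order, put an edge in $H$ between their two circuits. Since $G$ is connected, so is $H$. Compute a spanning tree $T$ of $H$ using the connected-components / spanning-forest algorithm of Lemma~\ref{Connected components algorithm}~\cite{shiloach1982logn}, in $O(\log n)$ time and $O(n+m)$ processors. Every edge of $T$ is witnessed by a vertex $v$ of $G$ and two distinct circuits $C,C'$ visiting $v$; executing it means re-pairing the edges of $G$ at $v$: if $C$ uses the pair $(e_1,e_2)$ at $v$ and $C'$ uses $(e_3,e_4)$, we replace these pairs by $(e_1,e_3)$ and $(e_2,e_4)$, which merges $C$ and $C'$ into one circuit without affecting any other circuit. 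Performing all executions and then re-tracing the result with one more pointer-jumping pass outputs the Eulerian cycle as an ordered list of edges, within the same bounds. Summing the phases gives $O(\log n)$ time and $O(n+m)$ processors.

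The main obstacle is the simultaneous execution of the tree edges: many edges of $T$ may be witnessed by the same vertex $v$, so the local re-pairings at $v$ must be performed in parallel and consistently. I would resolve this by noting that the tree edges witnessed at $v$ form a forest on the set of visits of circuits to $v$; linearly ordering these visits and re-pairing along the forest in a single canonical sweep realizes all required merges at $v$ at once, and one must verify that this never produces a short sub-circuit that bypasses a required merge. Making this local bookkeeping rigorous — and, dually, proving that after all executions the circuits of one connected component of $H$ coalesce into exactly one circuit (by induction on the number of executed tree edges) — is where the real work lies; each individual ingredient (edge pairing, pointer jumping, connected components) is a routine $\mathrm{PRAM}$ computation that meets the stated resource bounds.
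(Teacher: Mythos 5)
The paper does not prove this lemma at all: it is used as a black-box citation of Atallah and Vishkin~\cite{atallah1984finding}, so what you have written is a reconstruction of that cited algorithm rather than an alternative to any in-paper argument. Your outline does follow the standard (and indeed the original) scheme — an Euler partition obtained by arbitrarily pairing the incident edges at each vertex, the circuit graph, a spanning tree computed via connected components, and parallel ``swaps'' of visit-pairs along the tree edges — and the resource accounting ($O(\log n)$ pointer jumping, $O(n+m)$ processors, Shiloach--Vishkin for the spanning structure) is plausible at this level of granularity.

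However, there is a genuine gap, and you name it yourself: the entire difficulty of the result lies in showing that all swaps prescribed by the spanning tree can be executed \emph{simultaneously} at vertices witnessing many tree edges, using pairwise disjoint visit-pairs, and that the outcome is exactly one circuit per component — in particular that no swap ever acts on two visit-pairs of the \emph{same} current circuit, which would split it rather than merge. Deferring this with ``one must verify'' and ``where the real work lies'' leaves the proof an outline; the induction on executed tree edges is exactly the nontrivial content of \cite{atallah1984finding}, because after earlier merges the two circuits named by a later tree edge may already coincide, and acyclicity of the tree is what rules this out — that argument has to be made, not assumed. Moreover, your proposed fix of ``linearly ordering the visits at $v$ and re-pairing along the forest in a single canonical sweep'' is not obviously the same operation as executing the designated tree-edge swaps: a canonical sweep can realize merges corresponding to non-tree adjacencies or change which visit-pairs are consumed by which merge, so the tree-based induction does not directly apply to it. Either carry out the disjoint-visit-pair bookkeeping and the one-circuit-per-component induction explicitly, or (as the paper does) simply invoke \cite{atallah1984finding} as a black box; as written, the proposal sits unsatisfactorily between the two.
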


\begin{lemma}[A $\Delta$-edge-coloring algorithm for bipartite graphs~\cite{lev1981fast}]\label{bipartite coloring}
    Let $G=(V,E)$ be an $n$-vertex $m$-edge bipartite graph with maximum degree $\Delta$. There is a deterministic $\mathrm{EREW\,\, PRAM}$ algorithm that computes a $\Delta$-edge-coloring of $G$ in $O\left(\log^2 n\cdot\log\Delta\right)$ time using $O\left(n+m\right)$ processors.
\end{lemma}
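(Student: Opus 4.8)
This is the classical parallel bipartite edge-coloring result of Lev, Pippenger, and Valiant, so the plan is to reproduce their recursive \emph{Euler-split} argument rather than devise something new. The recursion is on $\Delta$. In the base case $\Delta\le 1$ the graph is a matching (or empty) and a single color suffices. For even $\Delta$, the heart of the step is to compute an \emph{Euler partition} of $E$: pair up the edges incident to each vertex arbitrarily, and follow the resulting trails to decompose $E$ into a collection of open and closed walks; this is essentially a connected-components / trail-tracing computation on an auxiliary graph (closely related to the Euler-tour computation of Lemma~\ref{eulerian cycle alg}). Two-coloring the edges along each walk alternately with colors in $\{1,2\}$ then partitions $E$ into $E_1\uplus E_2$ so that every vertex sees exactly $\deg(v)/2$ edges of each part (this is where evenness of the relevant degrees is used), hence $G_1=(V,E_1)$ and $G_2=(V,E_2)$ are bipartite with maximum degree $\Delta/2$. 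One recursively $\tfrac{\Delta}{2}$-edge-colors each $G_i$ using two disjoint palettes of size $\Delta/2$, and the union is a proper $\Delta$-edge-coloring of $G$.

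The main obstacle is the reduction to the even case: when $\Delta$ is odd, the alternating $2$-coloring of an Euler partition need not be perfectly balanced, and a naive split can leave both halves with maximum degree $\lceil\Delta/2\rceil$, giving $\lceil\Delta/2\rceil+\lceil\Delta/2\rceil=\Delta+1$ colors. The way out is to arrange the split so it is \emph{unbalanced} in a controlled way — routing the surplus edge at each maximum-degree vertex into $G_2$ so that $\Delta(G_1)\le\lfloor\Delta/2\rfloor$ and $\Delta(G_2)\le\lceil\Delta/2\rceil$, which recurses to $\lfloor\Delta/2\rfloor+\lceil\Delta/2\rceil=\Delta$ colors — and, when this cannot be done directly (most notably for $\Delta$-regular instances), to first augment $G$ with dummy vertices and edges into a bipartite multigraph on which the split can be carried out cleanly, then restrict the resulting coloring back to $G\subseteq G^{+}$; the restriction still uses only $\Delta$ colors. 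I would defer the precise bookkeeping of this odd / regular case to \cite{lev1981fast}, since it is exactly the delicate point that their argument is designed to handle.

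For the complexity, the recursion has $O(\log\Delta)$ levels because each level halves the current maximum degree, and the $G_i$ produced at one level are edge-disjoint, so the total work per level stays proportional to $n+m$. Each level performs an Euler partition together with the alternating $2$-coloring of its walks; on an EREW PRAM the dominant cost is the connected-components / trail-tracing subroutine, which takes $O(\log^2 n)$ time with $O(n+m)$ processors, while the edge-pairing, palette assignment, and restriction steps are $O(\log n)$ time with a linear number of processors. Multiplying the per-level cost by the $O(\log\Delta)$ levels gives the claimed $O(\log^2 n\cdot\log\Delta)$ time with $O(n+m)$ processors on an EREW PRAM.
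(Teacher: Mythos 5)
The paper offers no proof of this lemma at all: it is imported as a black box from \cite{lev1981fast} (exactly as Lemma~\ref{eulerian cycle alg} and Lemma~\ref{Connected components algorithm} are), so there is no internal argument to compare yours against; in the context of this paper, simply citing Lev--Pippenger--Valiant is what the authors themselves do. Judged as a standalone proof, your even-degree skeleton is indeed the LPV recursion (Euler partition, alternate $2$-coloring, recurse on two subgraphs of half the degree with disjoint palettes), and the complexity accounting ($O(\log\Delta)$ levels, edge-disjoint pieces, trail-tracing/list-ranking dominating each level) is the right shape.

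The genuine gap is precisely the point you defer. Your proposed fix for odd $\Delta$ --- ``routing the surplus edge at each maximum-degree vertex into $G_2$'' so that $\Delta(G_1)\le\lfloor\Delta/2\rfloor$ and $\Delta(G_2)\le\lceil\Delta/2\rceil$ --- is not something one can compute directly: for a $\Delta$-regular bipartite graph with $\Delta$ odd, such a split forces $G_2$ to be exactly $\lceil\Delta/2\rceil$-regular, and producing a regular spanning subgraph of prescribed degree is (at this granularity) as hard as the perfect-matching problem, which is the crux of \cite{lev1981fast}. Nor does augmenting to a regular multigraph help by itself: an Euler split of an odd-regular multigraph lets different vertices overflow into different halves, so both halves can end up with maximum degree $\lceil\Delta/2\rceil$, and recursing then costs $\Delta+1$ colors (or, with discrepancy-$2$ splits, leaks an extra color per level --- which is exactly why the present paper needs Procedure \textsc{Reduce-Color} in its own, approximate, splitting scheme). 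The actual LPV argument handles odd current degree $D$ by extracting one perfect matching and removing it: form the $2^k$-regular multigraph consisting of $\alpha$ copies of $G$ plus $\beta$ copies of an arbitrary fixed perfect matching of $K_{n,n}$ (with $\alpha D+\beta=2^k$ chosen large enough), repeatedly Euler-split, always keeping the half containing fewer non-$G$ edges; after $k$ splits the non-$G$ edges are eliminated and a $1$-regular subgraph, i.e.\ a perfect matching of $G$, remains. That subroutine is the technical heart of the cited result and is what the $O(\log^2 n)$ cost per level (hence the $O(\log^2 n\cdot\log\Delta)$ total) actually pays for; a related regularization step is also what justifies your ``every vertex sees exactly $\deg(v)/2$ edges of each part'' claim, which fails for non-regular graphs and for odd closed trails. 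So either state the lemma as a citation, as the paper does, or include the matching-extraction argument; as written, the delicate half of the proof is missing.
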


\begin{restatable}[$(\Delta+1)$-edge-coloring graphs with maximum degree at most 2]{theorem}{threeEdgeColoring}
\label{3 edge-coloring}
    Let $G=(V,E)$ be an $n$-vertex $m$-edge graph with maximum degree $\Delta\leq 2$. There is an algorithm that computes a $(\Delta+1)$-edge-coloring of $G$ in $O\left(\log n\right)$ time using $O\left(m\right)$ processors.
\end{restatable}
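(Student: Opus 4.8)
The plan is to handle the two cases $\Delta \le 1$ and $\Delta = 2$ essentially separately, since a graph with maximum degree at most $2$ decomposes into simple paths and cycles, for which edge-coloring is a classical and highly parallelizable task. First I would deal with the trivial parts: vertices of degree $0$ contribute nothing, and connected components that are single edges get color $1$; this is $O(1)$ time with $O(m)$ processors. It remains to color the components that are paths of length $\ge 2$ and cycles.

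For a path $P = (w_1, w_2, \dots, w_\ell)$ I would color the edge $(w_i, w_{i+1})$ with color $(i \bmod 2) + 1$, i.e., alternate colors $1$ and $2$ along the path; this is trivially proper and uses $\le 2 \le \Delta+1$ colors. For a cycle $C = (w_1, \dots, w_\ell, w_1)$, note $\ell$ may be odd, in which case two colors do not suffice and we genuinely need the third color $\Delta+1 = 3$: color edges $(w_1,w_2), (w_2,w_3), \dots$ alternately $1,2,1,2,\dots$ and give the final edge $(w_\ell, w_1)$ the color $3$ if $\ell$ is odd (and color $2$ if $\ell$ is even). Either way the coloring is proper and uses colors from $\{1,2,3\}$. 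The key algorithmic ingredient is that, having computed the connected components of $G$ via Lemma~\ref{Connected components algorithm} in $O(\log n)$ time with $O(n+m)$ processors, each component that is a path or cycle must be equipped with a linear order on its vertices together with the index $i$ of each vertex, so that a processor assigned to edge $(w_i, w_{i+1})$ can read off the parity of $i$ and set its color in $O(1)$ additional time.

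Hence the main obstacle — such as it is — is ranking the vertices along each path/cycle component in parallel: we need to turn the unordered adjacency structure of a disjoint union of paths and cycles into, for each component, an explicit numbering $w_1, w_2, \dots$. I would do this by the standard pointer-jumping / list-ranking approach: orient each path component from one endpoint (a degree-$1$ vertex, identifiable locally), orient each cycle arbitrarily from a designated vertex (say the one of minimum id in the component, which is known once components are computed), thereby obtaining for each component a linked list (or a cycle broken at one vertex), and then run list ranking by pointer doubling in $O(\log n)$ rounds, each round using $O(n)$ work, to compute the rank of every vertex. List ranking on $n$ elements is standard in $O(\log n)$ time with $O(n)$ processors on a CRCW PRAM, so the total is $O(\log n)$ time and $O(m)$ processors (using $m \ge n/2$ to absorb the $O(n)$ terms). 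Once ranks are in hand, the per-edge coloring rule above is applied in $O(1)$ time by the edge processors, and correctness of properness follows from the explicit parity pattern together with the special treatment of the closing edge of an odd cycle. I would then remark that the full details of these path/cycle coloring routines are deferred to Appendix~\ref{App: coloring}, as already announced in the statement.
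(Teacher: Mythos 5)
Your proposal is correct and follows essentially the same route as the paper: compute connected components via Lemma~\ref{Connected components algorithm}, rank/orient each path or cycle by pointer doubling, color with two alternating colors, and use the third color to fix up cycles. The only (immaterial) difference is that the paper deletes one arbitrary edge from every cycle and colors all deleted edges $3$ at the end, whereas you color the closing edge of each cycle in place ($3$ for odd cycles, $2$ for even ones).
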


We are now ready to present our parallel $(\Delta+1)$-edge-coloring algorithm. A detailed description of the algorithm is presented in Procedure \textsc{Edge-Coloring}.\\
\\$\textsc{Procedure}$ $\textsc{Edge-Coloring}$ $\left(G=(V,E)\right)$
\begin{description}

    \item{\textbf{Step 1.}} If the graph $G$ is not an Eulerian graph, we add a dummy node $x$ to $V$ and connect it to all the odd-degree vertices in $G$. Denote this graph by $G^*=(V^*,E^*)$, and observe that $G^*$ is an Eulerian graph. Next, we construct a bipartite auxiliary graph $G_B=\left(V^{(in)}\cup V^{(out)}, E_B\right)$ for $V^{(in)}=\left\{v^{(in)}\text{ $|$ } v\in V\right\}$ and $V^{(out)}=\left\{v^{(out)}\text{ $|$ } v\in V\right\}$ as follows: We compute a directed Eulerian cycle in $G^*$ using Lemma \ref{eulerian cycle alg} (due to~\cite{atallah1984finding}), and for each directed edge $\langle v,u\rangle$ in this cycle (for $u,v\neq x$) we define an edge $\left(v^{(out)},u^{(in)}\right)$ in $G_B$. Observe that the maximum degree of $G_B$ is $\left\lceil\frac{\Delta}{2}\right\rceil$.
    \item{\textbf{Step 2.}} Edge-color $G_B$ with $\left\lceil\frac{\Delta}{2}\right\rceil$ colors using the algorithm from Lemma \ref{bipartite coloring}, due to~\cite{lev1981fast}.
    \item{\textbf{Step 3.}} Let $h=\left\lceil\log\left\lceil\frac{\Delta}{2}\right\rceil\right\rceil=\left\lceil\log\frac{\Delta}{2}\right\rceil$, and $p=2^{h}$. (The equality $\left\lceil\log\left\lceil\frac{\Delta}{2}\right\rceil\right\rceil=\left\lceil\log\frac{\Delta}{2}\right\rceil$ holds for any integer $\Delta\geq 2$.)
    \item{\textbf{Step 4.}} For $i\in \left\{1,2,...,\left\lceil\frac{\Delta}{2}\right\rceil\right\}$, let $M_i$ be the set of edges colored $i$. Construct a subgraph $G^{(0)}_i=\left(V, E^{(0)}_i\right)$, where $(u,v)\in E^{(0)}_i$ for each edge $\left(u^{(out)}, v^{(in)}\right)\in M_i$. Observe that since $M_i$ is a matching in $G_B$, then $\Delta\left(G^{(0)}_i\right)\leq 2$ (for each vertex $v$, there is at most one edge incident on $v^{(in)}$ and at most one edge incident on $v^{(out)}$).\\
    For $i\in \left\{\left\lceil\frac{\Delta}{2}\right\rceil+1,...,p\right\}$, let $G^{(0)}_i$ be a dummy empty graph. (We define these graphs to simplify notation.)
    \item{\textbf{Step 5.}} \textbf{For} each $1\leq i\leq p$ in parallel \textbf{do}\\
    \makebox[1.15cm]{} Compute a $\left(\Delta\left(G^{(0)}_i\right)+1\right)$-edge-coloring $\varphi^{(0)}_i$ of $G^{(0)}_i$ (see Theorem \ref{3 edge-coloring}).\\ 
    \makebox[0.7cm]{} \textbf{EndFor}
    \item{\textbf{Step 6.}} \textbf{For} each $0\leq k\leq h-1$ \textbf{do} \\
        \makebox[1.3cm]{} \textbf{For} each $1\leq i\leq \frac{p}{2^{k+1}}$ in parallel \textbf{do}\Comment{$p$ is defined on step 3.}\\
            \makebox[1.8cm]{} (1) Define $G_i^{(k+1)}=G_{2i-1}^{(k)}\,\cup\, G_{2i}^{(k)}$, and for each $e\in G_i^{(k+1)}$, define an\\
            \makebox[1.8cm]{} edge-coloring $\psi$ of $G^{(k+1)}_i$ by $\psi^{(k+1)}_i(e)=\begin{cases}
                \varphi^{(k)}_{2i-1}(e),& e\in G_{2i-1}^{(k)}\\
                \varphi^{(k)}_{2i}(e)+\left|\varphi_{2i-1}^{(k)}\right|, & e\in G_{2i}^{(k)}
                \end{cases}$\\ \makebox[1.8cm]{} \Comment{$\left|\varphi_{2i-1}^{(k)}\right|$ is the size of the palette of the coloring $\varphi_{2i-1}^{(k)}$}\\
            \makebox[1.8cm]{} (2) Using Procedure $\textsc{Reduce-Color}$, recolor the edges $E^{(k+1)}_i$ of $G_i^{(k+1)}$.\\ \makebox[1.8cm]{} Denote the resulting edge-coloring by $\varphi^{(k+1)}_i$.\\
        \makebox[1.35cm]{} \textbf{Endfor}\\
    \makebox[0.8cm]{} \textbf{Endfor}

    \item{\textbf{Step 7.}} \textbf{If} $\left|\varphi^{(h)}_1\right|=\Delta+2$ \textbf{then}\Comment{A special case where we need to reduce one more\\ \makebox[7.2cm]{}color}\\
    \makebox[1.35cm]{} $\varphi=\textsc{Reduce-Color}\left(G, \varphi^{(h)}_1\right)$\\
    \textbf{else} $\varphi=\varphi^{(h)}_1$
    
    \item{\textbf{return}} $\varphi$ 
\end{description}

In the next lemma we analyse the structure of the graphs $G^{(k)}_i$ throughout the recursion.

\begin{lemma}[Recursive decomposition]\label{subgraphs partition}
    Let $G=(V,E)$ be an input graph for Procedure \textsc{Edge-Coloring} with maximum degree $\Delta$. Then for each $k\in \left\{0,1,..., h\right\}$ and $i\in\left\{1,2,...,\frac{p}{2^k}\right\}$ ($h=\left\lceil\log\frac{\Delta}{2}\right\rceil$ and $p=2^h$), we have $$G^{(k)}_i=\bigcup_{j=1}^{2^{k}} G^{(0)}_{(i-1)\cdot2^{k}+j}.$$ $\left(\text{$G^{(k)}_i$ is equal to union of $2^{k}$ subgraphs that are defined in step 4.}\right)$
\end{lemma}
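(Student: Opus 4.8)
The statement is an exact structural identity, so the natural route is induction on the recursion depth $k$. First I would fix the base case $k=0$: for every $i\in\{1,\dots,p\}$ the right-hand side is the single-term union $\bigcup_{j=1}^{1}G^{(0)}_{(i-1)\cdot 2^0+j}=G^{(0)}_i$, which matches the left-hand side by definition. So the claim holds vacuously at level $0$.

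For the inductive step, assume the identity holds at level $k$ for all admissible indices. Fix $k+1$ and $i\in\{1,\dots,p/2^{k+1}\}$. By the construction in step 6 of Procedure \textsc{Edge-Coloring}, $G^{(k+1)}_i$ is defined as $G^{(k)}_{2i-1}\cup G^{(k)}_{2i}$ (recoloring the merged subgraph does not change its edge set, only the coloring $\varphi^{(k+1)}_i$), and since $i\le p/2^{k+1}$ we have $2i-1,2i\in\{1,\dots,p/2^{k}\}$, so the inductive hypothesis applies to both $G^{(k)}_{2i-1}$ and $G^{(k)}_{2i}$. Substituting it in gives
\[
G^{(k+1)}_i=\Bigl(\bigcup_{j=1}^{2^{k}}G^{(0)}_{(2i-2)\cdot 2^{k}+j}\Bigr)\;\cup\;\Bigl(\bigcup_{j=1}^{2^{k}}G^{(0)}_{(2i-1)\cdot 2^{k}+j}\Bigr).
\]

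It then remains a pure index bookkeeping step: the first union ranges over the indices $(2i-2)2^{k}+1,\dots,(2i-2)2^{k}+2^{k}=(2i-1)2^{k}$, and the second over $(2i-1)2^{k}+1,\dots,(2i-1)2^{k}+2^{k}=2i\cdot 2^{k}=i\cdot 2^{k+1}$. Concatenating, the combined range is $(2i-2)2^{k}+1=(i-1)2^{k+1}+1$ up through $i\cdot 2^{k+1}$, i.e.\ exactly $\bigcup_{j=1}^{2^{k+1}}G^{(0)}_{(i-1)\cdot 2^{k+1}+j}$, completing the induction. There is no real obstacle here; the only thing to be careful about is keeping the index arithmetic consistent and checking that the sub-indices $2i-1,2i$ stay within the valid range $\{1,\dots,p/2^{k}\}$ at level $k$, which follows from $i\le p/2^{k+1}$.
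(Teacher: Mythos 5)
Your proof is correct and follows essentially the same route as the paper: induction on the recursion level, using the definition $G^{(k)}_i=G^{(k-1)}_{2i-1}\cup G^{(k-1)}_{2i}$ together with the inductive hypothesis and a short index-range check. The only cosmetic difference is that you phrase the step as $k\to k+1$ rather than $k-1\to k$ and spell out the index bookkeeping slightly more explicitly.
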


\begin{proof}
    We prove the lemma by induction on $k$.\\
    The induction base is $k=0$. For every $i\in\left\{1,2,...,p\right\}$, indeed we have $$\bigcup_{j=1}^{1} G^{(0)}_{(i-1)+j}=G^{(0)}_i.$$
    For the induction step, consider some index $k>0$. By the definition of $G^{(k)}_{i}$, and by the induction hypothesis,
    $$G^{(k)}_{i}=G^{(k-1)}_{2i-1}\,\cup\, G^{(k-1)}_{2i}=\left(\bigcup_{j=1}^{2^{k-1}} G^{(0)}_{(2i-2)\cdot2^{k-1}+j}\right)\cup\left(\bigcup_{j=1}^{2^{k-1}} G^{(0)}_{(2i-1)\cdot2^{k-1}+j}\right)=\bigcup_{j=1}^{2^{k}} G^{(0)}_{(i-1)\cdot2^{k}+j}.$$  
\end{proof}

In the next lemma, we analyse the size of the palettes of the edge-colorings $\varphi^{(k)}_i$, of the subgraphs $G^{(k)}_i$, respectively, throughout the recursion. Recall that $h=\left\lceil\log\frac{\Delta}{2}\right\rceil$ and $p=2^h.$

\begin{lemma}[Sizes of palettes in recursive subgraphs]\label{subcolorings properties}
    Let $G=(V,E)$ be an input graph for Procedure \textsc{Edge-Coloring} and let $k\in \left\{0,1,...,h\right\}$ and $i\in\left\{1,2,...,\frac{p}{2^k}\right\}$ be a pair of indexes. Then $\varphi^{(k)}_i$ is a proper $\left(\left(\sum_{j=1}^{2^{k}} \Delta\left(G^{(0)}_{(i-1)\cdot2^{k}+j}\right)\right)+1\right)$-edge-coloring of $G^{(k)}_i$.
\end{lemma}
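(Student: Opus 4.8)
The plan is to prove the statement by induction on $k$, closely tracking how Procedure \textsc{Edge-Coloring} builds each coloring $\varphi^{(k)}_i$ out of those at the previous level.

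\textbf{Base case.} For $k=0$ and any $i\in\{1,\dots,p\}$, the coloring $\varphi^{(0)}_i$ is the one produced in step 5 by the algorithm of Theorem \ref{3 edge-coloring} applied to $G^{(0)}_i$, which has maximum degree at most $2$. Hence $\varphi^{(0)}_i$ is a proper $\left(\Delta\left(G^{(0)}_i\right)+1\right)$-edge-coloring of $G^{(0)}_i$, and since $\sum_{j=1}^{2^{0}}\Delta\left(G^{(0)}_{(i-1)\cdot 2^{0}+j}\right)$ is simply $\Delta\left(G^{(0)}_i\right)$, this is exactly the asserted bound.

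\textbf{Inductive step.} Assume the statement for level $k$ and fix $i\in\{1,\dots,p/2^{k+1}\}$. Put $S_{2i-1}=\sum_{j=1}^{2^{k}}\Delta\left(G^{(0)}_{(2i-2)\cdot 2^{k}+j}\right)$ and $S_{2i}=\sum_{j=1}^{2^{k}}\Delta\left(G^{(0)}_{(2i-1)\cdot 2^{k}+j}\right)$; by the induction hypothesis and Lemma \ref{subgraphs partition}, $\varphi^{(k)}_{2i-1}$ and $\varphi^{(k)}_{2i}$ are proper edge-colorings of $G^{(k)}_{2i-1}$ and $G^{(k)}_{2i}$ using $S_{2i-1}+1$ and $S_{2i}+1$ colors, respectively. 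In step 6(1) the algorithm forms $G^{(k+1)}_i=G^{(k)}_{2i-1}\cup G^{(k)}_{2i}$ and colors it by $\psi^{(k+1)}_i$ using these two palettes shifted to be disjoint; since the two subgraphs are edge-disjoint and the palettes do not overlap, $\psi^{(k+1)}_i$ is a proper edge-coloring of $G^{(k+1)}_i$ using $(S_{2i-1}+1)+(S_{2i}+1)=S_{2i-1}+S_{2i}+2$ colors. To invoke Theorem \ref{Procedure Reduce-Color properties} in step 6(2) I must check that this number strictly exceeds $\Delta\left(G^{(k+1)}_i\right)+1$. For every vertex $v$ we have $\deg_{G^{(k+1)}_i}(v)=\deg_{G^{(k)}_{2i-1}}(v)+\deg_{G^{(k)}_{2i}}(v)$, and by Lemma \ref{subgraphs partition} together with subadditivity of the maximum degree over a union of edge sets, $\Delta\left(G^{(k)}_{2i-1}\right)\le S_{2i-1}$ and $\Delta\left(G^{(k)}_{2i}\right)\le S_{2i}$. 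Hence $\Delta\left(G^{(k+1)}_i\right)\le S_{2i-1}+S_{2i}$, so $\psi^{(k+1)}_i$ uses at least $\Delta\left(G^{(k+1)}_i\right)+2$ colors, Theorem \ref{Procedure Reduce-Color properties} applies, and the resulting $\varphi^{(k+1)}_i$ is a proper edge-coloring of $G^{(k+1)}_i$ with $S_{2i-1}+S_{2i}+1$ colors.

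It then remains to reconcile indices: splitting the sum $\sum_{j=1}^{2^{k+1}}\Delta\left(G^{(0)}_{(i-1)\cdot 2^{k+1}+j}\right)$ at $j=2^{k}$, the identity $(i-1)\cdot 2^{k+1}=(2i-2)\cdot 2^{k}$ shows the first half equals $S_{2i-1}$, and shifting the summation variable by $2^{k}$ shows the second half equals $S_{2i}$. Therefore $\varphi^{(k+1)}_i$ uses $\left(\sum_{j=1}^{2^{k+1}}\Delta\left(G^{(0)}_{(i-1)\cdot 2^{k+1}+j}\right)\right)+1$ colors, which completes the induction. The only point that is not pure bookkeeping is ensuring, at each merge, that the merged coloring has enough colors for Procedure \textsc{Reduce-Color} to be legitimately applied (the inequality $\lvert\psi^{(k+1)}_i\rvert\ge\Delta\left(G^{(k+1)}_i\right)+2$), and that is precisely where Lemma \ref{subgraphs partition} and the subadditivity of maximum degree over unions of subgraphs are used; everything else is index arithmetic lining up the two level-$k$ sub-sums with the single level-$(k+1)$ sum.
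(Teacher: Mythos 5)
Your proposal is correct and follows essentially the same argument as the paper: induction on $k$, with the base case given by Theorem \ref{3 edge-coloring}, the merged coloring $\psi^{(k+1)}_i$ using the two disjoint palettes of total size $S_{2i-1}+S_{2i}+2$, and the bound $\Delta\left(G^{(k+1)}_i\right)\le S_{2i-1}+S_{2i}$ (via Lemma \ref{subgraphs partition}) justifying the application of Theorem \ref{Procedure Reduce-Color properties}. Your explicit check that the palette size strictly exceeds $\Delta\left(G^{(k+1)}_i\right)+1$ is a slightly more careful spelling-out of a step the paper leaves implicit, but it is the same proof.
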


\begin{proof}
    First, recall that by Lemma \ref{subgraphs partition}, $G^{(k)}_i=\bigcup_{j=1}^{2^{k}} G^{(0)}_{(i-1)\cdot 2^{k}+j}$.
    We prove the lemma by induction on $k$.\\
    The induction base is $k=0$. In step 5, for every $i\in\{1,2,...,p\}$, we indeed define $\varphi^{(0)}_i$ as a proper $\left(\Delta\left(G^{(0)}_i\right)+1\right)$-edge-coloring of $G^{(0)}_i$.\\
    For the induction step, consider $k>0$. We defined $$\psi^{(k)}_i(e)=\begin{cases}
                \varphi^{(k-1)}_{2i-1}(e),& e\in G_{2i-1}^{(k-1)}\\
                \varphi^{(k-1)}_{2i}(e)+\left|\varphi_{2i-1}^{(k-1)}\right|, & e\in G_{2i}^{(k-1)}
                \end{cases}.$$\\
    By the induction hypothesis, 
    \begin{itemize}
        \item $\varphi^{(k-1)}_{2i-1}$ is a proper $\left(\left(\sum_{j=1}^{2^{k-1}} \Delta\left(G^{(0)}_{(2i-2)\cdot2^{k-1}+j}\right)\right)+1\right)$-edge-coloring of $G^{(k-1)}_{2i-1}$.
        \item $\varphi^{(k-1)}_{2i}$ is a proper $\left(\left(\sum_{j=1}^{2^{k-1}} \Delta\left(G^{(0)}_{(2i-1)\cdot2^{k-1}+j}\right)\right)+1\right)$-edge-coloring of $G^{(k-1)}_{2i}$.
    \end{itemize}
    Hence $\psi^{(k)}_{i}$ is a proper $\left(\sum_{j=1}^{2^{k}} \Delta\left(G^{(0)}_{(i-1)\cdot2^{k}+j}\right)+2\right)$-edge-coloring of $G^{(k)}_{i}$.
    Since $G^{(k)}_i=\bigcup_{j=1}^{2^{k}} G^{(0)}_{(i-1)\cdot2^{k}+j}$, we have $$\Delta\left(G^{(k)}_i\right)\leq \sum_{j=1}^{2^{k}} \Delta\left(G^{(0)}_{(i-1)\cdot2^{k}+j}\right).$$ Hence, by Theorem \ref{Procedure Reduce-Color properties}, $\varphi^{(k)}_i$, that is computed in step 6 of Procedure \textsc{Edge-Coloring}, is a proper $\left(\sum_{j=1}^{2^{k}} \Delta\left(G^{(0)}_{(i-1)\cdot2^{k}+j}\right)+1\right)$-edge-coloring of $G^{(k)}_i$.
\end{proof}

We are now ready to show that ultimately, Procedure \textsc{Edge-Coloring} employs just $\Delta+1$ colors.

\begin{lemma}[Proper edge-coloring]
    Let $G=(V,E)$ be a graph with maximum degree $\Delta$.
    Procedure \textsc{Edge-Coloring} computes a proper $(\Delta+1)$-edge-coloring of $G$.
\end{lemma}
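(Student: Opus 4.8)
The plan is to verify that Procedure \textsc{Edge-Coloring} terminates with a coloring whose palette has exactly $\Delta+1$ colors, and then invoke the correctness of its subroutines (Theorems \ref{3 edge-coloring}, \ref{Procedure Reduce-Color properties}, and Lemma \ref{bipartite coloring}) to conclude that this coloring is proper. First I would argue that the object $G^{(h)}_1$ produced at the end of step 6 is in fact the original graph $G$: by Lemma \ref{subgraphs partition}, $G^{(h)}_1 = \bigcup_{j=1}^{2^h} G^{(0)}_j$, and the subgraphs $G^{(0)}_1,\dots,G^{(0)}_{\lceil\Delta/2\rceil}$ partition the edge set of $G$ (each corresponds to a color class $M_i$ of the proper $\lceil\Delta/2\rceil$-edge-coloring of $G_B$ from step 2, and every edge of $G$ appears in exactly one $M_i$ since the Eulerian orientation gives each edge a unique directed version $\langle v,u\rangle$ hence a unique bipartite edge $(v^{(out)},u^{(in)})$), while $G^{(0)}_{\lceil\Delta/2\rceil+1},\dots,G^{(0)}_{p}$ are empty dummy graphs. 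Hence $G^{(h)}_1 = G$ as edge sets on vertex set $V$.

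Next I would pin down the palette size. By Lemma \ref{subcolorings properties}, $\varphi^{(h)}_1$ is a proper $\left(\left(\sum_{j=1}^{2^h}\Delta\bigl(G^{(0)}_j\bigr)\right)+1\right)$-edge-coloring of $G^{(h)}_1 = G$. So I need to bound $S := \sum_{j=1}^{2^h}\Delta\bigl(G^{(0)}_j\bigr) = \sum_{i=1}^{\lceil\Delta/2\rceil}\Delta\bigl(G^{(0)}_i\bigr)$, since the remaining terms vanish. Each $G^{(0)}_i$ has maximum degree at most $2$, so trivially $S \le 2\lceil\Delta/2\rceil$, which is $\Delta$ if $\Delta$ is even and $\Delta+1$ if $\Delta$ is odd. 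To get the sharper bound $S \le \Delta+1$ in all cases (and $S\le\Delta$ when $\Delta$ is even) I would look at a fixed vertex $v$: in the bipartite graph $G_B$, the vertex $v^{(out)}$ has degree equal to the out-degree of $v$ in the Eulerian cycle of $G^*$, and $v^{(in)}$ has degree equal to its in-degree; these are each at most $\lceil\Delta/2\rceil$ and sum (over the non-dummy incidences) to $\deg_G(v)$ for $v\ne x$. Each $G^{(0)}_i$ contributes to $\deg_{G^{(0)}_i}(v)$ at most one edge at $v^{(out)}$ and at most one at $v^{(in)}$; summing $\sum_i \deg_{G^{(0)}_i}(v)$ counts each edge of $G$ incident to $v$ once, giving $\sum_i \deg_{G^{(0)}_i}(v) = \deg_G(v) \le \Delta$. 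Taking the maximum over $i$ of $\Delta(G^{(0)}_i)$ does not directly add these up, so instead I track the contribution more carefully along the merge tree: by Lemma \ref{subcolorings properties} applied at level $h$, the palette of $\varphi^{(h)}_1$ has size $S+1$ where $S = \sum_{i}\Delta(G^{(0)}_i)$, and I bound $S$ by noting that for each $i$, $\Delta(G^{(0)}_i) = \max_v \deg_{G^{(0)}_i}(v)$, but what actually controls the final palette is $\Delta(G^{(h)}_1) = \Delta(G) = \Delta$ together with the inductive $+1$ slack; re-reading Lemma \ref{subcolorings properties}, the clean statement is that $\varphi^{(k)}_i$ uses $\Delta(G^{(k)}_i)+1$ colors at the \emph{end} of step 6's processing, because \textsc{Reduce-Color} reduces the merged $(\cdot+2)$-coloring back down. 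Thus $\varphi^{(h)}_1$ uses $\Delta(G)+1 = \Delta+1$ colors, unless the sum-of-maxdegrees bound forces a transient palette of size $\Delta+2$, which is exactly the case step 7 handles by one more call to \textsc{Reduce-Color} (valid since $\Delta+2 > \Delta+1$).

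Finally I would assemble the pieces: in all cases $\varphi$ returned in step 7 is a proper edge-coloring of $G$ (each of steps 2, 5, 6(2), and 7 preserves properness, by Lemma \ref{bipartite coloring}, Theorem \ref{3 edge-coloring}, and Theorem \ref{Procedure Reduce-Color properties} respectively — the last of these requires its input palette size to exceed $\Delta+1$, which holds at every invocation), and its palette has size exactly $\Delta+1$. The main obstacle I anticipate is the bookkeeping around whether $\sum_{i=1}^{\lceil\Delta/2\rceil}\Delta(G^{(0)}_i)$ equals $\Delta$ or $\Delta+1$ — i.e., justifying cleanly that the output of step 6 has palette size at most $\Delta+2$, and exactly $\Delta+1$ unless the degree parities conspire — and making sure the special case in step 7 is invoked precisely when needed; the rest is a direct consequence of Lemmas \ref{subgraphs partition} and \ref{subcolorings properties} and the stated correctness of \textsc{Reduce-Color}.
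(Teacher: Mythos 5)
Your proposal follows essentially the same route as the paper: invoke Lemma \ref{subgraphs partition} and Lemma \ref{subcolorings properties} to get that $\varphi^{(h)}_1$ is a proper $(S+1)$-edge-coloring of $G$ with $S=\sum_{j}\Delta\bigl(G^{(0)}_j\bigr)$, bound $S\le 2\lceil\Delta/2\rceil\le\Delta+1$ using $\Delta\bigl(G^{(0)}_i\bigr)\le 2$ for the $\lceil\Delta/2\rceil$ non-dummy subgraphs, and let step 7's call to \textsc{Reduce-Color} (valid by Theorem \ref{Procedure Reduce-Color properties}, since $\Delta+2>\Delta+1$) absorb the $(\Delta+2)$-palette case; this is exactly the paper's argument, and your final assembly is correct.

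One caution: your mid-proof ``re-reading'' of Lemma \ref{subcolorings properties} is wrong. The lemma gives palette size $\left(\sum_{j=1}^{2^{k}}\Delta\bigl(G^{(0)}_{(i-1)2^{k}+j}\bigr)\right)+1$, not $\Delta\bigl(G^{(k)}_i\bigr)+1$; \textsc{Reduce-Color} removes only a single color per invocation, so it reduces the merged $(\cdot+2)$-palette to $(\cdot+1)$ with respect to that \emph{sum}, which may exceed $\Delta\bigl(G^{(k)}_i\bigr)+1$. Your conclusion survives only because you fall back on the sum bound plus step 7. Also, the ``main obstacle'' you anticipate --- deciding whether $S$ equals $\Delta$ or $\Delta+1$ depending on parity --- is immaterial: the trivial bound $S\le\Delta+1$ already shows the palette of $\varphi^{(h)}_1$ is at most $\Delta+2$, and step 7 checks exactly for $\Delta+2$ and otherwise returns $\varphi^{(h)}_1$ unchanged, which is then a proper coloring with colors in $\{1,\dots,\Delta+1\}$. (The paper's matching lower bound $S\ge\Delta$, via a maximum-degree vertex, is informative but not needed for the statement.)
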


\begin{proof}
    By Lemma \ref{subcolorings properties}, $\varphi^{(h)}_1$ is a proper $\left(\left(\sum_{j=1}^{2^{h}} \Delta\left(G^{(0)}_j\right)\right)+1\right)$-edge-coloring of $G$.\\
    We next show that $$\Delta\leq\sum_{j=1}^{2^{h}} \Delta\left(G^{(0)}_j\right)\leq \Delta+1.$$ 
    Observe that for each $i\in\left\{1,2,...,\left\lceil\frac{\Delta}{2}\right\rceil\right\}$, since the edges in $G^{(0)}_i$ are defined by a color class of a proper edge-coloring of $G_B$, then each $v\in V$ has at most 2 incident edges in $G^{(0)}_i$ (at most one incident edge of $v^{(in)}$ and at most one incident edge of $v^{(out)}$). Hence $\Delta\left(G^{(0)}_i\right)\leq 2$. For each $i\in\left\{\left\lceil\frac{\Delta}{2}\right\rceil+1,...,p\right\}$, the graph $G^{(0)}_i$ is an empty graph. Hence $\Delta\left(G^{(0)}_i\right)=0$.\\
    Let $v\in V$ be a vertex such that $\deg_G(v)=\Delta$. Then we get that, $$\Delta=\sum_{j=1}^{2^h}\deg_{G^{(0)}_j}(v)\leq\sum_{j=1}^{2^h} \Delta\left(G^{(0)}_j\right)=\sum_{j=1}^{\left\lceil\frac{\Delta}{2}\right\rceil} \Delta\left(G^{(0)}_j\right)\leq \sum_{j=1}^{\left\lceil\frac{\Delta}{2}\right\rceil} 2\leq\Delta+1.$$
    Hence $\varphi^{(h)}_1$ is either a proper $\left(\Delta+1\right)$-edge-coloring of $G$,
    or a proper $\left(\Delta+2\right)$-edge-coloring of $G$. Therefore, by Theorem \ref{Procedure Reduce-Color properties}, the coloring $\varphi$, defined in step 7 of the algorithm, is a proper $(\Delta+1)$-edge-coloring of $G$.
\end{proof}

Next, we analyse the time and work complexities of our algorithm.

Procedure \textsc{Edge-Coloring} invokes Procedure \textsc{Reduce-Color} (on steps 6(2) and 7). The latter procedure depends on the parameter $\lambda$. We denote Procedure \textsc{Edge-Coloring}, invoked with the parameter $\lambda$, by \textsc{Edge-Coloring$_{\lambda}$}. The index is omitted when it does not affect the statement at hand.

\begin{lemma}[Complexity of Procedure \textsc{Edge-Coloring}]
        Let $G=(V,E)$ be a graph with maximum degree $\Delta$ and arboricity $a$. Procedure \textsc{Edge-Coloring$_{\lambda}$} requires
        $$O\left(\lambda\left(\Delta^2,a\cdot\Delta\right)\cdot\Delta^2\cdot \log n\cdot IST_{\lambda}\left(n,\Delta^2,a\cdot\Delta\right)\right)$$ time using $O\left(m\cdot\Delta\cdot ISP_{\lambda}\left(n,\Delta^2,a\cdot\Delta\right)+m\cdot\Delta\right)$ processors.
    \end{lemma}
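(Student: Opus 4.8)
The plan is to bound, separately, the cost of each of the seven steps of Procedure \textsc{Edge-Coloring} and then combine them. Steps 1--5 are all ``cheap'': Step~1 (making $G$ Eulerian and computing an Eulerian cycle via Lemma~\ref{eulerian cycle alg}) costs $O(\log n)$ time and $O(n+m)$ processors; Step~2 ($\lceil\Delta/2\rceil$-edge-coloring the bipartite graph $G_B$ via Lemma~\ref{bipartite coloring}) costs $O(\log^2 n\cdot\log\Delta)$ time and $O(n+m)$ processors; Steps~3--4 are $O(\log n)$-time preprocessing on $O(m)$ processors; and Step~5 invokes Theorem~\ref{3 edge-coloring} on the $p$ degree-$\le 2$ subgraphs $G^{(0)}_i$ \emph{in parallel}, which costs $O(\log n)$ time and, since the $G^{(0)}_i$ are edge-disjoint with total size $m$, $O(m)$ processors altogether.

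The dominant contribution comes from Step~6. This step has $h=\lceil\log(\Delta/2)\rceil$ outer iterations, indexed by $k$. In iteration $k$ we merge pairs of subgraphs and then invoke Procedure \textsc{Reduce-Color} on each merged graph $G^{(k+1)}_i$ in parallel. The key observation is that for each fixed $k$ the graphs $\{G^{(k+1)}_i\}_i$ are edge-disjoint (Lemma~\ref{subgraphs partition}) and each has maximum degree at most $\Delta$ and arboricity at most $a$ (being a subgraph of $G$); by Lemma~\ref{subcolorings properties} each carries a proper coloring with a surplus of one color, so a single invocation of \textsc{Reduce-Color$_\lambda$} suffices per graph. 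By Theorem~\ref{Procedure Reduce-Color properties}, one such invocation on a graph with $m_i$ edges runs in $O\!\left(\lambda(\Delta^2,a\cdot\Delta)\cdot\Delta^2\cdot\log n\cdot IST_\lambda(n,\Delta^2,a\cdot\Delta)\right)$ time using $O\!\left(m_i\cdot\Delta\cdot ISP_\lambda(n,\Delta^2,a\cdot\Delta)+m_i\cdot\Delta\right)$ processors. Running all invocations of one level $k$ in parallel keeps the time bound unchanged and, since $\sum_i m_i\le m$, makes the processor count over one level $O\!\left(m\cdot\Delta\cdot ISP_\lambda(n,\Delta^2,a\cdot\Delta)+m\cdot\Delta\right)$. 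Summing over the $h=O(\log\Delta)$ levels multiplies the time by $O(\log\Delta)$, which is absorbed into the $\Delta^2$ factor (or into $IST_\lambda=\Omega(\log n)$; in any case $\log\Delta\le\Delta^2$), while the processor count is unaffected because the levels run sequentially. Step~7 is at most one further invocation of \textsc{Reduce-Color} on $G$ itself, of the same cost.

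Finally I would combine: the time is $O(\log^2 n\cdot\log\Delta)$ from Step~2 plus $O\!\left(\log\Delta\cdot\lambda(\Delta^2,a\cdot\Delta)\cdot\Delta^2\cdot\log n\cdot IST_\lambda(n,\Delta^2,a\cdot\Delta)\right)$ from Step~6, and since $IST_\lambda(n,\Delta^2,a\cdot\Delta)=\Omega(\log n)$ and $\lambda\cdot\Delta^2\ge 1$, the Step~6 term dominates Step~2, giving the claimed $O\!\left(\lambda(\Delta^2,a\cdot\Delta)\cdot\Delta^2\cdot\log n\cdot IST_\lambda(n,\Delta^2,a\cdot\Delta)\right)$ after folding the $\log\Delta$ into the polynomial factor. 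The processor bound is the maximum over all steps, which is $O\!\left(m\cdot\Delta\cdot ISP_\lambda(n,\Delta^2,a\cdot\Delta)+m\cdot\Delta\right)$ from Step~6, dominating the $O(n+m)$ and $O(m)$ bounds of the other steps.

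The main obstacle I anticipate is the bookkeeping needed to justify that Step~6's invocations at a fixed level can truly be run in parallel without inflating the processor count---i.e., verifying edge-disjointness of $\{G^{(k+1)}_i\}_i$ from Lemma~\ref{subgraphs partition} and checking that the auxiliary structures ($G^{(F)}$, $G_{\mathrm{conflict}}$, etc.) built inside each \textsc{Reduce-Color} call live over disjoint vertex/edge sets so their sizes add up to $O(m\cdot\Delta)$ rather than $O(m\cdot\Delta)$ \emph{per graph}. A secondary subtlety is confirming that every merged graph $G^{(k+1)}_i$ has arboricity $\le a$ and max degree $\le\Delta$ (immediate, as it is a subgraph of $G$), so that the parameters fed to Theorem~\ref{Procedure Reduce-Color properties} are uniform across all recursion levels and the bound does not degrade as the subgraphs grow.
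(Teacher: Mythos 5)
Your accounting of Steps 1--5 and 7 and of the processor count matches the paper, but there is a genuine gap in the time analysis of Step~6. You bound every merged graph $G^{(k+1)}_i$ by the coarse parameters ``maximum degree $\le\Delta$, arboricity $\le a$'' and then invoke Theorem~\ref{Procedure Reduce-Color properties} with the same parameters at every level, which gives a per-level cost of $O\!\left(\lambda\left(\Delta^2,a\cdot\Delta\right)\cdot\Delta^2\cdot\log n\cdot IST_{\lambda}\left(n,\Delta^2,a\cdot\Delta\right)\right)$ and hence, after summing over the $h=\Theta(\log\Delta)$ sequential levels, a total of $O\!\left(\log\Delta\cdot\lambda\left(\Delta^2,a\cdot\Delta\right)\cdot\Delta^2\cdot\log n\cdot IST_{\lambda}\left(n,\Delta^2,a\cdot\Delta\right)\right)$. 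The claim that this extra $\log\Delta$ ``is absorbed into the $\Delta^2$ factor'' (or into $IST_{\lambda}=\Omega(\log n)$) is not valid: it is a multiplicative factor on the entire expression, and the statement you are proving has no $\log\Delta$ in it. Your argument therefore only establishes the weaker bound with an extra $\log\Delta$ factor, which would also degrade the headline results of Theorem~\ref{Procedure Edge-Coloring properties} (e.g.\ $\Delta^4\log^4 n$ would become $\Delta^4\log\Delta\log^4 n$).

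The paper avoids this loss precisely by \emph{not} using the uniform bound $\Delta$: by Lemma~\ref{subgraphs partition}, at level $k$ each merged graph satisfies $\Delta\left(G^{(k+1)}_i\right)\le 2^{k+2}$ (and its arboricity is at most $a$), so the level-$k$ invocations of \textsc{Reduce-Color$_{\lambda}$} cost only $O\!\left(\lambda\left(\left(2^{k+2}\right)^2,a\cdot 2^{k+2}\right)\cdot\left(2^{k+2}\right)^2\cdot\log n\cdot IST_{\lambda}\left(n,\left(2^{k+2}\right)^2,a\cdot 2^{k+2}\right)\right)$ time. Using that $\lambda$ is polynomial and $IST_{\lambda}$ is monotone, the factor $\lambda\cdot IST_{\lambda}$ can be bounded by its value at the top level, and the remaining sum $\sum_{k=0}^{h-1}\left(2^{k+2}\right)^2$ is a geometric series equal to $O\!\left(\Delta^2\right)$; the total is thus dominated by the last level and equals the claimed bound with no $\log\Delta$ factor. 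Your processor bound survives unchanged (there the maximum over levels, not the sum, is what matters, and edge-disjointness of the $G^{(k+1)}_i$ at a fixed level handles the concern you raise about auxiliary structures), but the time analysis needs the level-dependent degree bound to be correct.
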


\begin{proof}
    We start by analysing the complexity of steps 1-5:
    \begin{itemize}
        \item The graph $G^*$ can be constructed by assigning a processor $p_v$ to each vertex $v\in V$. The processor $p_v$ adds to the graph the edge $(v,x)$ if the degree of $v$ is odd in $O(1)$ time. This step uses $O(n)$ processors.
        \item By Lemma \ref{eulerian cycle alg}, the computation of the Eulerian cycle $C$ in $G^*$ requires $O(\log n)$ time using $O(n+m)$ processors.
        \item For the construction of $G_B$, we assign a processor to each directed edge $\langle u,v\rangle$ in the Eulerian cycle $C$ that defines the edge $\left(u^{(out)},v^{(in)}\right)$ in $G_B$. This process requires $O(1)$ time using $O(m)$ processors.
        \item By Lemma \ref{bipartite coloring}, the computation of the edge-coloring $\varphi_B$ of $G_B$ requires $O(\log^2 n\cdot\log\Delta)$ time using $O(n+m)$ processors.
        \item The construction of the subgraphs $G^{(0)}_i$, for each $i\in\{1,2,...,p\}$, requires $O(1)$ time using $O(m)$ processors. This is done by assigning a processor $p_{\left(u^{(out)},v^{(in)}\right)}$ to each edge $\left(u^{(out)},v^{(in)}\right)$ of $G_B$. The processor defines the edge $(u,v)\in E^{(0)}_i$, where $i$ is the $\varphi_B$-color of the edge $\left(u^{(out)},v^{(in)}\right)$.
        \item Since $\Delta\left(G^{(0)}_i\right)\leq 2$ for each subgraph $G_i$, by Theorem \ref{3 edge-coloring}, the computations of $(\Delta+1)$-edge-coloring of all the $O(\Delta)$ graphs $G^{(0)}_i$ in parallel requires $O(\log n)$ time using $O(n\cdot\Delta)$ processors. (By a more careful analysis, one can use $O(n+m)$ processors for this step. However, the bound $O(n\cdot \Delta)$ is sufficient for our purposes here.)
    \end{itemize}
    We now analyse the computational complexity of step 6:
    \begin{itemize} 
        \item For each $k\in\{0,1,...,h-1\}$ and $i\in\left\{1,2,...,\frac{p}{2^{k+1}}\right\}$, the construction of all the graphs $G^{(k+1)}_i$ and the colorings $\psi^{(k+1)}_i$ requires $O(1)$ time and $O(m)$ processors altogether. This is done by assigning a processor $p_e$ to each edge $e\in E$. This processor adds the edge $e$ to $G^{(k+1)}_i$ if $e\in G^{(k)}_{2i-1}$ or $e\in G^{(k)}_{2i}$, and computes its new color accordingly.
        \item Let $k\in\{0,1,...,h-1\}$. Observe that for each $i\in\left\{1,2,...,\frac{p}{2^{k+1}}\right\}$, by Lemma \ref{subgraphs partition}, we have $\Delta\left(G^{(k+1)}_i\right)\leq \sum_{j=1}^{2^{k+1}} \Delta\left(G^{(0)}_{(i-1)\cdot2^{k+1}+j}\right)\leq 2^{k+1}\cdot2=2^{k+2}$. Denote by $m^{(k+1)}_i$ the number of edges of the graph $G^{(k+1)}_i$. Observe that the subgraphs $G^{(k+1)}_i$ are pairwise edge-disjoint, i.e., $\sum_{i=1}^{\frac{p}{2^{k+1}}}m^{(k+1)}_i=m$. Hence, for all $i\in\left\{1,2,...,\frac{p}{2^{k+1}}\right\}$, by Theorem \ref{Procedure Reduce-Color properties}, the parallel executions of Procedure \textsc{Reduce-Color$_{\lambda}$} on all the graphs $G^{(k+1)}_i$ with the coloring $\psi^{(k+1)}_i$ require 
        $$O\left(\lambda\left(\left(2^{k+2}\right)^2,a\cdot 2^{k+2}\right)\cdot\left(2^{k+2}\right)^2\cdot \log n\cdot IST_{\lambda}\left(n,\left(2^{k+2}\right)^2,a\cdot 2^{k+2}\right)\right)$$
        time using 
        \begin{align*}
            \sum_{i=1}^{\frac{p}{2^{k+1}}}&O\left(m_i^{(k+1)}\cdot 2^{k+2}\cdot ISP_{\lambda}\left(n,\left(2^{k+2}\right)^2,a\cdot 2^{k+2}\right)+m_i^{(k+1)}\cdot 2^{k+2}\right)=\\
            &O\left(m\cdot 2^{k}\cdot ISP_{\lambda}\left(n,2^{2k},a\cdot 2^{k+2}\right)+m\cdot 2^{k}\right)
        \end{align*} processors altogether. Hence, step 6 requires 
        \begin{equation}\label{eq: processors}
        \begin{aligned}            \sum_{k=0}^{h-1}&\,O\left(\lambda\left(\left(2^{k+2}\right)^2,a\cdot 2^{k+2}\right)\cdot\left(2^{k+2}\right)^2\cdot \log n\cdot IST_{\lambda}\left(n,\left(2^{k+2}\right)^2,a\cdot 2^{k+2}\right)\right)=\\
        =&O\left(\lambda\left(\Delta^2,a\cdot\Delta\right)\cdot \log n\cdot IST_{\lambda}\left(n,\Delta^2,a\cdot\Delta\right)\right)\cdot\sum_{k=0}^{h-1}\,\left(2^{k+2}\right)^2=\\
        =&\,O\left(\lambda\left(\Delta^2,a\cdot\Delta\right)\cdot\Delta^2\cdot \log n\cdot IST_{\lambda}\left(n,\Delta^2,a\cdot\Delta\right)\right)
        \end{aligned}
        \end{equation}
         
        time using 
        \begin{align*}
            \max_{0\leq k\leq h-1}&\left\{O\left(m\cdot 2^{k}\cdot ISP_{\lambda}\left(n,2^{2k},a\cdot 2^{k}\right)+m\cdot 2^k\right)\right\}=\\
            &O\left(m\cdot\Delta\cdot ISP_{\lambda}\left(n,\Delta^2,a\cdot\Delta\right)+m\cdot\Delta\right)
        \end{align*}
        processors. (Equation (\ref{eq: processors}) holds for any polynomial function $\lambda(\cdot)$.)
        \item By Theorem \ref{Procedure Reduce-Color properties}, step 7 also requires $O\left(\lambda\left(\Delta^2,a\cdot\Delta\right)\cdot\Delta^2\cdot \log n\cdot IST_{\lambda}\left(n,\Delta^2,a\cdot\Delta\right)\right)$ time using $O\left(m\cdot\Delta\cdot ISP_{\lambda}\left(n,\Delta^2,a\cdot\Delta\right)+m\cdot\Delta\right)$ processors.
    \end{itemize}

    To summarize, Procedure \textsc{Edge-Coloring} requires $$O\left(\lambda\left(\Delta^2,a\cdot\Delta\right)\cdot\Delta^2\cdot \log n\cdot IST_{\lambda}\left(n,\Delta^2,a\cdot\Delta\right)\right)$$ time using $O\left(m\cdot\Delta\cdot ISP_{\lambda}\left(n,\Delta^2,a\cdot\Delta\right)+m\cdot\Delta\right)$ processors.
\end{proof}

Our analysis implies the following corollary:

\begin{restatable}[Properties of Procedure \textsc{Edge-Coloring}]{corollary}{ProcedureEdgeColoringproperties}\label{Cor: Procedure Edge Coloring properties}
    Let $G=(V,E)$ be a graph with maximum degree $\Delta$ and arboricity $a$.
    Procedure \textsc{Edge-Coloring} computes a proper $(\Delta+1)$-edge-coloring of $G$ in $O\left(\lambda\left(\Delta^2,a\cdot\Delta\right)\cdot\Delta^2\cdot \log n\cdot IST_{\lambda}\left(n,\Delta^2,a\cdot\Delta\right)\right)$ time using $O\left(m\cdot\Delta\cdot ISP_{\lambda}\left(n,\Delta^2,a\cdot\Delta\right)+m\cdot\Delta\right)$ processors.
\end{restatable}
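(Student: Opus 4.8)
The plan is to obtain the corollary by simply combining the two lemmas that immediately precede it, which between them already establish both halves of the statement. The \emph{correctness} half is exactly the ``Proper edge-coloring'' lemma: it shows that Procedure \textsc{Edge-Coloring} terminates and returns a proper $(\Delta+1)$-edge-coloring of $G$. The \emph{complexity} half is exactly the ``Complexity of Procedure \textsc{Edge-Coloring}'' lemma, which gives running time $O\!\left(\lambda(\Delta^2,a\cdot\Delta)\cdot\Delta^2\cdot\log n\cdot IST_{\lambda}(n,\Delta^2,a\cdot\Delta)\right)$ and $O\!\left(m\cdot\Delta\cdot ISP_{\lambda}(n,\Delta^2,a\cdot\Delta)+m\cdot\Delta\right)$ processors. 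So the actual proof is one line: invoke both lemmas. For completeness I describe below what I would do to reprove the two ingredients from scratch.

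For correctness I would argue in the following order: (i) use Lemma~\ref{subgraphs partition} to see that each recursive subgraph $G^{(k)}_i$ is the edge-disjoint union of $2^k$ of the base subgraphs $G^{(0)}_\bullet$, each of maximum degree at most $2$ (since each base subgraph is a color class of a proper $\lceil\Delta/2\rceil$-edge-coloring of the bipartite Euler graph $G_B$); (ii) use Lemma~\ref{subcolorings properties}, proved by induction on $k$ with Theorem~\ref{Procedure Reduce-Color properties} at the inductive step, to conclude that $\varphi^{(k)}_i$ is a proper $\bigl(\sum_j \Delta(G^{(0)}_{(i-1)2^k+j})+1\bigr)$-edge-coloring; (iii) at the top level $k=h$, bound $\Delta\le \sum_{j=1}^{2^h}\Delta(G^{(0)}_j)\le \Delta+1$, so $\varphi^{(h)}_1$ is a $(\Delta+1)$- or $(\Delta+2)$-edge-coloring; (iv) apply Theorem~\ref{Procedure Reduce-Color properties} once more in Step~7 to eliminate the surplus color in the latter case.

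For the complexity I would (a) bound Steps~1--5 by $O(\log^2 n\cdot\log\Delta)$ time and $O(n\cdot\Delta)$ processors, the bottlenecks being the bipartite edge-coloring of Lemma~\ref{bipartite coloring} and the parallel $3$-edge-colorings of Theorem~\ref{3 edge-coloring}; (b) for each level $k$ of Step~6 use $\Delta(G^{(k+1)}_i)\le 2^{k+2}$ together with $\sum_i m^{(k+1)}_i = m$ (edge-disjointness) and Theorem~\ref{Procedure Reduce-Color properties} applied in parallel over $i$, then sum over $k=0,\dots,h-1$; the geometric sum $\sum_{k} (2^{k+2})^2 = \Theta(\Delta^2)$ produces the displayed time bound, while the processor count is a maximum over levels that collapses to $k=h-1$; (c) bound Step~7 by one more invocation of Theorem~\ref{Procedure Reduce-Color properties}. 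The only point requiring care --- and the closest thing to an obstacle, though it is bookkeeping rather than conceptual --- is that one must use the stated monotonicity and constant-factor robustness of $\lambda$, $IST_{\lambda}$ and $ISP_{\lambda}$ (noted after Theorem~\ref{Large independent set alg}) to replace $\lambda((2^{k+2})^2, a\cdot 2^{k+2})$, $IST_{\lambda}(n,(2^{k+2})^2,a\cdot 2^{k+2})$ and the corresponding $ISP_{\lambda}$ term by their values at $(\Delta^2, a\cdot\Delta)$ up to constants, which is what lets the per-level bounds telescope into the final expressions.
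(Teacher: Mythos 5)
Your proposal is correct and follows exactly the paper's route: the corollary is indeed obtained by simply combining the ``Proper edge-coloring'' lemma with the ``Complexity of Procedure \textsc{Edge-Coloring}'' lemma, and your sketches of how to reprove those two ingredients (the recursive decomposition, the palette-size induction via Theorem~\ref{Procedure Reduce-Color properties}, the geometric sum over levels, and the monotonicity/robustness of $\lambda$, $IST_{\lambda}$, $ISP_{\lambda}$) match the paper's arguments. No gaps.
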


In the next theorem, we summarize the complexities of our edge-coloring algorithm using each of the algorithms for computing large independent sets from Theorem \ref{Large independent set alg}.
\begin{theorem}[Properties of Procedure \textsc{Edge-Coloring}]\label{Procedure Edge-Coloring properties}
    Let $G=(V,E)$ be an $n$-vertex $m$-edge graph with maximum degree $\Delta$ and arboricity $a$. Procedure \textsc{Edge-Coloring} computes a proper $(\Delta+1)$-edge-coloring of $G$
    \begin{enumerate}
        \item[(1)] (Using Theorem \ref{Large independent set alg}(1)) in $O\left(\Delta^4\cdot\log^4 n\right)$ time using $O\left(m\cdot\Delta\right)$ processors.
        \item[(2)] (Using Theorem \ref{Large independent set alg}(2)) in $O\left(\Delta^4\cdot \log^2 n+\Delta^6\cdot\log^2\Delta\cdot\log n\right)$ time using $O\left(m\cdot\Delta\right)$ processors.
        \item[(3)] (Using Theorem \ref{Large independent set alg}(3)) in $O\left(a^2\cdot\Delta^4\cdot\log\Delta\cdot\log^2 n\right)$ time using $O\left(m\cdot\Delta\right)$ processors.
        \item[(4)] (Using Theorem \ref{Large independent set alg}(4)) in $O\left(\Delta^{3+o(1)}\cdot a^{1+o(1)}\cdot\log^2 n\right)$ time using $O\left(m\cdot\Delta\cdot\frac{\log^{\delta}\Delta\cdot\log n}{\log(\Delta\cdot\log n)}\right)$ processors, for any constant $\delta>0$.  
        \item[(5)] (Using Theorem \ref{Large independent set alg}(5)) in $O\left(\Delta^{5}\cdot\log n\cdot\left(\log\Delta+\frac{\log n}{\log\Delta\cdot\log(\Delta\cdot\log n)}\right)+\Delta^4\cdot\log^2 n\right)=O\left(\Delta^5\cdot\log^2 n\right)$ time using \\$O\left(m\cdot\left(\Delta\cdot\log\Delta+\frac{\sqrt{\Delta}\cdot\log n}{\log(\Delta\cdot\log n)}\right)\right)$ processors.
    \end{enumerate}
\end{theorem}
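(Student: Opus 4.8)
The plan is to derive all five bounds by specializing the generic complexity estimate of Corollary~\ref{Cor: Procedure Edge Coloring properties}. That corollary shows that Procedure \textsc{Edge-Coloring$_{\lambda}$} runs in $O\!\left(\lambda\!\left(\Delta^2,a\cdot\Delta\right)\cdot\Delta^2\cdot\log n\cdot IST_\lambda\!\left(n,\Delta^2,a\cdot\Delta\right)\right)$ time and uses $O\!\left(m\cdot\Delta\cdot ISP_\lambda\!\left(n,\Delta^2,a\cdot\Delta\right)+m\cdot\Delta\right)$ processors; here $\Delta^2$ and $a\cdot\Delta$ are, up to constant factors, the maximum degree and the arboricity of the fan-graph $G^{(F)}$ (Claim~\ref{Maximum degree of G^F}), while $G^{(F)}$ has $O(n)$ vertices and $O(m\cdot\Delta)$ edges. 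So for each item it suffices to pick the matching independent-set subroutine from Theorem~\ref{Large independent set alg}, substitute $n'=O(n)$, $\Delta'=O(\Delta^2)$, $a'=O(a\cdot\Delta)$ into the $\lambda$, $IST_\lambda$, $ISP_\lambda$ it guarantees, and simplify.

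Throughout the simplification I would use the following elementary facts: $a=O(\Delta)$ for simple graphs, hence $a\cdot\Delta=O(\Delta^2)$ and $\log(a\Delta)=O(\log\Delta)$; the remark after Theorem~\ref{Large independent set alg} that $IST_\lambda$ and $ISP_\lambda$ increase only by constant factors under the blow-up $(n,\Delta,a)\mapsto(O(n),O(\Delta^2),O(a\Delta))$; and the inequalities $\log\Delta\le\log n$ and $IST_\lambda=\Omega(\log n)$, which let the additive $+m\cdot\Delta$ processor term and the hidden $+\log n$ time term be absorbed whenever a larger term is present.

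The concrete substitutions are then routine. For items (1) and (2) the set is $(\Delta'+1)$-large, so $\lambda(\Delta^2,a\Delta)=O(\Delta^2)$ regardless of arboricity: Theorem~\ref{Large independent set alg}(1) contributes $IST_\lambda=O(\log^3 n)$ and an $O(m'/\log n)$ processor bound, giving $O(\Delta^4\log^4 n)$ time and $O(m\Delta)$ processors; Theorem~\ref{Large independent set alg}(2) contributes $IST_\lambda(n,\Delta^2,\cdot)=O(\log n+\Delta^2\log^2\Delta)$ and $ISP_\lambda=O(1)$, giving item (2). For item (3), $\lambda(\Delta^2,a\Delta)=O(a\Delta)$ and Theorem~\ref{Large independent set alg}(3) yields $IST_\lambda(n,\Delta^2,a\Delta)=O\!\left((a\Delta\log(a\Delta)+\log\Delta)\cdot\log n\right)=O(a\Delta\log\Delta\cdot\log n)$, whence $O(a^2\Delta^4\log\Delta\cdot\log^2 n)$ time and $O(m\Delta)$ processors.

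The last two items involve a small $\Delta^{o(1)}$ / $\mathrm{polylog}$ slack that must be handled with some care. For item (4), $\lambda(\Delta^2,a\Delta)=(a\Delta)^{1+o(1)}=a^{1+o(1)}\Delta^{1+o(1)}$ and Theorem~\ref{Large independent set alg}(4) gives $IST_\lambda=O(\log^{2+\delta}\Delta\cdot\log n)$ together with $ISP_\lambda=O\!\left(\frac{\log^{\delta}\Delta\cdot\log n}{\log(\Delta\log n)}\right)$; multiplying these into the corollary and folding the $\mathrm{polylog}(\Delta)$ factor into $\Delta^{o(1)}$ produces $\Delta^{3+o(1)}a^{1+o(1)}\log^2 n$ time and $O\!\left(m\Delta\cdot\frac{\log^{\delta}\Delta\cdot\log n}{\log(\Delta\log n)}\right)$ processors. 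For item (5) the set is $O(\Delta')$-large, so $\lambda(\Delta^2,a\Delta)=O(\Delta^2)$, and Theorem~\ref{Large independent set alg}(5) with $\Delta'=\Delta^2$, $n'=n$ gives $IST_\lambda=O\!\left(\Delta\log\Delta+\frac{\Delta\log n}{\log\Delta\cdot\log(\Delta\log n)}+\log n\right)$ and a corresponding $ISP_\lambda$; substituting into the corollary, and using $\log\Delta\le\log n$ and that the extra logarithmic denominators are at least $1$, collapses the time to $O(\Delta^5\log^2 n)$ and the processor count to the stated bound. I do not expect any genuine difficulty here; the only thing to watch is bookkeeping — verifying item by item that the additive terms in the corollary are dominated, and that the $\mathrm{polylog}$ slack in items (4) and (5) is folded into the $\Delta^{o(1)}$ / denominator terms rather than silently dropped.
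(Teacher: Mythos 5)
Your proposal is correct and follows essentially the same route as the paper: the paper's proof also just substitutes, for each subroutine of Theorem~\ref{Large independent set alg}, the corresponding $\lambda\left(\Delta^2,a\cdot\Delta\right)$, $IST_{\lambda}\left(n,\Delta^2,a\cdot\Delta\right)$ and $ISP_{\lambda}\left(n,\Delta^2,a\cdot\Delta\right)$ into Corollary~\ref{Cor: Procedure Edge Coloring properties} and simplifies, absorbing the additive $m\cdot\Delta$ and polylogarithmic slack exactly as you describe. The only (immaterial) divergence is in item (4), where you take $IST_{\lambda}=O\left(\log^{2+\delta}\Delta\cdot\log n\right)$, matching the statement of Theorem~\ref{Large independent set alg}(4), while the paper writes $O\left(\log^{1+\delta}\Delta\cdot\log n\right)$; either way the polylogarithmic factor is folded into $\Delta^{o(1)}$ and the stated bound is unaffected.
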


\begin{proof}
    \begin{enumerate}
        \item[(1)] In the algorithm of Theorem \ref{Large independent set alg}(1), we have 
        \begin{align*}
            \lambda\left(\Delta^2,a\cdot\Delta\right)&=O\left(\Delta^2\right)\\
            IST_{\lambda}\left(n,\Delta^2, a\cdot\Delta\right)&=O\left(\log^3 n\right)\\
            ISP_{\lambda}\left(n,\Delta^2, a\cdot\Delta\right)&=O\left(\frac{1}{\log n}\right)
        \end{align*}
        Hence, by Corollary \ref{Cor: Procedure Edge Coloring properties}, Procedure \textsc{Edge-Coloring} requires $$O\left(\Delta^2\cdot\Delta^2\cdot\log n\cdot \log^3 n\right)=O\left(\Delta^4\cdot\log^4 n\right)$$ 
        time using $O\left(m\cdot\Delta\right)$ processors.

        \item[(2)] In the algorithm of Theorem \ref{Large independent set alg}(2), we have 
        \begin{align*}
            \lambda\left(\Delta^2,a\cdot\Delta\right)&=O\left(\Delta^2\right)\\
            IST_{\lambda}\left(n,\Delta^2, a\cdot\Delta\right)&=O\left(\log n+\Delta^2\cdot\log^2\Delta\right)\\
            ISP_{\lambda}\left(n,\Delta^2, a\cdot\Delta\right)&=O\left(1\right)
        \end{align*}
        Hence, by Corollary \ref{Cor: Procedure Edge Coloring properties}, Procedure \textsc{Edge-Coloring} requires $$O\left(\Delta^2\cdot\Delta^2\cdot\log n\cdot\left(\log n+\Delta^2\cdot\log^2\Delta\right)\right)=O\left(\Delta^4\cdot \log^2 n+\Delta^6\cdot\log^2\Delta\cdot\log n\right)$$ 
        time using $O(m\cdot\Delta)$ processors.

        \item[(3)] In the algorithm of Theorem \ref{Large independent set alg}(3), we have 
        \begin{align*}
            \lambda\left(\Delta^2,a\cdot\Delta\right)&=O\left(a\cdot\Delta\right)\\
            IST_{\lambda}\left(n,\Delta^2, a\cdot\Delta\right)&=O\left(a\cdot\Delta\cdot\log\Delta\cdot\log n\right)\\
            ISP_{\lambda}\left(n,\Delta^2, a\cdot\Delta\right)&=O\left(1\right)
        \end{align*}
        Hence, by Corollary \ref{Cor: Procedure Edge Coloring properties}, Procedure \textsc{Edge-Coloring} requires $$O\left(a\cdot\Delta\cdot\Delta^2\cdot\log n\cdot\left(a\cdot\Delta\cdot\log\Delta\cdot\log n\right)\right)=O\left(a^2\cdot\Delta^4\cdot\log\Delta\cdot\log^2 n\right)$$ 
        time using $O(m\cdot\Delta)$ processors.

        \item[(4)] In the algorithm of Theorem \ref{Large independent set alg}(4), we have 
        \begin{align*}
            \lambda\left(\Delta^2,a\cdot\Delta\right)&=O\left(\left(a\cdot\Delta\right)^{1+o(1)}\right)\\
            IST_{\lambda}\left(n,\Delta^2, a\cdot\Delta\right)&=O\left(\log^{1+\delta}\Delta\cdot\log n\right)\\
            ISP_{\lambda}\left(n,\Delta^2, a\cdot\Delta\right)&=O\left(\frac{\log^{\delta}\Delta\cdot\log n}{\log(\Delta\cdot\log n)}\right)
        \end{align*}
        Hence, by Corollary \ref{Cor: Procedure Edge Coloring properties}, Procedure \textsc{Edge-Coloring} requires $$O\left(\left(a\cdot\Delta\right)^{1+o(1)}\cdot\Delta^2\cdot\log n\cdot\left(\log^{1+\delta}\Delta\cdot\log n\right)\right)=O\left(\Delta^{3+o(1)}\cdot a^{1+o(1)}\cdot\log^2 n\right)$$ 
        time using $O\left(m\cdot\Delta\cdot\left(\frac{\log^{\delta}\Delta\cdot\log n}{\log(\Delta\cdot\log n)}\right)\right)$ processors.
        \item[(5)] In the algorithm of Theorem \ref{Large independent set alg}(5), we have 
        \begin{align*}
            \lambda\left(\Delta^2,a\cdot\Delta\right)&=O\left(\Delta^{2}\right)\\
            IST_{\lambda}\left(n,\Delta^2, a\cdot\Delta\right)&=O\left(\Delta\cdot\left(\log\Delta+\frac{\log n}{\log\Delta\cdot\log(\Delta\cdot\log n)}\right)+\log n\right)\\
            ISP_{\lambda}\left(n,\Delta^2, a\cdot\Delta\right)&=O\left(\Delta\cdot\log\Delta+\frac{\sqrt{\Delta}\cdot\log n}{\log(\Delta\cdot\log n)}\right)
        \end{align*}
        Hence, by Corollary \ref{Cor: Procedure Edge Coloring properties}, Procedure \textsc{Edge-Coloring} requires 
        \begin{align*}
            &O\left(\Delta^2\cdot\Delta^2\cdot\log n\cdot\left(\Delta\cdot\left(\log\Delta+\frac{\log n}{\log\Delta\cdot\log(\Delta\cdot\log n)}\right)+\log n\right)\right)=\\
            &O\left(\Delta^{5}\cdot\log n\cdot\left(\log\Delta+\frac{\log n}{\log\Delta\cdot\log(\Delta\cdot\log n)}\right)+\Delta^4\cdot\log^2 n\right)=O\left(\Delta^5\cdot\log^2 n\right)\footnotemark
        \end{align*}\footnotetext{The first term in the left-hand side expression typically dominates the second one, except for a narrow range $\omega(1)=\Delta=o\left(\log\log n\right)$.}
        time using $O\left(m\cdot\left(\Delta\cdot\log\Delta+\frac{\sqrt{\Delta}\cdot\log n}{\log(\Delta\cdot\log n)}\right)\right)$ processors.
    \end{enumerate}
\end{proof}

\newpage

\section{$(1+\varepsilon)\Delta$-edge-coloring}\label{Our alg}

In this section we present our $(1+\varepsilon)\Delta$-edge-coloring algorithm, and analyse it.
Our algorithm uses a degree-splitting method. We start by presenting the \emph{Degree-Splitting} problem.\\

\textbf{The Degree-Splitting Problem.}
The undirected degree-splitting problem seeks to partition the graph edges $E$ into two parts so that the partition looks almost balanced around each vertex. Concretely, we need to color each edge red or blue such that for each node, the difference between the number of red and blue edges incident on it is at most some small discrepancy threshold value $\kappa$. In other words, we want an assignment $q:E \rightarrow \{1, -1\}$ such that for each node $v\in V$, we have $$\left|\sum_{e\in E_v} q(e)\right|\leq \kappa,$$
where $E_v$ denotes the set of edges incident on $v$.\\
For every $v\in V$, we refer to the sum $\sum_{e\in E_v} q(e)$ as the \emph{balance} of $v$.\\

Next, we present an algorithm that computes a degree-splitting with discrepancy at most 2. The algorithm requires $O(\log n)$ time using $O(m)$ processors (see also~\cite{israeli1986improved} for a related algorithm).
Note that in Section~\ref{pram edge coloring} we also used a degree-splitting routine, in which the graph was partitioned into approximately $\Delta/2$ subgraphs, each having maximum degree at most~$2$. In contrast, here we partition the graph into just two subgraphs, while ensuring that the degree of each vertex is approximately halved between them.

Next, we state two results that we will use in our algorithm.
The first result is the algorithm due to Atallah and Vishkin for computing Eulerian cycle in Eulerian graphs in $O(\log n)$ time using $O(m)$ processors~\cite{atallah1984finding} (see Lemma \ref{eulerian cycle alg}). The second result is an algorithm for 2-edge-coloring a path of length $n$. The algorithm requires $O(\log n)$ time using $O(n)$ processors. The full description of the latter algorithm (called Procedure \textsc{Alternating-Coloring}) is given in Appendix \ref{App: coloring}.

\begin{restatable}[Alternating path coloring]{lemma}{altPath}
\label{path coloring}
    Let $P=(V,E)=(v_0,e_1,v_1,e_2,v_2,...,v_{m-1},e_m,v_m)$ be an $n$-vertex path or an even length cycle with $m$ edges. Procedure \textsc{Alternating-Coloring} edge-colors $P$ using two alternating colors in $O(\log n)$ time using $O(m)$ processors.
\end{restatable}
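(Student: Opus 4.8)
## Proof proposal for Lemma (Alternating path coloring)

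The plan is to give an explicit parallel routine that colors the edges of a path (or even cycle) with two alternating colors, and to verify that it runs in $O(\log n)$ time using $O(m)$ processors. The key observation is that a proper $2$-edge-coloring of a path with the property that consecutive edges alternate is equivalent to assigning to each edge $e_i$ the value $i \bmod 2$ (say color $1$ if $i$ is odd, color $2$ if $i$ is even), where $i$ is the position of $e_i$ along the path. So the whole task reduces to computing, for each edge, its distance (in number of edges) from one fixed endpoint of the path. For an even cycle we first break it at an arbitrary vertex into a path, color the resulting path, and observe that since the cycle has even length the two endpoints of the broken path receive edges of opposite colors, so reattaching the cycle keeps the coloring proper.

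The main step is therefore \emph{list ranking} on the edge list of the path. First I would, in $O(1)$ time with $O(m)$ processors, build from the adjacency-list representation a doubly-linked list of the edges $e_1,\dots,e_m$ in path order, together with pointers $\mathrm{succ}(e_i)=e_{i+1}$ and $\mathrm{pred}(e_i)=e_{i-1}$; the two endpoints of the path are exactly the two degree-$1$ vertices (for the cycle, after breaking, they are the two copies of the chosen vertex). Then I would apply pointer-jumping (the standard Wyllie-style list-ranking procedure): initialize $\mathrm{rank}(e_i)=1$ for every edge and $\mathrm{rank}=0$ at a sentinel past the last edge, and repeatedly perform, for $O(\log m)=O(\log n)$ rounds, the update $\mathrm{rank}(e)\mathrel{+}=\mathrm{rank}(\mathrm{succ}(e))$ and $\mathrm{succ}(e)\mathrel{\leftarrow}\mathrm{succ}(\mathrm{succ}(e))$ in parallel over all edges, using $O(m)$ processors per round. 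After $\lceil\log m\rceil$ rounds each $\mathrm{rank}(e_i)$ equals its position $i$ from the designated endpoint. Finally, one more parallel round over all edges sets the color of $e_i$ to $1$ if $\mathrm{rank}(e_i)$ is odd and to $2$ otherwise.

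Correctness of the coloring is immediate once list ranking is correct: consecutive edges $e_i,e_{i+1}$ have ranks of opposite parity, hence opposite colors, so the coloring is proper; only two colors are used; and for the even cycle the parity argument above shows the wrap-around edge pair is also properly colored. Correctness of pointer-jumping is the classical invariant that after $k$ rounds $\mathrm{rank}(e)$ equals the sum of the initial weights of the first $2^k$ list elements starting at $e$ (or all remaining ones if fewer), which one proves by a one-line induction on $k$; since every list element has initial weight $1$, after $\lceil\log m\rceil$ rounds $\mathrm{rank}$ is the true position. For the complexity: list construction and the final coloring round are $O(1)$ time with $O(m)$ processors; list ranking is $O(\log m)\le O(\log n)$ rounds, each $O(1)$ time with $O(m)$ processors, on the ARBITRARY CRCW PRAM (indeed even on the EREW PRAM, since pointer-jumping is exclusive-access). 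The one point requiring a little care is that the problem statement phrases the bound as $O(n)$ processors while also mentioning $m$ edges; since a path or even cycle has $m\le n$, both $O(m)$ and $O(n)$ bounds hold, and I would simply state it as $O(m)$ processors, matching the full description of Procedure \textsc{Alternating-Coloring} in Appendix~\ref{App: coloring}. The only mild obstacle is bookkeeping the two endpoints / the break point cleanly (and handling degenerate cases: a single edge, or an empty path), but this is routine.
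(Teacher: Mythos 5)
Your proposal is correct and follows essentially the same approach as the paper: the paper's Procedure \textsc{Alternating-Coloring} also uses pointer doubling to compute each vertex's distance from the endpoint $v_0$ in $O(\log n)$ rounds with $O(m)$ processors, and then colors each edge by the parity of that distance, which is just the vertex-based phrasing of your edge list-ranking argument. The only cosmetic difference is that you rank edges directly (Wyllie-style list ranking) and break an even cycle at a vertex, whereas the paper computes vertex distances and colors edge $(v,u)$ by $\min\{d(v_0,v),d(v_0,u)\}\bmod 2$; both yield the same alternating coloring within the same complexity bounds.
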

For an odd length cycle, essentially the same procedure computes a 2-edge-coloring in which all colors are alternating, except for two certain consecutive edges that are colored by the same color.

Now we present the degree-splitting algorithm. The idea of the algorithm comes from a special case of a result due to Beck and Fiala~\cite{beck1981integer}, who showed that any hypergraph of rank $t$ (each
hyperedge has at most $t$ vertices) admits a 2-edge-coloring with discrepancy at most $2t-2$. The degree-splitting algorithm is also used in previous works~\cite{israeli1986improved, ghaffari2020improved} in $\mathrm{PRAM}$ and distributed settings.
The idea of the algorithm for the case of simple graphs ($t=2$) is as follows:\\
For a connected input graph $G=(V,E)$ with an even number of edges, we add a dummy node to $G$, and connect it to all the odd-degree vertices of the graph to obtain a graph $G'$. In this way all the degrees of $G'$ are even, and it is possible to compute an Eulerian cycle in this graph efficiently. Observe that this cycle is of even length. So we can color the edges of this cycle with two alternating colors and then by deleting the dummy node and its incident edges we will receive the desired coloring of the edges of $G$ with discrepancy at most 1. 
In the case that the input graph has an odd number of edges, we will analyze this algorithm more carefully and achieve a coloring of the edges of $G$ with discrepancy at most 2.

The description of this algorithm is given in Theorem \ref{Procedure Degree-Splitting}. See also Figure \ref{fig:deg-splitting} for an illustration of the degree-splitting procedure.

\begin{theorem}[Procedure \textsc{Degree-Splitting}]\label{Procedure Degree-Splitting}
    Let $G=(V,E)$ be a graph with $n$ vertices and $m$ edges. A degree-splitting of $G$ with discrepancy $\kappa=2$ can be computed in $O(\log n)$ time using $O(m)$ processors.
\end{theorem}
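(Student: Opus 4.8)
The plan is to follow the sketch given in the text preceding the statement, handling separately the case of an even number of edges (where discrepancy $1$ is achievable) and the case of an odd number of edges (where we settle for discrepancy $2$), and also taking care of the fact that $G$ need not be connected. First I would reduce to the connected case: run the algorithm on each connected component in parallel. Since every processor works only within its own component, the time is $O(\log n)$ and the total processor count is $O(m)$ (summing $O(m_C)$ over components $C$). For a fixed connected component $G_C=(V_C,E_C)$ with $m_C$ edges, I would first build the auxiliary graph $G_C'$: add a dummy vertex $x$ and connect it to every odd-degree vertex of $G_C$. This is done by assigning a processor $p_v$ to each $v\in V_C$, which inserts the edge $(v,x)$ if $\deg_{G_C}(v)$ is odd; this takes $O(1)$ time and $O(n)$ processors. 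By construction every vertex of $G_C'$ (including $x$, since the number of odd-degree vertices of any graph is even) has even degree, so $G_C'$ is Eulerian, and by Lemma~\ref{eulerian cycle alg} (Atallah--Vishkin) we compute an Eulerian cycle $C'$ of $G_C'$ in $O(\log n)$ time using $O(n+m)$ processors.

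Next I would 2-color the cycle $C'$. If $C'$ has even length, apply Lemma~\ref{path coloring} (Procedure \textsc{Alternating-Coloring}) to obtain a proper alternating $2$-coloring $q':E(G_C')\to\{1,-1\}$ in $O(\log n)$ time with $O(m)$ processors; in an alternating coloring every vertex $v$ of $C'$ has balance $0$ if $\deg_{G_C'}(v)$ is even — which it always is — so after restricting to $E_C$ (i.e., deleting $x$ and its incident edges) each vertex $v\in V_C$ has $\left|\sum_{e\in E_v}q(e)\right|\le \deg_{G_C'}(v)-\deg_{G_C}(v)\le 1$, because we removed at most one edge incident to $v$. So discrepancy $1$ holds in this sub-case. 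The only way $C'$ can have odd length is when $m_C'=|E(G_C)|+(\#\text{odd-degree vertices of }G_C)/2$ is odd. In that case I would invoke the odd-cycle variant of \textsc{Alternating-Coloring} mentioned right after Lemma~\ref{path coloring}: it $2$-colors $C'$ so that all edges alternate except for two consecutive edges $e,e'$ that receive the same color. Then every vertex of $C'$ still has balance $0$ except the single vertex $w$ shared by $e$ and $e'$, whose balance is $\pm2$. Restricting to $E_C$: for $w$ we have $\left|\sum_{e\in E_w\cap E_C}q(e)\right|\le 2 + 1 = 3$ naively, which is not yet good enough, so the careful part is to argue we can do better — I would choose the location of the "defect pair" $\{e,e'\}$ to be an edge pair both of whose edges lie in $E_C$ (i.e., avoid edges incident to the dummy $x$), and I would also place it, if possible, at a vertex $w\ne x$ none of whose incident cycle-edges got deleted, so that the balance of each surviving vertex is at most $2$; for vertices other than $w$ the bound $0+1=1$ still holds, and for $w$ it is $2+0=2$. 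This gives discrepancy $\kappa=2$.

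The step I expect to be the main obstacle is precisely this last bookkeeping: showing that in the odd-length case one can always position the single unavoidable "same-color" defect so that, after deleting the dummy vertex, no vertex ends up with balance exceeding $2$. Concretely I would argue as follows. The dummy vertex $x$ has even degree $2t$ in $G_C'$ for some $t\ge 1$ (if $t=0$ there is no dummy and we are back to a genuinely Eulerian $G_C$, with $x$ playing no role), so $C'$ visits $x$ exactly $t$ times and passes through at least $2t\ge 2$ edges incident to $x$; the remaining edges of $C'$, of which there are $m_C\ge 1$ at least one if $G_C$ has an edge, all lie in $E_C$ and at least two of them are consecutive on the cycle (provided $m_C\ge 2$; the degenerate tiny cases $m_C\le 1$ are checked by hand). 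I place the defect pair $\{e,e'\}$ on such a consecutive pair of $E_C$-edges whose common vertex $w$ is not $x$ and is not an odd-degree vertex of $G_C$ (so no edge at $w$ was deleted). Such a pair exists unless essentially all of $G_C$'s vertices are odd-degree, a small sub-case that I would verify directly to still admit discrepancy $\le 2$. Then for $w$ the balance is exactly $\pm 2$, for every other $v\in V_C$ the cycle contributes balance $0$ and at most one deleted edge changes it by $1$, so $|\text{balance}(v)|\le 1$. Collecting the bounds across all components and adding up the $O(1)$-, $O(\log n)$-time $O(m)$-processor subroutines, the whole procedure runs in $O(\log n)$ time with $O(m)$ processors, as claimed.
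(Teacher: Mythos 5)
Your construction (dummy vertex $x$ joined to the odd-degree vertices, Eulerian cycle via Lemma~\ref{eulerian cycle alg}, alternating $2$-coloring, then deletion of $x$) is the same as the paper's, and your even-length case is fine. The genuine gap is in the odd-length case. Your plan is to place the unavoidable ``same-color'' pair at a vertex $w\neq x$ that has even degree in $G_C$ (so that no edge incident on $w$ is deleted), and you concede that this placement fails when no such $w$ exists --- e.g.\ when every vertex of the component has odd degree, as for the star $K_{1,3}$, where the auxiliary cycle has $3+4=7$ edges (note also your count is off: the number of added edges equals the \emph{number} of odd-degree vertices, not half of it). In that sub-case your rule forces the defect onto a vertex that is also adjacent to $x$, and after deleting its edge to $x$ the balance can reach $2+1=3$; the promised ``verify directly'' is exactly the missing argument, and it is not a routine check under your placement strategy.

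The paper closes this case with a simpler choice that your proposal does not make: start the alternating coloring of the Eulerian cycle \emph{at the dummy vertex $x$} (Corollary~\ref{cycles} puts the two same-colored consecutive edges at the start vertex). Then, whenever any edges are deleted at all, the defect pair is incident on $x$ and disappears with it, so every real vertex has cycle-balance $0$ and loses at most one edge, giving discrepancy $1$; the only way a real vertex ends up with balance $2$ is when $x$ is isolated (the component was already Eulerian), and then nothing is deleted, so discrepancy $2$ holds. If you replace your placement rule by ``anchor the defect at $x$ (or at an arbitrary vertex when $x$ is isolated)'', your proof becomes complete and coincides with the paper's; the complexity accounting you give is correct as stated.
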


\begin{proof}
    Without loss of generality, we assume that the graph $G$ is connected. Otherwise, we will apply the algorithm on each connected component separately.\\
    Define a graph $G'=(V',E')$, where $V'=V\cup\{x\}$, for a dummy node $x\notin V$, and $E'=E\cup\{(x,v)\;|\;v\in V, \text{ and $\deg_G(v)$ is odd}\}$. Observe that since we increased by 1 only the degrees of the odd-degree vertices, then for any $v\in V$, its degree in $G'$, $\deg_{G'}(v)$, is even. In addition, since $\sum_{v\in V}\deg_G(v)=2\cdot|E|$, the number of odd-degree vertices in $G$ must be even. Hence the degree of $x$ (in $G'$) is even as well. Hence, all the degrees in $G'$ are even, and we can compute an Eulerian cycle in $G'$ using the algorithm from Lemma \ref{eulerian cycle alg}~\cite{atallah1984finding} in $O(\log n)$ time using $O(m)$ processors. Next, starting from the dummy node $x$ (if $x$ is isolated in $G'$ we will start from an arbitrary other vertex), we color the edges of the cycle with two alternating colors, red and blue, except maybe the first and the last edge that we color, that might have the same color (if the cycle is of odd length) using Corollary \ref{cycles} (see Appendix \ref{App: coloring}) in $O(\log n)$ time using $O(m)$ processors. In that way, the balance of each vertex in $G'$ is either 2 (if we started coloring the cycle from this vertex and the cycle is of odd length), or 0 (otherwise).
    After removing $x$ (and the edges it is incident on) from $G'$, the discrepancy of its neighbors will become 1, and the discrepancy of the other vertices will not change. So we get a degree-splitting of $G$ with discrepancy at most 2 in $O(\log n)$ time using $O(m)$ processors.
\end{proof}

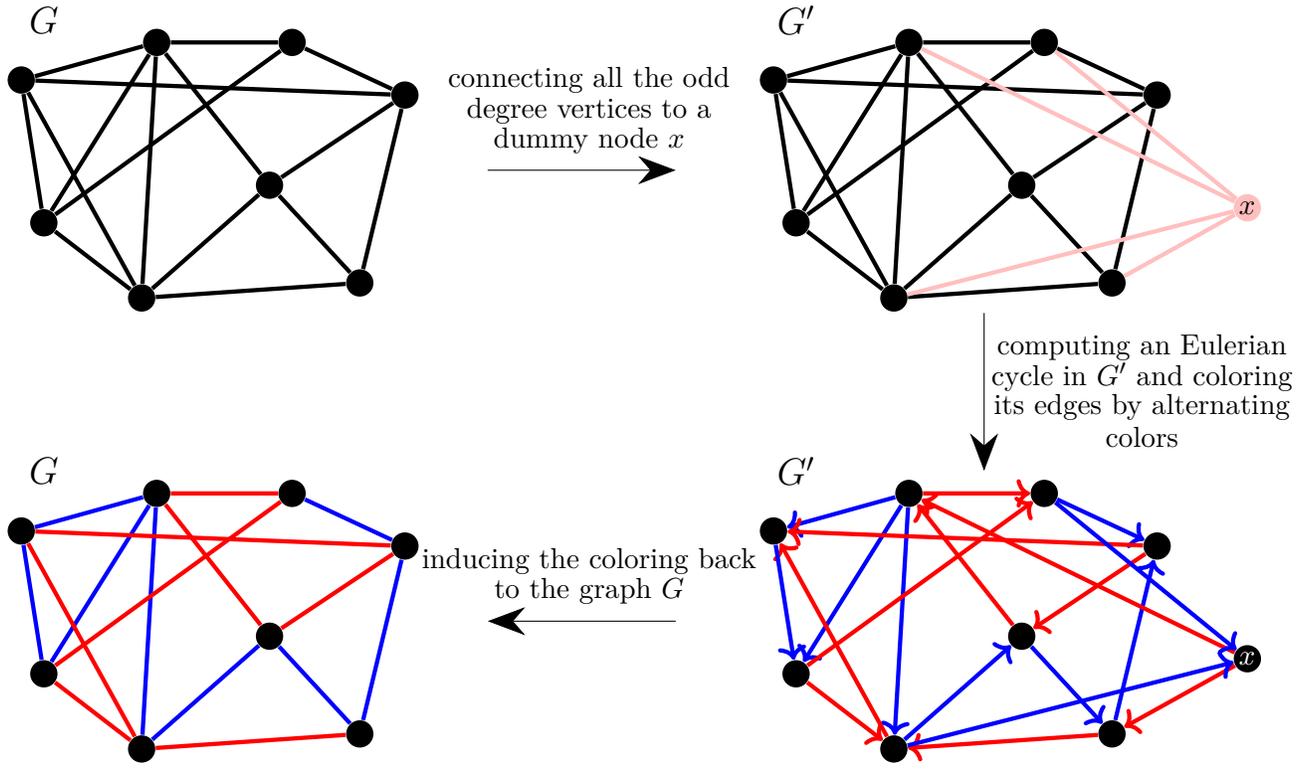
\begin{figure}
    \begin{center}
    \addtolength{\leftskip} {-2cm} 
    \addtolength{\rightskip}{-2cm}
    \begin{tikzpicture}
    \begin{scope}[xshift=-5cm]
    
        \node[circle, fill=black] at (0.3,1.3) (n1) {};
        \node[circle, fill=black] at (-1,2.3) (n2) {};
        \node[circle, fill=black] at (2,2.8) (n3) {};
        \node[circle, fill=black] at (0.5,4.7) (n4) {};
        \node[circle, fill=black] at (3.8,4) (n5) {};
        \node[circle, fill=black] at (-1.3,4.2) (n6) {};
        \node[circle, fill=black] at (2.3,4.7) (n7) {};
        \node[circle, fill=black] at (3.2,1.5) (n8) {};

        \path [-](n1) edge[ultra thick] (n2);
        \path [-](n1) edge[ultra thick] (n4);
        \path [-](n1) edge[ultra thick] (n3);
        \path [-](n2) edge[ultra thick] (n4);
        \path [-](n3) edge[ultra thick] (n5);
        \path [-](n4) edge[ultra thick] (n7);
        \path [-](n4) edge[ultra thick] (n6);
        \path [-](n5) edge[ultra thick] (n7);
        \path [-](n3) edge[ultra thick] (n4);
        \path [-](n1) edge[ultra thick] (n6);
        \path [-](n3) edge[ultra thick] (n8);
        \path [-](n5) edge[ultra thick] (n8);
        \path [-](n1) edge[ultra thick] (n8);
        \path [-](n2) edge[ultra thick] (n7);
        \path [-](n2) edge[ultra thick] (n6);
        \path [-](n5) edge[ultra thick] (n6);
    
    \end{scope}

    \begin{scope}[xshift=5cm]
    
        \node[circle, fill=black] at (0.3,1.3) (n1) {};
        \node[circle, fill=black] at (-1,2.3) (n2) {};
        \node[circle, fill=black] at (2,2.8) (n3) {};
        \node[circle, fill=black] at (0.5,4.7) (n4) {};
        \node[circle, fill=black] at (3.8,4) (n5) {};
        \node[circle, fill=black] at (-1.3,4.2) (n6) {};
        \node[circle, fill=black] at (2.3,4.7) (n7) {};
        \node[circle, fill=black] at (3.2,1.5) (n8) {};
        \node[circle, fill=pink] at (5,2.5) (n9) {};

        \path [-](n1) edge[ultra thick] (n2);
        \path [-](n1) edge[ultra thick] (n4);
        \path [-](n1) edge[ultra thick] (n3);
        \path [-](n2) edge[ultra thick] (n4);
        \path [-](n3) edge[ultra thick] (n5);
        \path [-](n4) edge[ultra thick] (n7);
        \path [-](n4) edge[ultra thick] (n6);
        \path [-](n5) edge[ultra thick] (n7);
        \path [-](n3) edge[ultra thick] (n4);
        \path [-](n1) edge[ultra thick] (n6);
        \path [-](n3) edge[ultra thick] (n8);
        \path [-](n5) edge[ultra thick] (n8);
        \path [-](n1) edge[ultra thick] (n8);
        \path [-](n2) edge[ultra thick] (n7);
        \path [-](n2) edge[ultra thick] (n6);
        \path [-](n5) edge[ultra thick] (n6);
        \path [-](n1) edge[ultra thick, pink] (n9);
        \path [-](n4) edge[ultra thick, pink] (n9);
        \path [-](n7) edge[ultra thick, pink] (n9);
        \path [-](n8) edge[ultra thick, pink] (n9);
    
    \end{scope}
    \begin{scope}[xshift=5cm, yshift=-6cm]
    
        \node[circle, fill=black] at (0.3,1.3) (n1) {};
        \node[circle, fill=black] at (-1,2.3) (n2) {};
        \node[circle, fill=black] at (2,2.8) (n3) {};
        \node[circle, fill=black] at (0.5,4.7) (n4) {};
        \node[circle, fill=black] at (3.8,4) (n5) {};
        \node[circle, fill=black] at (-1.3,4.2) (n6) {};
        \node[circle, fill=black] at (2.3,4.7) (n7) {};
        \node[circle, fill=black] at (3.2,1.5) (n8) {};
        \node[circle, fill=black] at (5,2.5) (n9) {};

        \path [->](n2) edge[ultra thick, red] (n1);
        \path [->](n4) edge[ultra thick, blue] (n1);
        \path [->](n1) edge[ultra thick, blue] (n3);
        \path [->](n4) edge[ultra thick, blue] (n2);
        \path [->](n5) edge[ultra thick, red] (n3);
        \path [->](n4) edge[ultra thick, red] (n7);
        \path [->](n4) edge[ultra thick, blue] (n6);
        \path [->](n7) edge[ultra thick, blue] (n5);
        \path [->](n3) edge[ultra thick, red] (n4);
        \path [->](n1) edge[ultra thick, red] (n6);
        \path [->](n3) edge[ultra thick, blue] (n8);
        \path [->](n8) edge[ultra thick, blue] (n5);
        \path [->](n8) edge[ultra thick, red] (n1);
        \path [->](n2) edge[ultra thick, red] (n7);
        \path [->](n6) edge[ultra thick, blue] (n2);
        \path [->](n5) edge[ultra thick, red] (n6);
        \path [->](n9) edge[ultra thick, red] (n4);
        \path [->](n9) edge[ultra thick, red] (n8);
        \path [->](n1) edge[ultra thick, blue] (n9);
        \path [->](n7) edge[ultra thick, blue] (n9);
    
    \end{scope}

    \begin{scope}[xshift=-5cm, yshift=-6cm]
    
        \node[circle, fill=black] at (0.3,1.3) (n1) {};
        \node[circle, fill=black] at (-1,2.3) (n2) {};
        \node[circle, fill=black] at (2,2.8) (n3) {};
        \node[circle, fill=black] at (0.5,4.7) (n4) {};
        \node[circle, fill=black] at (3.8,4) (n5) {};
        \node[circle, fill=black] at (-1.3,4.2) (n6) {};
        \node[circle, fill=black] at (2.3,4.7) (n7) {};
        \node[circle, fill=black] at (3.2,1.5) (n8) {};

        \path [-](n2) edge[ultra thick, red] (n1);
        \path [-](n4) edge[ultra thick, blue] (n1);
        \path [-](n1) edge[ultra thick, blue] (n3);
        \path [-](n4) edge[ultra thick, blue] (n2);
        \path [-](n5) edge[ultra thick, red] (n3);
        \path [-](n4) edge[ultra thick, red] (n7);
        \path [-](n4) edge[ultra thick, blue] (n6);
        \path [-](n7) edge[ultra thick, blue] (n5);
        \path [-](n3) edge[ultra thick, red] (n4);
        \path [-](n1) edge[ultra thick, red] (n6);
        \path [-](n3) edge[ultra thick, blue] (n8);
        \path [-](n8) edge[ultra thick, blue] (n5);
        \path [-](n8) edge[ultra thick, red] (n1);
        \path [-](n2) edge[ultra thick, red] (n7);
        \path [-](n6) edge[ultra thick, blue] (n2);
        \path [-](n5) edge[ultra thick, red] (n6);
    
    \end{scope}

    \draw [-{Stealth[length=5mm]}] (-0.1,3) -- (2.4,3);
    \node at (1.25,4.2){connecting all the odd};
    \node at (1.25,3.8){degree vertices to a};
    \node at (1.25,3.4){dummy node $x$};
    \draw [-{Stealth[length=5mm]}] (2.4,-3) -- (-0.1,-3);
    \node at (1.25,-2.2){inducing the coloring back};
    \node at (1.25,-2.6){to the graph $G$};
    \draw [-{Stealth[length=5mm]}] (6.5,1.1) -- (6.5,-1);
    \node at (8.6,0.65){computing an Eulerian};
    \node at (8.6,0.25){cycle in $G'$ and coloring};
    \node at (8.6,-0.15){its edges by alternating};
    \node at (8.6,-0.55){colors};
    \node at (-6,5){\Large $G$};
    \node at (4,5){\Large $G'$};
    \node at (4,-1){\Large $G'$};
    \node at (-6,-1){\Large $G$};
    \node at (10,2.5){$x$};
    \node at (10,-3.5){\color{white}$x$};
    
    \end{tikzpicture}
    \end{center}
    \caption{Procedure \textsc{Degree-Splitting}}
    \label{fig:deg-splitting}
\end{figure}

For $i\in\{-1,1\}$, denote by $\deg_i(v)$ the number of edges incident on $v$ that are colored $i$ in the degree-splitting.
As $|\deg_{-1}(v)-\deg_1(v)|\leq 2$ for any $v\in V$, it follows that
\begin{equation}\label{eq4.1}
    \deg_i(v)\leq \frac{\deg(v)}{2}+1.
\end{equation}
for every $i\in\{-1,1\}$.\\

Using this degree-splitting algorithm we now describe our $(1+\varepsilon)\Delta$-edge-coloring algorithm. The algorithm receives as an input a graph $G=(V,E)$, and a non-negative integer parameter $h$. It first computes a degree-splitting of the input graph $G=(V,E)$, then defines two subgraphs $G_1$ and $G_2$ of $G$, each on the same vertex set $V$, and the set of edges of each of them is defined by the edges that are colored with the same color in the degree-splitting algorithm from the first step.
Then, in parallel, we compute recursively colorings of each of the subgraphs of $G$, and merge these colorings using disjoint palettes. The parameter $h$ determines the depth of the recursion. At the base case of the algorithm ($h=0$) it computes a proper $(\Delta+1)$-edge-coloring using an algorithm from Theorem \ref{Procedure Edge-Coloring properties}.\\
The pseudo-code of Procedure \textsc{Approx-Edge-Coloring} is described in Algorithm \ref{Procedure Edge-Coloring}. See also Figure \ref{fig:edge-coloring} for an illustration of Procedure \textsc{Approx-Edge-Coloring}.

\begin{algorithm}
  \caption{Procedure \textsc{Aprrox-Edge-Coloring} ($G=(V,E),h$)\label{Procedure Edge-Coloring}}
  \begin{algorithmic} [1] 
  \If{$h=0$}
    \LongState{compute a $(\Delta+1)$-edge-coloring $\varphi$ of $G$ using an algorithm from Theorem \ref{Procedure Edge-Coloring properties}}
    \Else
    \Let{$G_1=(V,E_1),G_2=(V,E_2)$}{\textsc{Degree-Splitting}($G$)\Comment{see Theorem \ref{Procedure Degree-Splitting}}}
    \For{each $i\in \{1,2\}$ in parallel}
    \LongState{compute a coloring $\varphi_i$ of $G_i$ using Procedure \textsc{Aprrox-Edge-Coloring}($G_i,h-1$)}
    \EndFor
    \State define coloring $\varphi$ of $G$ by $\varphi(e)=
    \begin{cases}
    \varphi_1(e),& e\in E_1\\
    |\varphi_1|+\varphi_2(e), & e\in E_2 
    \end{cases}$ \myindent{5.5}\Comment{$|\varphi_1|$ is the size of the palette of the coloring $\varphi_1$}
    \EndIf
    \State \textbf{return} $\varphi$
  \end{algorithmic}
\end{algorithm}

\begin{figure}
    \begin{center}
    \addtolength{\leftskip} {-2cm} 
    \addtolength{\rightskip}{-2cm}
    \begin{tikzpicture}

        \begin{scope}[xshift=-7.5cm]            
            \node[draw, ellipse, minimum size = 1.3cm] (nn) at (7,6){$G$};

        \end{scope}
        
        \begin{scope}[xshift=-2.5cm]
            \node[draw, ellipse, minimum size = 0.9cm] (31) at (6,4.5){$G_1$};
            \node[draw, ellipse, minimum size = 0.9cm] (3n) at (8,4.5){$G_2$};
            
            \node[draw, ellipse, minimum size = 1.3cm] (nn) at (7,6){$G$};

            \path [->](nn) edge[ultra thick] (31);
            \path [->](nn) edge[ultra thick] (3n);

        \end{scope}

        \begin{scope}[xshift=2.5cm]
            \node[draw, ellipse, minimum size = 0.9cm] (31) at (6,4.5){$G_1$};
            \node[draw, ellipse, minimum size = 0.9cm] (3n) at (8,4.5){$G_2$};
            
            \node[draw, ellipse, minimum size = 1.3cm] (nn) at (7,6){$G$};
    
            \path [-](nn) edge[ultra thick] (31);
            \path [-](nn) edge[ultra thick] (3n);

            \node at (5.8,4.25) {\color{red}$\bullet$};
            \node at (6,4.25) {\color{green}$\bullet$};
            \node at (6.2,4.25) {\color{blue}$\bullet$};

            \node at (7.8,4.25) {\color{cyan}$\bullet$};
            \node at (8,4.25) {\color{magenta}$\bullet$};
            \node at (8.2,4.25) {\color{gray}$\bullet$};

        \end{scope}

        \begin{scope}[xshift=7.5cm]
            \node[draw, ellipse, minimum size = 0.9cm] (31) at (6,4.5){$G_1$};
            \node[draw, ellipse, minimum size = 0.9cm] (3n) at (8,4.5){$G_2$};
            
            \node[draw, ellipse, minimum size = 1.3cm] (nn) at (7,6){$G$};
    
            \path [->](31) edge[ultra thick] (nn);
            \path [->](3n) edge[ultra thick] (nn);

            \node at (5.8,4.25) {\color{red}$\bullet$};
            \node at (6,4.25) {\color{green}$\bullet$};
            \node at (6.2,4.25) {\color{blue}$\bullet$};

            \node at (7.8,4.25) {\color{cyan}$\bullet$};
            \node at (8,4.25) {\color{magenta}$\bullet$};
            \node at (8.2,4.25) {\color{gray}$\bullet$};
            
            \node at (6.7,5.75) {\color{red}$\bullet$};
            \node at (7,5.75) {\color{green}$\bullet$};
            \node at (7.3,5.75) {\color{blue}$\bullet$};

            \node at (6.7,5.58) {\color{cyan}$\bullet$};
            \node at (7,5.58) {\color{magenta}$\bullet$};
            \node at (7.3,5.58) {\color{gray}$\bullet$};

        \end{scope}

    \draw [-{Stealth[length=5mm]}] (1,6) -- (3,6);
    \node at (2,7.2){Partition the graph};
    \node at (2,6.8){using Procedure };
    \node at (2,6.4){\textsc{Degree-Splitting}};
    \draw [-{Stealth[length=5mm]}] (5.9,6) -- (7.9,6);
    \node at (6.95,7.2){Recursively color the};
    \node at (6.95,6.8){subgraphs $G_1$ and $G_2$};
    \node at (6.95,6.4){with disjoint palettes};
    \draw [-{Stealth[length=5mm]}] (10.8,6) -- (12.8,6);
    \node at (11.9,6.4){Merge the colorings};

    \end{tikzpicture}
    \end{center}
    \caption{Procedure \textsc{Aprrox-Edge-Coloring}}
    \label{fig:edge-coloring}
\end{figure}
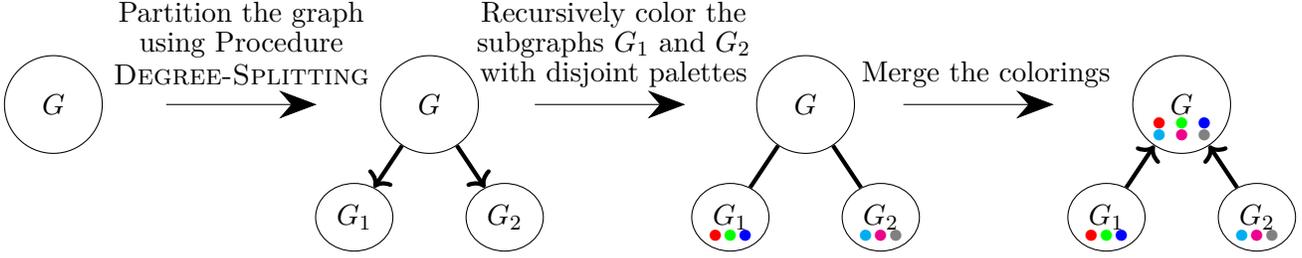

\subsection{Analysis of the Algorithm}
In this section we analyse Procedure \textsc{Aprrox-Edge-Coloring}.
We start by showing that the coloring $\varphi$ computed by Procedure \textsc{Aprrox-Edge-Coloring} is a proper edge-coloring. We then bound the number of colors that $\varphi$ employs. We will also bound the time and the number of processors used by the algorithm, and present a simple trade-off between the number of colors that the algorithm uses and its time and work complexities.\\

In~\cite{liang1995fast}, Liang presented a similar algorithm for $(1+\varepsilon)\Delta$-edge-coloring that requires $O(\varepsilon^{-4.5}\cdot\log^3(\varepsilon^{-1})\cdot$$\log n+\varepsilon^{-4}\cdot\log^4 n)$ time and $O\left(n^2+n\cdot\varepsilon^{-3}\right)$ processors. We provide a simpler analysis for the number of colors used by the algorithm, and derive a better bound for both the running time and the number of processors that comes from our improved $(\Delta+1)$-edge-coloring algorithms (see Section \ref{pram edge coloring}).\\

We begin by presenting some notation that we will use for the analysis. Let $G=(V,E)$ be an input for Procedure \textsc{Aprrox-Edge-Coloring}. We will use the notation $G^{(i)}$ for a graph that is computed after $i$ levels of recursion. We denote its number of edges by $m^{(i)}$ and its maximum degree by $\Delta^{(i)}$ (for example $G^{(0)}=G$, $m^{(0)}=m$, and $\Delta^{(0)}=\Delta$).\\

In our analysis, one of the main parameters that affects the performance is the maximum degree of the input 
graph. Therefore, we start the analysis by bounding the maximum degree of the input graph in every level of the recursion.

\begin{claim}[A bound on the maximum degree of $G^{(i)}$]\label{deg bound}
    Let $G=(V,E)$ and $h\leq \log\Delta$ be an input for Procedure \textsc{Aprrox-Edge-Coloring}, and let $0\leq i\leq h$. The maximum degree $\Delta^{(i)}$ of a graph $G^{(i)}$ that was computed after $i$ levels of recursion satisfies $\Delta^{(i)}\leq\frac{\Delta}{2^i}+2$.
\end{claim}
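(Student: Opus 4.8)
The plan is to prove the claim by induction on $i$. The base case $i=0$ is immediate, since $\Delta^{(0)}=\Delta\leq\frac{\Delta}{2^0}+2$. For the inductive step, suppose that after $i$ levels of recursion the graph $G^{(i)}$ satisfies $\Delta^{(i)}\leq\frac{\Delta}{2^i}+2$. The graph $G^{(i+1)}$ is obtained from $G^{(i)}$ by applying Procedure \textsc{Degree-Splitting} (Theorem \ref{Procedure Degree-Splitting}) and taking one of the two resulting subgraphs. By inequality (\ref{eq4.1}) (equivalently, Equation (\ref{eq4.1}) applied to $G^{(i)}$), for every vertex $v$ we have $\deg_{G^{(i+1)}}(v)\leq\frac{\deg_{G^{(i)}}(v)}{2}+1\leq\frac{\Delta^{(i)}}{2}+1$. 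Substituting the inductive bound gives $\Delta^{(i+1)}\leq\frac{1}{2}\left(\frac{\Delta}{2^i}+2\right)+1=\frac{\Delta}{2^{i+1}}+2$, which is exactly the desired bound for level $i+1$.

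The only subtlety worth spelling out carefully is that the degree-splitting bound (\ref{eq4.1}) must be applied to the \emph{current} graph $G^{(i)}$ at each level, not to the original graph $G$; this is why the induction hypothesis is stated in terms of $\Delta^{(i)}$ rather than being unrolled directly. One should also note that although the "$+1$" discrepancy contribution is incurred afresh at each of the $i+1$ levels, it does not accumulate: the factor $\frac{1}{2}$ applied to the inherited "$+2$" turns it back into "$+1$", and adding the new "$+1$" returns it to "$+2$". In other words, $2$ is a fixed point of the map $t\mapsto t/2+1$, which is precisely why the additive term stabilizes at $2$ and does not grow with $i$. The hypothesis $h\leq\log\Delta$ is only needed to ensure the recursion is well-defined down to level $h$ (so that the degree bound remains meaningful, i.e., $\Delta^{(i)}\geq 1$ throughout); it plays no role in the arithmetic of the inductive step itself.

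I do not anticipate a genuine obstacle here — the statement is a routine induction. The one place to be slightly careful is invoking Theorem \ref{Procedure Degree-Splitting} correctly: it guarantees discrepancy $\kappa=2$, from which $|\deg_{-1}(v)-\deg_1(v)|\leq 2$ and hence $\deg_i(v)\leq\frac{\deg(v)}{2}+1$ (Equation (\ref{eq4.1})); one must make sure to use this per-vertex bound uniformly over all $v$ before taking the maximum, which is legitimate since the bound holds for every vertex simultaneously.
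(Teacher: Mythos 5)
Your proof is correct and follows essentially the same route as the paper: an induction on the recursion level, applying the degree-splitting bound of Equation (\ref{eq4.1}) to the current graph $G^{(i)}$ and noting that $\frac{1}{2}\left(\frac{\Delta}{2^i}+2\right)+1=\frac{\Delta}{2^{i+1}}+2$. Your extra remark that $2$ is a fixed point of $t\mapsto t/2+1$ is a nice way of explaining why the additive term does not accumulate, but the argument itself matches the paper's.
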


\begin{proof}
    We prove the claim by induction on $i$:\\
    For $i=0$, we have $G^{(0)}=G$, and indeed the maximum degree of $G$ satisfies $\Delta\leq\Delta+2$.\\
    For $i>0$, let $\Delta^{(i)}$ be the maximum degree of a graph $G^{(i)}$ on which the procedure is invoked on the $i$'th level of the recursion and let $\Delta^{(i-1)}$ be the maximum degree of the graph $G^{(i-1)}$, which we splitted in the previous level of recursion in order to create the graph $G^{(i)}$. By the induction hypothesis and by Equation (\ref{eq4.1}) we have
    $$\Delta^{(i)}\leq\frac{\Delta^{(i-1)}}{2}+1\leq\frac{\frac{\Delta}{2^{i-1}}+2}{2}+1=\frac{\Delta}{2^i}+2.$$
\end{proof}

We now show that the coloring $\varphi$ produced by Procedure \textsc{Aprrox-Edge-Coloring} is a proper edge-coloring, and we also analyse the number of colors it uses.

\begin{lemma}[Analysis of the coloring $\varphi$]\label{proper edge-coloring}
    Let $G=(V,E)$ be a graph with maximum degree $\Delta$.
    Procedure \textsc{Aprrox-Edge-Coloring} invoked on $G$ with parameter $h\leq\log\Delta$ computes a proper $(\Delta+3\cdot 2^h)$-edge-coloring $\varphi$ of $G$.
\end{lemma}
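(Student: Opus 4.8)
The plan is to prove both claims—that $\varphi$ is a proper edge-coloring and that it uses at most $\Delta + 3\cdot 2^h$ colors—simultaneously by induction on $h$. For the base case $h = 0$, Procedure \textsc{Aprrox-Edge-Coloring} simply invokes an algorithm from Theorem \ref{Procedure Edge-Coloring properties}, which returns a proper $(\Delta+1)$-edge-coloring; since $\Delta + 1 \le \Delta + 3\cdot 2^0 = \Delta + 3$, both parts hold. For the inductive step with $h \ge 1$, the algorithm computes a degree-splitting $G_1 = (V,E_1)$, $G_2 = (V,E_2)$ via Procedure \textsc{Degree-Splitting} (Theorem \ref{Procedure Degree-Splitting}), recursively colors each $G_i$ with parameter $h-1$ to obtain $\varphi_i$, and then combines them into $\varphi$ using disjoint palettes: edges of $E_1$ keep their $\varphi_1$-color, and edges of $E_2$ receive color $|\varphi_1| + \varphi_2(e)$.

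First I would establish properness. Since $\varphi_1$ and $\varphi_2$ are proper edge-colorings of $G_1$ and $G_2$ respectively (by the induction hypothesis), any two adjacent edges that both lie in $E_1$ get distinct colors, and likewise for $E_2$. The only remaining case is an edge $e \in E_1$ adjacent to an edge $e' \in E_2$: here $\varphi(e) = \varphi_1(e) \in \{1,\dots,|\varphi_1|\}$ while $\varphi(e') = |\varphi_1| + \varphi_2(e') > |\varphi_1|$, so the colors differ. Hence $\varphi$ is proper. Next I would bound the number of colors. By the induction hypothesis, $\varphi_i$ uses at most $\Delta(G_i) + 3\cdot 2^{h-1}$ colors, so $\varphi$ uses at most $\left(\Delta(G_1) + 3\cdot 2^{h-1}\right) + \left(\Delta(G_2) + 3\cdot 2^{h-1}\right) = \Delta(G_1) + \Delta(G_2) + 3\cdot 2^h$ colors. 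It remains to show $\Delta(G_1) + \Delta(G_2) \le \Delta$. This is where I would invoke the degree-splitting guarantee: for every vertex $v$, $\deg_{G_1}(v) + \deg_{G_2}(v) = \deg_G(v) \le \Delta$, so taking $v$ to realize the maximum degree of $G_1$ gives $\Delta(G_1) = \deg_{G_1}(v) \le \deg_G(v) - \deg_{G_2}(v) \le \Delta - \deg_{G_2}(v)$, but this alone does not immediately bound the sum of the two separate maxima.

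The main obstacle is precisely this last point: $\Delta(G_1)$ and $\Delta(G_2)$ may be attained at different vertices, so $\Delta(G_1) + \Delta(G_2)$ need not equal $\max_v(\deg_{G_1}(v) + \deg_{G_2}(v))$. To handle this I would use the discrepancy bound more carefully rather than merely $\deg_{G_i}(v) \le \deg_G(v)$. By Equation (\ref{eq4.1}), for every vertex $v$ and every $i \in \{1,2\}$ we have $\deg_{G_i}(v) \le \deg(v)/2 + 1 \le \Delta/2 + 1$, hence $\Delta(G_i) \le \Delta/2 + 1$ for each $i$, giving $\Delta(G_1) + \Delta(G_2) \le \Delta + 2$. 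This yields $\varphi$ using at most $\Delta + 2 + 3\cdot 2^h$ colors, which is slightly weaker than the claimed $\Delta + 3\cdot 2^h$. To close the gap I would track the bound $\Delta^{(i)} \le \Delta/2^i + 2$ from Claim \ref{deg bound} through the recursion tree directly: at recursion depth $h$ there are $2^h$ leaf subgraphs, each of maximum degree at most $\Delta/2^h + 2$, each colored (at the base case) with at most $\Delta/2^h + 3$ colors; summing over the $2^h$ leaves and accounting for the fact that the degrees at the leaves actually sum to at most $\Delta$ at any fixed vertex, a careful accounting—charging the "$+2$" slack only along the $h$ splits on a single root-to-leaf path rather than once per leaf—gives the total $\Delta + 3\cdot 2^h$. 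I expect the cleanest route is to strengthen the induction hypothesis to also assert $\Delta(G^{(i)}) \le \Delta/2^i + 2$ (already available from Claim \ref{deg bound}) and to phrase the color count at each internal node in terms of $\Delta(G^{(i)})$, using $\Delta(G_1^{(i)}) + \Delta(G_2^{(i)}) \le \Delta(G^{(i)}) + 2$ together with $3\cdot 2^{h-1} + 3\cdot 2^{h-1} = 3\cdot 2^h$ and absorbing the "$+2$" into the recursion so that it telescopes against the definition of $h$.
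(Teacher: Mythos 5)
Your proposal is correct and, in its final form, takes essentially the same route as the paper: properness by induction on $h$ using the disjoint palettes, and the color bound by pushing the degree bound of Claim \ref{deg bound} to the leaves ($\Delta^{(h)}\leq\frac{\Delta}{2^h}+2$, hence at most $\frac{\Delta}{2^h}+3$ colors per leaf) and letting the palette size double at each merge, which is exactly the paper's inductive claim $f(i)\leq 2^{h-i}\left(\frac{\Delta}{2^h}+3\right)$ and gives $f(0)\leq\Delta+3\cdot 2^h$. One clarification: no ``path-based'' charging of the $+2$ slack is needed, since charging $+3$ once per leaf already sums to exactly $3\cdot 2^h$ (and your alternative strengthened induction via $\Delta\left(G_1^{(i)}\right)+\Delta\left(G_2^{(i)}\right)\leq\Delta\left(G^{(i)}\right)+2$ also telescopes, yielding the even slightly stronger bound $\Delta+3\cdot 2^h-2$).
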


\begin{proof}
    We start by showing that the coloring $\varphi$ is proper by induction on $h$:\\
    For $h=0$, by Theorem \ref{Procedure Edge-Coloring properties} the coloring $\varphi$ is a proper $(\Delta+1)$-edge-coloring.\\
    For $h>0$, let $e_1\neq e_2\in E$. By the induction hypothesis, the colorings $\varphi_1$ and $\varphi_2$ that were defined on line 6 of Algorithm \ref{Procedure Edge-Coloring} are proper edge-colorings.\\
    If $e_1,e_2\in E_1$, then $\varphi(e_1)=\varphi_1(e_1)\neq\varphi_1(e_2)=\varphi(e_2)$.\\
    If $e_1,e_2\in E_2$, then $\varphi(e_1)=|\varphi_1|+\varphi_2(e_1)\neq|\varphi_1|+\varphi_2(e_2)=\varphi(e_2)$.\\
    Otherwise, without loss of generality, $e_1\in E_1$, $e_2\in E_2$, and $$\varphi(e_1)=\varphi_1(e_1)\leq|\varphi_1|<|\varphi_1|+\varphi_2(e_2)=\varphi(e_2).$$
    Hence $\varphi$ is a proper edge-coloring.\\
    Next, we analyse the number of colors used by Procedure \textsc{Aprrox-Edge-Coloring}.\\
    Recall that $G^{(i)}$ is a graph computed after $i$ levels of recursion, and its maximum degree is at most $\Delta^{(i)}$. 
    Denote by $f(i)$ the number of colors used by the algorithm on the graph $G^{(i)}$. We prove by induction on $i$ that $$f(i)\leq 2^{h-i}\cdot\left(\frac{\Delta}{2^h}+3\right).$$
    For $i=h$, by the assumption the algorithm uses $f(h)=\Delta^{(h)}+1$ colors, and by Claim \ref{deg bound}, $$f(h)=\Delta^{(h)}+1\leq\left(\frac{\Delta}{2^h}+2\right)+1=\frac{\Delta}{2^h}+3.$$
    For $i<h$, we merge two disjoint palettes of size $f(i+1)$ each. Hence, by the induction hypothesis we have $$f(i)= 2\cdot f(i+1)\leq 2\cdot2^{h-i-1}\cdot\left(\frac{\Delta}{2^h}+3\right)=2^{h-i}\cdot\left(\frac{\Delta}{2^h}+3\right).$$
    Therefore, the number of colors employed by the algorithm on the graph $G=G^{(0)}$ is at most $f(0)\leq 2^h\cdot\left(\frac{\Delta}{2^h}+3\right)=\Delta+3\cdot 2^h$.
\end{proof}

We next analyse the complexity of our algorithm. 
Note that the time complexities of all of the results in Theorem \ref{Procedure Edge-Coloring properties} are a function of $n$ and $\Delta$ (and $a$, that can be bounded by $\Delta$). Also, the number of processors that these algorithms require is a function of $n$ and $\Delta$ (and $a$, that can be bounded by $\Delta$), multiplied by $m$. Hence, to simplify the analysis of our $(1+\varepsilon)\Delta$-edge-coloring algorithm, we present a notation for these terms. For an $n$-vertex $m$-edge graph $G=(V,E)$ with maximum degree $\Delta$, we denote the time that is required for computing a $(\Delta+1)$-edge-coloring of $G$ by $ECT\left(n,\Delta\right)$, and denote the number of processors required for this process by $m\cdot ECP\left(n,\Delta\right)$. 
Note that for the algorithms from Theorem \ref{Procedure Edge-Coloring properties}, we have $ECT\left(O(n),O(\Delta)\right)=O\left(ECT\left(n,\Delta\right)\right)$ and $ECP\left(O(n),O(\Delta)\right)=O(ECP\left(n,\Delta\right))$, and that $ECT$ and $ECP$ are monotonic increasing. Also, note that $ECT\left(n,\Delta\right)=\Omega(\log\Delta\cdot\log n)$ and $ECP\left(n,\Delta\right)=\Omega(1)$.

\begin{lemma}[A bound on the time and number of processors]\label{PRAM time and processors bound}
    Let $G=(V,E)$ be an $n$-vertex $m$-edge graph with maximum degree $\Delta$. Procedure \textsc{Aprrox-Edge-Coloring} invoked on $G$ with parameter $h\leq\log\Delta$ requires $O\left(ECT\left(n,\frac{\Delta}{2^h}\right)\right)$ time,
    and it uses $O\left(m\cdot ECP\left(n,\frac{\Delta}{2^h}\right)\right)$ processors.
\end{lemma}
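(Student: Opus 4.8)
The plan is to prove the claim by induction on the recursion depth $h$, tracking both the running time and the processor count as the recursion tree unfolds. The base case $h=0$ is immediate: Procedure \textsc{Aprrox-Edge-Coloring} simply invokes a $(\Delta+1)$-edge-coloring algorithm from Theorem~\ref{Procedure Edge-Coloring properties}, which by definition requires $ECT(n,\Delta)$ time and $m\cdot ECP(n,\Delta)$ processors, matching the claim since $2^0 = 1$.

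For the inductive step, suppose $h>0$. The algorithm first computes a degree-splitting of $G$ via Procedure \textsc{Degree-Splitting} (Theorem~\ref{Procedure Degree-Splitting}), costing $O(\log n)$ time and $O(m)$ processors, producing $G_1=(V,E_1)$ and $G_2=(V,E_2)$. By Claim~\ref{deg bound} (or directly from Equation~(\ref{eq4.1})), each $G_i$ has maximum degree at most $\frac{\Delta}{2}+1$, and $G_i$ is the graph $G^{(1)}$ on which the recursion continues with parameter $h-1$. Crucially, $E_1$ and $E_2$ partition $E$, so $|E_1|+|E_2| = m$. By the induction hypothesis applied to each $G_i$ (with maximum degree $\Delta' \le \frac{\Delta}{2}+1$, $n' \le n$ vertices, and $m_i = |E_i|$ edges, and recursion depth $h-1 \le \log\Delta'$), the recursive call on $G_i$ takes $O\left(ECT\left(n, \frac{\Delta'}{2^{h-1}}\right)\right)$ time using $O\left(m_i \cdot ECP\left(n, \frac{\Delta'}{2^{h-1}}\right)\right)$ processors. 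Since the two recursive calls run in parallel, the time is the maximum of the two (plus the $O(\log n)$ for the splitting and the $O(1)$ for the final merge on line~8), and the processor count is the sum over $i\in\{1,2\}$, which telescopes: $m_1 \cdot ECP(\cdot) + m_2 \cdot ECP(\cdot) \le (m_1+m_2)\cdot ECP\left(n, \frac{\Delta'}{2^{h-1}}\right) = m \cdot ECP\left(n, \frac{\Delta'}{2^{h-1}}\right)$, using monotonicity of $ECP$ and $m_i \le m$.

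It remains to simplify the degree argument $\frac{\Delta'}{2^{h-1}}$ where $\Delta' \le \frac{\Delta}{2}+1$. Here I would argue $\frac{\Delta'}{2^{h-1}} \le \frac{\Delta/2 + 1}{2^{h-1}} = \frac{\Delta}{2^h} + \frac{1}{2^{h-1}} = O\left(\frac{\Delta}{2^h}\right)$, since $h \le \log\Delta$ forces $\frac{\Delta}{2^h}\ge 1$, so the additive $\frac{1}{2^{h-1}} \le 2$ is absorbed into a constant factor on $\frac{\Delta}{2^h}$. Then, invoking the stated robustness property $ECT(O(n), O(\Delta)) = O(ECT(n,\Delta))$ and $ECP(O(n), O(\Delta)) = O(ECP(n,\Delta))$, together with monotonicity, we get that the recursive cost is $O\left(ECT\left(n, \frac{\Delta}{2^h}\right)\right)$ time and $O\left(m\cdot ECP\left(n, \frac{\Delta}{2^h}\right)\right)$ processors. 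Finally, the additive $O(\log n)$ term from degree-splitting and the $O(m)$ processors for splitting and merging are dominated: $ECT(n,\Delta) = \Omega(\log\Delta\cdot\log n) = \Omega(\log n)$ and $ECP(n,\Delta) = \Omega(1)$ by the stated lower bounds, so $O(\log n)$ time and $O(m)$ processors fold into the bounds without changing the asymptotics. This completes the induction.

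The main obstacle I anticipate is getting the additive slack in the degree bound ($\frac{\Delta}{2^h}$ versus $\frac{\Delta}{2^h}+2$) to cleanly collapse into the $O(\cdot)$ notation — this is where the hypothesis $h \le \log\Delta$ is essential, since without it $\frac{\Delta}{2^h}$ could be $o(1)$ and the additive constant would no longer be absorbable. A secondary subtlety is making sure the processor telescoping is legitimate: the recursive subgraphs at the same level are edge-disjoint, so their edge counts sum to $m$, but one must confirm that the $ECP$ factor can be pulled out to a common (largest) value, which is exactly what monotonicity of $ECP$ in its arguments guarantees. Both of these are routine once stated carefully, so I expect the proof to be short.
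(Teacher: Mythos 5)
Your overall accounting matches the ingredients of the paper's argument (edge-disjointness of the subgraphs for the processor count, parallelism of the two recursive calls for the time, the degree bound $\frac{\Delta}{2^h}+2$, and absorption of the additive slack using $h\leq\log\Delta$), but the way you package it as an induction on $h$ has a genuine gap: you invoke the robustness property $ECT\left(O(n),O(\Delta)\right)=O\left(ECT\left(n,\Delta\right)\right)$ \emph{inside every inductive step}, when you convert the hypothesis bound $O\left(ECT\left(n,\frac{\Delta'}{2^{h-1}}\right)\right)$ (with $\Delta'\leq\frac{\Delta}{2}+1$) into $O\left(ECT\left(n,\frac{\Delta}{2^h}\right)\right)$. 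If the inductive statement is ``time $\leq C\cdot ECT\left(n,\frac{\Delta}{2^h}\right)$'' with a fixed constant $C$, the step yields $C\cdot c_{\mathrm{rob}}\cdot ECT\left(n,\frac{\Delta}{2^h}\right)+O(\log n)$ with $c_{\mathrm{rob}}>1$ (e.g., for $ECT(n,\Delta)=\Theta\left(\Delta^4\log^4 n\right)$ one has $c_{\mathrm{rob}}\approx 3^4$), so the induction does not close: the hidden constant multiplies by $c_{\mathrm{rob}}$ at each of the $h=\Theta(\log\Delta)$ levels, and the per-level additive $O(\log n)$ likewise accumulates to $h\cdot\log n$, neither of which is licensed by a single appeal to ``the constant is absorbed.'' The same compounding affects your processor bound through $ECP$. (A secondary wrinkle: to apply the inductive hypothesis to $G_i$ you need $h-1\leq\log\Delta(G_i)$, and since $\Delta(G_i)$ can be as small as $\frac{\Delta}{2}-1$ this can fail marginally; the hypothesis $h\leq\log\Delta$ should only be used for the final absorption, not re-imposed at every level.)

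The paper avoids this by separating the two recursions: the degree bound is proved by an exact, constant-free induction (Claim~\ref{deg bound}, whose recurrence $\frac{\Delta/2^{i-1}+2}{2}+1=\frac{\Delta}{2^i}+2$ telescopes with no loss), and the cost is then accounted \emph{directly} over the recursion tree rather than inductively: each of the $h$ levels runs Procedure \textsc{Degree-Splitting} in parallel on edge-disjoint subgraphs, costing $O(\log n)$ time and $O(m)$ processors per level, and the base case runs in parallel on edge-disjoint subgraphs of maximum degree $\Delta^{(h)}\leq\frac{\Delta}{2^h}+2$, costing $O\left(ECT\left(n,\frac{\Delta}{2^h}\right)\right)$ time and $O\left(m\cdot ECP\left(n,\frac{\Delta}{2^h}\right)\right)$ processors, with the robustness property and the absorption of $h\cdot\log n$ applied exactly once at the end. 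You can salvage your induction, but only by strengthening the inductive statement to carry explicit constants and the exact degree slack (e.g., time $\leq C\cdot ECT\left(n,\frac{\Delta}{2^{h}}+2\right)+C'\cdot h\cdot\log n$), which essentially reproduces the paper's direct accounting; as written, your proof does not establish the stated bound.
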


\begin{proof}
    First, observe that in each level $0\leq i< h$ of recursion we execute Procedure \textsc{Degree-Splitting} in parallel on all the $t_i\leq 2^i$ subgraphs in that level of recursion. Let $m^{(i)}_1,m^{(i)}_2,...,m^{(i)}_{t_i}$ be the number of edges in these subgraphs. Since these subgraphs are edge-disjoint, we have $\sum_{j=1}^{t_i}m^{(i)}_j=m$. Hence by Theorem \ref{Procedure Degree-Splitting}, each level of recursion except for the base case requires $O\left(\log n\right)$ time using $\sum_{j=1}^{t_i} O\left(m^{(i)}_j\right)=O(m)$ processors.
    By Theorem \ref{Procedure Edge-Coloring properties} and Claim \ref{deg bound}, the base case requires $O\left(ECT\left(n,\Delta^{(h)}\right)\right)=O\left(ECT\left(n,\frac{\Delta}{2^h}\right)\right)$
    time using $\sum_{i=1}^{t_h}O\left(m^{(h)}_i\cdot ECP\left(n,\Delta^{(h)}\right)\right)=O\left(m\cdot ECP\left(n,\frac{\Delta}{2^h}\right)\right)$ processors.
    Since there are $h$ levels of recursion (except the base case) and $h\leq \log\Delta$, we conclude that the total time of the algorithm is $O\left(ECT\left(n,\frac{\Delta}{2^h}\right)+h\cdot\log n\right)=O\left(ECT\left(n,\frac{\Delta}{2^h}\right)\right)$,
    and it uses $O\left(m\cdot ECP\left(n,\frac{\Delta}{2^h}\right)\right)$ processors.
\end{proof}

We are now ready to summarize the main properties of Procedure \textsc{Aprrox-Edge-Coloring}.
Lemma \ref{proper edge-coloring} and Lemma \ref{PRAM time and processors bound} imply the following corollary:

\begin{corollary}[Properties of Procedure \textsc{Approx-Edge-Coloring}]\label{pram properties}
    Let $G=(V,E)$ be a graph with maximum degree $\Delta$.
    Procedure \textsc{Aprrox-Edge-Coloring} invoked on $G$ with parameter $h\leq\log
    \Delta$ computes a proper $(\Delta+3\cdot 2^h)$-edge-coloring of $G$ in
    $O\left(ECT\left(n,\frac{\Delta}{2^h}\right)\right)$ time using $O\left(m\cdot ECP\left(n,\frac{\Delta}{2^h}\right)\right)$ processors.
\end{corollary}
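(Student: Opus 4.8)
The statement is an immediate consequence of the two preceding lemmas, so the plan is simply to observe that their hypotheses coincide and their conclusions compose. Both Lemma~\ref{proper edge-coloring} and Lemma~\ref{PRAM time and processors bound} are stated for Procedure \textsc{Aprrox-Edge-Coloring} invoked on the same graph $G=(V,E)$ with the same parameter constraint $h\leq\log\Delta$, and in both cases the object analysed is the single coloring $\varphi$ returned by the procedure. Hence there is nothing to reconcile between them.

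First I would invoke Lemma~\ref{proper edge-coloring}: under the hypothesis $h\leq\log\Delta$, the coloring $\varphi$ output by Procedure \textsc{Aprrox-Edge-Coloring} on $G$ is a proper edge-coloring that employs at most $\Delta+3\cdot 2^{h}$ colors. This settles the correctness part of the corollary. Then I would invoke Lemma~\ref{PRAM time and processors bound}, again under $h\leq\log\Delta$: the same invocation of the procedure runs in $O\!\left(ECT\!\left(n,\tfrac{\Delta}{2^{h}}\right)\right)$ time and uses $O\!\left(m\cdot ECP\!\left(n,\tfrac{\Delta}{2^{h}}\right)\right)$ processors. Concatenating these two statements gives exactly the claimed bounds, and the corollary follows.

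Since the corollary is a pure restatement of the conjunction of two already-established lemmas, there is essentially no obstacle; the only point worth a line of care is to note that both lemmas refer to the identical recursive execution (same input, same depth parameter, same returned coloring $\varphi$), so that the color count from Lemma~\ref{proper edge-coloring} and the resource bounds from Lemma~\ref{PRAM time and processors bound} genuinely pertain to one and the same algorithmic output, with no hidden mismatch in hypotheses.
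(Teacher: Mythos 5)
Your proposal matches the paper exactly: the paper offers no separate proof and simply states that Lemma~\ref{proper edge-coloring} (correctness and color count) together with Lemma~\ref{PRAM time and processors bound} (time and processor bounds) imply the corollary, which is precisely your argument. Correct, same approach.
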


To get simpler and more useful trade-offs, we set $h\approx\log\left(\varepsilon\cdot\Delta\right)$.

\begin{theorem}[A generic trade-off]\label{trade-off}
    Let $G=(V,E)$ be an $n$-vertex $m$-edge graph with maximum degree $\Delta$, and let $\frac{1}{\Delta}\leq\varepsilon<1$. Procedure \textsc{Aprrox-Edge-Coloring} invoked on $G$ with parameter $h=\max\left\{\left\lfloor\log\left(\frac{\varepsilon\cdot\Delta}{3}\right)\right\rfloor,0\right\}$ computes a proper $(1+\varepsilon)\Delta$-edge-coloring of $G$ in 
    $O\left(ECT\left(n,\varepsilon^{-1}\right)\right)$ time using $O\left(m\cdot ECP\left(n,\varepsilon^{-1}\right)\right)$ processors.
\end{theorem}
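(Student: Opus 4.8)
The plan is to invoke Corollary~\ref{pram properties} with the specified choice of $h$ and then translate the resulting parameters into the claimed bounds. First I would dispatch the two corner cases separately. If $\varepsilon \cdot \Delta / 3 < 1$, i.e.\ $\varepsilon < 3/\Delta$, then $h = 0$, so Procedure \textsc{Aprrox-Edge-Coloring} simply runs the $(\Delta+1)$-edge-coloring algorithm directly; since $\Delta + 1 \leq (1+\varepsilon)\Delta$ exactly when $\varepsilon \geq 1/\Delta$, which holds by hypothesis, the output is a proper $(1+\varepsilon)\Delta$-edge-coloring, and the time and processor counts are $O(ECT(n,\Delta)) = O(ECT(n,\varepsilon^{-1}))$ and $O(m \cdot ECP(n,\Delta)) = O(m \cdot ECP(n,\varepsilon^{-1}))$ because $\Delta = O(\varepsilon^{-1})$ in this regime and $ECT,ECP$ are monotone increasing. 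So the interesting case is $h = \lfloor \log(\varepsilon\Delta/3) \rfloor \geq 0$.

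In the main case, the key arithmetic is to control $2^h$ from both sides. By the floor, $2^h \leq \varepsilon\Delta/3$, which gives the color bound: by Corollary~\ref{pram properties} the number of colors is at most $\Delta + 3 \cdot 2^h \leq \Delta + 3 \cdot \varepsilon\Delta/3 = (1+\varepsilon)\Delta$. I also need $h \leq \log\Delta$ so that the hypothesis of Corollary~\ref{pram properties} is met; this follows since $2^h \leq \varepsilon\Delta/3 < \Delta$ as $\varepsilon < 1$. For the complexity, $2^h > \varepsilon\Delta/6$ (because $h \geq \log(\varepsilon\Delta/3) - 1$), hence $\Delta / 2^h < 6/\varepsilon = O(\varepsilon^{-1})$. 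Plugging into Corollary~\ref{pram properties}, the running time is $O\!\left(ECT\!\left(n, \Delta/2^h\right)\right) = O(ECT(n,\varepsilon^{-1}))$ and the processor count is $O\!\left(m \cdot ECP\!\left(n, \Delta/2^h\right)\right) = O(m \cdot ECP(n,\varepsilon^{-1}))$, using monotonicity of $ECT$ and $ECP$ together with the stated property $ECT(O(n),O(\Delta)) = O(ECT(n,\Delta))$ and likewise for $ECP$ to absorb the constant factor $6$.

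The remaining obligation is just to confirm that the output is a \emph{proper} edge-coloring and that it indeed uses at most $(1+\varepsilon)\Delta$ colors (not merely $\Delta + 3\cdot 2^h$); the former is immediate from Lemma~\ref{proper edge-coloring}, and the latter was the color computation above. I would also remark that the conclusion needs $\varepsilon \geq 1/\Delta$ only to guarantee $\Delta+1 \leq (1+\varepsilon)\Delta$ in the $h=0$ branch and to make $\varepsilon^{-1} \leq \Delta$ so the substitution $\Delta/2^h = O(\varepsilon^{-1})$ is meaningful, both of which are in the hypothesis.

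I do not anticipate a genuine obstacle here --- the theorem is a clean instantiation of Corollary~\ref{pram properties}. The only mildly delicate point is the $\max\{\cdot,0\}$ in the definition of $h$: one must handle the $h=0$ case by hand rather than blindly substituting $\Delta/2^h = \Delta$ into the complexity and hoping it equals $\varepsilon^{-1}$ --- it does not literally, but it is $O(\varepsilon^{-1})$ precisely in that regime. Care with the direction of the inequalities coming from $\lfloor \cdot \rfloor$ (lower bound $2^h > \varepsilon\Delta/6$ for the time, upper bound $2^h \leq \varepsilon\Delta/3$ for the colors) is the one place a sloppy argument could go wrong.
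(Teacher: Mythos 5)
Your proposal is correct and matches the paper's intent: the theorem is stated there as a direct instantiation of Corollary~\ref{pram properties} with $h=\max\{\lfloor\log(\varepsilon\Delta/3)\rfloor,0\}$, and your arithmetic ($2^h\le\varepsilon\Delta/3$ for the color count, $2^h>\varepsilon\Delta/6$ so $\Delta/2^h=O(\varepsilon^{-1})$ for the complexity, plus the separate $h=0$ regime using $\varepsilon\ge 1/\Delta$ and $\Delta=O(\varepsilon^{-1})$) is exactly the intended justification. No gaps.
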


In the next theorem, we summarize possible trade-offs of our edge-coloring algorithm that can be obtained via various $(\Delta+1)$-edge-coloring algorithms from Theorem \ref{Procedure Edge-Coloring properties}.
\begin{theorem}[Specific trade-offs]\label{(1+eps)Delta Trade-off}
    Let $G=(V,E)$ be an $n$-vertex $m$-edge graph with maximum degree $\Delta$, and let $\frac{1}{\Delta}\leq\varepsilon<1$. In all trade-offs one sets $h=\max\left\{\left\lfloor\log\left(\frac{\varepsilon\cdot\Delta}{3}\right)\right\rfloor,0\right\}$, and as a result, the algorithm produces a proper $(1+\varepsilon)\Delta$-edge-coloring of $G$. Our algorithm requires:
    \begin{enumerate}
        \item[(1)] (Using Theorem \ref{Procedure Edge-Coloring properties}(1)) $O\left(\varepsilon^{-4}\cdot\log^4 n\right)$ time using $O\left(m\cdot\varepsilon^{-1}\right)$ processors.
        \item[(2)] (Using Theorem \ref{Procedure Edge-Coloring properties}(2)) $O\left(\varepsilon^{-4}\cdot\log^2 n+\varepsilon^{-6}\cdot\log^2\varepsilon^{-1}\cdot\log n\right)$ time using $O\left(m\cdot\varepsilon^{-1}\right)$ processors.
        \item[(3)] (Using Theorem \ref{Procedure Edge-Coloring properties}(4), when bounding the arboricity $a$ by $\Delta$) $O\left(\varepsilon^{-4-o(1)}\cdot\log^2 n\right)$ time using $O\left(m\cdot\varepsilon^{-1}\cdot\frac{\log^{\delta}\left(\varepsilon^{-1}\right)\cdot\log n}{\log(\varepsilon^{-1}\cdot\log n)}\right)$ processors, for any constant $\delta>0$.
        \item[(4)] (Using Theorem \ref{Procedure Edge-Coloring properties}(5)) $O\left(\varepsilon^{-5}\cdot\log n\cdot\left(\log\varepsilon^{-1}+\frac{\log n}{\log\varepsilon^{-1}\cdot\log(\varepsilon^{-1}\cdot\log n)}\right)+\varepsilon^{-4}\cdot\log^2 n\right)=O(\varepsilon^{-5}\cdot\log^2 n)$ time using  $O\left(m\cdot\left(\varepsilon^{-1}\cdot\log\varepsilon^{-1}+\frac{\varepsilon^{-1/2}\cdot\log n}{\log(\varepsilon^{-1}\cdot\log n)}\right)\right)$ processors.
    \end{enumerate}
\end{theorem}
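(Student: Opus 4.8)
The plan is to derive each of the four trade-offs by a direct substitution into the generic trade-off of Theorem~\ref{trade-off}. That theorem already establishes that, with the stated choice $h=\max\{\lfloor\log(\varepsilon\Delta/3)\rfloor,0\}$, Procedure \textsc{Aprrox-Edge-Coloring} outputs a proper $(1+\varepsilon)\Delta$-edge-coloring in $O\left(ECT\left(n,\varepsilon^{-1}\right)\right)$ time using $O\left(m\cdot ECP\left(n,\varepsilon^{-1}\right)\right)$ processors, where $ECT$ and $ECP$ are the time and per-edge processor costs of whichever $(\Delta+1)$-edge-coloring routine from Theorem~\ref{Procedure Edge-Coloring properties} is used at the base case. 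Hence all that remains is to read off $ECT(n,\varepsilon^{-1})$ and $ECP(n,\varepsilon^{-1})$ from the items of Theorem~\ref{Procedure Edge-Coloring properties}: take each stated bound (a function of $n$, $\Delta$, and $a\le\Delta$, times $m$ for the processor count) and replace $\Delta$ by $\varepsilon^{-1}$ — and, for item~(3) of the present theorem, also bound $a\le\Delta$ before substituting, so that $a\mapsto\varepsilon^{-1}$ as well.

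Concretely, for item~(1) I would plug $\Delta=\varepsilon^{-1}$ into $O(\Delta^4\log^4 n)$ and $O(m\cdot\Delta)$, obtaining $O(\varepsilon^{-4}\log^4 n)$ time and $O(m\varepsilon^{-1})$ processors. For item~(2), substituting into $O(\Delta^4\log^2 n+\Delta^6\log^2\Delta\cdot\log n)$ and $O(m\Delta)$ gives $O(\varepsilon^{-4}\log^2 n+\varepsilon^{-6}\log^2\varepsilon^{-1}\cdot\log n)$ and $O(m\varepsilon^{-1})$. For item~(3), starting from Theorem~\ref{Procedure Edge-Coloring properties}(4) with $a$ replaced by $\Delta$ one has time $O(\Delta^{4+o(1)}\log^2 n)$ and processors $O\!\left(m\Delta\cdot\frac{\log^{\delta}\Delta\cdot\log n}{\log(\Delta\log n)}\right)$; substituting $\Delta=\varepsilon^{-1}$ yields $O(\varepsilon^{-4-o(1)}\log^2 n)$ and $O\!\left(m\varepsilon^{-1}\cdot\frac{\log^{\delta}(\varepsilon^{-1})\cdot\log n}{\log(\varepsilon^{-1}\log n)}\right)$. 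Item~(4) is the same substitution into Theorem~\ref{Procedure Edge-Coloring properties}(5), and the collapse of the running time to $O(\varepsilon^{-5}\log^2 n)$ follows exactly as in the footnote there (the first term dominates outside a negligible range of $\varepsilon$).

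A few routine points need checking. First, the hypothesis $\tfrac1\Delta\le\varepsilon<1$ gives $1\le\varepsilon^{-1}\le\Delta$, so $h\le\log\Delta$ and Theorem~\ref{trade-off} applies; moreover when $\varepsilon=\Theta(1/\Delta)$ we have $h=0$ and $\varepsilon^{-1}=\Theta(\Delta)$, so the bounds gracefully reduce to those of Theorem~\ref{Procedure Edge-Coloring properties}, which is consistent. Second, one uses the monotonicity of $ECT,ECP$ and the fact that $ECT(n,O(\Delta))=O(ECT(n,\Delta))$, $ECP(n,O(\Delta))=O(ECP(n,\Delta))$ to pass from $\varepsilon^{-1}$ to the actual base-case degree $\Delta^{(h)}\le \Delta/2^h+2=O(\varepsilon^{-1})$, and similarly that $\log^{\delta}$ and $o(1)$ exponents, depending on $\Delta$ only through $\log\Delta$, are insensitive to such $\Theta(1)$-factor changes. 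I expect no genuine obstacle: the only mildly delicate point is keeping the $o(1)$ of item~(3) and the $\log(\varepsilon^{-1}\log n)$ denominators of items~(3)–(4) honest under $\Delta\mapsto\varepsilon^{-1}$, which is immediate for exactly this reason.
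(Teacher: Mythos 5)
Your proposal is correct and matches the paper's intended argument exactly: the paper derives Theorem~\ref{(1+eps)Delta Trade-off} by applying the generic trade-off of Theorem~\ref{trade-off} and substituting $\Delta\mapsto\varepsilon^{-1}$ (and $a\le\Delta$ for item~(3)) into the $ECT$/$ECP$ bounds read off from Theorem~\ref{Procedure Edge-Coloring properties}, relying on the same monotonicity and $ECT(n,O(\Delta))=O(ECT(n,\Delta))$, $ECP(n,O(\Delta))=O(ECP(n,\Delta))$ facts you invoke. Your additional sanity checks (the range $\frac{1}{\Delta}\le\varepsilon<1$ giving $h\le\log\Delta$, and the degenerate case $h=0$) are consistent with the paper and require no further justification.
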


For constant arbitrarily small $\varepsilon>0$, the result in Theorem \ref{(1+eps)Delta Trade-off}(2) provides 
a $(1+\varepsilon)\Delta$-edge-coloring in $O(\log^2 n)$ time using $O(m)$ processors. 
This is also our best trade-off when $\varepsilon$ is a slowly decreasing function of $n$, 
i.e., $\varepsilon=\frac{1}{\text{poly}(\log\log n)}$.
For $\varepsilon\leq\frac{1}{\log^c n}$, for a constant $c>0$, other trade-offs kick in.

\section{The Edge-Coloring Update Algorithm}\label{Sections: The Edge-Coloring Update Algorithm}
In this section we provide an algorithm for the edge-coloring update problem.
In the edge-coloring update problem, we are given a graph $G=(V,E)$ with an upper bound $\Delta_{\max}$ on its maximum degree, together with a proper $(\Delta_{\max}+1)$-edge-coloring of $G$. During the algorithm, the graph undergoes a sequence of edge insertions and deletions, while maintaining its maximum degree at most $\Delta_{\max}$. We aim to maintain a proper $(\Delta_{\max}+1)$-edge-coloring of the graph throughout these updates.
Specifically, given a new edge $e=(u,v)\notin E$ or an existing edge $e'=(u',v')\in E$, our goal is to maintain a proper $(\Delta_{\max}+1)$-edge-coloring of the graph $G\cup\{e\}$ or $G\setminus\{e'\}$, respectively.\\
\\\textbf{Dealing with an Edge Insertion.}
Given a new edge $e=(u,v)\notin E$ we first construct a maximal fan, with an arbitrarily chosen center among $u$ and $v$ using Lemma \ref{Fan construction} from Section \ref{pram edge coloring}. Next, we compute its $\alpha\beta$-path by applying the connected components algorithm due to~\cite{shiloach1982logn} on the graph $G_{\alpha,\beta}$, which is induced by the edges that are colored with the colors $\alpha$ and $\beta$. Finally, we apply Procedure \textsc{Recolor-Fan} from Section \ref{pram edge coloring} on the computed fan and the $\alpha\beta$-path to produce a proper $(\Delta_{\max}+1)$-edge-coloring.

\begin{lemma}[Complexity of an insertion]
    Let $G=(V,E)$ be a graph with maximum degree at most $\Delta_{\max}$ and let $\varphi$ be a proper $(\Delta_{max}+1)$-edge-coloring of $G$. Let $e=(u,v)\notin E$ be a new edge such that the maximum degree of the graph $G^+=G\cup \{(u,v)\}$ is at most $\Delta_{\max}$. A proper edge-coloring of $G^+$ can be computed in $O(\log n)$ time using $O(n)$ processors.
\end{lemma}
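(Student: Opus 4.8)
The plan is to reduce the edge-insertion problem to exactly the machinery already developed in Section~\ref{pram edge coloring}: building a single maximal fan, computing its $\alpha\beta$-path, and running Procedure \textsc{Recolor-Fan}. First I would observe that since $G^+ = G \cup \{(u,v)\}$ has maximum degree at most $\Delta_{\max}$, the coloring $\varphi$ of $G$ is a partial proper $(\Delta_{\max}+1)$-edge-coloring of $G^+$ in which exactly one edge, namely $e=(u,v)$, is uncolored; moreover both $u$ and $v$ have at least one free color under $\varphi$ (each has degree at most $\Delta_{\max}$ in $G^+$, hence at most $\Delta_{\max}-1 < \Delta_{\max}+1$ colored incident edges). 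So the hypotheses needed to invoke the fan-construction lemma are met.

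Next I would carry out the three steps in order. Step one: pick $v$ (say) as the fan center and invoke Lemma~\ref{Fan construction} to build a maximal fan $\langle u_1,\dots,u_k\rangle$ of $v$ with uncolored edge $(v,u_1)=e$, characterized by a pair $(\alpha,\beta)$; by Lemma~\ref{Fan construction} this costs $O(\log\Delta_{\max})=O(\log n)$ time and $O(\deg(v))=O(n)$ processors. Step two: form the subgraph $G_{\alpha,\beta}$ (edges colored $\alpha$ or $\beta$) and run the connected-components algorithm of Lemma~\ref{Connected components algorithm}~\cite{shiloach1982logn} to identify the $\alpha\beta$-path $P_v$ of $v$; since $G_{\alpha,\beta}$ has at most $n$ edges (maximum degree $\le 2$) this is $O(\log n)$ time and $O(n)$ processors. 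Step three: apply Procedure \textsc{Recolor-Fan} to $(v,\langle u_1,\dots,u_k\rangle,(\alpha,\beta))$. By Theorem~\ref{Procedure Recolor-Fan} the result is a proper partial edge-coloring whose support strictly contains that of $\varphi$ and additionally includes $(v,u_1)=e$; since the support was all of $E$, the support is now all of $E \cup \{e\}$, i.e.\ $G^+$ is fully and properly colored with colors from $\{1,\dots,\Delta_{\max}+1\}$. Step~1 of \textsc{Recolor-Fan} costs $O(\log n)$ time and $O(n)$ processors, and step~2 costs $O(1)$ time and $O(\deg(v))=O(n)$ processors.

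Summing the three stages gives $O(\log n)$ total time and $O(n)$ processors, as claimed. I would also note for completeness that the auxiliary data structures $\mathrm{Color2Edge}$, $\mathrm{Edge2Color}$, and the stored free color at each vertex can be updated within the same bounds: only the edges whose color changed (those on $P_v$ and the fan edges of $v$, together with the endpoints touched) need their entries refreshed, and there are $O(n)$ of these, each updatable in $O(1)$ time by a dedicated processor; the free colors of $v$, $u_i$, and the two endpoints of $P_v$ can likewise be recomputed in $O(\log n)$ time using $O(n)$ processors.

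I do not expect a genuine obstacle here — this lemma is essentially a corollary of the already-proved Lemma~\ref{Fan construction} and Theorem~\ref{Procedure Recolor-Fan}. The only mild subtlety is making sure the preconditions of those results (edge $(v,u_1)$ uncolored, $\varphi$ a partial proper coloring with at least $\Delta_{\max}+1$ colors available) hold in $G^+$, which follows immediately from the degree bound, and ensuring the maintained data structures are refreshed within the stated budget; neither is more than a routine bookkeeping check.
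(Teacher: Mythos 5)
Your proposal is correct and follows essentially the same route as the paper: build a maximal fan via Lemma~\ref{Fan construction}, identify the $\alpha\beta$-path by running the connected-components algorithm of Lemma~\ref{Connected components algorithm} on $G_{\alpha,\beta}$, and then apply Procedure \textsc{Recolor-Fan}, yielding $O(\log n)$ time with $O(n)$ processors. The extra remarks you add (verifying the preconditions in $G^+$ and updating the auxiliary data structures) are sound routine bookkeeping and do not change the argument.
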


\begin{proof}
    By Lemma \ref{Fan construction} from Section \ref{pram edge coloring}, the construction of a maximal fan $f$, centered at $v$, with the uncolored edge $(u,v)$, requires $O(\log\Delta_{\max})$ time, using $O(\deg(v))$ processors. 
    To compute the $\alpha\beta$-path associated with $f$, we assign to each vertex the processors that might be required by its incident edges in $G_{\alpha,\beta}$ (the subgraph induced by the edges colored with the colors $\alpha$ and $\beta$). Using these processors, we compute the connected component of $v$ in the graph $G_{\alpha,\beta}$. Note that this graph is of size $O(n)$. Therefore, by Lemma \ref{Connected components algorithm}, this computation requires $O(\log n)$ time, using $O(n)$ processors. Finally, we recolor the fan and its $\alpha\beta$-path using Procedure \textsc{Recolor-Fan}. By Theorem \ref{Procedure Recolor-Fan}, this task requires $O(\log n)$ time, using $O(n)$ processors. Overall, we conclude that a proper edge-coloring of $G^+$ can be computed in $O(\log n)$ time using $O(n)$ processors.
\end{proof}

Note that deletion of an edge in the graph does not require any further updates. We summarize this in the following theorem.

\begin{theorem}
    Let $G=(V,E)$ be a graph with maximum degree at most $\Delta_{\max}$ and let $\varphi$ be a proper $(\Delta_{max}+1)$-edge-coloring of $G$. Let $e\notin E$ be a new edge such that the maximum degree of the graph $G^+=G\cup \{e\}$ is at most $\Delta_{\max}$ and let $e'\in E$. A proper edge-coloring of $G^+$ can be computed in $O(\log n)$ time using $O(n)$ processors and a proper edge-coloring of $G^-=G\setminus \{e'\}$ can be computed in $O(1)$ time using $O(1)$ processors.
\end{theorem}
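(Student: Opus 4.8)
The plan is to handle the two assertions separately; both are short, since the insertion case is essentially the content of the preceding lemma (Complexity of an insertion) and the deletion case is trivial.

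\textbf{Insertion.} For a new edge $e=(u,v)\notin E$ with $\Delta(G^+)\le\Delta_{\max}$, I would argue exactly as in the preceding lemma: treat $e$ as uncolored, pick one of $u,v$ as a fan center, build a maximal fan with uncolored edge $e$ via Lemma~\ref{Fan construction} in $O(\log\Delta_{\max})\subseteq O(\log n)$ time with $O(\deg(v))\subseteq O(n)$ processors; then, for the characterizing pair $(\alpha,\beta)$, compute the connected component of the fan center in $G_{\alpha,\beta}$ using the connected-components algorithm of Lemma~\ref{Connected components algorithm} (the graph $G_{\alpha,\beta}$ has $O(n)$ vertices and edges, being a union of paths and even cycles), in $O(\log n)$ time with $O(n)$ processors; finally invoke Procedure \textsc{Recolor-Fan} on this fan and its $\alpha\beta$-path, which by Theorem~\ref{Procedure Recolor-Fan} costs $O(\log n)$ time and $O(n)$ processors and yields a proper partial coloring in which $e$ is colored, using only colors from $\{1,\dots,\Delta_{\max}+1\}$. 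Since the original coloring was a proper $(\Delta_{\max}+1)$-edge-coloring and $e$ was the unique uncolored edge, the result is a proper $(\Delta_{\max}+1)$-edge-coloring of $G^+$. The only thing to verify beyond the cited results is that the auxiliary structures of the Preliminaries (the $\mathrm{Color2Edge}$ and $\mathrm{Edge2Color}$ tables and the per-vertex missing-color field) are re-established within the same bounds; each of the three steps touches $O(n)$ entries, and each individual update is $O(1)$, so this is within budget.

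\textbf{Deletion.} For $e'\in E$, removing $e'$ (or equivalently uncoloring it) cannot violate properness: no new pair of adjacent edges is created, and each color class only shrinks, so the restriction of $\varphi$ to $E\setminus\{e'\}$ is still a proper $(\Delta_{\max}+1)$-edge-coloring of $G^-=G\setminus\{e'\}$. Concretely, a single processor deletes $e'$ from the adjacency structure; the stored missing-color fields of the two endpoints of $e'$ remain valid (the previously stored missing colors are still missing), so no further update is even required. Hence this takes $O(1)$ time with $O(1)$ processors.

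The only real (and minor) obstacle is bookkeeping: checking that every data structure described in the Preliminaries is maintained consistently, within $O(n)$ processors and $O(\log n)$ time for an insertion and within $O(1)$/$O(1)$ for a deletion. There is no mathematical difficulty, since all the substantive work was already done in Lemma~\ref{Fan construction}, Theorem~\ref{Procedure Recolor-Fan}, and Lemma~\ref{Connected components algorithm}; the theorem is just the packaging of the preceding lemma together with the observation that deletions need no recoloring. I would also remark, as the paper does in the surrounding text, that a full dynamic sequence of updates is handled by applying this per-edge procedure one update at a time, each in $O(\log n)$ parallel time with $O(n)$ processors.
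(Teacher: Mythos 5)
Your proposal is correct and follows essentially the same route as the paper: the insertion case is exactly the paper's preceding lemma (build a maximal fan via Lemma \ref{Fan construction}, compute the $\alpha\beta$-path with the connected-components algorithm, then apply Procedure \textsc{Recolor-Fan}), and the deletion case is handled, as in the paper, by observing that removing an edge cannot violate properness. The extra bookkeeping remarks about the auxiliary tables and missing-color fields are fine and do not change the argument.
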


As a corollary, we improve the result of~\cite{liang1996parallel} for the edge-coloring update problem, that achieved an $O(K\cdot\sqrt{\Delta}\cdot\log ^3 \Delta+K\cdot\log n)$ time algorithm, using $O(n\cdot\Delta+\Delta^3)$ processors, for insertion of $K$ new edges.

\begin{corollary}
    Let $G=(V,E)$ be a graph with maximum degree at most $\Delta_{\max}$ and let $\varphi$ be a proper $(\Delta_{max}+1)$-edge-coloring of $G$. Let $G^{+K}$ be a graph obtained by the graph $G$ with a collection of $K$ new edges such that the maximum degree of $G^{+K}$ is at most $\Delta_{max}$. A proper $(\Delta_{\max}+1)$-edge-coloring of $G^{+K}$ can be computed in $O(K\cdot\log n)$ time, using $O(n)$ processors.
\end{corollary}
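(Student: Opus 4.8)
The plan is to reduce the batch-insertion problem to $K$ successive single-edge insertions, each handled by the single-insertion lemma proved just above (the ``Complexity of an insertion'' lemma). Fix an arbitrary ordering $e_1, e_2, \ldots, e_K$ of the $K$ new edges, and set $G_0 = G$ and $G_i = G_{i-1} \cup \{e_i\}$ for $1 \le i \le K$, so that $G_K = G^{+K}$. The key observation is that each intermediate graph $G_i$ is a subgraph of $G^{+K}$ (it is $G$ together with the subset $\{e_1,\ldots,e_i\}$ of the new edges), hence $\Delta(G_i) \le \Delta(G^{+K}) \le \Delta_{\max}$; moreover $e_i \notin E(G_{i-1})$, since $G^{+K}$ is simple and $e_i$ is distinct from every edge of $G$ and from $e_1,\ldots,e_{i-1}$. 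Thus at stage $i$ the precondition of the single-insertion lemma is satisfied.

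First I would initialize $\varphi_0 := \varphi$, the given proper $(\Delta_{\max}+1)$-edge-coloring of $G_0 = G$. Then, for $i = 1, 2, \ldots, K$ in sequence, I would invoke the single-insertion procedure on $\left(G_{i-1}, \varphi_{i-1}, e_i\right)$, obtaining a proper $(\Delta_{\max}+1)$-edge-coloring $\varphi_i$ of $G_i$; by the insertion lemma each such step takes $O(\log n)$ time and $O(n)$ processors. After the $K$-th step, $\varphi_K$ is the desired proper $(\Delta_{\max}+1)$-edge-coloring of $G^{+K}$, and correctness follows by an immediate induction on $i$ (the invariant being that $\varphi_i$ properly $(\Delta_{\max}+1)$-edge-colors $G_i$).

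For the complexity bound: the $K$ rounds are executed one after another, so the running time is $K \cdot O(\log n) = O(K \log n)$. Crucially, the $O(n)$ processors used in round $i$ can be released and reused in round $i+1$ — the rounds do not overlap in time — so the total processor requirement stays $O(n)$ rather than $O(K n)$, matching the claimed bound.

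I do not expect a genuine obstacle here; the statement is essentially a sequential composition of the single-insertion result. The only points requiring care are (i) verifying that the degree bound $\Delta_{\max}$ is preserved at every intermediate graph, which is handled by the subgraph observation above, and (ii) noting that the $K$ insertions must be performed sequentially — one cannot naively parallelize them, because the recoloring (fan rotation and $\alpha\beta$-path exchange) triggered by distinct new edges may involve overlapping parts of the graph — which is precisely why the time bound is $O(K \log n)$ and not sublinear in $K$.
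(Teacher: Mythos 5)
Your proposal is correct and matches the paper's intent exactly: the corollary is obtained by applying the single-insertion result $K$ times in sequence, with each round reusing the same $O(n)$ processors, giving $O(K\cdot\log n)$ time overall. Your added checks (that each intermediate graph respects the degree bound $\Delta_{\max}$ and that the insertions must be sequential) are the right points of care and introduce no new ideas beyond the paper's argument.
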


\section{Acknowledgements}
We are grateful to anonymous reviewers of SODA 2026 for helpful detailed remarks.

\bibliographystyle{alpha}
\bibliography{bibliography.bib}
\newpage
\appendix

\section{Computing a Large Collection of Pairwise-Disjoint Fans}\label{app: Computing a Large Collection of Pairwise-Disjoint Fans}

In this appendix we explain the flaw in the algorithm of Liang et al.~\cite{liang1996parallel}.
Let $G=(V,E)$ be a graph with maximum degree $\Delta$, equipped with a proper partial edge-coloring, and let $F$ be a fixed subset of the set of uncolored edges in $G$, and suppose that $F$ is a matching.
At some point, the algorithm needs to find a large collection of uncolored edges that can be recolored in parallel. That is, a collection of edges, such that fans centered at their endpoints are not intersecting with one another. In~\cite{liang1996parallel}, the authors used the graph $G_3$ for this computation. Recall that the \textit{distance} between a pair of edges $e_1=(v_1,u_1)\in E$ and $e_2=(v_2,u_2)\in E$ in $G$ is defined by $$\text{dist}_G(e_1,e_2)=\min\left\{\text{dist}_G(u_1, u_2),\text{dist}_G(u_1, v_2),\text{dist}_G(v_1, u_2),\text{dist}_G(v_1, v_2)\right\}.$$
The graph $G_3=(V_3,E_3)$ is defined~\cite{liang1996parallel} by $V_3=F$, and for two edges $e_1=(v_1,u_1)\in F$ and $e_2=(v_2,u_2)\in F$, the edge $(e_1,e_2)$ is in $E_3$ if and only if one of the edges in $\{(u_1,u_2),(u_1,v_2),(v_1,u_2),(v_1,v_2)\}$ is in $E$. In other words, $(e_1,e_2)\in E_3$ if and only if $\text{dist}_G(e_1,e_2)=1$ (since $F$ is a matching, $\text{dist}_G(e_1,e_2)\neq 0$). Liang et al.~\cite{liang1996parallel} compute a maximal independent set in $G_3$ in order to compute the collection of uncolored edges that can be recolored in parallel.~\cite{liang1996parallel} did not provide a proof that an independent set in $G_3$ satisfies this property. We next argue that this is not the case. Specifically, consider two edges $e_1=(v,u_1)$, $e_2=(v^*,u^*_1)$ such that $dist_G(e_1,e_2)=2$. Let $f(v)=\langle v,u_1,...,u_k\rangle$ and $f(v^*)=\langle v^*,u^*_1,...,u^*_l\rangle$, for some $k,l\geq 4$ be the respective fans for $v$ and $v^*$. Suppose also that they have a common vertex, i.e., there exists a pair of indexes $i,j$, 
$2\leq i\leq k-2$, $2\leq j\leq l-2$, such that $u_i=u_j^*$. Moreover, we assume that neither among the vertices $u_i,u_{i+1},u_j^*,u^*_{j+1}$ is a special vertex in its respective fan, i.e., $\{u_i,u_{i+1},u_j^*,{u^*_{j+1}\}\cap\{x(v),y(v),z(v),x(v^*),y(v^*),z(v^*)}\}=\emptyset$. Finally, let $\gamma$ be the missing color of $u_i=u_j^*$ in both fans $f(v)$ and $f(v^*)$. It follows that after the exchanging of the $\alpha\beta$-paths in the two fans and respective rotations of these fans, the incident edges $(v,u_i)$, $(u_j^*,v^*)$ end up being colored by $\gamma$, i.e., the resulting edge-coloring is no longer proper. See Figure \ref{fig:intersection fans2} for an illustration. Note also that it might happen that for any endpoint $w$ of $(v,u_1)$ and $w^*$ of $(v^*,u^*_1)$, the two vertices are at distance two from one another. See Figure \ref{fig:distance 2 example} for an illustration.

\begin{figure}
    \centering
    \begin{tikzpicture}
\begin{scope}[yshift=5cm]
        \begin{scope}[xshift=-4cm]
        \node[circle] at ({180}:0.3cm)  {$v$};
        \node[circle] at ({93}:2.35cm)  {$u_1$};
        \node[circle] at ({72}:3.85cm)  {$u_2=u^*_3$};
        \node[circle] at ({50}:2.35cm)  {$u_3$};
        \node[circle] at ({30}:2.35cm)  {$u_4$};
        \node[circle] at ({10}:2.35cm)  {$u_5$};
        \node[circle,fill=pink] at (360:0mm) (center) {};
        \node[circle,fill=green] at ({90}:2cm) (n1) {};
        \node[circle,fill=blue] at ({70}:3.55cm) (n2) {};
        \node[circle,fill=purple] at ({50}:2cm) (n3) {};
        \node[circle,fill=red] at ({30}:2cm) (n4) {};
        \node[circle,fill=purple] at ({10}:2cm) (n5) {};
    
        \draw[dotted, line width=0.7mm] (center)--(n1);
        \draw[line width=0.7mm, green] (center)--(n2);
        \draw[line width=0.7mm, blue] (center)--(n3);
        \draw[line width=0.7mm, purple] (center)--(n4);
        \draw[line width=0.7mm, red] (center)--(n5);

        \node[circle,fill=black] at (3.3,0) (n6) {};
        \node[circle,fill=black] at (3.3,-2) (n7) {};
        \node[circle,fill=black] at (1.8,-3) (n8) {};
        \node[circle,fill=black] at (0,-3) (n9) {};

        \draw[line width=0.7mm, pink] (n4)--(n6);
        \draw[line width=0.7mm, purple] (n6)--(n7);
        \draw[line width=0.7mm, pink] (n7)--(n8);
        \draw[line width=0.7mm, purple] (n8)--(n9);

        \end{scope}

        \begin{scope}[xshift =3cm, yscale=1,xscale=-1]
        \node[circle] at ({180}:0.35cm)  {$v^*$};
        \node[circle] at ({70}:2.42cm)  {$u^*_1$};
        \node[circle] at ({50}:2.42cm)  {$u^*_2$};
        \node[circle] at ({10}:2.42cm)  {$u^*_4$};
        \node[circle] at ({350}:2.4cm)  {$u^*_5$};
        \node[circle] at ({330}:2.4cm)  {$u^*_6$};
        \node[circle,fill=pink] at (360:0mm) (center) {};
        \node[circle,fill=green] at ({70}:2cm) (n2) {};
        \node[circle,fill=red] at ({50}:2cm) (n3) {};
        \node[circle,fill=blue] at ({30}:6.7cm) (n4) {};
        \node[circle,fill=purple] at ({10}:2cm) (n5) {};
        \node[circle,fill=brown] at ({350}:2cm) (n6) {};
        \node[circle,fill=purple] at ({330}:2cm) (n7) {};
    
        \draw[dotted, line width=0.7mm] (center)--(n2);
        \draw[line width=0.7mm, green] (center)--(n3);
        \draw[line width=0.7mm, red] (center)--(n4);
        \draw[line width=0.7mm, blue] (center)--(n5);
        \draw[line width=0.7mm, purple] (center)--(n6);
        \draw[line width=0.7mm, brown] (center)--(n7);

        \node[circle,fill=black] at (3,-2) (n8) {};
        \node[circle,fill=black] at (1.6,-3) (n9) {};
        \node[circle,fill=black] at (-0.2,-3) (n10) {};
        \node[circle,fill=black] at (-2.2,-3) (n11) {};

        \draw[line width=0.7mm, pink] (n6)--(n8);
        \draw[line width=0.7mm, purple] (n8)--(n9);
        \draw[line width=0.7mm, pink] (n9)--(n10);
        \draw[line width=0.7mm, purple] (n10)--(n11);

        \end{scope}
    \end{scope}
    \end{tikzpicture}

    \caption{Two intersecting fans, characterized by $(\alpha,\beta)=(\color{pink}\bullet\color{black}, \color{purple}\bullet\color{black})$ $\langle u_1,u_2,...,u_5\rangle$, centered at $v$, and $\langle u^*_1,u^*_2,...,u^*_6\rangle$ centered at $v^*$, in which the distance between their respective uncolored edges is 2.}
    \label{fig:intersection fans2}
\end{figure}
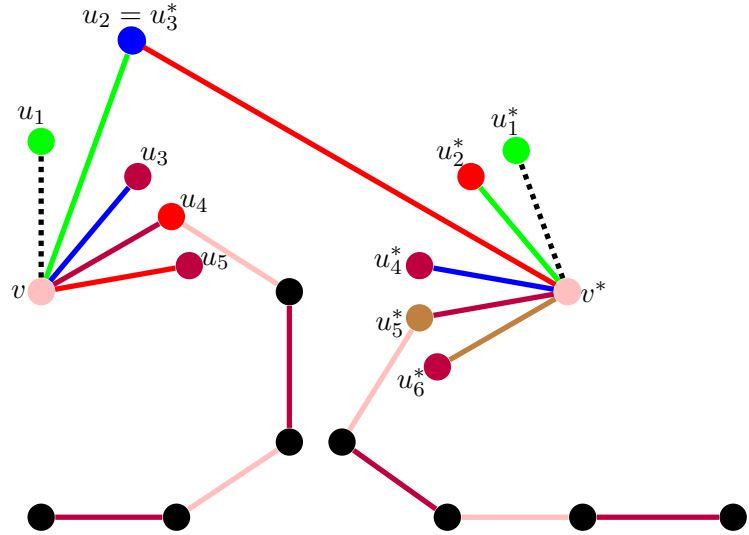

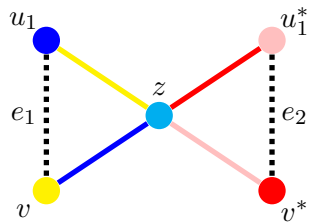
\begin{figure}
    \centering
    \begin{tikzpicture}

        \node[circle] at (0,0.35)  {$z$};
        \node[circle] at (-1.8,0)  {$e_1$};
        \node[circle] at (1.8,0)  {$e_2$};

        \node[circle] at ({145}:2.2cm)  {$u_1$};
        \node[circle] at ({215}:2.2cm)  {$v$};
        \node[circle] at ({35}:2.2cm)  {$u^*_1$};
        \node[circle] at ({325}:2.2cm)  {$v^*$};

        \node[circle,fill=cyan] at (0,0) (center1) {};
        \node[circle,fill=blue] at (-1.5,1) (n1) {};
        \node[circle,fill=yellow] at (-1.5,-1) (n2) {};
        \node[circle,fill=pink] at (1.5,1) (n3) {};
        \node[circle,fill=red] at (1.5,-1) (n4) {};
        
        \draw[line width=0.7mm, yellow] (center1)--(n1);
        \draw[line width=0.7mm, blue] (center1)--(n2);
        \draw[line width=0.7mm, red] (center1)--(n3);
        \draw[line width=0.7mm, pink] (center1)--(n4);
        \draw[dotted, line width=0.7mm] (n1)--(n2);
        \draw[dotted, line width=0.7mm] (n3)--(n4);
        
    \end{tikzpicture}

    \caption{Two uncolored edges of $G$ that might be chosen in an independent set of $G_3$, although maximal fans containing them may intersect at $z$.}
    \label{fig:distance 2 example}
\end{figure}

To overcome this problem, we define the graph $G^{(F)}$ that includes also edges between pairs of edges from $F$ with distance 2 (in $G$), i.e., $G^{(F)}=\left(F,E^{(F)}\right)$, where for $e_1\neq e_2\in F$, an edge $(e_1,e_2)\in E^{(F)}$ if $\text{dist}_G(e_1,e_2)\leq 2$. See Section \ref{sec: Parallel Fan-Recoloring} for all the details.\\

The main impact of this change on our algorithm, is that the maximum degree of $G^{(F)}$ is now bounded by $O(\Delta^2)$ instead of $O(\Delta)$ as in $G_3$ (in~\cite{liang1996parallel}). This larger maximum degree may imply a smaller (by a factor of $\Delta$) "large" independent set in $G^{(F)}$ (than in $G_3$) by a factor of $\Delta$, which increases the number of iterations of the algorithm.

\section{Some Proofs from Section \ref{pram edge coloring}}\label{App: Some Proofs from Section 3}
In this appendix we provide some proofs that were omitted from Section \ref{pram edge coloring}.

\begin{proof}[proof of Lemma \ref{G_3 property}]
    First, observe that for every fan $\langle u_1,...,u_k\rangle$ of $v\in V$, and for each $i\in \{1,2,...,k\}$, $\text{dist}_G(v,u_i)=1$. Let $e=(v,u_1)\in F$ and $e^*=(v^*,u^*_1)\in F$ such that $(e,e^*)\notin E^{(F)}$. By the definition of $G^{(F)}$, $\text{dist}_G(e, e^*)\geq 3$, i.e., in particular, $\text{dist}_G(v, v^*)\geq 3$. Without loss of generality, assume towards contradiction that there are intersecting fans $\langle u_1,u_2,...,u_k\rangle$ centered at $v$, and $\langle u^*_1,u^*_2,...,u^*_l\rangle$ centered at $v^*$. Let $x$ a common vertex of both fans. Then by the triangle inequality, $\text{dist}_G(v,v^*)\leq \text{dist}_G(v,x)+\text{dist}_G(x,v^*)\leq 2$, which is a contradiction. Hence any fan with an uncolored edge $e$ and any fan with an uncolored edge $e^*$ are disjoint.
\end{proof}

\section{Edge-Coloring Paths and Cycles}\label{App: coloring}
In this appendix we present an algorithm for computing an alternating edge-coloring of a path. In addition, we devise another edge-coloring algorithm that we use in our main edge-coloring algorithms in Sections \ref{pram edge coloring} and \ref{Our alg}.\\

Let $P=(V,E)$ be a not necessarily simple path. We also denote this path by $P=(v_0,e_1,v_1,e_2,v_2,$
$...,v_{m-1},e_m,v_m)$. The alternating edge-coloring algorithm proceeds in $h=\lfloor\log m\rfloor$ ($m=|E|$) iterations. In the $j$'th iteration, for each $v_i$ in parallel, we compute the vertices that are at distance $2^j$ from $v_i$ on this path, using the vertices that are at distance $2^{j-1}$ from it, computed in the previous iteration. Using this information, each vertex computes its distance from the endpoint $v_0$ of this path, and according to the distance (modulo 2) of each edge from the vertex $v_0$, its color is determined. The full description of the algorithm is given in Procedure \textsc{Alternating-Coloring}.\\
\\$\textsc{Procedure Alternating-Coloring }(P=(V,E)=(v_0,e_1,v_1,e_2,v_2,...,v_{m-1},e_m,v_m))$
    \begin{description}
        \item[\textbf{Step 1.}] \textbf{For} each $i\in \{1,2,...,m\}$ in parallel \textbf{do}\\
        \makebox[1.2cm]{}\textbf{For} each $1\leq j\leq h$ \textbf{do}\\
        \makebox[1.7cm]{}- compute the (at most 2) vertices of distance $2^j$ from $v_i$ on this path\\ \makebox[1.9cm]{}(if exist) using the vertices of distance $2^{j-1}$ from $v_i$, that were computed\\
        \makebox[1.9cm]{}in the previous iteration
        \item[\textbf{Step 2.}] \textbf{For} each $i\in\{1,2,...,m\}$ in parallel \textbf{do}\\
        \makebox[1.2cm]{}- compute the distance $d(v_0,v_i)$ of $v_i$ from the endpoint $v_0$ of the path using\\
        \makebox[1.4cm]{}the distances that we computed on step 1
        \item[\textbf{Step 3.}] \textbf{For} each $e=(v,u)\in E$ in parallel \textbf{do}\\
        \makebox[1.2cm]{}- color $e$ with the color $\left(\min\{d(v_0,v),d(v_0,u)\}\mod 2\right)$
    \end{description}

We now analyse the algorithm.

\altPath*

\begin{proof}
    First, observe that the algorithm indeed properly edge-colors the graph using two colors. In addition, in step 1 of the algorithm, for each $v_i\in V$ we compute the vertices on this path of distance $2^j$ from $v$ in parallel (for each $1\leq j\leq h$). This process proceeds in $O(\log n)$ rounds, and each round requires $O(1)$ time. Hence, step 1 requires $O(\log n)$ time using $O(m)$ processors. For step 2 we assign a processor to each occurrence of each $v\in V$ that computes the distance of $v$ from $v_0$ in $O(\log n)$ time. Hence step 2 requires $O(\log n)$ time using $O(m)$ processors. Finally, for step 3, we assign a processor to each edge in the graph, that colors the edge in $O(1)$ time. Hence Procedure \textsc{Alternating-Coloring} requires $O(\log n)$ time using $O(m)$ processors.
\end{proof}

For an odd length cycle, essentially the same procedure computes a 2-edge-coloring in which all colors are alternating, except for two certain consecutive edges that are colored by the same color.

\begin{restatable}[Alternating coloring of odd length cycle]{corollary}{cycles}
\label{cycles}
    Let $C=(V,E)$ be an $n$-vertex (not necessarily simple) odd length cycle with $m$ edges.
    There is an algorithm that edge-colors $C$ using two colors in $O(\log n)$ time using $O(m)$ processors, such that the coloring is alternating, except for two consecutive edges, colored by the same color.
\end{restatable}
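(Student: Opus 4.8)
The plan is to reduce the odd‑cycle case to the already‑solved path case of Lemma~\ref{path coloring}. Write the input cycle as $C=(v_0,e_1,v_1,e_2,\dots,v_{m-1},e_m,v_m)$ with $v_m=v_0$ and $m$ odd. First I would delete the closing edge $e_m$ to obtain the path $P=(v_0,e_1,v_1,\dots,e_{m-1},v_{m-1})$, which has $m-1$ edges, and run Procedure \textsc{Alternating-Coloring} on $P$. By Lemma~\ref{path coloring} this produces, in $O(\log n)$ time using $O(m)$ processors, a $2$-edge-coloring of $P$ in which $e_i$ receives color $(i-1)\bmod 2$; in particular, consecutive edges of $P$ receive distinct colors.

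Next I would record the parity fact that makes the construction work. Since $m$ is odd, $m-2$ is odd, so $e_{m-1}$ receives color $(m-2)\bmod 2=1$, whereas $e_1$ receives color $0$. Hence the two edges of $C$ adjacent to the deleted edge $e_m$ — namely $e_1$ (at $v_0$) and $e_{m-1}$ (at $v_{m-1}$) — carry opposite colors. I would then assign to $e_m$ the color of $e_1$ (i.e.\ color $0$), using a single processor in $O(1)$ time.

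It remains to check that the resulting coloring of $C$ has the claimed structure. All pairs of consecutive edges inside $P$ alternate by Lemma~\ref{path coloring}. The pair $(e_{m-1},e_m)$ meets at $v_{m-1}$ and has colors $1$ and $0$, hence alternates. The only remaining consecutive pair is $(e_m,e_1)$, which meets at $v_0$ and has colors $0$ and $0$; these are the two consecutive edges colored by the same color, while every other consecutive pair alternates. Summing the per-step bounds, the total running time is $O(\log n)$ and the number of processors is $O(m)$, dominated by the call to Procedure \textsc{Alternating-Coloring}.

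The one point that needs a little care — genuinely minor rather than a real obstacle — is that $C$ is allowed to be non-simple, so the same vertex may occur several times along the traversal. This causes no difficulty, because Procedure \textsc{Alternating-Coloring} colors edges according to their position along the walk rather than according to graph distances, so "alternating" always refers to consecutive edges in the traversal of $C$: a vertex visited $k$ times simply sees $k$ alternating pairs among its incident cycle-edges (and $v_0$ additionally sees the single monochromatic pair $(e_m,e_1)$). Hence the argument above goes through verbatim, and the two consecutive edges colored alike are precisely the pair incident to the vertex at which we cut the cycle.
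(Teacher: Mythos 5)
Your proof is correct and follows essentially the same route the paper intends: the paper's "essentially the same procedure" amounts to coloring edge $e_i$ by its position parity along the traversal starting at $v_0$, which for odd $m$ makes the first and last edges (consecutive at $v_0$) share a color — exactly what your cut-the-cycle-and-recolor-$e_m$ argument produces, with the same $O(\log n)$ time and $O(m)$ processor bounds and the same handling of repeated vertices via positions along the walk.
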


For the last theorem of this appendix, we recall the algorithm due to Shiloach and Vishkin for computing connected components in a simple graph:
\connComp*
We now derive the following theorem.

\threeEdgeColoring*

\begin{proof}
    We first compute non-isolated connected components of the graph using Lemma \ref{Connected components algorithm} in $O(\log n)$ time using $O(m)$ processors. Observe that $G$ is composed of simple paths and cycles. For each cycle we choose an arbitrary edge and delete it from the graph. Denote the deleted set of edges by $D$. Observe that if $\Delta=1$, then $D=\emptyset$. In this way we are left with only connected components that form simple paths. By Lemma \ref{path coloring}, we can edge-color them (properly) in parallel using 2 alternating colors in $O(\log n)$ time using $O(m)$. Finally, we color the deleted edges in $D$ using the color 3, and obtain a proper $(\Delta+1)$-edge-coloring of $G$.
\end{proof}

\newpage
\section{Maximal Path}\label{maximal path}

In this appendix we present an algorithm for computing a maximal path in a directed graph $G=(V,E)$ with out-degree at most 1. This algorithm requires $O(\log n)$ time using $O(n)$ processors. We use this algorithm for building fans in the $(\Delta+1)$-edge-coloring algorithm in Section \ref{pram edge coloring}.\\

We start with some notation that we will use in the algorithm.
Let $h=\lceil\log n\rceil$.
For a vertex $v\in V$ with out-degree 1, we define $n(v)$ to be its only outgoing neighbor, and otherwise, when its out-degree is 0, we define $n(v)=\perp$. For $v\in V$, we define $P(v)$ to be the path starting at $v_0=v$, and for $i\geq 1$, the $i$'th vertex $v_i$ in this path is the outgoing neighbor of the previous vertex of this path, i.e., $v_i=n(v_{i-1})$ (this path might be infinite). Observe that all the paths starting at $v$ are contained in $P(v)$, and there is a unique maximal simple path starting at $v$. We call this path \emph{the maximal path of} $v$. For a vertex $u\in V$, and for every $v\in V$, we will refer to the position of the first occurrence of $v$ in $P(u)$ as the \emph{index} of $v$ in $P(u)$ (if $v$ does not appear in $P(u)$, the index of $v$ is $-1$). For $v\in V$ and $k\geq 0$, we denote by $P(v)\vert_k$ the restriction of $P(v)=(v_0,v_1,...)$ to length $k$, i.e. $P(v)\vert_k=(v_0,v_1,...,v_k)$ (if $|P(v)|\leq k$, then $P(v)\vert_k=P(v)$). Observe that if $P(v)$ is finite, then $P(v)$ is a simple path that cannot be extended (if we visit a vertex more than once, then we enter into a cycle, and $P(v)$ would be infinite). Hence, in this case $P(v)$ is the maximal path of $v$. Otherwise, $P(v)$ is infinite, and in order to find the maximal path of $v$, we would like to find the first edge that encounters a vertex that was already visited in $P(v)$ before, and restrict $P(v)$ up to this edge. To this end, we will find the vertex with the greatest index $k$ in the graph, and restrict $P(v)$ to length $k$ (the restricted path is the maximal path of $v$).\\

Given a directed graph $G=(V,E)$ with maximum out-degree 1 and a designated root vertex $r$, on step 1 the algorithm computes the path $P(r)\vert_{2^h}$ (since a maximal path is of length at most $n\leq 2^h$, then $P(r)\vert_{2^h}$ must contain the maximal path of $r$). If the out-degree of the last vertex of this path is 0, then this path must be the maximal path of $r$. Otherwise, this path is not simple, and the algorithm will restrict this path (on step 2) to the maximal path of $r$ (by searching for the appropriate index for the restriction).\\
\\$\textsc{Procedure Maximal-Path}\,(G=(V,E), r)$
    \begin{description}
        \item[\textbf{Step 1.}] \textbf{For} each $v\in V$ in parallel \textbf{do}\\
        \makebox[1.2cm]{}- $n_0(v)\leftarrow\begin{cases}
			v, & \text{$n(v)=\perp$}\\
            n(v), & \text{otherwise}
		 \end{cases}$\\
        \makebox[1.2cm]{}- $P_v\leftarrow\begin{cases}
			\langle v\rangle, & \text{$n(v)= \perp$}\\
            \langle v,n(v)\rangle, & \text{otherwise}
		 \end{cases}$\\
        \makebox[1.2cm]{}\textbf{For} each $1\leq i\leq h$ \textbf{do}\Comment{computing the vertices of distance $2^i$ from $v$}\\
        \makebox[1.7cm]{}- $P_v\leftarrow P_v\circ P_{n_{i-1}(v)}$\Comment{$\circ$ is a concatenation of two paths}\\
        \makebox[1.7cm]{}- $n_i(v)\leftarrow n_{i-1}(n_{i-1}(v))$
        \item[\textbf{Step 2.}] \textbf{If} $\deg_{\text{out}}(n_h(r))=0$ \textbf{then}\Comment{$P_r$ is finite, hence it is the maximal path of $r$}\\
        \makebox[1.2cm]{}\textbf{return} $P_r\\$
        \makebox[0.7cm]{}\textbf{else}\Comment{searching for the index of the restriction}\\
        \makebox[1.2cm]{}Allocate an empty array $A$ of size $n$.\\
        \makebox[1.2cm]{}\textbf{For} each $1\leq i\leq n$ in parallel \textbf{do}\\
        \makebox[1.7cm]{}- Compute the $i$'th vertex $v$ of the path $P_r$ using the pointers $n_j(u)$ that\\
        \makebox[1.95cm]{}were computed in step 1 for every $u\in V$ and $0\leq j\leq h$.\\
        \makebox[1.7cm]{}- Save the pair $(v,i)$ in the $i$'th slot $A[i]$ of the array $A$.\\
        \makebox[1.2cm]{}\textbf{For} each $v\in V$ in parallel \textbf{do}\\
        \makebox[1.7cm]{}- Sort the array according to the vertices $v$, and for every vertex $v$ store \\
        \makebox[1.95cm]{}just the smallest index $i$ such that $(v,i)$ is stored in $A$.\\
        \makebox[1.2cm]{}- Compute the maximal index $k$ of pairs left in the array $A$.\\
        \makebox[1.2cm]{}\textbf{return} $P_r\vert_k$
    \end{description}

We start the analysis by showing properties of the pointers $n_i(v)$ and the path $P_v$ for each $v\in V$ and $0\leq i\leq h$.

\begin{claim}\label{maximal path properties}
    Let $G=(V,E)$ be a directed graph with maximum out-degree at most 1 and let $r\in V$. Then for each $v\in V$ and $0\leq i\leq h$, the path $P_v$ computed after the $i$'th iteration of the inner loop in step 1 is the path $P(v)\vert_{2^i}$, and $n_i(v)$ is the last vertex in this path.
\end{claim}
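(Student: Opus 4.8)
The statement is a clean invariant about the doubling procedure, so the natural route is induction on $i$. First I would handle the base case $i=0$: by the definition given in Step 1, $P_v$ is set to $\langle v\rangle$ when $n(v)=\perp$ (and then $P(v)$ has length $0$, so $P(v)|_{2^0}=P(v)|_1$ — wait, here I must be careful, $2^0=1$, and $P(v)|_1$ should be $\langle v, n(v)\rangle$ when $n(v)$ exists), and $P_v$ is set to $\langle v, n(v)\rangle$ otherwise. In both cases this matches $P(v)|_{1}$ as defined in the appendix ($P(v)|_k$ truncates to length $k$, and is all of $P(v)$ when $P(v)$ is shorter). Likewise $n_0(v)$ is defined to be $v$ when $n(v)=\perp$ and $n(v)$ otherwise, which is exactly the last vertex of $P(v)|_1$. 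So the base case is immediate from unwinding the definitions; the only thing to be careful about is the boundary convention when the path terminates early.

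\textbf{Inductive step.} Assume the claim holds after iteration $i-1$, i.e., for every $w\in V$, the value of $P_w$ at that point equals $P(w)|_{2^{i-1}}$ and $n_{i-1}(w)$ is its last vertex. In iteration $i$ the procedure performs $P_v \leftarrow P_v \circ P_{n_{i-1}(v)}$ and $n_i(v) \leftarrow n_{i-1}(n_{i-1}(v))$. The key observation is that $P(v)$ is determined by the functional graph structure: following $n(\cdot)$ repeatedly. Let $w = n_{i-1}(v)$ be the vertex at position $2^{i-1}$ along $P(v)$ (using the inductive hypothesis). Then the suffix of $P(v)$ starting at position $2^{i-1}$ is exactly $P(w)$ — this is because $P(\cdot)$ is defined purely by iterating $n(\cdot)$, and the tail of one such walk is again such a walk. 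Hence truncating that suffix to length $2^{i-1}$ gives $P(w)|_{2^{i-1}} = P_{n_{i-1}(v)}$ by the inductive hypothesis applied to $w$. Concatenating $P(v)|_{2^{i-1}}$ with $P(w)|_{2^{i-1}}$ yields $P(v)|_{2^{i-1}+2^{i-1}} = P(v)|_{2^i}$, as desired. Similarly, $n_{i-1}(n_{i-1}(v))$ is the last vertex of that concatenation, which is the vertex at position $2^i$ along $P(v)$, i.e., the last vertex of $P(v)|_{2^i}$. There is a subtlety when $P(v)$ has length less than $2^{i-1}$: then $w = n_{i-1}(v)$ equals the last (sink) vertex of $P(v)$, its out-degree is $0$, so $P_w = \langle w\rangle$ and $n_{i-1}(w) = w$; concatenating appends nothing new (modulo a harmless repeated endpoint), and both $P_v$ and $n_i(v)$ remain correct for the truncated path. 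I would state this boundary behavior explicitly as part of the argument, perhaps by adopting the convention that $P(v)|_k = P(v)$ whenever $|P(v)| \le k$ and that concatenation at a shared endpoint does not duplicate it.

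\textbf{Main obstacle.} The proof is conceptually routine — it is a standard pointer-doubling invariant — so the real work is bookkeeping: getting the off-by-one conventions consistent between the definition of $P(v)|_k$, the initialization in Step 1 (which already does one ``hop''), and the concatenation operation $\circ$ (whether it glues at the shared vertex or duplicates it). I expect the trickiest case in the write-up to be the one where the walk reaches a sink before step $i$, because then the ``last vertex'' is a fixed point of $n(\cdot)$ and one must check that further doubling iterations neither lengthen the recorded path nor corrupt $n_i(v)$. Once the convention ``sinks map to themselves and $P_v$ stops growing once it reaches one'' is fixed, every case collapses to the generic one. No deeper structural fact is needed; in particular we do not yet need to distinguish the finite-path case from the eventually-cyclic case — that distinction is only used afterward, in Step 2, to decide whether truncation is necessary.
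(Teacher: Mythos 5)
Your proof is correct and follows essentially the same route as the paper: induction on $i$, applying the hypothesis both to $v$ and to $w=n_{i-1}(v)$ and concatenating $P(v)\vert_{2^{i-1}}$ with $P(w)\vert_{2^{i-1}}$ to obtain $P(v)\vert_{2^i}$. Your extra care about the sink/truncated case and the concatenation convention is fine bookkeeping that the paper leaves implicit, but it does not change the argument.
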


\begin{proof}
    For each $v\in V$ and $0\leq i\leq h$, denote by $P^{(i)}_v$ the path $P_v$ computed after the $i$'th iteration of the inner loop in step 1.
    We will show by induction on $i$ that for each $v\in V$, $P^{(i)}_v=P(v)\vert_{2^i}$, and $n_i(v)$ is the last vertex of this path.\\
    For $i=0$, $P^{(0)}_v=P(v)\vert_1$ and $n_0(v)$ is indeed the last vertex of this path.\\
    For $i>0$, by the induction hypothesis, $P^{(i-1)}_v=P(v)\vert_{2^{i-1}}$, and $n_{i-1}(v)$ is the last vertex of this path, and $P^{(i-1)}_{n_{i-1}(v)}=P(n_{i-1}(v))\vert_{2^{i-1}}$ is the path from $n_{i-1}(v)$ to $n_i(v)=n_{i-1}(n_{i-1}(v))$. Hence $P_v^{(i)}=P_v^{(i-1)}\circ P_{n_{i-1}(v)}^{(i-1)}=P(v)\vert_{2^{i-1}}\circ P(n_{i-1}(v))\vert_{2^{i-1}}=P(v)\vert_{2^i}$ and $n_i(v)$ is the last vertex in this path.
\end{proof}

We are now ready to prove the correctness of the algorithm and analyse its complexity.
\maxPath*

\begin{proof}
    We first show that the algorithm indeed returns the maximal path of $r$.
    By Claim \ref{maximal path properties}, after step 1 of the algorithm, $P_r=P(r)\vert_{2^h}$, and $n_h(r)$ is the last vertex in this path. Recall that $P(r)\vert_{2^h}$ is either the maximal path of $r$ (if the path $P(r)$ is finite) or contains the maximal path of $r$. Hence, if $\deg_{\text{out}}(n_h(r))=0$, then $P_r=P(r)\vert_{2^h}=P(r)$ is  finite and it is the maximal path of $r$. Otherwise, in the for loops in step 2, for each vertex we compute its index in $P(r)$ and restrict the path $P_r$ to the maximal index in the graph. Observe that the restricted path is indeed the maximal path of $r$. This is because the index of the neighbor $u$ of the last vertex $v$ in the restricted path is smaller than the index of the last vertex of the restricted path, i.e., the vertex $u$ already appeared in the path.
    
    We now analyse the complexity of the algorithm. Step 1 of the algorithm assigns a processor to each vertex of the graph. It consists of $O(\log n)$ iterations, and each iteration requires $O(1)$ time. Hence this part requires $O(\log n)$ time using $O(n)$ processors.
    The first for loop in step 2 assigns a processor for each index $i\in\{1,2,...,n\}$, that computes the $i$'th vertex in $P_r$ (if exists) using the pointers $n_i(v)$ we computed in step 1. This process requires $O(\log n)$ time using $O(n)$ processors. 
    Computing the minimal index of each vertex $v\in V$ can be done in parallel in $O(\log n)$ time using $O(n)$ processors (by sorting the array $A$).
    Finally, finding the vertex with the maximal index requires $O(\log n)$ time using $O(n)$ processors as well.
    Hence the algorithm computes a maximal path starting at $r$ in $O(\log n)$ time using $O(n)$ processors.
\end{proof}

\section{Vertex-Coloring 
Algorithms}\label{Vertex-Coloring Algorithm}
In this appendix we adapt some known distributed vertex coloring algorithms to the $\mathrm{CRCW\,\,PRAM}$ model. We start by defining the distributed $\mathsf{LOCAL}$ model of computation.

\textbf{Distributed Synchronous Message Passing Model.}
In the $\mathsf{LOCAL}$ model of distributed computation, we are given an $n$-vertex graph $G$, which abstracts a communication network where each vertex plays the role of a processor, and edges represent communication links. Every vertex is given a unique $\Theta(\log n)$-bit identifier. Initially, each vertex knows its own identifier, as well as $n$ (the number of vertices) and perhaps some other global parameters, such as the maximum degree $\Delta$ of $G$. The computation proceeds in rounds. During each round, the vertices first perform arbitrary local computations and then synchronously broadcast messages to all their neighbors. At the end, each vertex should output its part of the global solution (for instance, its own color or, in the context of edge-coloring, the colors of the edges incident to it). The only measure of efficiency for such an algorithm is the worst-case number of communication rounds.

The first algorithm that we adapt combines a forests decomposition technique together with 3-vertex-coloring of forests.
As a first step towards the $(\Delta+1)$-vertex-coloring, we describe a 3-vertex-coloring algorithm due to Cole and Vishkin~\cite{cole1986deterministic} of \emph{oriented forests}.

\begin{definition}[Oriented Forest]
    A directed graph $F=(V,E)$ is an \emph{oriented forest} if the out-degree of each $v\in V$ is at most 1, and the underlying undirected graph of $F$ is acyclic. (An underlying undirected graph of a directed graph $F=(V,E)$ is a graph $G=(V,E')$, where $E'=\{(u,v)\mid \langle u,v\rangle\in E\}$.)
    \end{definition}

\subsection{3-Vertex-Coloring of Oriented Forests}\label{sec: 3-Vertex-Coloring of Oriented Forests}
In this subsection we present the algorithm due to~\cite{cole1986deterministic} for 3-vertex-coloring oriented forests. As a first step, we present a 6-vertex-coloring algorithm for oriented forests, and then show how it can be further improved into a 3-vertex-coloring. The 6-vertex-coloring algorithm proceeds in phases. The idea of the algorithm is to start with some proper vertex-coloring, that might use a large palette, and in each phase reduce the number of colors used by the constructed vertex-coloring. The coloring stays proper after every phase. 
We start by presenting the "color-reducing" algorithm, called Procedure $\textsc{Forest-Reduce-Colors}$, that is performed in each phase of the 6-vertex-coloring algorithm. In this algorithm each vertex in the graph, in parallel, defines its new color. In order to keep the coloring proper, each vertex will have the property that its new color is different from the color of its parent in the forest.\\
Before we provide the full description of the algorithm, we present some notation that we will use in the algorithm.
Let $F=(V,E)$ be an oriented forest, and $\varphi$ be a proper vertex-coloring of $F$. For a vertex $v\in V$, we denote by $p(v)$ the successor (parent) of $v$ in the input forest $F$. For a color (integer) $c$, denote by $c_i$ the $i$'th bit in the binary representation of $c$. Note that for a root $v$, of a tree connected component in $F$, we have $p(v)=\emptyset$. We now present Procedure \textsc{Forest-Reduce-Colors}. See Figure \ref{fig:procedure forest reduce colors}, for an illustration of the algorithm.\\
\\$\textsc{Procedure}$ $\textsc{Forest-Reduce-Colors}$ $\left(F=(V,E), \varphi\right)$
\begin{description}
    \item{\textbf{Step 1.}} \textbf{For} each $v\in V$ in parallel \textbf{do}\\
    \makebox[1.15cm]{}- Find the smallest index $i$, such that $\varphi(v)_i\neq \varphi(p(v))_i$.\\
    \makebox[1.15cm]{}\Comment{if $p(v)=\emptyset$, define $i=0$}\\
    \makebox[1.15cm]{}- Define $\varphi(v)\leftarrow (i,\varphi(v)_i)$. \Comment{$\varphi(v)\leftarrow 2\cdot i+\varphi(v)_i$}
\end{description}

\begin{figure}
    \centering
    \begin{tikzpicture}

        \begin{scope}[xshift=-4cm]
        \node[circle,fill=green] at (0,0) (center) {};
        \node at (0,0) {$v$};
        \node[circle,fill=red] at (0,1.2) (n1) {};
        \node[circle,fill=white,draw=black] at (1,-1.2) (n2) {};
        \node[circle,fill=white,draw=black] at (-1,-1.2) (n3) {};
        \node[circle,fill=white,draw=black] at (0,-1.2) (n4) {};
        \node[circle,fill=white,draw=black] at (-2,0) (n5) {};
        \node[circle,fill=blue] at (0,2.4) (n7) {};

        \node at (1.5,0)  {$\varphi(v)=100110_2$};
        \node at (1.8,1.2)  {$\varphi(p(v))=111110_2$};
        \node at (2.04,2.4)  {$\varphi(p(p(v)))=100100_2$};

        \node[circle] at (1.56,-0.2)  {\tiny 5};
        \node[circle] at (1.74,-0.2)  {\tiny 4};
        \node[circle] at (1.94,-0.2)  {\tiny 3};
        \node[circle] at (2.13,-0.2)  {\tiny 2};
        \node[circle] at (2.33,-0.2)  {\tiny 1};
        \node[circle] at (2.52,-0.2)  {\tiny 0};
        
        \node[circle] at (2.1,1)  {\tiny 5};
        \node[circle] at (2.28,1)  {\tiny 4};
        \node[circle] at (2.48,1)  {\tiny 3};
        \node[circle] at (2.68,1)  {\tiny 2};
        \node[circle] at (2.87,1)  {\tiny 1};
        \node[circle] at (3.05,1)  {\tiny 0};

        \node[circle] at (2.62,2.2)  {\tiny 5};
        \node[circle] at (2.79,2.2)  {\tiny 4};
        \node[circle] at (2.99,2.2)  {\tiny 3};
        \node[circle] at (3.17,2.2)  {\tiny 2};
        \node[circle] at (3.36,2.2)  {\tiny 1};
        \node[circle] at (3.55,2.2)  {\tiny 0};

        \draw[draw=red] (1.95,-0.04) ellipse (0.1cm and 0.3cm);
        \draw[draw=red] (2.5,1.16) ellipse (0.1cm and 0.3cm);

        \draw[draw=blue] (2.89,1.16) ellipse (0.1cm and 0.3cm);
        \draw[draw=blue] (3.37,2.36) ellipse (0.1cm and 0.3cm);
    
        \draw[-{Stealth},line width=0.7mm] (center)--(n1);
        \draw[-{Stealth},line width=0.7mm] (n2)--(center);
        \draw[-{Stealth},line width=0.7mm] (n3)--(center);
        \draw[-{Stealth},line width=0.7mm] (n4)--(center);
        \draw[-{Stealth},line width=0.7mm] (n5)--(n1);
        \draw[-{Stealth},line width=0.7mm] (n1)--(n7);

        \end{scope}

        \begin{scope}[xshift=3cm]
        \node[circle,fill=pink] at (0,0) (center) {};
        \node at (0,0) {$v$};
        \node[circle,fill=orange] at (0,1.2) (n1) {};
        \node[circle,fill=white,draw=black] at (1,-1.2) (n2) {};
        \node[circle,fill=white,draw=black] at (-1,-1.2) (n3) {};
        \node[circle,fill=white,draw=black] at (0,-1.2) (n4) {};
        \node[circle,fill=white,draw=black] at (-2,0) (n5) {};
        \node[circle,fill=purple] at (0,2.4) (n7) {};

        \node at (1.95,0)  {$\varphi'(v)=(3,0)=110_2$};
        \node at (2.1,1.2)  {$\varphi'(p(v))=(1,1)=11_2$};
        \node at (2.3,2.4)  {$\varphi'(p(p(v)))=(0,0)=0_2$};

        \draw[-{Stealth},line width=0.7mm] (center)--(n1);
        \draw[-{Stealth},line width=0.7mm] (n2)--(center);
        \draw[-{Stealth},line width=0.7mm] (n3)--(center);
        \draw[-{Stealth},line width=0.7mm] (n4)--(center);
        \draw[-{Stealth},line width=0.7mm] (n5)--(n1);
        \draw[-{Stealth},line width=0.7mm] (n1)--(n7);

        \end{scope}

        \draw [-{Stealth[length=5mm]}] (-0.9,0.6) -- (1.1,0.6);

    \end{tikzpicture}

    \caption{An illustration Procedure \textsc{Forest-Reduce-Colors}}
    \label{fig:procedure forest reduce colors}
\end{figure}
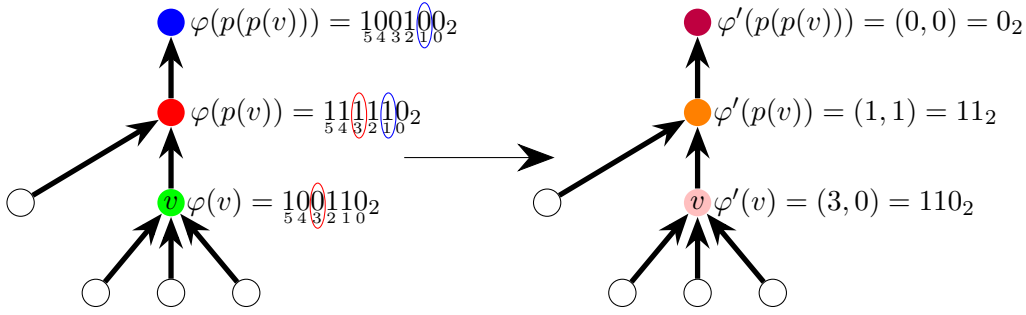

We first analyse the complexity of Procedure \textsc{Forest-Reduce-Colors}.

\begin{lemma}[Complexity of Procedure \textsc{Forest-Reduce-Colors}]\label{forest-reduce-colors complexity}
    Let $F=(V,E)$ be an $n$-vertex $m$-edge oriented forest, and let $\varphi$ be a proper vertex-coloring of $F$.
    Procedure \textsc{Forest-Reduce-Colors} requires $O\left(1\right)$ time using $O\left(m\right)$ processors.
\end{lemma}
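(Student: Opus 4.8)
\textbf{Proof plan for Lemma \ref{forest-reduce-colors complexity}.}

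The plan is to describe a direct $\mathrm{CRCW\,\,PRAM}$ implementation of Procedure \textsc{Forest-Reduce-Colors} and observe that every vertex performs only a constant amount of work, once the necessary parent colors are available. First I would set up the processor allocation: assign one processor $p_v$ to each vertex $v\in V$. Since we are in the $\mathrm{CRCW\,\,PRAM}$ model with the input stored by adjacency lists and a processor designated to every vertex and every edge, the parent pointer $p(v)$ can be maintained explicitly (the orientation of the forest gives, for each $v$, its unique outgoing arc, if any), so processor $p_v$ can read $\varphi(v)$ and $\varphi(p(v))$ in $O(1)$ time. Here $O(m)$ processors suffice since $|V|\le m+1$ (we assumed no isolated vertices, or alternatively we may simply charge one processor per vertex and per edge, giving $O(n+m)=O(m)$).

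Next I would argue that the core computation — finding the smallest index $i$ with $\varphi(v)_i\neq\varphi(p(v))_i$ and setting $\varphi(v)\leftarrow 2i+\varphi(v)_i$ — takes $O(1)$ time per processor. The key point is that the colors are integers bounded by the current palette size, which in the first phase is at most $n$, so each color has $O(\log n)$ bits and fits in $O(1)$ machine words; computing the index of the lowest-order differing bit of two $O(\log n)$-bit integers is a standard unit-cost word operation (e.g. XOR the two values and extract the lowest set bit), hence $O(1)$ time on a single processor. For roots, where $p(v)=\emptyset$, the processor simply uses the convention $i=0$. All $n$ processors do this simultaneously and write their new colors to the arrays storing $\varphi$; there are no write conflicts since each $p_v$ writes only to $v$'s own cell, so even the weaker EREW write discipline would do, and certainly $\mathrm{ARBITRARY\,\,CRCW}$ does.

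Finally I would collect these observations: Step 1 of the procedure is a single parallel round in which each of the $O(n)=O(m)$ processors performs $O(1)$ work, so the total running time is $O(1)$ using $O(m)$ processors, as claimed. I do not expect any genuine obstacle here — the statement is essentially a bookkeeping lemma — but the one point that deserves a sentence of care is justifying that the "find the smallest differing bit" operation is $O(1)$ rather than $O(\log\log n)$; this relies on the standard word-RAM/PRAM assumption that such bit-manipulation primitives on $O(\log n)$-bit words are unit cost, which is the same model convention already implicitly used elsewhere in the paper when colors and indices are manipulated in $O(1)$ time.
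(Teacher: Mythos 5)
Your implementation is essentially the paper's own proof: compute the XOR of $\varphi(v)$ and $\varphi(p(v))$, extract the least significant set bit as a unit-cost word operation on $O(\log n)$-bit colors, and form the new color $2i+\varphi(v)_i$, all in one parallel round of $O(1)$ work per participating vertex with no write conflicts.

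The one point where your accounting is off is the processor bound. You assign a processor to \emph{every} vertex of $V$ and justify $O(m)$ processors via ``$|V|\le m+1$'' or ``$O(n+m)=O(m)$''; both hinge on $n=O(m)$, i.e.\ on $F$ having no (or few) isolated vertices. But the paper's blanket no-isolated-vertices assumption applies to the input graph $G$ of the edge-coloring problem, not to the oriented forests to which this procedure is applied: in the forest decomposition used by Procedure \textsc{Vertex-Coloring}, each forest has vertex set $V$ but possibly very few edges, and the paper explicitly flags that $m$ may be significantly smaller than $n$ for these forests (this is exactly why the lemma is stated with $O(m)$ rather than $O(n)$ processors). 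The paper's proof avoids the issue by designating processors only to the non-isolated vertices, of which there are at most $2m$; isolated vertices need no information from a parent and their recoloring is immaterial (it can be handled once, e.g.\ at the initialization step, which is charged $O(n)$ processors separately). So your argument needs the one-line repair of restricting the per-vertex processors to non-isolated vertices; with that change it coincides with the paper's proof.
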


\begin{proof}
    Let $v\in V$ be a non-isolated vertex of $F$. We first bound the time required to compute the smallest index $i$, such that $c(v)_i\neq c(p(v))_i$. This can be done by computing $\mathsf{XOR}$ of the two colors, and finding the least significant bit of the result. The latter is a single-word operation, and can be executed in $O(1)$ time in $\mathrm{CRCW}$ $\mathrm{PRAM}$.
    The computation of the new color requires $O(1)$ time and $O(1)$ processors for each non-isolated vertex. The number of non-isolated vertices is linear in the number of edges of $F$. Hence overall, applying this procedure on all the vertices in parallel requires $O(1)$ time using $O(m)$ processors.
\end{proof}

Using Procedure \textsc{Forest-Reduce-Colors}, we now present the 6-vertex-coloring algorithm. As we described above, the algorithm starts with some proper vertex-coloring. To this end, we will define an initial $n$-vertex-coloring, in which each vertex $v$ gets a unique color according to its unique id, $\text{id}(v)$. The algorithm is composed of $\log^* n$ phases. In each of them, it improves the coloring using Procedure \textsc{Forest-Reduce-Colors}.\\
This pseudocode is provided in Procedure \textsc{6-Vertex-Coloring-Forests}.\\
\\$\textsc{Procedure}$ $\textsc{6-Vertex-Coloring-Forests}$ $\left(F=(V,E)\right)$
\begin{description}
    \item{\textbf{Step 1.}} \textbf{For} each $v\in V$ in parallel \textbf{do}\\
    \makebox[1.15cm]{} Define $\varphi(v)\leftarrow \text{id}(v)$
    \item{\textbf{Step 2.}} \textbf{For} $i=1,2,...,\log^*n$ \textbf{do}\\
    \makebox[1.15cm]{} \textsc{Forest-Reduce-Colors}$(F,\varphi)$
\end{description}

For $i\in\{0,1,2,...,\log^* n\}$, we denote by $\varphi_i$ the coloring $\varphi$ after the $i$'th iteration of the for loop in step 2. We start by analysing the size of the palette used by each coloring $\varphi_i$. See, e.g.,~\cite{doi:10.1137/1.9780898719772}, Chapter 7, for a proof of the next lemma.

\begin{lemma}[Size of palettes]\label{Bound on the coloring size}
    Let $F=(V,E)$ be an $n$-vertex oriented forest.
    For each $i\in\{0,1,2,...,\log^* n\}$, we have $|\varphi_i|=O\left(\log^{(i)}n\right)$.
\end{lemma}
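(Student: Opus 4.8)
\textbf{Proof plan for Lemma \ref{Bound on the coloring size} (size of palettes).}

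The plan is to prove the bound $|\varphi_i| = O(\log^{(i)} n)$ by induction on $i$, exploiting the precise definition of Procedure \textsc{Forest-Reduce-Colors}. The key invariant I would track is that $\varphi_i$ is a proper vertex-coloring in which every color is an integer in $\{0, 1, \ldots, C_i - 1\}$ for a palette bound $C_i$ that satisfies $C_{i+1} \le 2\lceil \log_2 C_i \rceil$ (and $C_0 = n$, since the initial coloring assigns each vertex its $\Theta(\log n)$-bit identifier, which lies in a range of size at most $n^{O(1)}$; if identifiers occupy $\Theta(\log n)$ bits one takes $C_0 = \mathrm{poly}(n)$, which does not affect the asymptotics). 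Unrolling the recurrence $C_{i+1} \le 2\lceil\log_2 C_i\rceil$ a constant number of extra times to absorb the factor $2$ and the ceiling then yields $C_i = O(\log^{(i)} n)$, which is exactly the claim.

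First I would establish the single-step shrinkage. Fix a phase and a non-root vertex $v$ with current color $\varphi(v) \in \{0,\ldots,C-1\}$ and parent color $\varphi(p(v)) \in \{0,\ldots,C-1\}$. Since $\varphi$ is proper, $\varphi(v) \ne \varphi(p(v))$, so their binary representations (each of length at most $\lceil\log_2 C\rceil$ bits) differ in at least one position; let $i$ be the least such index, so $0 \le i \le \lceil\log_2 C\rceil - 1$. The new color is $\varphi'(v) = 2i + \varphi(v)_i \in \{0, 1, \ldots, 2\lceil\log_2 C\rceil - 1\}$. A root is assigned $i = 0$, hence new color $0$ or $1$, which is within the same range. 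This gives the palette bound $C' = 2\lceil\log_2 C\rceil$.

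Next I would verify that properness is preserved, which is the part that actually requires the forest structure and is the main thing to get right. Consider an edge $\langle u, p(u)\rangle$. We must show $\varphi'(u) \ne \varphi'(p(u))$. Write $i_u$ for the index chosen at $u$ and $i_{p(u)}$ for the index chosen at $p(u)$. If $i_u \ne i_{p(u)}$, then $\varphi'(u) = 2i_u + \varphi(u)_{i_u}$ and $\varphi'(p(u)) = 2i_{p(u)} + \varphi(p(u))_{i_{p(u)}}$ have different quotients upon division by $2$, hence are distinct. If $i_u = i_{p(u)} =: i$, then by the defining property of $i_u$ we have $\varphi(u)_i \ne \varphi(p(u))_i$, so the low-order bits differ, and again $\varphi'(u) \ne \varphi'(p(u))$. (Here I rely on the fact that in an oriented forest every vertex has out-degree at most $1$, so "the parent" is well-defined and the comparison at $u$ is with the same vertex $p(u)$ whose new color we compare against — there is no ambiguity. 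Edges incident on roots are handled by the root convention.) This shows each $\varphi_i$ is proper, so the induction carries through and the recurrence for $C_i$ is valid.

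The main obstacle — really the only subtle point — is handling roots and the edge case where the starting palette is already small (so that $2\lceil\log_2 C\rceil$ could exceed $C$). For the latter, once $C$ drops to a constant the quantity $\log^{(i)} n$ has already stabilized at a constant, so the stated $O(\log^{(i)} n)$ bound holds vacuously from that phase on; one just notes that the sequence $C_i$ is eventually constant and bounded by an absolute constant (this is where the "$6$" in the companion $6$-vertex-coloring statement comes from, since iterating $C \mapsto 2\lceil\log_2 C\rceil$ from any value settles into the cycle around $6$). I would phrase the final bound carefully: for $i \le \log^* n$, after at most $i$ applications of the recurrence starting from $C_0 = \mathrm{poly}(n)$ one has $C_i = O(\log^{(i-O(1))} n) = O(\log^{(i)} n)$ once $i$ is large enough, and trivially otherwise. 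Since the lemma only claims an asymptotic bound, this shifting of the index by a constant is harmless, and I would cite \cite{doi:10.1137/1.9780898719772}, Chapter 7, for the standard form of this argument rather than belaboring the constants.
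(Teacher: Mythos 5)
Your proposal is correct and is exactly the standard Cole--Vishkin palette-shrinkage induction: the paper itself supplies no proof for this lemma and simply defers to Chapter 7 of the textbook it cites, and your argument (single-step bound $C' \le 2\lceil\log_2 C\rceil$, properness via the case split $i_u \ne i_{p(u)}$ versus $i_u = i_{p(u)}$ with the root convention, then unrolling the recurrence) is the argument given there.

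One small caution on the last step: the identity you invoke, $O\left(\log^{(i-O(1))} n\right) = O\left(\log^{(i)} n\right)$, is false as a general statement (iterated logarithms drop too fast for an index shift to be absorbed), so do not phrase the conclusion that way. The clean fix is to carry an additive constant through the induction, e.g.\ show $C_i \le 2\log^{(i)} n + K$ for a fixed absolute constant $K$ (one checks $2\log_2(2+K)+2 \le K$ for, say, $K=10$), valid for all $i \le \log^* n$ since $\log^{(i)} n \ge 1$ in that range; this gives a uniform constant and also yields $C_i = O(1)$ at $i = \log^* n$, which is the regime your ``settles near $6$'' remark is meant to cover.
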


We are now ready to analyse the complexity of Procedure \textsc{6-Vertex-Coloring-Forests}.

\begin{lemma}[Complexity of Procedure \textsc{6-Vertex-Coloring-Forests}]\label{6-vertex-coloring complexity}
    Let $F=(V,E)$ be an $n$-vertex $m$-edge oriented forest.
    Step 1 of Procedure \textsc{6-Vertex-Coloring-Forests} requires $O(1)$ time and $O(n)$ processors. Step 2, i.e., the rest of it, requires $O\left(\log^* n\right)$ time using $O\left(m\right)$ processors. (Note that $m$ might be significantly smaller than $n$.)
\end{lemma}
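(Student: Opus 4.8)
The plan is to analyze each of the two steps of Procedure \textsc{6-Vertex-Coloring-Forests} separately, since the statement makes separate claims about Step 1 and Step 2.

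For Step 1, the analysis is immediate: we assign one processor $p_v$ to each vertex $v \in V$, and $p_v$ sets $\varphi(v) \leftarrow \mathrm{id}(v)$ in a single step. This uses $O(n)$ processors and $O(1)$ time, and the resulting coloring is proper because the identifiers are pairwise distinct. Note that at this point the palette has size $n$, so $\varphi_0$ is indeed an $n$-vertex-coloring, consistent with Lemma \ref{Bound on the coloring size} for $i=0$.

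For Step 2, I would argue as follows. The loop runs for $\log^* n$ iterations, and each iteration is a single invocation of Procedure \textsc{Forest-Reduce-Colors}. By Lemma \ref{forest-reduce-colors complexity}, each such invocation requires $O(1)$ time using $O(m)$ processors. Summing over the $\log^* n$ iterations, Step 2 requires $O(\log^* n)$ time. The number of processors does not accumulate — each iteration reuses (at most) $O(m)$ processors — so the processor bound is $O(m)$. One subtlety worth checking is whether the colors stay bounded enough that the $\mathsf{XOR}$-and-least-significant-bit trick inside \textsc{Forest-Reduce-Colors} remains a single-word operation throughout all $\log^* n$ phases; this is handled by Lemma \ref{Bound on the coloring size}, which guarantees $|\varphi_i| = O(\log^{(i)} n)$, so the colors only shrink and the word-size assumption of Lemma \ref{forest-reduce-colors complexity} continues to apply in every phase. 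I would also remark (as the lemma statement does parenthetically) that $m$ may be much smaller than $n$ — isolated vertices contribute to $n$ but not to $m$ — which is why the processor count for Step 2 is stated in terms of $m$ rather than $n$; \textsc{Forest-Reduce-Colors} only touches non-isolated vertices.

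I do not expect any real obstacle here: the lemma is essentially just bookkeeping on top of Lemmas \ref{forest-reduce-colors complexity} and \ref{Bound on the coloring size}. The only point requiring a sentence of care is the interplay between the two — one must invoke the palette-size bound to justify that the per-phase $O(1)$-time bound of \textsc{Forest-Reduce-Colors} is valid in every one of the $\log^* n$ phases, and not merely in the first.
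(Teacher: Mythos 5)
Your proposal is correct and matches the paper's own proof: Step 1 is handled by one processor per vertex in $O(1)$ time with $O(n)$ processors, and Step 2 is obtained by applying Lemma \ref{forest-reduce-colors complexity} to each of the $\log^* n$ iterations, summing to $O(\log^* n)$ time with $O(m)$ processors reused across iterations. Your extra remarks about Lemma \ref{Bound on the coloring size} and isolated vertices are sound but not needed beyond what the paper's argument already uses.
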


\begin{proof}
    First, observe that the construction of $\varphi$ in step 1 requires $O(1)$ time using $O(n)$ processors. By Lemma \ref{forest-reduce-colors complexity}, for each $i\in\{1,2,...,\log^*n\}$, the $i$'th iteration of the for loop in step 2 requires $O\left(1\right)$ time using $O(m)$ processors. Hence, overall, step 2 of Procedure \textsc{Forest-Reduce-Colors} requires $\sum_{i=1}^{\log^*n}O(1)=O\left(\log^* n\right)$ time using $O\left(m\right)$ processors.
\end{proof}

The correctness of this algorithm follows from the correctness of the algorithm of Cole and Vishkin~\cite{cole1986deterministic}. We summarize this result in the following theorem.

\begin{theorem}[Properties of Procedure \textsc{6-Vertex-Coloring-Forests}]
    Let $F=(V,E)$ be an $n$-vertex $m$-edge oriented forest.
    Procedure \textsc{6-Vertex-Coloring-Forests} computes a 6-vertex-coloring of $F$ in $O\left(\log^* n\right)$ time using $O\left(n\right)$ processors. Moreover, step 1 of the procedure requires $O(1)$ time using $O(n)$ processors, while its step 2 (the rest of the procedure) requires $O\left(\log^* n\right)$ time using $O\left(m\right)$ processors.
\end{theorem}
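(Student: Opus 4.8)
The plan is to assemble the theorem from three ingredients: (i) the invariant that Procedure \textsc{Forest-Reduce-Colors} maps a proper vertex-coloring to a proper vertex-coloring; (ii) the quantitative palette-shrinkage already recorded in Lemma~\ref{Bound on the coloring size}, together with a fixed-point argument showing the palette stabilizes at $6$ colors; and (iii) the running-time and processor bounds from Lemma~\ref{forest-reduce-colors complexity} and Lemma~\ref{6-vertex-coloring complexity}. Correctness of the color-reduction step is exactly the Cole--Vishkin scheme~\cite{cole1986deterministic}, so the main work is to spell out the two invariants cleanly and then invoke the complexity lemmas.

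First I would verify properness. Suppose $\varphi$ is proper and consider an edge $\langle v,p(v)\rangle$. Let $i$ be the smallest index with $\varphi(v)_i\neq\varphi(p(v))_i$ (which exists since $\varphi(v)\neq\varphi(p(v))$), so the new color of $v$ is $(i,\varphi(v)_i)$; similarly the new color of $p(v)$ is $(j,\varphi(p(v))_j)$ for the corresponding smallest index $j$ (or $(0,\varphi(p(v))_0)$ if $p(v)$ is a root). If $i\neq j$ the two new colors differ in their first component; if $i=j$ then, by the choice of $i$, they differ in the second component. Either way the new coloring is proper. The root case is handled the same way: a root has no outgoing constraint, and any child $v$ of a root $u$ falls under the argument above with $j=0$.

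Next I would bound the palette. If $\varphi$ uses colors in $\{0,\dots,k-1\}$, then each new color is a pair $(i,b)$ with $i\in\{0,\dots,\lceil\log_2 k\rceil-1\}$ and $b\in\{0,1\}$, hence the new palette has size at most $2\lceil\log_2 k\rceil$. Once $k\le 8$ this is at most $6$, and once $k\le 6$ it stays at most $6$ (with $3$ bits the index lies in $\{0,1,2\}$), so $6$ is a fixed point of the reduction. Combining this with Lemma~\ref{Bound on the coloring size}, after $\log^* n$ iterations the palette size is $O\!\left(\log^{(\log^* n)} n\right)=O(1)$, and a direct check of the constants (the standard Cole--Vishkin analysis) shows it has dropped to at most $8$, hence after at most one further application it is at most $6$; absorbing this into the $O(\log^* n)$ iteration count, Step~2 outputs a proper $6$-vertex-coloring. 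Finally, Step~1 assigns the initial identifier coloring in $O(1)$ time with $O(n)$ processors, and by Lemma~\ref{6-vertex-coloring complexity} Step~2 runs in $O(\log^* n)$ time with $O(m)$ processors; together this gives $O(\log^* n)$ time and $O(n+m)=O(n)$ processors overall, with the refined per-step bounds as stated. The main obstacle is not any single step but the bookkeeping in the fixed-point/constant analysis: one must be careful that the $\log^* n$ iterations actually drive the palette all the way down to $6$ (and that the reduction never overshoots and then grows back), which is where the precise ceiling estimates and the observation that $6$ is a stable value are essential.
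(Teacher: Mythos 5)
Your proposal is correct and follows essentially the same route as the paper: the paper defers correctness to the Cole--Vishkin argument of~\cite{cole1986deterministic} and invokes Lemma~\ref{Bound on the coloring size} and Lemma~\ref{6-vertex-coloring complexity} for the palette and complexity bounds, while you simply reproduce the standard Cole--Vishkin properness and palette-shrinkage argument inline before invoking the same complexity lemmas. Your added care about the final constant (that $\log^* n$ iterations leave an $O(1)$-size palette and at most a constant number of further reductions reach the fixed point $6$, absorbed into the $O(\log^* n)$ bound) is a detail the paper glosses over by citation, and you handle it correctly.
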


We now move on and describe how to improve this 6-vertex-coloring to the desired 3-vertex-coloring. To this end, we now present a simple algorithm that given an integer $k\geq 4$ and a $k$-vertex-coloring of an oriented forest, reduces one color of the coloring, i.e., the algorithm computes a proper $(k-1)$-vertex-coloring of the forest. The algorithm consists of two steps. At the first step, it recolors all the vertices to the color of their parent. Observe that this step keeps the coloring proper, and provides the property that for each vertex, all its neighbors are colored using at most two different colors (one color for its parent, and one color for its children - which is its original color). Using this property, all the vertices that are colored $k$, choose in parallel a color from $\{1,2,3\}$ that is not used by their neighbors. In that way, we get rid of the color $k$, and get a new proper $(k-1)$-vertex-coloring. The pseudocode of this algorithm is provided in Procedure \textsc{Forest-Reduce-Color}.\\
\\$\textsc{Procedure}$ $\textsc{Forest-Reduce-Color}$ $\left(F=(V,E),\varphi\right)$
\begin{description}
    \item{\textbf{Step 1.}} \textbf{For} each $v\in V$ in parallel \textbf{do}\\
    \makebox[1.15cm]{} Let $\varphi'(v)\leftarrow \varphi(v)$\\
    \makebox[1.15cm]{} Define $\varphi(v)\leftarrow \varphi(p(v))$ \Comment{if $p(v)=\emptyset$}, $v$ chooses an arbitrary color in \myindent{6.8}$\{1,2,3\}\setminus \{\varphi(v)\}$\\
    \makebox[1.15cm]{} \textbf{If} $\varphi(v)=|\varphi|$ \textbf{do}\\
    \makebox[1.5cm]{} Define $\varphi(v)$ to be an arbitrary color from $\{1,2,3\}\setminus\{\varphi'(v),\varphi(p(v))\}$ 
\end{description}

Observe that this algorithm can be implemented by assigning a processor to each forest edge (or equivalently, to each non-isolated vertex), that does all the computations in $O(1)$ time. We summarize this result in the next lemma.

\begin{lemma}[Complexity of Procedure \textsc{Forest-Reduce-Color}]\label{Forest-reduce-color complexity}
    Let $F=(V,E)$ be an $n$-vertex oriented forest, $k\geq 4$ an integer, and $\varphi$ be a $k$-vertex-coloring of $F$.
    Procedure \textsc{Forest-Reduce-Color} requires $O\left(1\right)$ time using $O\left(m\right)$ processors.
\end{lemma}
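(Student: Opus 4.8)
The statement to prove is Lemma "Complexity of Procedure \textsc{Forest-Reduce-Color}": given an $n$-vertex oriented forest $F=(V,E)$ with $m$ edges, an integer $k\ge 4$, and a proper $k$-vertex-coloring $\varphi$, Procedure \textsc{Forest-Reduce-Color} runs in $O(1)$ time using $O(m)$ processors. The plan is to argue this directly from the structure of the procedure, which performs only a constant number of rounds, each consisting of a single-word local computation at each non-isolated vertex.

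First I would set up the processor assignment: allocate one processor per edge of $F$ (equivalently, per non-isolated vertex, since each such vertex has a unique parent edge except possibly roots, which we can also charge to an incident edge). This uses $O(m)$ processors. Isolated vertices need no work since the coloring restricted to them is trivially proper and unaffected. Then I would walk through the two steps of the procedure: in Step 1 each vertex $v$ saves its old color $\varphi'(v)\gets\varphi(v)$ and then overwrites $\varphi(v)\gets\varphi(p(v))$ (a root picks an arbitrary color in $\{1,2,3\}\setminus\{\varphi(v)\}$); this is $O(1)$ time per vertex. Next, each vertex with $\varphi(v)=|\varphi|=k$ recolors itself to an arbitrary color in $\{1,2,3\}\setminus\{\varphi'(v),\varphi(p(v))\}$, which is again a constant-size set difference computable in $O(1)$ time. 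Reads of $\varphi(p(v))$ are concurrent reads, permissible in the $\mathrm{CRCW\,PRAM}$ model; there are no write conflicts since each processor writes only to its own vertex's color cell. Hence the whole procedure takes $O(1)$ time with $O(m)$ processors.

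There is essentially no hard part here — the lemma is a routine complexity bookkeeping claim, and the excerpt already hints the proof ("Observe that this algorithm can be implemented by assigning a processor to each forest edge ... in $O(1)$ time"). The only subtlety worth a sentence is ensuring that Step 1's first action (recolor everyone to their parent's color) genuinely makes each vertex's neighborhood use at most two colors — its parent's new color equals the grandparent's old color, and its children's new colors all equal $v$'s old color $\varphi'(v)$ — so that the set $\{1,2,3\}\setminus\{\varphi'(v),\varphi(p(v))\}$ is nonempty and a valid proper color is available; but correctness of the recoloring is inherited from Cole–Vishkin and is not what this lemma asserts. The lemma asks only for the resource bounds, so I would keep the argument to the paragraph above.

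\begin{proof}
    We assign a processor $p_e$ to each edge $e\in E$ of the oriented forest; equivalently, this associates a processor to each non-isolated vertex via its (unique) parent edge, using $O(m)$ processors in total. Isolated vertices require no computation, as the coloring is unchanged at them. In Step 1, each non-isolated vertex $v$ first copies $\varphi'(v)\gets\varphi(v)$ and then sets $\varphi(v)\gets\varphi(p(v))$ (a root, if any, simply selects a color from the constant-size set $\{1,2,3\}\setminus\{\varphi(v)\}$); this involves reading a single color value from the parent and performing a constant amount of work, hence $O(1)$ time. Then every vertex $v$ with $\varphi(v)=|\varphi|$ recomputes $\varphi(v)$ as an arbitrary element of $\{1,2,3\}\setminus\{\varphi'(v),\varphi(p(v))\}$, again a constant-size computation. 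All reads of $\varphi(p(v))$ are concurrent reads, which are permitted in the $\mathrm{CRCW\,PRAM}$ model, and each processor writes only to the color cell of its own vertex, so there are no write conflicts. Therefore Procedure \textsc{Forest-Reduce-Color} runs in $O(1)$ time using $O(m)$ processors.
\end{proof}
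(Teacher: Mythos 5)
Your proof is correct and follows essentially the same route as the paper, which simply observes that one processor per forest edge (equivalently, per non-isolated vertex) suffices to carry out all the constant-size local computations of the procedure in $O(1)$ time. The extra details you supply (concurrent reads being legal in $\mathrm{CRCW\,PRAM}$, no write conflicts) are fine but not a different argument.
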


Now we are ready to combine all the parts that we described above, to devise an algorithm for 3-vertex-coloring oriented forests.\\
\\$\textsc{Procedure}$ $\textsc{3-Vertex-Coloring-Forests}$ $\left(F=(V,E)\right)$
\begin{description}
    \item{\textbf{Step 1.}} Compute a 6-vertex-coloring $\varphi$ of $F$ using Procedure \textsc{6-Vertex-Coloring-Forests}$(F)$
    \item{\textbf{Step 2.}} \textbf{For} $k=6,5,4$ \textbf{do}\\
    \makebox[1.15cm]{} \textsc{Forest-Reduce-Color}$(F,\varphi)$
\end{description}

By Lemma \ref{6-vertex-coloring complexity}, step 1 requires $O(\log^* n)$ using $O(n)$ processors, and each iteration of step 2 requires $O(1)$ time using $O(n)$ processors. Hence, Procedure \textsc{3-Vertex-Coloring-Forests} requires $O(\log^* n)$ time using $O(n)$ processors. Moreover, all steps of the algorithm except for step 1 of Procedure \textsc{6-Vertex-Coloring-Forests} (that initialize each color as identity number) requires $O(m)$ processors.
The correctness of the algorithm follows from that of the algorithm of Cole and Vishkin~\cite{cole1986deterministic}. We summarize this result in the following theorem.

\begin{theorem}[Properties of 3-vertex-coloring]\label{3-vertex-coloring properties}
    Let $F=(V,E)$ be an $n$-vertex $m$-edge oriented forest.
    Procedure \textsc{3-Vertex-Coloring-Forests} computes a 3-vertex-coloring of $F$ in $O\left(\log^* n\right)$ time using $O\left(n\right)$ processors. Moreover, all steps of the algorithm, except for the step that initializes the colors, requires $O(m)$ processors.
\end{theorem}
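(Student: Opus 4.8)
The plan is to prove the final statement --- that Procedure \textsc{3-Vertex-Coloring-Forests} computes a proper 3-vertex-coloring of an oriented forest $F=(V,E)$ in $O(\log^* n)$ time using $O(n)$ processors, with all steps except the color-initialization step requiring only $O(m)$ processors --- by assembling the two ingredients already established in the excerpt: Procedure \textsc{6-Vertex-Coloring-Forests} (with its complexity from Lemma \ref{6-vertex-coloring complexity} and correctness inherited from Cole--Vishkin~\cite{cole1986deterministic}) and Procedure \textsc{Forest-Reduce-Color} (Lemma \ref{Forest-reduce-color complexity}). Since correctness of the overall output is explicitly attributed to~\cite{cole1986deterministic}, the real content of the statement lies in the resource bounds and in verifying that the three reduction iterations of step 2 are legitimate, so the proof is essentially a bookkeeping argument.

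First I would handle correctness. By Theorem stating the properties of \textsc{6-Vertex-Coloring-Forests}, step 1 produces a proper $6$-vertex-coloring $\varphi$ of $F$. Then I would argue that each invocation of \textsc{Forest-Reduce-Color} preserves properness and reduces the palette size by one, as long as the current palette has size $k\geq 4$: after the first substep every vertex takes its parent's old color (a proper coloring, since in an oriented forest each vertex has a unique parent and its children all carried its original color), so now each vertex sees at most two distinct colors among its neighbors --- its parent's current color $\varphi(p(v))$ and its own former color $\varphi'(v)$ --- hence a vertex colored $k$ can always pick a color from $\{1,2,3\}\setminus\{\varphi'(v),\varphi(p(v))\}$, which is nonempty. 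Iterating this for $k=6,5,4$ drives the palette from $6$ down to $3$; the final coloring is proper because after the last reduction no vertex carries a color outside $\{1,2,3\}$ and the invariant that adjacent vertices differ in color is maintained throughout. (I would point again to~\cite{cole1986deterministic} for the detailed verification that these local choices never clash.)

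Next I would do the complexity accounting. By Lemma \ref{6-vertex-coloring complexity}, step 1 runs in $O(\log^* n)$ time using $O(n)$ processors, and moreover its substep 1 (assigning $\varphi(v)\leftarrow\mathrm{id}(v)$) is the only part needing $\Omega(m+1)=\omega(m)$ processors --- it needs $O(n)$ --- while the remainder of \textsc{6-Vertex-Coloring-Forests} uses only $O(m)$ processors. By Lemma \ref{Forest-reduce-color complexity}, each of the three iterations of step 2 runs in $O(1)$ time using $O(m)$ processors. Summing: total time is $O(\log^* n)+3\cdot O(1)=O(\log^* n)$, total processor count is $\max\{O(n),O(m)\}=O(n)$, and every step except the identity-initialization step fits within $O(m)$ processors. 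This yields exactly the claimed bounds.

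The main obstacle --- such as it is --- is making the "at most two neighboring colors" invariant precise in the oriented-forest setting: one must be careful that after the first substep of \textsc{Forest-Reduce-Color} the only colors a vertex $v$ can observe are $\varphi(p(v))$ (from its parent) and $\varphi'(v)$ (from each child, which previously had color $\varphi'(\text{child})$ but all re-colored to their parent $v$'s old color $\varphi'(v)$), together with handling roots, where $p(v)=\emptyset$ and $v$ simply picks any color in $\{1,2,3\}\setminus\{\varphi(v)\}$. Once this local picture is nailed down, the rest is immediate from the cited lemmas, and I expect the write-up to be short.
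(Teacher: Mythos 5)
Your proposal is correct and follows essentially the same route as the paper: step 1 is charged to the 6-coloring lemma ($O(\log^* n)$ time, $O(n)$ processors, with only the identity-initialization needing more than $O(m)$ processors), the three iterations of \textsc{Forest-Reduce-Color} are charged to its $O(1)$-time, $O(m)$-processor lemma, and correctness is delegated to Cole--Vishkin. Your extra sketch of why each reduction step stays proper (shift-down plus the "at most two neighboring colors" observation) is a harmless elaboration of what the paper leaves to the citation.
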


\subsection{An Adaptation of Linial's Distributed Algorithm to $\mathrm{PRAM}$}
Linial's Algorithm~\cite{linial1987distributive} provides $O(\Delta^2)$-vertex-coloring in $\log^* n +O(1)$ distributed time. We now argue that it can be efficiently implemented in $\mathrm{PRAM}$, and analyse its complexity. 

The algorithm is based upon a construction of $\Delta$-union-free set systems due to~\cite{erdHos1985families}.

\begin{definition}[$\Delta$-union-free set systems]
    For positive integer parameters $m$ and $\Delta$, a set system $\mathcal{F}$ over groundset $\{1,2,...,m\}$ is called \emph{$\Delta$-union-free} if for any $S_0\in \mathcal{F}$, and $S_1,S_2,...,S_\Delta\in \mathcal{F}\setminus\{S_0\}$, we have $S_0\not\subseteq\bigcup_{i=1}^{\Delta}S_i$.
\end{definition}
Erd\H{o}s et al.~\cite{erdHos1985families} provided two constructions of $\Delta$-union-free set systems with similar parameters. One of them is probabilistic, and another one is algebraic. We will now sketch their algebraic construction.\\

For a prime number $q$, consider the ring of degree-$d$ polynomials (for parameters $d$ and $q$ that will be determined below) over the field $GF(q)$ of characteristics $q$. We will build $N=q^{d+1}$ sets over the groundset $\{1,2,...,m\}$, $m=q^2$, in the following way. Note that there are $N=q^{d+1}$ degree-$d$ polynomials as above. For each such a polynomial $p(\cdot)$, let $S_p=\{(i,p(i))\mid i\in GF(q)\}$. Note that two such polynomials may intersect in at most $d$ points. Thus by setting $q>\Delta\cdot d$, we guarantee that the family is $\Delta$-union-free. We will set $q=O(\Delta\cdot d)$ (prime). Given that $N=q^{d+1}$, we conclude that $q\approx O\left(\frac{\Delta\cdot\log N}{\log(\Delta\cdot\log N)}\right)$.\\

Given a proper $N$-vertex-coloring $\varphi$ of a graph $G$ with maximum degree at most $\Delta$, one associates a set $S_c\in \mathcal{F}$ with every color $c\in \{1,2,...,N\}$. Each vertex $v$ then finds a color $\alpha\in S_{\varphi(v)}\setminus\bigcup_{u\in N(v)}S_{\varphi(u)}$, and sets $\alpha$ to be its new color. It is easy to see (cf.~\cite{10.5555/2534493}, Chapter 3.10) that as a result, one obtains a proper $O(q^2)=O\left(\frac{\Delta^2\cdot\log^2 N}{\log^2(\Delta\cdot\log N)}\right)$-vertex-coloring of the graph within one round of distributed computation. By repeating this for $\log^* n+O(1)$ rounds, one ends up with a proper $O(\Delta^2)$-vertex-coloring.\footnote{The algorithm described here leads directly to $O(\Delta^{2} \cdot \log \Delta)$-coloring. To obtain $O(\Delta^{2})$-coloring, one uses the scheme once again with $d = 2$. This additional step does not affect the asymptotic complexity of the resulting algorithm.}\\

To implement this in $\mathrm{PRAM}$ model, we designate $q$ subsets of processors for every vertex $v$. Each such a subset contains $\deg(v)$ processors. Now, for every $i\in GF(q)$, there are $\deg(v)$ processors that check (in $O(1)$ time) if there exists a neighbor $u$ of $v$ whose set $S_{\varphi(u)}$ contains the element $(i,p_{\varphi(v)}(i))$. Within additional $O(\log q)$ time, the processors then find an index $i$ such that $(i,p_{\varphi(v)}(i))$ does not belong to $\bigcup_{u\in N(v)}S_{\varphi(u)}$.\\

Overall this requires $O(\log\Delta+\log\log N)$ time, and $O(|E|\cdot q)=O\left(|E|\cdot\frac{\Delta\cdot\log n}{\log(\Delta\cdot\log n)}\right)$ processors. The operation is then iterated $\log^*n+O(1)$ time, but the number of initial colors $N_1=n,N_2,...$ rapidly decreases. As a result, the overall time complexity of Linial's algorithm is $O(\log\Delta\cdot\log^*n+\log\log n)$, and it employs $O\left(|E|\cdot\frac{\Delta\cdot\log n}{\log(\Delta\cdot\log n)}\right)$ processors.

We summarize this discussion with the following theorem.
\begin{theorem}[An adaptation of~\cite{linial1987distributive}]\label{An adaptation of [Lin87]}
    Given an $n$-vertex $m$-edge graph $G=(V,E)$ with maximum degree $\Delta$, the algorithm computes a proper $O\left(\Delta^2\right)$-vertex-coloring in $O\left(\log\Delta\cdot\log^*n+\log\log n\right)$ time, using $O\left(m\cdot\frac{\Delta\cdot\log n}{\log(\Delta\cdot\log n)}\right)$ processors.
\end{theorem}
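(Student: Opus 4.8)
\textbf{Proof plan for Theorem~\ref{An adaptation of [Lin87]}.}
The plan is to reproduce, in the PRAM setting, the two ingredients of Linial's algorithm: (i) a single color-reduction round that takes a proper $N$-vertex-coloring to a proper $O(q^2)$-coloring where $q = O\!\left(\frac{\Delta\cdot\log N}{\log(\Delta\cdot\log N)}\right)$, built from the algebraic $\Delta$-union-free set system of Erd\H{o}s et al.\ sketched above, and (ii) the iteration of this round $\log^\ast n + O(1)$ times so that the number of colors collapses from $n$ to $O(\Delta^2)$. The correctness of each round (that a vertex $v$ always finds an element of $S_{\varphi(v)}$ not covered by $\bigcup_{u\in N(v)} S_{\varphi(u)}$, and that distinct colors in $\{1,\dots,q^2\}$ are assigned to adjacent vertices) is exactly the argument of~\cite{linial1987distributive,erdHos1985families}, which I will cite rather than reprove; the same goes for the final step with $d=2$ that brings the palette from $O(\Delta^2\log\Delta)$ down to $O(\Delta^2)$.

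What actually needs to be done here is the resource accounting for the PRAM implementation of a single round. For each vertex $v$ I allocate $q$ groups of $\deg(v)$ processors, one group per index $i\in GF(q)$. The group for index $i$ computes the pair $(i, p_{\varphi(v)}(i))$ (a single field operation, $O(1)$ time since $q = \mathrm{poly}(\Delta,\log n)$ fits in a machine word) and then, with its $\deg(v)$ processors, checks in parallel whether any neighbor $u$ has $(i, p_{\varphi(u)}(i)) = (i, p_{\varphi(v)}(i))$, i.e.\ whether that element lies in $\bigcup_{u\in N(v)} S_{\varphi(u)}$; using the CRCW capability this Boolean OR over $\deg(v)$ processors is $O(1)$ time. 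Then across the $q$ indices one finds the smallest $i$ whose element is uncovered; this is a minimum over $q$ values and costs $O(\log q) = O(\log\Delta + \log\log N)$ time. Summing processor counts over all $v$ gives $\sum_v q\cdot\deg(v) = O(q\cdot m) = O\!\left(m\cdot \frac{\Delta\cdot\log N}{\log(\Delta\cdot\log N)}\right)$ processors, and one round therefore takes $O(\log\Delta + \log\log N)$ time.

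For the iteration, the number of colors before round $t$ is $N_1 = n$, $N_{t+1} = O(q_t^2) = O\!\left(\frac{\Delta^2\log^2 N_t}{\log^2(\Delta\log N_t)}\right)$, which satisfies $\log N_{t+1} = O(\log\Delta + \log\log N_t)$, so $\log N_t$ telescopes down to $O(\log\Delta)$ after $\log^\ast n + O(1)$ rounds, at which point $N_t = O(\Delta^2\log\Delta)$ and the $d=2$ step finishes with $O(\Delta^2)$ colors. The round cost is $O(\log\Delta + \log\log N_t)$, which is $O(\log\Delta\cdot\log^\ast n)$ summed over all rounds plus the single dominant term $O(\log\log n)$ from the very first round (where $N_1 = n$); hence total time $O(\log\Delta\cdot\log^\ast n + \log\log n)$. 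The processor bound is taken as the maximum over rounds, and since $N_t \le n$ always, it is $O\!\left(m\cdot \frac{\Delta\cdot\log n}{\log(\Delta\cdot\log n)}\right)$, matching the statement.

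The only mildly delicate point, and the one I would be most careful about, is justifying that the arithmetic over $GF(q)$ and the bitwise/word operations used (evaluating a degree-$d$ polynomial, comparing pairs, taking least significant bits) are genuinely $O(1)$-time single-word operations in the ARBITRARY CRCW PRAM model: this relies on $q = O(\Delta\cdot d)$ with $d = O\!\left(\frac{\log n}{\log(\Delta\log n)}\right)$, so $q = \mathrm{poly}(\Delta, \log n)$ and all quantities fit in $O(\log n)$-bit words, together with the standard convention that such PRAM word operations cost $O(1)$. Everything else is a routine transcription of the distributed argument together with the telescoping estimate for $\log N_t$.
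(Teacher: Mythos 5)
Your proposal follows essentially the same route as the paper: the same algebraic $\Delta$-union-free family of~\cite{erdHos1985families} with $q=O(\Delta\cdot d)$, the same PRAM implementation assigning $q$ groups of $\deg(v)$ processors per vertex (constant-time CRCW coverage checks plus an $O(\log q)$ selection step), the same per-round cost $O(\log\Delta+\log\log N)$ with $O(m\cdot q)$ processors, the same $\log^* n+O(1)$ iteration with the palette sizes $N_t$ telescoping, and the same final $d=2$ pass to reach $O(\Delta^2)$ colors. Your accounting and word-operation caveat match the paper's argument, so the proof is correct as proposed.
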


Barenboim and Elkin~\cite{barenboim2008sublogarithmic} extended this algorithm to graphs with bounded arboricity $a$. Specifically, they showed that given an orientation of all graph edges in which each vertex has $A=O(a)$ parents, a similar algorithm (which they called Arb-Linial) computes an $O(A^2)$-coloring of the graph in $\log^* n+O(1)$ rounds.
It is easy to see that Arb-Linial can also be implemented in $\mathrm{PRAM}$ in a similar way. We summarize this in the following corollary.

\begin{corollary}[Adaptation of~\cite{barenboim2008sublogarithmic}, Algorithm Arb-Linial]\label{Adaptation of [BE08], Algorithm Arb-Linial}
    Given an $n$-vertex $m$-edge graph $G=(V,E)$ with arboricity $a$ and an orientation of edges of $G$ such that every vertex has at most $A$ parents, Algorithm Arb-Linial computes an $O(A^2)$-coloring of $G$ in $O(\log A\cdot\log^*n+\log\log n)$ time using $O\left(m\cdot\frac{A\cdot\log n}{\log(A\cdot\log n)}\right)$ processors.
\end{corollary}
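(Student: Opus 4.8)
The plan is to prove Corollary~\ref{Adaptation of [BE08], Algorithm Arb-Linial} by transcribing the proof of Theorem~\ref{An adaptation of [Lin87]} almost verbatim, with the two modifications that the bounded-arboricity setting dictates: first, we replace the $\Delta$-union-free set system of~\cite{erdHos1985families} by an $A$-union-free one (using $q = O(A\cdot d)$ prime and $N = q^{d+1}$ degree-$d$ polynomials over $GF(q)$, exactly as in the algebraic construction sketched just before Theorem~\ref{An adaptation of [Lin87]}, but with $\Delta$ everywhere replaced by $A$); second, when a vertex $v$ recolors itself it only has to avoid the sets associated with its \emph{parents} $P(v)$ in the given orientation, of which there are at most $A$, rather than with all of its neighbors. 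Concretely, starting from the proper $N$-vertex-coloring $\varphi$ with $N=n$ and $\varphi(v)=\mathrm{id}(v)$, in one round each $v$ selects $\alpha_v \in S_{\varphi(v)}\setminus\bigcup_{u\in P(v)}S_{\varphi(u)}$; such $\alpha_v$ exists because $|P(v)|\le A$, properness of $\varphi$ guarantees that every $S_{\varphi(u)}$ with $u\in P(v)$ is distinct from $S_{\varphi(v)}$, and $\mathcal{F}$ is $A$-union-free. The new coloring is proper: for any edge oriented $u\to v$ we have $\alpha_u\in S_{\varphi(u)}$ while $\alpha_v\notin S_{\varphi(u)}$, hence $\alpha_u\neq\alpha_v$; and it uses $O(q^2)=O\!\left(\frac{A^2\log^2 N}{\log^2(A\log N)}\right)$ colors. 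Iterating this $\log^* n + O(1)$ times drives $N$ down to $O(A^2\log A)$, and one final application with $d=2$ brings it to $O(A^2)$; the correctness of this schedule is precisely the argument of~\cite{barenboim2008sublogarithmic} and requires no change.

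Next I would describe the $\mathrm{PRAM}$ implementation and bound its cost, again paralleling the discussion preceding Theorem~\ref{An adaptation of [Lin87]}. For every vertex $v$ and every $i\in GF(q)$, allocate $|P(v)|$ processors that in $O(1)$ time test whether some parent $u\in P(v)$ has $(i,p_{\varphi(v)}(i))\in S_{\varphi(u)}$ (a single polynomial evaluation over the small field $GF(q)$); within an additional $O(\log q)$ time the $q$ groups cooperatively pick an index $i$ for which $(i,p_{\varphi(v)}(i))$ is uncovered, yielding $\alpha_v$. Since $\sum_v |P(v)| = m$ (each edge contributes exactly one parent endpoint), one round uses $O(mq)=O\!\left(m\cdot\frac{A\log n}{\log(A\log n)}\right)$ processors and $O(\log q)=O(\log A+\log\log N)$ time. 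Running $\log^* n + O(1)$ rounds, with the number of current colors shrinking as $n, O\!\left(\frac{A^2\log^2 n}{\log^2(A\log n)}\right),\ldots$, the $\log\log N$ term equals $\Theta(\log\log n)$ only in the first round and is $O(\log A)$ thereafter, so the total time is $O(\log A\cdot\log^* n+\log\log n)$ while the processor bound stays $O\!\left(m\cdot\frac{A\log n}{\log(A\log n)}\right)$, as claimed.

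The main obstacle — in fact the only point that deserves care — is the bookkeeping around the switch from full neighborhoods to parent sets: one must verify that the $A$-union-free guarantee still suffices when $v$ only excludes $\bigcup_{u\in P(v)}S_{\varphi(u)}$ (this is exactly where the hypothesis that the orientation has at most $A$ parents per vertex is used, together with properness of $\varphi$ to ensure the excluded sets are genuinely distinct from $S_{\varphi(v)}$), and that using parent sets rather than full adjacency lists keeps the per-round processor count at $O(mq)$ via $\sum_v |P(v)| = m$. Everything else is a direct translation of the proof of Theorem~\ref{An adaptation of [Lin87]} into the bounded-arboricity regime.
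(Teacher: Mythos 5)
Your proposal is correct and follows essentially the same route as the paper, which simply invokes the distributed Arb-Linial algorithm of~\cite{barenboim2008sublogarithmic} and notes that the $\mathrm{PRAM}$ implementation mirrors the one given for Theorem~\ref{An adaptation of [Lin87]}; you fill in exactly that sketch (replace $\Delta$ by $A$, exclude only the parents' sets, use $\sum_{v}|P(v)|=m$ to keep the processor count at $O(mq)$, and sum the per-round $O(\log A+\log\log N_i)$ times over the $\log^* n+O(1)$ rounds with rapidly shrinking palettes). The level of detail you add, including the properness check along each oriented edge and the final $d=2$ application, is consistent with the paper's intended argument.
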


One can also trade the number of processors for running time. Given a parameter $z$, $1\leq z\leq q$, we designate $z$ subsets of $\deg(v)$ processors each, to every vertex $v$. Each subset is now in charge for up to $\left\lceil\frac{q}{z}\right\rceil$ elements $i\in GF(q)$. Hence each subset will now need $O\left(\frac{q}{z}\right)$ time to check if there exists an element $i\in GF(q)$ (among elements of $GF(q)$ for which this subset of processors is in charge), such that there exists a neighbor $u$ of $v$ whose subset $S_{\varphi(u)}$ contains $(i,p_{\varphi(v)}(i))$.
At this point we have $q$ elements $0$ or $1$, indicating for every $i\in GF(q)$ whether $(i,p_{\varphi(v)}(i))$ appears in a subset $S_{\varphi(u)}$ of one of the neighbors of $v$.
We have $Z=z\cdot\deg(v)$ processors that need to find a zero in this array. Each processor is in charge for a segment of $\frac{q}{Z}$ elements. Within $O\left(\frac{q}{Z}\right)$ time, each of the $Z$ processors finds out if its designated segment contains a zero. Within $O(1)$ additional time (recall that we consider $\mathrm{CRCW\,\,PRAM}$), the computation is over, and an element $i\in GF(q)$ such that $(i,p_{\varphi(v)}(i))$ does not appear in $\bigcup_{u\in\Gamma(v)} S_{\varphi(u)}$ is found.
Thus, the running time for this step is $O\left(\frac{q}{z\cdot\deg(v)}\right)$,
i.e., the overall time is $O\left(\frac{q}{z}\right)$, and the overall number of processors is $\sum_{v\in V}z\cdot\deg(v)=O(|E|\cdot z)$. Substituting $q=O\left(\frac{\Delta\cdot\log n}{\log(\Delta\cdot\log n)}\right)$, the running time becomes $O\left(\frac{\Delta\cdot\log n}{z\cdot\log(\Delta\cdot\log n)}+\log\log n+\log\Delta\right)$. Summing up over $\log^*n$ phases (in which the number of initial colors $N_1$, $N_2$, ... rapidly decreases), we obtain an overall time of $O\left(\frac{\Delta\cdot\log n}{z\cdot\log(\Delta\cdot\log n)}+\log\log n+\log\Delta\cdot\log^*n\right)$.
\begin{corollary}[An adaptation of~\cite{linial1987distributive}, a trade-off]\label{An adaptation of [Lin87], a trade-off}
    Given an $n$-vertex $m$-edge graph $G=(V,E)$ with maximum degree $\Delta$, and a parameter $1\leq z\leq q=O\left(\frac{\Delta\cdot\log n}{\log(\Delta\cdot\log n)}\right)$, our algorithm computes a proper $O(\Delta^2)$-vertex-coloring in 
    $$O\left(\log\Delta\cdot\log^* n+\log\log n+\frac{\Delta\cdot\log n}{z\cdot\log(\Delta\cdot\log n)}\right)$$
    time, using $O\left(m\cdot z\right)$ processors.
\end{corollary}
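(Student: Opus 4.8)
The plan is to revisit the $\mathrm{PRAM}$ implementation of Linial's algorithm underlying Theorem~\ref{An adaptation of [Lin87]} and to re-balance, in each phase, the per-vertex processor count against the depth. Recall that a single phase converts a proper $N$-vertex-coloring into a proper $O(q^2)$-coloring using the algebraic $\Delta$-union-free set system $\mathcal{F}$ over a groundset of size $m=q^2$, with $q=O(\Delta\cdot d)$ prime and $N=q^{d+1}$, so that $q=O\!\left(\frac{\Delta\cdot\log N}{\log(\Delta\cdot\log N)}\right)$; after $\log^* n+O(1)$ phases one reaches an $O(\Delta^2)$-coloring. The only nontrivial local operation in a phase is that a vertex $v$ holding color $\varphi(v)$ must find an index $i\in GF(q)$ with $(i,p_{\varphi(v)}(i))\notin\bigcup_{u\in N(v)}S_{\varphi(u)}$, which exists by the union-free property. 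First I would implement this step with a tunable degree of parallelism: allocate to $v$ exactly $z$ groups of $\deg(v)$ processors each (total $z\cdot\deg(v)$), assign to each group a block of $\lceil q/z\rceil$ elements of $GF(q)$, and have each group sequentially test its block — for each element $i$ in the block, its $\deg(v)$ processors check in $O(1)$ time whether some neighbor's set $S_{\varphi(u)}$ contains $(i,p_{\varphi(v)}(i))$ — all within $O(q/z)$ time. This produces a $\{0,1\}$-array of length $q$ marking the "bad" indices of $v$.

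Next I would locate a good index (a zero entry of that array) using the same $Z:=z\cdot\deg(v)$ processors: partition the length-$q$ array into $Z$ segments of size $O(q/Z)$, let each processor scan its segment in $O(q/Z)$ time, and then combine in $O(1)$ time using the arbitrary-CRCW concurrent write to extract one zero index. Together with the $O(\log\Delta+\log\log N)$ overhead for the arithmetic over $GF(q)$ and the polynomial bookkeeping — identical to the argument for Theorem~\ref{An adaptation of [Lin87]} — a single phase runs in $O\!\left(\frac{q}{z}+\log\Delta+\log\log N\right)$ time using $\sum_{v\in V}z\cdot\deg(v)=O(m\cdot z)$ processors.

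Finally I would sum over the $\log^* n+O(1)$ phases. The sequence of color bounds $N_1=n,N_2,\dots$ collapses at an iterated-logarithm rate, so the terms $q/z$ and $\log\log N$ are geometrically dominated by their first-phase values $O\!\left(\frac{\Delta\cdot\log n}{z\cdot\log(\Delta\cdot\log n)}\right)$ and $O(\log\log n)$, respectively, while the $\log\Delta$ term (note $\Delta$ does not change across phases) contributes $O(\log\Delta\cdot\log^* n)$. The total time is therefore $O\!\left(\log\Delta\cdot\log^* n+\log\log n+\frac{\Delta\cdot\log n}{z\cdot\log(\Delta\cdot\log n)}\right)$, and since processors are reused across phases the processor count remains $O(m\cdot z)$. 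Correctness is immediate: the algorithm outputs exactly the coloring produced in Theorem~\ref{An adaptation of [Lin87]}, so the $\Delta$-union-free analysis of Erd\H{o}s et al.~\cite{erdHos1985families} and Linial~\cite{linial1987distributive} applies verbatim, including the extra $d=2$ pass that reduces $O(\Delta^2\log\Delta)$ to $O(\Delta^2)$ without changing the asymptotics. The main obstacle I anticipate is purely in the accounting over phases: one must check that the per-phase depth $q/z$, with $q$ depending on the current bound $N$, genuinely telescopes instead of accumulating a $\log^* n$ factor, and that allocating $z$ processor groups per vertex never violates the $O(m\cdot z)$ bound even when $z$ is taken close to $q$.
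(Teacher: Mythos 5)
Your proposal is correct and follows essentially the same route as the paper: designating $z$ groups of $\deg(v)$ processors per vertex, having each group scan a block of $O(q/z)$ elements of $GF(q)$, locating a free index via segment scans plus one arbitrary-CRCW write, and charging the $q$-dependent and $\log\log N$ terms only to the first phase (since the palette sizes $N_1=n,N_2,\dots$ collapse rapidly) while the $\log\Delta$ overhead accumulates over the $\log^* n+O(1)$ phases. No gaps; the accounting matches the paper's.
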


Corollary \ref{Adaptation of [BE08], Algorithm Arb-Linial} extends similarly, with the parameter $z$ in the range 
$1\leq z=O\left(\frac{A\cdot\log n}{\log(A\cdot\log n)}\right).$

\subsection{Extension of Linial's Algorithm to Defective Coloring}\label{sec: Extension of Linial's Algorithm to Defective Coloring}
\begin{definition}[Defective coloring]
    A vertex-coloring $\varphi$ is called \emph{$D$-defective} (for a parameter $D$) if for every vertex $v$, there are at most $D$ neighbors of $v$ colored by $\varphi(v)$.
\end{definition}

Barenboim et al.~\cite{doi:10.1137/12088848X} extended Linial's algorithm to defective coloring. Specifically, they showed that for any parameter $p\leq \Delta$, a $\left(\frac{\Delta}{p}\right)$-defective $O(p^2)$-coloring can be computed in $\log^*n+O(1)$ distributed rounds. Next, we sketch this extension, and argue that it can also be efficiently implemented in $\mathrm{PRAM}$.
\begin{definition}[$\rho$-cover]
    For a set $S_0$, and $\Delta$ other sets $S_1,S_2,...,S_{\Delta}$, and a parameter $\rho$, we say that sets $S_1,S_2,...,S_{\Delta}$ \emph{$\rho$-cover} $S_0$ if every element $x\in S_0$, appears in at least $\rho$ set among $S_1,S_2,...,S_{\Delta}$.  
\end{definition}

\begin{definition}[$\Delta$-union $(\rho+1)$-cover-free]
        For a pair of parameters $\Delta$ and $\rho$, a set system $\mathcal{F}$ is \emph{$\Delta$-union $(\rho+1)$-cover-free} if for any set $S_0\in \mathcal{F}$ and sets $S_1,S_2,...,S_{\Delta}\in\mathcal{F}\setminus\{S_0\}$, the sets $S_1,S_2,...,S_{\Delta}$ do not $\rho$-cover $S_0$. 
\end{definition}

\textbf{The basic step.} Suppose that we have a $D$-defective $N$-coloring $\varphi$ of a graph $G$, and a $\Delta$-union $(\rho+1)$-cover-free family $\mathcal{F}$ with $N$ sets, over a groundset $[m]$, for some $D,N,\Delta,\rho$ and $m$. It is easy to see that within one round of distributed computation one can then obtain a $(D+\rho)$-defective $m$-coloring of $G$.\\
The algorithm of~\cite{doi:10.1137/12088848X} starts with an $O(\Delta^2)$-coloring (0-defective) computed via Linial's algorithm, and refines it by iterating the basic step (see above) with appropriate parameters for $O(\log^*\Delta)$ rounds. Consider again the set system $\mathcal{F}=\mathcal{F}_{q,d}$, in which each set contains $q$ elements. (Each set $S_p$ corresponds to a degree-$d$ polynomial $p(\cdot)$ over $GF(q)$. It is equal to $\{(i,p(i))\mid i\in GF(q)\}$.) We have $|\mathcal{F}|=N=q^{d+1}$. 

To hit each element of a set $S_0\in \mathcal{F}$ for $\rho+1$ times, one needs at least $\Delta\geq\frac{q(\rho+1)}{d}$ other sets $S_1,S_2,...,S_{\Delta}$. So we set $\Delta$ slightly smaller than $\frac{q(\rho+1)}{d}$, to guarantee that the family is $\Delta$-union $(\rho+1)$-cover-free. The size of the groundset satisfies $m=q^2$, and thus it follows that  
\begin{equation}\label{eq: def coloring}
    m\leq\left(\frac{\Delta+1}{\rho+1}\right)^2\log^2 N.
\end{equation}
Similarly to Linial's algorithm, it is easy to adapt this scheme to $\mathrm{PRAM}$ setting. Again, we designate $q$ disjoint subsets of processors to every vertex $v$. Each such a set consists of $\deg(v)$ processors.  For a value $(i,p_{\varphi(v)}(i))\in S_{\varphi(v)}$, the $\deg(v)$ processors designated by $v$ to the value $i$ check in how many sets $S_{\varphi(u)}$, $u\in N(v)$, appears the pair $(i,p_{\varphi(v)}(i))$. If this number is $\rho$ or smaller, they raise a flag. Finally, processors designated for $v$ find $i\in GF(q)$ with a raised flag. (Its existence is guaranteed by the analysis.) The overall time is $O(\log q)$, and the number of processors is $O(|E|\cdot q)$. As $m=q^2$, by inequality (\ref{eq: def coloring}) we conclude that the running time is $O\left(\log\Delta+\log\log N\right)=O\left(\log\Delta+\log\log n\right)$, and the number of processors is $O\left(|E|\cdot\frac{\frac{\Delta}{\rho}\cdot\log n}{\log\left(\frac{\Delta}{\rho}\cdot\log n\right)}\right)$. As the algorithm performs $O(\log^*\Delta)$ iterations of this step (with decreasing palettes), it can be verified that the overall time is $O(\log\Delta\cdot\log^*\Delta+\log\log n)$ with $O\left(|E|\cdot\frac{\Delta\cdot\log n}{\log(\Delta\cdot\log n)}\right)$ processors (in addition to the time $O(\log\Delta\cdot\log^* n+\log\log n)$ needed to compute an $O(\Delta^2)$-coloring via Linial's algorithm).

We summarize this discussion with the following theorem.
\begin{theorem}[An adaptation of defective coloring~\cite{doi:10.1137/12088848X}]\label{adaptation of BEK14}
    Given an $n$-vertex $m$-edge graph $G=(V,E)$ with maximum degree $\Delta$, and a parameter $p$, $1\leq p\leq\Delta$, a $\left(\frac{\Delta}{p}\right)$-defective $O\left(p^2\right)$-vertex-coloring can be computed in $O\left(\log\Delta\cdot\log^*n+\log\log n\right)$ time, using $O\left(m\cdot\frac{\Delta\cdot\log n}{\log(\Delta\cdot\log n)}\right)$ processors.
\end{theorem}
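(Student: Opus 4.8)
The plan is to adapt the scheme of Barenboim et al.~\cite{doi:10.1137/12088848X} to the $\mathrm{CRCW\,\,PRAM}$ model, reusing the machinery already developed above for Linial's algorithm. First I would obtain an initial $O(\Delta^2)$-vertex-coloring of $G$, which is $0$-defective, by invoking Theorem~\ref{An adaptation of [Lin87]}; this costs $O(\log\Delta\cdot\log^* n+\log\log n)$ time and $O\!\left(m\cdot\frac{\Delta\cdot\log n}{\log(\Delta\cdot\log n)}\right)$ processors and is incurred only once, so it stays within the claimed budget. Everything that follows refines this coloring.

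Second, I would isolate the \emph{basic step}: given a $D$-defective $N$-coloring $\varphi$ of $G$ and a $\Delta$-union $(\rho+1)$-cover-free family $\mathcal{F}=\{S_1,\dots,S_N\}$ over a groundset $[m]$, associate $S_{\varphi(v)}$ with every vertex $v$; since the at most $\Delta$ neighbor sets $S_{\varphi(u)}$ distinct from $S_{\varphi(v)}$ cannot $\rho$-cover $S_{\varphi(v)}$, there is an element $x\in S_{\varphi(v)}$ lying in at most $\rho$ of them, and recoloring $v$ to such an $x$ produces an $m$-coloring whose defect is at most $D+\rho$ (as discussed above, the $D$ neighbors sharing $v$'s old color contribute the $D$, and the $\rho$ bound controls the rest). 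I would then instantiate $\mathcal{F}$ with the algebraic family $\mathcal{F}_{q,d}$ of~\cite{erdHos1985families}: sets are indexed by degree-$d$ polynomials over $GF(q)$, $S_p=\{(i,p(i))\mid i\in GF(q)\}$, so $N=q^{d+1}$, $|S_p|=q$, and the groundset has size $m=q^{2}$. Two distinct degree-$d$ polynomials agree in at most $d$ points, so choosing $q$ a prime slightly larger than $\frac{\Delta\cdot d}{\rho+1}$ makes the family $\Delta$-union $(\rho+1)$-cover-free; since $d+1=\log_q N$, this yields the key inequality $m\le\left(\frac{\Delta+1}{\rho+1}\right)^{2}\log^{2} N$, i.e.\ one basic step shrinks the palette from $N$ to $O\!\left(\left(\frac{\Delta}{\rho}\right)^{2}\log^{2}N\right)$ at the cost of an extra $\rho$ in the defect.

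Third, I would iterate the basic step $O(\log^*\Delta)$ times starting from $N=O(\Delta^2)$, choosing the cover parameters $\rho^{(1)},\rho^{(2)},\dots$ so that their sum (the total accumulated defect, since defect grows additively) is $O\!\left(\frac{\Delta}{p}\right)$ while the palette still contracts to $O(p^{2})$. Because the factor $\log^{2}N_k$ collapses to a constant in $O(\log^*\Delta)$ rounds, the palette reaches $O(p^2)$ in that many rounds; one must check that on each round the set-system constraints remain simultaneously feasible — $q$ prime, $q\gtrsim\Delta\, d^{(k)}/(\rho^{(k)}+1)$, $d^{(k)}+1=\log_q N_k$, palette $N_k\ge p^2$ — which is the standard iterated-logarithm bookkeeping of~\cite{doi:10.1137/12088848X}. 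For the $\mathrm{PRAM}$ cost of a single basic step I would, exactly as in the adaptation of Linial's algorithm, give each vertex $v$ a collection of $q$ disjoint groups of $\deg(v)$ processors, one per element $i\in GF(q)$; the group for $i$ tests in $O(1)$ time via concurrent writes whether at most $\rho$ neighbors $u$ have $(i,p_{\varphi(v)}(i))\in S_{\varphi(u)}$, raising a flag if so, after which $v$'s processors find a flagged index in $O(\log q)$ time. With $q=O\!\left(\frac{(\Delta/\rho)\cdot\log n}{\log((\Delta/\rho)\cdot\log n)}\right)$ this is $O(\log\Delta+\log\log n)$ time and $\sum_v q\cdot\deg(v)=O(m\cdot q)$ processors per round; since $N_1,N_2,\dots$ decay doubly-exponentially, the processor count is dominated by the first round and the times telescope, so summing over the $O(\log^*\Delta)$ rounds and adding the cost of the initial Linial coloring yields total time $O(\log\Delta\cdot\log^* n+\log\log n)$ and $O\!\left(m\cdot\frac{\Delta\cdot\log n}{\log(\Delta\cdot\log n)}\right)$ processors, as claimed.

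The main obstacle I anticipate is precisely the parameter schedule across the $O(\log^*\Delta)$ iterations: one must pick the $\rho^{(k)}$ and track the induced palette sizes $N_k$ so that, simultaneously, the palette genuinely reaches $O(p^{2})$, the accumulated defect $\sum_k\rho^{(k)}$ stays $O(\Delta/p)$, and the polynomial-family parameters $(q,d^{(k)})$ remain valid on every round via $m\le\left(\frac{\Delta+1}{\rho+1}\right)^{2}\log^{2}N$. Once this schedule is fixed, the $\mathrm{PRAM}$ details — the $O(1)$-time concurrent-write counting of membership in neighbor sets, and the allocation of $O(m\cdot q)$ processors per round — are routine in the $\mathrm{CRCW}$ model given the earlier treatment of Linial's algorithm.
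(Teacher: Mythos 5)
Your proposal follows essentially the same route as the paper's own proof: the one-time initial $O(\Delta^2)$-coloring via the $\mathrm{PRAM}$ adaptation of Linial's algorithm, the basic recoloring step driven by a $\Delta$-union $(\rho+1)$-cover-free family, the polynomial construction of~\cite{erdHos1985families} with the key inequality $m\leq\left(\frac{\Delta+1}{\rho+1}\right)^2\log^2 N$, the $O(\log^*\Delta)$ iterations with shrinking palettes, and the identical processor allocation of $q$ groups of $\deg(v)$ processors per vertex, yielding the stated time and processor bounds. The only minor imprecision is your claim that each group counts conflicting neighbors in $O(1)$ time via concurrent writes; exact counting against the threshold $\rho$ takes $O(\log\Delta)$ time by parallel summation, which still fits within the per-round budget and does not affect the final bounds.
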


Similarly to Theorem \ref{An adaptation of [Lin87]} and Corollary \ref{Adaptation of [BE08], Algorithm Arb-Linial}, this analysis extends to a trade-off between running time and number of processors.

\begin{theorem}[An adaptation of~\cite{doi:10.1137/12088848X}, a trade-off]\label{An adaptation of [BEK14], a trade-off}
Given an $n$-vertex $m$-edge graph $G=(V,E)$ with maximum degree $\Delta$, and parameters $p$ and $z$, $1\leq p\leq\Delta$, $1\le z=O\left(\frac{\Delta\cdot\log n}{\log(\Delta\cdot\log n)}\right)$, a $\left(\frac{\Delta}{p}\right)$-defective $O(p^2)$-vertex-coloring can be computed in  $$O\left(\frac{\Delta\cdot\log n}{z\cdot\log(\Delta\cdot\log n)}+\log\Delta\cdot\log^*n+\log\log n\right)$$ time, using $O\left(m\cdot z\right)$ processors.
\end{theorem}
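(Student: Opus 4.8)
The plan is to repeat the argument that established Theorem~\ref{adaptation of BEK14}, but at the parametrized level where we expose an additional processor/time tradeoff knob, exactly as Corollary~\ref{An adaptation of [Lin87], a trade-off} refines Theorem~\ref{An adaptation of [Lin87]}. First I would recall the implementation of the basic step of the defective-coloring scheme: we have a $D$-defective $N$-coloring $\varphi$ of $G$, a $\Delta$-union $(\rho+1)$-cover-free set system $\mathcal{F}=\mathcal{F}_{q,d}$ with $|\mathcal{F}|=N=q^{d+1}$, sets of size $q$ over a groundset of size $m=q^2$, and one round of the scheme produces a $(D+\rho)$-defective $m$-coloring. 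Setting $\rho=p-1$ and choosing $q=O\!\left(\frac{\frac{\Delta}{\rho}\cdot\log N}{\log\left(\frac{\Delta}{\rho}\cdot\log N\right)}\right)$ together with an extra application with $d=2$ gives the target $O(p^2)$-defective coloring of defect $\frac{\Delta}{p}$.

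Next I would re-do the PRAM implementation of one basic step with the knob $z$, in direct analogy to the text preceding Corollary~\ref{An adaptation of [Lin87], a trade-off}. For each vertex $v$ we designate $z$ disjoint subsets of $\deg(v)$ processors each, so each subset is responsible for up to $\lceil q/z\rceil$ elements $i\in GF(q)$. A subset needs $O(q/z)$ time to determine, for each of its assigned $i$, whether $(i,p_{\varphi(v)}(i))$ appears in at most $\rho$ of the sets $S_{\varphi(u)}$ over $u\in N(v)$, and hence to flag a candidate $i$; then with $Z=z\cdot\deg(v)$ processors we locate a flagged $i$ in the length-$q$ flag array in $O(q/Z)$ time plus $O(1)$ on the CRCW PRAM. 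So one basic step costs $O\!\left(\frac{q}{z}+\log\Delta+\log\log n\right)$ time and $\sum_{v}z\cdot\deg(v)=O(m\cdot z)$ processors. Substituting $q=O\!\left(\frac{\Delta\cdot\log n}{\log(\Delta\cdot\log n)}\right)$ (the worst case $\rho=\Theta(1)$, which upper-bounds the cost of every iteration since $q$ only shrinks as $\rho$ grows relative to $\Delta$, and in any case $q\le O\!\left(\frac{\Delta\cdot\log n}{\log(\Delta\cdot\log n)}\right)$ throughout), one step costs $O\!\left(\frac{\Delta\cdot\log n}{z\cdot\log(\Delta\cdot\log n)}+\log\Delta+\log\log n\right)$.

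Then I would sum over the iterations. The algorithm of~\cite{doi:10.1137/12088848X} first computes an $O(\Delta^2)$-coloring via Linial's algorithm and then iterates the basic step $O(\log^*\Delta)$ times with rapidly decreasing palette sizes $N_1,N_2,\dots$. Invoking Corollary~\ref{An adaptation of [Lin87], a trade-off} for the initial coloring gives $O\!\left(\log\Delta\cdot\log^*n+\log\log n+\frac{\Delta\cdot\log n}{z\cdot\log(\Delta\cdot\log n)}\right)$ time with $O(m\cdot z)$ processors, and the $O(\log^*\Delta)$ refinement rounds add $O\!\left(\log^*\Delta\cdot\left(\frac{\Delta\cdot\log n}{z\cdot\log(\Delta\cdot\log n)}+\log\Delta+\log\log n\right)\right)$; since $\log^*\Delta\le\log^*n$ and the geometric decrease of the palette sizes makes the $q/z$ terms sum to a geometric series dominated (up to constants) by its first term, the total is $O\!\left(\frac{\Delta\cdot\log n}{z\cdot\log(\Delta\cdot\log n)}+\log\Delta\cdot\log^*n+\log\log n\right)$ time using $O(m\cdot z)$ processors, for any $1\le z=O\!\left(\frac{\Delta\cdot\log n}{\log(\Delta\cdot\log n)}\right)$. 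Correctness (that the resulting coloring is indeed $\left(\frac{\Delta}{p}\right)$-defective and uses $O(p^2)$ colors) is inherited verbatim from Theorem~\ref{adaptation of BEK14} / the analysis of~\cite{doi:10.1137/12088848X}, since the $z$-knob only reorganizes which processor inspects which element of $GF(q)$ and does not change the colors computed.

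The main obstacle I anticipate is the bookkeeping in the summation over the $O(\log^*\Delta)$ iterations: one has to argue carefully that at every iteration the relevant $q$ (which depends on the current number of colors $N_t$ and on the current $\rho$) is bounded by $O\!\left(\frac{\Delta\cdot\log n}{\log(\Delta\cdot\log n)}\right)$, so that the $O(q/z)$ per-iteration cost telescopes to a single dominant term rather than accumulating a $\log^*\Delta$ factor on the $\Delta\log n/(z\log(\Delta\log n))$ piece. This is exactly the place where the $d=2$ final application and the monotone behavior of $q$ in $N_t$ must be tracked; everything else is a routine transcription of the $z=1$ argument already spelled out for Theorem~\ref{adaptation of BEK14}.
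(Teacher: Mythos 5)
Your proposal matches the paper's own (very terse) justification: the paper proves this trade-off simply by noting that the $z$-subset processor-partitioning trick used for Corollary~\ref{An adaptation of [Lin87], a trade-off} applies verbatim to the basic step of the defective-coloring scheme of Theorem~\ref{adaptation of BEK14}, which is exactly what you do, including the observation that the per-iteration $q/z$ costs are dominated by the first term because the palette sizes $N_t$ (and hence $q$) shrink rapidly. Your extra bookkeeping on bounding $q$ by $O\left(\frac{\Delta\cdot\log n}{\log(\Delta\cdot\log n)}\right)$ throughout is a correct filling-in of a detail the paper leaves implicit, so the argument is sound and essentially identical in approach.
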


One can further improve the number of processors in Theorem~\ref{adaptation of BEK14}. Note that the only step in the algorithm that employs $O\left(|E|\cdot \frac{\Delta\cdot\log n}{\log\left(\Delta\cdot\log n\right)}\right)$ processors is the step that computes an $O\left(\Delta^{2}\right)$-coloring (via Linial's algorithm), while the other steps employ only $O\left(|E|\cdot\frac{\frac{\Delta}{\rho}\cdot\log n}{\log\left(\frac{\Delta}{\rho}\cdot\log n\right)}\right)$ processors. On the other hand, it is not hard to see that the distributed algorithm of Barenboim et al. (see~\cite{doi:10.1137/12088848X}, Section 3.2) can be applied directly on the na\"ive $n$-coloring of the vertex set $V$, i.e., the coloring that assigns every vertex $v \in V$ the color $\mathrm{Id}(v)$. In other words, the step that computes an $O\left(\Delta^{2}\right)$-coloring can be skipped altogether. The resulting coloring is still an $O\left(\left(\frac{\Delta+1}{\rho+1}\right)^2\right)$-coloring, $O(\rho)$-defective, while the distributed running time grows slightly: instead of $\frac{1}{2}\cdot\log^{*} n + O(\log^{*} \Delta)$ of the original version of the algorithm of~\cite{doi:10.1137/12088848X}, this version has running time of $O\left(\log^{*} n\right)$. The $\mathrm{PRAM}$ running time of this version is $O\left(\log\Delta\cdot\log^* n+\log\log n\right)$, because we now apply $O(\log^* n)$ recoloring steps (that employ $(\rho+1)$-union-free set systems), as opposed to $O(\log^{*} \Delta)$ recoloring steps in the version described above. On the other hand, the number of processors is now just $O\left(|E|\cdot\frac{\frac{\Delta}{\rho}\cdot\log n}{\log\left(\frac{\Delta}{\rho}\cdot\log n\right)}\right)$ (instead of $O\left(|E|\cdot \frac{\Delta\cdot\log n}{\log\left(\Delta\cdot\log n\right)}\right)$). We now set $p=O(\frac{\Delta}{\rho})$, and obtain an $O(p^{2})$-coloring, $O(\frac{\Delta}{p})$-defective, using $O\left(|E|\cdot \frac{p\cdot\log n}{\log\left(p\cdot\log n\right)}\right)$ processors. 

To summarize:
\begin{theorem}[A variant of defective coloring of~\cite{doi:10.1137/12088848X}]\label{A variant of defective coloring of BEK14}
    Given an $n$-vertex $m$-edge graph $G=(V,E)$ with maximum degree $\Delta$, and a parameter $p$, $1\leq p\leq\Delta$, a $\left(\frac{\Delta}{p}\right)$-defective $O\left(p^2\right)$-vertex-coloring can be computed in $O\left(\log\Delta\cdot\log^*n+\log\log n\right)$ time, using $O\left(m\cdot\frac{p\cdot\log n}{\log(p\cdot\log n)}\right)$ processors.
\end{theorem}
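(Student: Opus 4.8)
The plan is to re-derive the defective-coloring algorithm of~\cite{doi:10.1137/12088848X} in a way that avoids the initial $O\left(\Delta^2\right)$-coloring step altogether, since that step is the only ingredient of Theorem~\ref{adaptation of BEK14} that forces $O\left(m\cdot\frac{\Delta\cdot\log n}{\log(\Delta\cdot\log n)}\right)$ processors. First I would start not from an $O\left(\Delta^2\right)$-coloring but from the na\"ive proper (hence $0$-defective) $n$-coloring $\varphi(v)=\mathrm{Id}(v)$. Then I would iterate the basic recoloring step described above: given a $D$-defective $N$-coloring, one picks a $\Delta$-union $(\rho+1)$-cover-free family $\mathcal{F}=\mathcal{F}_{q,d}$ with $N$ sets over a groundset of size $m=q^2$, where $q=O\left(\frac{(\Delta/\rho)\cdot\log N}{\log((\Delta/\rho)\cdot\log N)}\right)$ so that $q>\frac{(\rho+1)\Delta}{d}$ holds and inequality~(\ref{eq: def coloring}) applies; the $\Delta$-union $(\rho+1)$-cover-freeness guarantees that in one round one obtains a $(D+\rho)$-defective coloring with $O\left((\Delta/\rho)^2\log^2 N\right)$ colors. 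Choosing the per-round parameters exactly as in~\cite{doi:10.1137/12088848X} (Section 3.2), the palette sizes $N_1=n,N_2,N_3,\ldots$ shrink tower-like to $O(p^2)$ and the accumulated defect telescopes to $O(\Delta/p)$, with $\rho=\Theta(\Delta/p)$ throughout; crucially this part of the analysis never uses that the starting coloring is an $O\left(\Delta^2\right)$-coloring, only that it is proper, so it applies verbatim. The number of rounds is now $O(\log^* n)$ rather than the $O(\log^*\Delta)$ of the original version, but this only worsens the running time by a lower-order amount.

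Next I would spell out the $\mathrm{PRAM}$ implementation of a single recoloring round, reusing the scheme already described for Linial's algorithm and its defective extension: to each vertex $v$ we designate $q$ disjoint groups of $\deg(v)$ processors; group $i$ (for $i\in GF(q)$) checks, in $O(1)$ time, in how many sets $S_{\varphi(u)}$, $u\in N(v)$, the element $(i,p_{\varphi(v)}(i))$ appears, and raises a flag if this count is at most $\rho$; the processors of $v$ then find, in $O(\log q)$ time, an index with a raised flag (its existence is guaranteed by the analysis). Thus one round costs $O(\log q)=O\left(\log(\Delta/\rho)+\log\log N\right)=O(\log\Delta+\log\log N)$ time and $O(|E|\cdot q)$ processors. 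Summing over the $O(\log^* n)$ rounds, the $\log(\Delta/\rho)$ terms contribute $O(\log\Delta\cdot\log^* n)$, while the $\log\log N_i$ terms sum to $O(\log\log n)$ because the $N_i$ decrease at a tower rate, so that $\sum_{i\ge 1}\log^{(i)}(\log n)=O(\log\log n)$; hence the total time is $O(\log\Delta\cdot\log^* n+\log\log n)$. For the processor count, $q_i=O\left(\frac{p\cdot\log N_i}{\log(p\cdot\log N_i)}\right)$ is monotone in $N_i$, so it is maximized in the first round at $q_1=O\left(\frac{p\cdot\log n}{\log(p\cdot\log n)}\right)$, giving $O\left(m\cdot\frac{p\cdot\log n}{\log(p\cdot\log n)}\right)$ processors overall — which is exactly the payoff of dropping the Linial step, as that step would otherwise require $O\left(m\cdot\frac{\Delta\cdot\log n}{\log(\Delta\cdot\log n)}\right)$ processors (note the $\Delta$, not $p$).

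I expect the main obstacle to be the bookkeeping in verifying that the analysis of~\cite{doi:10.1137/12088848X} genuinely transfers when the algorithm is started from the identity coloring: one must check that the recursion on palette sizes still converges to $O(p^2)$ within $O(\log^* n)$ rounds, that the defect still telescopes to $O(\Delta/p)$ under the same choice of round parameters, and that at no intermediate round does $N_i$ fall below the point where a $\Delta$-union $(\rho+1)$-cover-free family with $N_i$ sets and groundset size $q^2$ fails to exist — which can be handled by padding the palette so that $N_i$ never drops below $\Theta(p^2)$. The remaining secondary point is the careful summation $\sum_i\log\log N_i=O(\log\log n)$ used above, which I would justify via the tower-type decay of the $N_i$.
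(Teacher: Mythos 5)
Your proposal follows essentially the same route as the paper: the paper proves this theorem precisely by observing that the only stage of Theorem~\ref{adaptation of BEK14} needing $O\left(m\cdot\frac{\Delta\cdot\log n}{\log(\Delta\cdot\log n)}\right)$ processors is the initial $O\left(\Delta^{2}\right)$-coloring, and that the recoloring steps of~\cite{doi:10.1137/12088848X} can be run directly on the na\"ive identity $n$-coloring, yielding $O(\log^{*}n)$ rounds, time $O\left(\log\Delta\cdot\log^{*}n+\log\log n\right)$, and a processor count dominated by the first round, exactly as you argue. The only caveat is your parenthetical ``$\rho=\Theta(\Delta/p)$ throughout,'' which taken literally would let the defect accumulate to $\Theta\left(\frac{\Delta}{p}\cdot\log^{*}n\right)$; like the paper, you must (and implicitly do) defer to the per-round defect schedule of~\cite{doi:10.1137/12088848X}, Section 3.2, under which the total defect stays $O\left(\frac{\Delta}{p}\right)$ even with $O(\log^{*}n)$ rounds.
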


One can also have a trade-off version of this result, analogous to Theorem~\ref{An adaptation of [BEK14], a trade-off}.

\begin{theorem}[An adaptation of~\cite{doi:10.1137/12088848X}, a better trade-off]\label{An adaptation of [BEK14], a better trade-off}
Given an $n$-vertex $m$-edge graph $G=(V,E)$ with maximum degree $\Delta$, and parameters $p$ and $z$, $1\leq p\leq\Delta$, $1\le z=O\left(\frac{p\cdot\log n}{\log(p\cdot\log n)}\right)$, a $\left(\frac{\Delta}{p}\right)$-defective $O(p^2)$-vertex-coloring can be computed in  $$O\left(\frac{p\cdot\log n}{z\cdot\log(p\cdot\log n)}+\log\Delta\cdot\log^*n+\log\log n\right)$$ time, using $O\left(m\cdot z\right)$ processors.
\end{theorem}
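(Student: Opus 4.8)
The plan is to combine the two refinements of defective coloring that have already been set up in this appendix: the processor-economical variant of Theorem~\ref{A variant of defective coloring of BEK14} (which, following~\cite{doi:10.1137/12088848X}, runs the $(\rho+1)$-cover-free recoloring step directly on the na\"ive coloring $v\mapsto\mathrm{Id}(v)$ and therefore skips the $O(\Delta^2)$-coloring preprocessing altogether), and the processor/time trade-off that was used to pass from Theorem~\ref{An adaptation of [Lin87]} to Corollary~\ref{An adaptation of [Lin87], a trade-off}, and from Theorem~\ref{adaptation of BEK14} to Theorem~\ref{An adaptation of [BEK14], a trade-off}. In other words, the present theorem stands to Theorem~\ref{A variant of defective coloring of BEK14} exactly as Theorem~\ref{An adaptation of [BEK14], a trade-off} stands to Theorem~\ref{adaptation of BEK14}; the one genuine change is that the parameter controlling both the per-step processor count and the per-step running time is now the \emph{smaller} quantity $q'=O\!\left(\frac{p\cdot\log n}{\log(p\cdot\log n)}\right)$ (the groundset of the $(\rho+1)$-cover-free family $\mathcal{F}_{q',d}$ of~\cite{erdHos1985families} has size $q'^{\,2}$, with $p=\Theta(\Delta/\rho)$), rather than the quantity $q=O\!\left(\frac{\Delta\cdot\log n}{\log(\Delta\cdot\log n)}\right)$ associated with plain Linial coloring~\cite{linial1987distributive}.

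Concretely, I would recall the algorithm behind Theorem~\ref{A variant of defective coloring of BEK14}: it performs $O(\log^* n)$ recoloring steps, where in step $t$ the current palette has $N_t$ colors ($N_1=n$, $N_{t+1}=O(q_t'^{\,2})$), the family used is $\mathcal{F}_{q'_t,d}$ with $q'_t=O\!\left(\frac{p\cdot\log N_t}{\log(p\cdot\log N_t)}\right)$, and each vertex $v$ must locate an element $(i,p_{\varphi(v)}(i))\in S_{\varphi(v)}$ that is covered by at most $\rho$ of the sets $\{S_{\varphi(u)}:u\in N(v)\}$. To get the trade-off I would replace the $q'_t$ disjoint $\deg(v)$-blocks of processors assigned to $v$ in Theorem~\ref{A variant of defective coloring of BEK14} by just $z$ disjoint $\deg(v)$-blocks, each put in charge of $\lceil q'_t/z\rceil$ of the ground elements $i\in GF(q'_t)$. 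A block scans its elements in $O(q'_t/z)$ time, flagging the "good" ones, after which the $z\cdot\deg(v)$ processors of $v$ find one raised flag among the $q'_t$ flags in $O(q'_t/z)+O(1)$ time (the final $O(1)$ by concurrent writes); polynomial evaluation and comparison of the $O(\log N_t)$-bit colors add $O(\log p+\log\log n)$ overhead. So step $t$ runs in $O\!\left(q'_t/z+\log p+\log\log n\right)$ time using $O(m\cdot z)$ processors.

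Then I would sum over the $O(\log^* n)$ steps exactly as in the accounting already performed for Corollary~\ref{An adaptation of [Lin87], a trade-off} and Theorem~\ref{An adaptation of [BEK14], a trade-off}. Since the palette sizes $N_t$, hence $\log N_t$, hence $q'_t$, shrink extremely fast, the first step dominates: $\sum_t q'_t=O(q'_1)=O\!\left(\frac{p\cdot\log n}{\log(p\cdot\log n)}\right)$ and $\sum_t\log\log N_t=O(\log\log n)$, while $\sum_t\log p=O(\log p\cdot\log^* n)=O(\log\Delta\cdot\log^* n)$ as $p\le\Delta$. Adding these contributions gives total time
\[
O\!\left(\frac{p\cdot\log n}{z\cdot\log(p\cdot\log n)}+\log\Delta\cdot\log^* n+\log\log n\right),
\]
with processor count never exceeding $\sum_{v\in V}z\cdot\deg(v)=O(m\cdot z)$; the output is an $O(p^2)$-coloring that is $(\Delta/p)$-defective, since neither the re-balancing of processors nor the smaller palette affects the combinatorial guarantee of the cover-free recoloring step. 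The hypothesis $1\le z=O\!\left(\frac{p\cdot\log n}{\log(p\cdot\log n)}\right)$ is precisely the requirement $z\le q'_1$, so that each of the $z$ blocks per vertex is nonempty (and a larger $z$ buys nothing).

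The main obstacle — and the spot where the argument is most fragile — is verifying that the dominant per-step processor demand in this scheme really is $\Theta(q'_t)$ per $\deg(v)$-block and not $\Theta(q_t)$. That is exactly the content of Theorem~\ref{A variant of defective coloring of BEK14}, won by discarding Linial's $O(\Delta^2)$-coloring preprocessing and running the $(\rho+1)$-cover-free recoloring directly on the identity coloring; I would need to double-check that eliminating that preprocessing does not quietly reintroduce a step requiring $\Theta\!\left(\frac{\Delta\log n}{\log(\Delta\log n)}\right)$-sized processor blocks. Once this is confirmed, the rest is just a reindexing of estimates already in place.
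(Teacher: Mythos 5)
Your proposal matches the paper's intended argument: the theorem is obtained exactly by applying the $z$-block time/processor trade-off (as in Corollary~\ref{An adaptation of [Lin87], a trade-off} and Theorem~\ref{An adaptation of [BEK14], a trade-off}) to the preprocessing-free variant of Theorem~\ref{A variant of defective coloring of BEK14}, whose per-step groundset parameter is $O\left(\frac{p\cdot\log n}{\log(p\cdot\log n)}\right)$ rather than the Linial parameter $O\left(\frac{\Delta\cdot\log n}{\log(\Delta\cdot\log n)}\right)$, and then summing over the $O(\log^* n)$ recoloring steps with rapidly shrinking palettes. Your accounting and final bounds coincide with the paper's, and the "fragile spot" you flag (that skipping the $O(\Delta^2)$-coloring preprocessing keeps every step's processor demand at $O\left(m\cdot\frac{p\cdot\log n}{\log(p\cdot\log n)}\right)$) is precisely what Theorem~\ref{A variant of defective coloring of BEK14} already establishes.
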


\subsection{Algebraic Color Reduction}\label{app: Colors Reducing Procedure}
In this section we present a version of the distributed algorithm due to Barenboim et al.~\cite{barenboim2018locally}. The algorithm receives as an input a graph $G=(V,E)$ with maximum degree $\Delta$, equipped with a $k$-vertex-coloring $\varphi$, for $k=O(\Delta^2)$, and returns an $O\left(\Delta\right)$-vertex-coloring of $G$. We refer to this algorithm as Procedure \textsc{Algebraic-Color-Reduction}.\\

In the algorithm, we first choose a prime parameter $\sqrt{k}\leq p=O\left(\Delta\right)$, and consider each color $\varphi(v)$, for $v\in V$, as a pair $\langle a_v,b_v\rangle$, where $a_v,b_v\in\{0,1,...,p-1\}$. Then, in parallel, each vertex $v\in V$, checks whether it has a neighbor $u$ with $b_v=b_u$. If it has such a neighbor, it defines $\varphi(v)=\langle a_v,a_v+b_v\,(\text{mod } p)\rangle$. Otherwise, define $\varphi(v)=\langle 0,b_v\rangle$. We continue with this process for $p$ iterations. Barenboim et al.~\cite{barenboim2018locally} showed that this procedure produces a proper vertex-coloring, and that it is guaranteed that by the end of this process, for each vertex $v\in V$, we have $a_v=0$, i.e., we are left with $p=O(\Delta)$-vertex-coloring of the graph. 

In order to implement this algorithm in $\mathrm{PRAM}$, we assign a processor to each edge $(u,v)\in E$, that checks, in parallel, whether $b_v=b_u$, and informs $u$ and $v$. Then each vertex, in parallel, defines its new color according to the results of its incident edges. Since this process is applied iteratively $O(p)$ times, we conclude that at the end of this process, we obtain a $p=O(\Delta)$-vertex-coloring of $G$ in $O(p\log\Delta)=O(\Delta\log\Delta)$ time using $O(m)$ processors. We summarize this result in the next theorem. ($O(\log \Delta)$ time is required for a vertex $v$ to determine if there exists a neighbor $u$ of $v$ such that the respective processor associated with the edge $(u,v)$ raised a flag indicating that $b_u=b_v$.)

\begin{theorem}[An adaptation of the algorithm of~\cite{barenboim2018locally}]\label{Properties of Procedure Algebraic-Color-Reduction}
    Given an $n$-vertex $m$-edge graph $G=(V,E)$ with maximum degree $\Delta$ and a proper $O(\Delta^2)$-vertex-coloring of $G$, a proper $O(\Delta)$-vertex-coloring of $G$ can be computed in $O(\Delta\cdot\log\Delta)$ time using $O(m)$ processors.
\end{theorem}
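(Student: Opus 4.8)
The plan is to follow the distributed recoloring scheme of Barenboim et al.~\cite{barenboim2018locally} and to show that each of its rounds admits an efficient $\mathrm{CRCW\,\,PRAM}$ implementation. Let $k=O(\Delta^2)$ be the number of colors in the given input coloring. First I would fix a prime $p$ with $\sqrt{k}\le p\le 2\sqrt{k}$, whose existence follows from Bertrand's postulate; since $k=O(\Delta^2)$ this gives $p=O(\Delta)$, and since $k\le p^2$ every color $\varphi(v)$ can be identified with a pair $\langle a_v,b_v\rangle\in\{0,1,\dots,p-1\}^2$. Procedure \textsc{Algebraic-Color-Reduction} then runs $p$ iterations; in each iteration every vertex $v$ checks, in parallel, whether some neighbor $u$ has $b_u=b_v$, and if so updates $\varphi(v)\leftarrow\langle a_v,(a_v+b_v)\bmod p\rangle$, and otherwise sets $\varphi(v)\leftarrow\langle 0,b_v\rangle$.

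For correctness I would invoke the analysis of~\cite{barenboim2018locally}. Primality of $p$ makes $GF(p)$ a field, so if two adjacent vertices $v,u$ share the same second coordinate $b_v=b_u$ then — being properly colored — they differ in the first coordinate, $a_v\neq a_u$, and hence their updated second coordinates $a_v+b_v$ and $a_u+b_u$ still differ modulo $p$; combined with the trivial case $b_v\neq b_u$, this shows that the coloring remains proper after every iteration. A potential argument then shows that within $p$ iterations every vertex experiences at least one iteration in which no neighbor shares its $b$-value, after which its first coordinate is fixed to $0$ and (as one checks directly) stays $0$ in all later iterations. Thus after $p$ iterations $a_v=0$ for all $v$, so the final coloring uses only $p=O(\Delta)$ colors. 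I would state this as a lemma cited from~\cite{barenboim2018locally}, remarking only that every step is a simultaneous local update identical to a $\mathsf{LOCAL}$ round, so the argument is insensitive to the parallel mode of execution.

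For the complexity I would assign one processor to each edge $(u,v)\in E$; at the start of an iteration it compares $b_u$ and $b_v$ and raises a flag if they are equal. Each vertex $v$ then determines whether any of its incident edges raised a flag, which takes $O(\log\deg(v))=O(\log\Delta)$ time, and updates its color in $O(1)$ additional time. Hence one iteration costs $O(\log\Delta)$ time using $O(m)$ processors, and reusing the same processors across all $p=O(\Delta)$ iterations yields total running time $O(\Delta\cdot\log\Delta)$ with $O(m)$ processors, as claimed.

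The main obstacle is not the $\mathrm{PRAM}$ simulation, which is routine, but the termination-and-propriety analysis of the algebraic recoloring — establishing that $p$ rounds suffice to zero out every first coordinate while preserving properness. Since this is exactly the content of~\cite{barenboim2018locally}, the plan is to cite that analysis rather than reprove it, and only to verify the straightforward bookkeeping that the $\mathrm{PRAM}$ cost of a single round is $O(\log\Delta)$ time and $O(m)$ processors.
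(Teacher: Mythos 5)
Your proposal matches the paper's own treatment essentially verbatim: the same choice of a prime $p=O(\Delta)$, the same pair representation and update rule iterated $p$ times with correctness deferred to~\cite{barenboim2018locally}, and the same $\mathrm{PRAM}$ implementation assigning a processor per edge, with $O(\log\Delta)$ time per round for each vertex to detect a raised flag, giving $O(\Delta\cdot\log\Delta)$ time and $O(m)$ processors. No gaps beyond what the paper itself delegates to the cited distributed analysis.
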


\subsection{$(\Delta+1)$-Vertex-Coloring Algorithm for General Graphs}\label{(Delta+1)-Vertex-Coloring Algorithm For General Graphs}
In this subsection we present a $\mathrm{CRCW\,\, PRAM}$ algorithm for $(\Delta+1)$-vertex-coloring general graphs. This algorithm is based on algorithms of~\cite{panconesi2001some} and~\cite{barenboim2018locally}.\\

We start with describing a merging phase of the $(\Delta+1)$-vertex-coloring algorithm, in which we receive two proper vertex-colorings of two edge-disjoint subgraphs of the input graph $G$, and merge them into one proper vertex-coloring of the union of these subgraphs. As a first part of describing this merging algorithm, we present a simple procedure that given a graph $G=(V,E)$ with maximum degree $\Delta$, an integer $k>\Delta+1$, and a proper $k$-vertex-coloring $\varphi$ of $G$, reduces one color from the coloring. That is, the algorithm produces a proper $(k-1)$-vertex-coloring of $G$. The algorithm simply recolors every vertex $v$ that is colored $k$, using a color from $\{1,2,...,\Delta+1\}\setminus \varphi(N(v))\neq \emptyset$, where $\varphi(N(v))$ is the set of colors used by the neighbors of $v$. The pseudocode of the algorithm is given in Procedure \textsc{Reduce-Color}.\\
\\$\textsc{Procedure}$ $\textsc{Reduce-Color}$ $\left(G=(V,E),\varphi\right)$
\begin{description}
    \item{\textbf{Step 1.}} \textbf{For} each $v\in V$ in parallel \textbf{do}\\
    \makebox[1.15cm]{} \textbf{If} $\varphi(v)=|\varphi|$ \textbf{do}\\
    \makebox[1.5cm]{} Define $\varphi(v)$ to be an arbitrary color from $\{1,2,...,\Delta+1\}\setminus \varphi(N(v))$\\
\end{description}
First observe that since each vertex $v\in V$ has at most $\Delta$ neighbors, then $\{1,2,...,\Delta+1\}\setminus \varphi(N(v))\neq \emptyset$, and step 1 is well-defined. In the next theorem we analyse the complexity of Procedure \textsc{Reduce-Color}.

\begin{theorem}[Properties of Procedure \textsc{Reduce-Color}]\label{Procedure vertex Reduce-Color properties}
    Let $G=(V,E)$ be an $m$-edge graph with maximum degree $\Delta$, $k>\Delta+1$ be an integer, and $\varphi$ be a $k$-vertex-coloring of $G$. 
    Then Procedure \textsc{Reduce-Color} computes a proper $(k-1)$-vertex-coloring of $G$ in $O\left(\log\Delta\right)$ time using $O\left(m\right)$ processors.
\end{theorem}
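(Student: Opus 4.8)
The plan is to split the argument into a correctness part (the output is a proper $(k-1)$-vertex-coloring) and a complexity part (the stated time and processor bounds), since the two are essentially independent.

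For correctness, the key structural observation is that the set $U=\{v\in V\mid \varphi(v)=|\varphi|=k\}$ of vertices that the procedure recolors is an \emph{independent set} of $G$: if two adjacent vertices were both colored $k$, this would contradict the properness of the input coloring $\varphi$. Hence no two adjacent vertices are recolored, so when a vertex $v\in U$ picks its new color, every neighbor of $v$ still carries its original $\varphi$-color. Since $v$ selects a color from $\{1,2,\dots,\Delta+1\}\setminus\varphi(N(v))$, which is nonempty because $|N(v)|\le\Delta$ (as noted right after the procedure), the new color of $v$ differs from all neighbor colors. Adjacent pairs both outside $U$ keep their distinct $\varphi$-colors, so the resulting coloring is proper. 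Moreover every new color lies in $\{1,\dots,\Delta+1\}\subseteq\{1,\dots,k-1\}$ (here we use $k>\Delta+1$), and no vertex is left with color $k$, so the output is a proper $(k-1)$-vertex-coloring. Degree-$0$ vertices of $U$ are simply reassigned color $1$.

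For the complexity, I would dedicate to each vertex $v$ a block of $\max\{\deg(v),1\}$ processors (one per incident edge, plus a spare), so that the total is $\sum_{v\in V}O(\deg(v)+1)=O(m+n)=O(m)$ using the standing assumption $m\ge n/2$. A vertex $v\in U$ with $\deg(v)\ge1$ then uses its $\deg(v)$ processors to (i) collect the multiset $S_v=\{\varphi(u)\mid u\in N(v)\}\cap\{1,\dots,\Delta+1\}$ and sort it in $O(\log\deg(v))=O(\log\Delta)$ time via a standard parallel sorting routine, and (ii) scan the sorted list to find an index $c\in\{1,\dots,\Delta+1\}$ missing from $S_v$; such $c$ exists since $|S_v|\le\deg(v)\le\Delta<\Delta+1$, and it is found in $O(\log\Delta)$ time (for instance, binary-search for the least $c$ with strictly fewer than $c$ elements of $S_v$ not exceeding $c$). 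All $v\in U$ act in parallel, giving overall $O(\log\Delta)$ time and $O(m)$ processors.

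The only real (if minor) obstacle is step (ii): locating a color in $\{1,\dots,\Delta+1\}$ avoided by $v$'s neighbors while using only $\deg(v)$ — not $\Delta+1$ — processors, so the per-vertex work sums to $O(m)$ rather than $O(n\Delta)$. This is what forces one to operate on the sorted neighbor-color list with a gap-finding search rather than naively initializing and probing a size-$(\Delta+1)$ bitmap, whose initialization alone would cost $\Theta(\Delta/\deg(v))$ time with $\deg(v)$ processors. Everything else — the independence of $U$, the nonemptiness of the candidate palette, and the fact that the new colors avoid the frozen neighbor colors — is immediate from properness of $\varphi$ together with $k>\Delta+1$.
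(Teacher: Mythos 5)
Your proposal is correct and follows essentially the same route as the paper: per-vertex blocks of $O(\deg(v))$ processors, sorting the neighbors' colors and extracting a free color in $\{1,\dots,\Delta+1\}$ in $O(\log\Delta)$ time, for $\sum_v O(\deg(v))=O(m)$ processors overall. The correctness argument (the color-$k$ class is independent, so recolored vertices see only frozen neighbor colors and the palette $\{1,\dots,\Delta+1\}\setminus\varphi(N(v))$ is nonempty) is the standard one that the paper only cites, so spelling it out is fine and changes nothing of substance.
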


\begin{proof}
    Let $\varphi'$ be the vertex-coloring produced by Procedure \textsc{Reduce-Color}. It is easy to see that this is a proper $(k-1)$-vertex-coloring (cf., e.g.,~\cite{doi:10.1137/1.9780898719772}, Chapter 7.1).
    
    We now analyse the complexity of the algorithm. For each vertex $v\in V$ that is colored $k$, we compute its new color by assigning a processor to each neighbor of $v$, sort these vertices according to their color, and choose a free color in $\{1,2,...,\Delta+1\}\setminus \varphi(N(v))$. This process requires $O(\log\Delta)$ time using $O(\deg(v))$ processors, one for each neighbor of $v$. Overall, Procedure \textsc{Reduce-Color} requires $O\left(\log \Delta\right)$ time using $\sum_{v\in V}O(\deg(v))=O\left(m\right)$ processors.
\end{proof}

Next, we present the merging algorithm. This algorithm receives as input two proper $(\Delta+1)$-vertex-colorings $\varphi_1$ and $\varphi_2$, of two edge-disjoint graphs $G_1=(V,E_1)$ and $G_2=(V,E_2)$, respectively, and returns a combined proper $(\Delta+1)$-vertex-coloring $\varphi$ of their union $G=(V,E_1\cup E_2)$, where $\Delta$ is a bound on the maximum degree of $G$. The algorithm first defines a new color for each vertex, as a pair of its colors in $\varphi_1$ and $\varphi_2$. This coloring is a proper $O(\Delta^2)$-vertex-coloring. Then, we reduce all colors greater than $\Delta+1$ in two steps. First, using Procedure \textsc{Algebraic-Color-Reduction} from Appendix \ref{app: Colors Reducing Procedure}, we compute a proper $O(\Delta)$-vertex-coloring, and then by iteratively applying Procedure \textsc{Reduce-Color}, we reduce the $O(\Delta)$ extra colors, and receive a $(\Delta+1)$-vertex-coloring. See Procedure \textsc{Merge} for a detailed description of the algorithm.\\
\\$\textsc{Procedure}$ $\textsc{Merge}$ $\left(G_1,G_2,\varphi_1,\varphi_2,\Delta\right)$
\begin{description}
    \item{\textbf{Step 1.}} \textbf{For} each $v\in V$ in parallel \textbf{do}\\
    \makebox[1.15cm]{} Define $\varphi(v)\leftarrow(\varphi_1(v),\varphi_2(v))$\Comment{$\varphi(v)\leftarrow\varphi_1(v)\cdot|\varphi_2|+\varphi_2(v)$}
    \item{\textbf{Step 2.}} $\varphi\leftarrow\textsc{Algebraic-Color-Reduction}(G,\varphi)$\Comment{See Appendix \ref{app: Colors Reducing Procedure}}
    \item{\textbf{Step 3.}} \textbf{while} $|\varphi|>\Delta+1$ \textbf{do}\\
    \makebox[1.15cm]{} \textsc{Reduce-Color}$(G_1\cup G_2,\varphi)$
\end{description}

Let $G_1=(V,E_1)$ and $G_2=(V,E_2)$ be two edge-disjoint graphs that are properly vertex-colored with the colorings $\varphi_1$ and $\varphi_2$, respectively. We first show that the coloring $\varphi$ produced by Procedure \textsc{Merge} is indeed a proper $(\Delta+1)$-vertex-coloring of $G_1\cup G_2$, and analyse the complexity of the procedure.

\begin{theorem}[Properties of Procedure \textsc{Merge}]\label{Procedure Merge properties}
    Let $G_1=(V,E_1)$ and $G_2=(V,E_2)$ two edge-disjoint $n$-vertex graphs, let $\Delta$ be a bound on the maximum degree of $G=G_1\cup G_2$, and let $m$ be the number of edges in $G$. Let $\varphi_1$ (respectively, $\varphi_2$) be a proper $(\Delta+1)$-vertex-coloring of $G_1$ (resp., $G_2$). Procedure \textsc{Merge} computes a proper $(\Delta+1)$-vertex-coloring $\varphi$ of $G$ in $O(\Delta\cdot\log\Delta)$ time using $O(m)$ processors.
\end{theorem}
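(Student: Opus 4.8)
The plan is to prove correctness and the resource bounds by following the coloring through the three steps of Procedure \textsc{Merge}, invoking Theorem~\ref{Properties of Procedure Algebraic-Color-Reduction} for Step~2 and Theorem~\ref{Procedure vertex Reduce-Color properties} for Step~3. First I would argue that Step~1 produces a proper coloring of $G = G_1 \cup G_2$: if $u$ and $v$ are adjacent in $G$, then $(u,v)\in E_1$ or $(u,v)\in E_2$, so $\varphi_1(u)\neq\varphi_1(v)$ or $\varphi_2(u)\neq\varphi_2(v)$, and in either case the pairs $(\varphi_1(u),\varphi_2(u))$ and $(\varphi_1(v),\varphi_2(v))$ differ; hence the coloring $\varphi$ defined in Step~1 is proper for $G$. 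Since $\varphi_1$ and $\varphi_2$ each use at most $\Delta+1$ colors, $\varphi$ uses at most $(\Delta+1)^2 = O(\Delta^2)$ colors. This step assigns one processor per vertex and runs in $O(1)$ time using $O(n) = O(m)$ processors (recall the standing assumption $m\geq n/2$).

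Next I would feed this $O(\Delta^2)$-vertex-coloring into Step~2: by Theorem~\ref{Properties of Procedure Algebraic-Color-Reduction}, Procedure \textsc{Algebraic-Color-Reduction} returns a proper $O(\Delta)$-vertex-coloring of $G$ in $O(\Delta\cdot\log\Delta)$ time using $O(m)$ processors. In particular there is a constant $c$ such that the coloring after Step~2 uses at most $c\Delta$ colors. Then, in Step~3, each invocation of Procedure \textsc{Reduce-Color} on the current coloring $\varphi$ with $|\varphi| > \Delta+1$ produces, by Theorem~\ref{Procedure vertex Reduce-Color properties}, a proper $(|\varphi|-1)$-vertex-coloring of $G$ in $O(\log\Delta)$ time using $O(m)$ processors; the while-loop guard $|\varphi|>\Delta+1$ ensures each such round is well-defined. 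Consequently the loop halts after at most $c\Delta - (\Delta+1) = O(\Delta)$ iterations with a proper $(\Delta+1)$-vertex-coloring of $G$, contributing $O(\Delta\cdot\log\Delta)$ time and $O(m)$ processors. Summing the three steps gives the claimed $O(\Delta\cdot\log\Delta)$ time and $O(m)$ processors.

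I do not expect a genuine obstacle here; the only point needing care is the bookkeeping in Step~3. One must rely on Theorem~\ref{Properties of Procedure Algebraic-Color-Reduction} guaranteeing a palette of size \emph{linear} in $\Delta$ (not merely $\mathrm{poly}(\Delta)$), so that $O(\Delta)$ color-reduction rounds suffice; were the palette $\Theta(\Delta^2)$ at the start of Step~3, the loop would cost $O(\Delta^2\log\Delta)$ time. I would also explicitly note that each \textsc{Reduce-Color} round keeps the coloring proper (cf.\ the proof of Theorem~\ref{Procedure vertex Reduce-Color properties}) and that passing from $O(\Delta^2)$ to $O(\Delta)$ colors in one shot via \textsc{Algebraic-Color-Reduction}, rather than $\Theta(\Delta^2)$ individual reductions, is precisely what keeps the overall running time at $O(\Delta\log\Delta)$.
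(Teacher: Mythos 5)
Your proposal is correct and follows essentially the same route as the paper's proof: propriety of the paired coloring in Step~1, one-shot reduction from $O(\Delta^2)$ to $O(\Delta)$ colors via Theorem~\ref{Properties of Procedure Algebraic-Color-Reduction}, and then $O(\Delta)$ applications of Procedure \textsc{Reduce-Color} at $O(\log\Delta)$ time each, all with $O(m)$ processors. Your added remark that the linear-in-$\Delta$ palette after Step~2 is what keeps the loop in Step~3 to $O(\Delta)$ rounds is exactly the implicit bookkeeping in the paper's argument.
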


\begin{proof}
    Let $(u,v)\in E_1\cup E_2$. Since $\varphi_1$ and $\varphi_2$ are proper vertex-colorings of $G_1$ and $G_2$, respectively, then either $\varphi_1(u)\neq\varphi_1(v)$ or $\varphi_2(u)\neq\varphi_2(v)$. Hence $\varphi(v)$ (defined on step 1) satisfies $\varphi(u)=(\varphi_1(u),\varphi_2(u))\neq(\varphi_1(v),\varphi_2(v))=\varphi(v)$. Therefore, since $|\varphi_1|\leq \Delta+1$ and $|\varphi_2|\leq \Delta+1$, $\varphi$ is a proper $O(\Delta^2)$-vertex-coloring of $G$. Hence, by Theorem \ref{Properties of Procedure Algebraic-Color-Reduction}, Procedure \textsc{Algebraic-Color-Reduction} computes a proper $O(\Delta)$-vertex-coloring. Since we apply Procedure \textsc{Reduce-Color} on the coloring $\varphi$ until it uses at most $\Delta+1$ colors, and since Procedure \textsc{Reduce-Color} reduces one color and returns a proper vertex-coloring of $G$, the returned coloring $\varphi$ is indeed a proper $(\Delta+1)$-vertex-coloring of $G$.\\
    
    Next, we analyse the complexity of the algorithm. For step 1 we assign a processor to each $v\in V$ that defines its new colors. This part requires $O(1)$ using $O(n)$ processors. Next, by Theorem \ref{Properties of Procedure Algebraic-Color-Reduction} (from Appendix \ref{app: Colors Reducing Procedure}), step 2 (Procedure \textsc{Algebraic-Color-Reduction}) requires $O(\Delta\cdot\log\Delta)$ time using $O(m)$ processors.
    Observe that before step 3, $|\varphi|=O(\Delta)$. By Theorem \ref{Procedure vertex Reduce-Color properties}, each execution of Procedure \textsc{Reduce-Color} requires $O(\log \Delta)$ time using $O(m)$ processors. Hence, overall Procedure \textsc{Merge} requires $O(\Delta\cdot\log\Delta)$ time using $O(m)$ processors.
\end{proof}

We are now ready to describe the main vertex-coloring algorithm of this section. It is based on the distributed algorithm~\cite{barenboim2014combinatorial}. We start by presenting the idea of the algorithm. Let $G=(V,E)$ an input graph. The algorithm is based on a divide-and-conquer approach. The algorithm first partitions the graph $G$ into $\Delta$ edge-disjoint oriented subforests $F_1,F_2,...,F_{\Delta}$. Then, in parallel, it rapidly 3-vertex-colors each of them, using Procedure \textsc{3-Vertex-Coloring-Forest} from Appendix \ref{sec: 3-Vertex-Coloring of Oriented Forests}. Finally, the algorithm combines these $\Delta$ 3-vertex-colorings into a $(\Delta+1)$-vertex-coloring of the input graph $G$ using Procedure \textsc{Merge}.

We now provide a more detailed description of each of its stages.
\begin{description}
    \item[Partition:] In this stage, we partition the edge-set $E$ into $\Delta$ subsets $E_1,E_2,...,E_{\Delta}$ that define the $\Delta$ oriented forests $G^{(0)}_1=(V,E_1),G^{(0)}_2=(V,E_2),...,G^{(0)}_{\Delta}=(V,E_{\Delta})$. For this part, we assign to each edge an index $i\in \{1,2,...,\Delta\}$ that specifies its subset in the partition. The assignment of these indexes will be performed by the lower-id endpoint of the edge. Each vertex $v\in V$ will assign an index to the edges of the form $(v,u)$ such that $id(v)<id(u)$. The orientation of the edges will be towards their higher-id endpoint. 
    \item[Color:] As we will see later, each of $G^{(0)}_1,G^{(0)}_2,...,G^{(0)}_{\Delta}$ is indeed a forest. Hence, we can compute, in parallel, 3-vertex-colorings $\varphi^{(0)}_1,\varphi^{(0)}_2,...,\varphi^{(0)}_{\Delta}$ of them using Procedure \textsc{3-Vertex-Coloring-Forests} from Section \ref{sec: 3-Vertex-Coloring of Oriented Forests}.
    \item[Merge:] In this part we will merge the $\Delta$ colorings computed in the previous part using Procedure \textsc{Merge}. We merge the colorings $\varphi^{(0)}_1,\varphi^{(0)}_2,...,\varphi^{(0)}_{\Delta}$ in pairs in $h=\lceil\log\Delta\rceil$ iterations. In each iteration we split the graphs into pairs, and merge the colorings of these pairs. After $h=\lceil\log\Delta\rceil$ iterations, we will get a proper $(\Delta+1)$-vertex-coloring of $G$.
\end{description}
We now present the full description of this algorithm.\\
\\$\textsc{Procedure}$ $\textsc{Vertex-Coloring}$ $\left(G=(V,E)\right)$
\begin{description}
    \item{\textbf{Step 1.}} \textbf{For} each $v\in V$ in parallel \textbf{do}\\
    \makebox[1cm]{}- Assign a unique index $\text{ind}(v,u)$ from $\{1,2,...,\Delta\}$ to every edge $(v,u)$ \myindent{1.12} incident in $v$, such that $v$ is its lower-id endpoint\\
    \makebox[1cm]{}- For every $i\in\{1,2,...,\Delta\}$, let $E_i=\{e\in E\,|\,\text{ind}(e)=i\}$, and $G^{(0)}_i=(V,E_i)$. 
    \myindent{1}- Orient each edge towards its higher-id endpoint.\\
    \makebox[1cm]{}- For every $i\in\{\Delta+1,\Delta+2,...,2^h\}$, let $G^{(0)}_i=(V,\emptyset)$. \\\makebox[1cm]{}\Comment{empty graphs, to simplify indexing}
    \item{\textbf{Step 2.}} \textbf{For} each $i=1,2,...,2^h$ in parallel \textbf{do}\\
    \makebox[1.15cm]{} $\varphi^{(0)}_i\leftarrow$\textsc{3-Vertex-Coloring-Forests}$\left(G^{(0)}_i\right)$\\\makebox[1cm]{}\Comment{for $i>\Delta$, the coloring $\varphi^{(0)}_i$ uses only the color $0$}
    \item{\textbf{Step 3.}} \textbf{For} $i=1,2,...,h$ \textbf{do}\\
    \makebox[1.15cm]{} \textbf{For} $j=1,2,...,2^{h-i}$ \textbf{do}\\
    \makebox[1.6cm]{}$\varphi^{(i)}_j\leftarrow\textsc{Merge}\left(G^{(i-1)}_{2j-1}, G^{(i-1)}_{2j},\varphi^{(i-1)}_{2j-1},\varphi^{(i-1)}_{2j},\Delta\right)$
\end{description}

Next, we analyse Procedure \textsc{Vertex-Coloring}. We start by showing that the graphs $G^{(0)}_1,G^{(0)}_2,...,G^{(0)}_{\Delta}$ that are defined on step 1 are indeed oriented forests.

\begin{lemma}[Forests decomposition]\label{Forests decomposition}
    Let $G=(V,E)$ be a graph. The subgraphs $G^{(0)}_1,G^{(0)}_2,...,G^{(0)}_{\Delta}$ defined on step 1 of Procedure \textsc{Vertex-Coloring} are oriented forests.
\end{lemma}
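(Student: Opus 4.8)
\textbf{Proof proposal for Lemma \ref{Forests decomposition}.}

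The plan is to show two things about each subgraph $G^{(0)}_i = (V, E_i)$: first, that the orientation assigned to its edges gives every vertex out-degree at most $1$, and second, that the underlying undirected graph is acyclic. Together these are exactly the two defining conditions of an oriented forest (see the definition preceding Section \ref{sec: 3-Vertex-Coloring of Oriented Forests}). For $i > \Delta$ the graph $G^{(0)}_i$ is empty and the claim is trivial, so I would fix $i \in \{1,2,\dots,\Delta\}$ and argue about $G^{(0)}_i$ from here on.

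First I would bound the out-degree. Recall that in step 1, a vertex $v$ assigns indices only to edges $(v,u)$ for which $\mathrm{id}(v) < \mathrm{id}(u)$, and it assigns \emph{distinct} indices in $\{1,2,\dots,\Delta\}$ to these edges (there are at most $\Delta$ such edges since $\deg_G(v) \le \Delta$, so this is possible). An edge is oriented towards its higher-id endpoint. Hence the outgoing edges at a vertex $v$ in $G^{(0)}_i$ are precisely the edges $(v,u) \in E_i$ with $\mathrm{id}(v) < \mathrm{id}(u)$, i.e. the edges to which $v$ itself assigned the index $i$. Since $v$ assigns each index at most once, there is at most one such edge, so $\deg_{\mathrm{out}}(v) \le 1$ in $G^{(0)}_i$.

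Next I would show acyclicity of the underlying undirected graph. Suppose, towards a contradiction, that the underlying undirected graph of $G^{(0)}_i$ contains a cycle $w_1, w_2, \dots, w_\ell, w_1$. Consider a vertex $w_j$ of \emph{maximum} id among $w_1,\dots,w_\ell$. Its two incident cycle edges, $(w_{j-1}, w_j)$ and $(w_j, w_{j+1})$ (indices mod $\ell$), both go from a lower-id endpoint to $w_j$, so both are oriented \emph{towards} $w_j$. But then neither of these edges was assigned its index by $w_j$; rather, $(w_{j-1},w_j)$ was assigned index $i$ by $w_{j-1}$ and $(w_j, w_{j+1})$ was assigned index $i$ by $w_{j+1}$. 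This means $w_j$ has two incoming edges in $G^{(0)}_i$, which is permitted — so I need a sharper argument. The correct contradiction comes from looking instead at the \emph{out-degree} along the cycle: every vertex on an undirected cycle has exactly two cycle-edges, and since the edges are oriented, summing out-degrees over $w_1,\dots,w_\ell$ restricted to the cycle edges gives exactly $\ell$ (each edge contributes $1$ to the out-degree of one endpoint). Hence some $w_j$ has at least $\dots$ — actually, more cleanly: the vertex of maximum id on the cycle has out-degree $0$ among the cycle edges (both oriented inward), so the remaining $\ell - 1$ vertices must account for all $\ell$ units of out-degree among cycle edges, forcing some vertex to have cycle-out-degree $\ge 2$, hence total out-degree $\ge 2$ in $G^{(0)}_i$, contradicting the bound just established. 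Therefore $G^{(0)}_i$ is acyclic, and combined with the out-degree bound, it is an oriented forest.

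The main obstacle, as the sketch above reveals, is getting the acyclicity argument phrased correctly: the naive "maximum-id vertex has two inward edges" observation is true but does not by itself yield a contradiction, because an oriented forest may well have vertices of in-degree $2$ or more. The clean route is the counting argument on out-degrees along the cycle (total out-degree among the $\ell$ cycle edges equals $\ell$, yet the max-id vertex contributes $0$, forcing a vertex with out-degree $\ge 2$). Once that is set up, everything else is a direct unwinding of the index-assignment rule and the orientation rule from step 1, and no calculation is needed.
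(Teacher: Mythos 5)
Your proposal is correct and takes essentially the same approach as the paper: the out-degree bound follows from the fact that each vertex assigns distinct indices to its edges toward higher-id neighbors (so at most one outgoing edge of $v$ lands in $G^{(0)}_i$), and acyclicity follows from the id-based orientation forcing some vertex on any purported cycle to have out-degree at least $2$. The paper packages the acyclicity step slightly more directly (directed cycles are impossible since ids strictly increase along oriented edges, and any non-directed cycle contains a vertex with two outgoing cycle edges --- e.g.\ the minimum-id vertex), whereas your counting argument anchored at the maximum-id vertex reaches the same contradiction by a mildly longer route.
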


\begin{proof}
    Consider a vertex $v\in V$ and an index $i\in \{1,2,...,\Delta\}$. Note that there is at most one edge incident on $v$ that is oriented from $v$ in $G^{(0)}_i$. Indeed, such an edge received the index $i$, and $v$ is its lower-id endpoint. Since each edge received a unique index, there is at most one such edge. Also, since all the edges in the graph are oriented towards their higher-id endpoints, the graph $G_i^{(0)}$ contains no oriented cycles. Note that in any unoriented cycle, there is a vertex with outdegree 2. Hence the graph $G_i^{(0)}$ also contains no unoriented cycles. Hence $G^{(0)}_i$ is a forest.
\end{proof}

We next show that the coloring $\psi$ is a proper $(\Delta+1)$-vertex-coloring.

\begin{lemma}[A bound on the number of colors]\label{delta+1 subcolorings}
    Let $G=(V,E)$ be a graph with maximum degree $\Delta\geq 2$. For each $i\in\left\{0,1,...,h\right\}$ and $j\in\left\{1,2,...,2^{h-i}\right\}$, the coloring $\varphi^{(i)}_j$ is a proper $(\Delta+1)$-vertex-coloring of $G^{(i)}_j$.
\end{lemma}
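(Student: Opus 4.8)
The statement asserts that at every level $i$ of the recursion, and for every index $j$, the coloring $\varphi^{(i)}_j$ is a proper $(\Delta+1)$-vertex-coloring of $G^{(i)}_j$. The natural approach is induction on $i$, exactly mirroring the structure of the recursive merging in Procedure \textsc{Vertex-Coloring}. The base case $i=0$ should follow from the correctness of Procedure \textsc{3-Vertex-Coloring-Forests} (Theorem \ref{3-vertex-coloring properties}): by Lemma \ref{Forests decomposition}, each $G^{(0)}_j$ is an oriented forest, so $\varphi^{(0)}_j$ is a proper $3$-vertex-coloring, and since $\Delta\geq 2$ we have $3\leq\Delta+1$, so it is in particular a proper $(\Delta+1)$-vertex-coloring. (For the dummy empty graphs with $j>\Delta$, the coloring uses only the color $0$, which is trivially proper.)

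For the inductive step, fix $i>0$ and $j\in\{1,2,\dots,2^{h-i}\}$. By construction, $\varphi^{(i)}_j=\textsc{Merge}\left(G^{(i-1)}_{2j-1},G^{(i-1)}_{2j},\varphi^{(i-1)}_{2j-1},\varphi^{(i-1)}_{2j},\Delta\right)$, and $G^{(i)}_j=G^{(i-1)}_{2j-1}\cup G^{(i-1)}_{2j}$. I would first observe that $G^{(i-1)}_{2j-1}$ and $G^{(i-1)}_{2j}$ are edge-disjoint (each original edge received a unique index, so it lies in exactly one $G^{(0)}_k$, and the recursive union preserves this partition). By the induction hypothesis, $\varphi^{(i-1)}_{2j-1}$ and $\varphi^{(i-1)}_{2j}$ are proper $(\Delta+1)$-vertex-colorings of $G^{(i-1)}_{2j-1}$ and $G^{(i-1)}_{2j}$ respectively. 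Since $G^{(i)}_j$ is a subgraph of $G$, its maximum degree is at most $\Delta$. Hence Theorem \ref{Procedure Merge properties} applies directly with the two input colorings and the degree bound $\Delta$, and yields that $\varphi^{(i)}_j$ is a proper $(\Delta+1)$-vertex-coloring of $G^{(i)}_j$, completing the induction.

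The only genuinely delicate point is the bookkeeping around the dummy empty graphs $G^{(0)}_k$ for $k\in\{\Delta+1,\dots,2^h\}$, and more generally making sure Theorem \ref{Procedure Merge properties} is legitimately applicable when one (or both) of the two graphs being merged is empty or has degree strictly less than $\Delta$. For an empty graph, the $0$-coloring is a proper $1$-vertex-coloring, which is also a proper $(\Delta+1)$-vertex-coloring; and Theorem \ref{Procedure Merge properties} only requires that $\Delta$ be an \emph{upper bound} on the maximum degree of $G^{(i)}_j=G^{(i-1)}_{2j-1}\cup G^{(i-1)}_{2j}$, not the exact value, so merging with a degree parameter $\Delta$ is always valid. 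I would spell this out in one sentence to avoid any ambiguity. Everything else—edge-disjointness, the subgraph bound on the maximum degree—is routine and follows from Lemma \ref{Forests decomposition} and the partition structure established on step 1 of the procedure.
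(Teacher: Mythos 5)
Your proof matches the paper's own argument: induction on $i$, with the base case given by Lemma \ref{Forests decomposition} and Theorem \ref{3-vertex-coloring properties} (a proper $3$-coloring being a $(\Delta+1)$-coloring since $\Delta\geq 2$), and the inductive step given by Theorem \ref{Procedure Merge properties}. Your extra remarks on edge-disjointness and the dummy empty graphs are harmless elaborations of the same argument, so the proposal is correct and essentially identical to the paper's proof.
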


\begin{proof}
    We prove the lemma by induction on $i$.\\
    For $i=0$, by Lemma \ref{Forests decomposition}, for each $j\in\left\{1,2,...,\Delta\right\}$, $G^{(0)}_j$ is an oriented forest, and by Theorem \ref{3-vertex-coloring properties}, $\varphi^{(0)}_j$ is indeed a proper $3$-vertex-coloring of $G^{(0)}_j$, which is in particular, a proper $(\Delta+1)$-vertex-coloring of $G^{(0)}_j$.\\
    Assume that for each $j\in\left\{1,2,...,\frac{\Delta}{2^i}\right\}$ the coloring $\varphi^{(i)}_j$ is a proper $(\Delta+1)$-vertex-coloring of $G^{(i)}_j$, for some $i\in\{1,2,...,h-1\}$. Let $j\in\left\{1,2,...,\frac{\Delta}{2^{i+1}}\right\}$, and  $\varphi^{(i+1)}_{j}=\textsc{Merge}\left(G^{(i)}_{2j-1},G^{(i)}_{2j},\varphi^{(i)}_{2j-1},\varphi^{(i)}_{2j},\Delta\right)$, as defined in step 3. By Theorem \ref{Procedure Merge properties}, $\varphi^{(i+1)}_{j}$ is indeed a proper $(\Delta+1)$-vertex-coloring of $G^{(i+1)}_j$.\\
    Hence, we conclude that the coloring $\varphi^{\left(h\right)}_1$ that is returned by the algorithm is a proper $(\Delta+1)$-vertex-coloring of $G^{\left(h\right)}_1=G$.
\end{proof}

Observe that one can omit the assumption that $\Delta\geq 2$. Otherwise, the graph is a union of a matching and an independent set, and it can be colored using $\Delta+1\leq 2$ colors using $O(n)$ processors in $O(1)$ time.
Finally, we analyse the complexity of the algorithm.

\begin{lemma}[Complexity of Procedure \textsc{Vertex-Coloring}]
    Let $G=(V,E)$ be an $n$-vertex $m$-edge graph with maximum degree $\Delta$. Procedure \textsc{Vertex-Coloring} requires $O(\log\log n+\Delta\cdot\log^2\Delta)$ time using $O(m)$ processors.
\end{lemma}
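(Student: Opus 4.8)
The plan is to bound the three stages of Procedure \textsc{Vertex-Coloring} separately and then add up the costs. The skeleton is identical to the complexity analysis of Procedure \textsc{Edge-Coloring}: a partition stage that is essentially free, a base-case coloring stage done in parallel over edge-disjoint subgraphs, and a merge stage organized as a balanced binary tree of depth $h=\lceil\log\Delta\rceil$.

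First I would handle Step 1. Assigning unique indices in $\{1,2,\dots,\Delta\}$ to the out-edges of each vertex, orienting each edge toward its higher-id endpoint, and forming the adjacency structure of the $G^{(0)}_i$'s can all be done by a processor per edge (or per vertex) in $O(1)$ time using $O(m)$ processors; the dummy empty graphs $G^{(0)}_i$ for $i>\Delta$ cost nothing. Then Step 2: by Lemma~\ref{Forests decomposition} each $G^{(0)}_i$ is an oriented forest, so Theorem~\ref{3-vertex-coloring properties} computes a $3$-vertex-coloring of it in $O(\log^* n)$ time, and $O(m^{(0)}_i)$ processors for the ``reduce'' steps plus $O(n)$ processors only for the one step that initializes colors by identity. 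Since the forests are edge-disjoint, $\sum_i m^{(0)}_i = m$, so running all $2^h$ instances in parallel costs $O(\log^* n)$ time and $O(m+n)=O(m)$ processors (using $m\ge n/2$).

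The main work is Step 3. At level $i\in\{1,\dots,h\}$ we invoke Procedure \textsc{Merge} in parallel on the $2^{h-i}$ pairs $\bigl(G^{(i-1)}_{2j-1},G^{(i-1)}_{2j}\bigr)$. By Theorem~\ref{Procedure Merge properties}, each such merge runs in $O(\Delta\cdot\log\Delta)$ time using $O\bigl(m^{(i)}_j\bigr)$ processors, where $m^{(i)}_j$ is the number of edges of $G^{(i)}_j$. The $G^{(i)}_j$ at a fixed level $i$ are pairwise edge-disjoint and their union is (a subgraph of) $G$, so $\sum_j m^{(i)}_j \le m$, and hence the whole level costs $O(\Delta\cdot\log\Delta)$ time using $O(m)$ processors. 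Summing over the $h=\lceil\log\Delta\rceil$ levels gives $O(\Delta\cdot\log^2\Delta)$ time and, since the levels run sequentially, still only $O(m)$ processors. Adding the $O(\log^* n)$ from Steps 1--2 and noting $\log^* n = O(\log\log n)$, the total is $O(\log\log n + \Delta\cdot\log^2\Delta)$ time using $O(m)$ processors, as claimed.

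The one subtlety to be careful about — and the place I expect the argument to need a sentence of justification rather than being routine — is the processor accounting for Procedure \textsc{Merge}: Theorem~\ref{Procedure Merge properties} already folds in the cost of Procedure \textsc{Algebraic-Color-Reduction} and the iterated $O(\Delta)$ applications of the vertex-version of Procedure \textsc{Reduce-Color}, each of which needs $O(\log\Delta)$ time, so one must check that this $\Delta\cdot\log\Delta$ factor appears only once per level (not once per merge in a way that compounds across a level) and that the edge-disjointness argument genuinely bounds the per-level processor count by $O(m)$ rather than $O(m\cdot 2^{h-i})$. Since each edge of $G$ lies in exactly one $G^{(i)}_j$ at level $i$, the bound $\sum_j m^{(i)}_j\le m$ holds and the concern dissolves; the $\log^2\Delta$ in the time bound is exactly the product of the depth $\log\Delta$ of the merge tree with the $\log\Delta$ inside each merge's color-reduction loop.
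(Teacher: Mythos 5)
Your proposal is correct and takes essentially the same route as the paper's proof: both treat Step 1 as negligible, both charge Step 2 at $O(\log^* n)$ time and $O(m)$ processors via edge-disjointness of the forests (handling the identity-initialization separately with $O(n)$ processors), and both charge each of the $\lceil\log\Delta\rceil$ merge levels at $O(\Delta\cdot\log\Delta)$ time and $O(m)$ processors via edge-disjointness of the $G^{(i)}_j$, giving $O(\log^* n+\Delta\cdot\log^2\Delta)=O(\log\log n+\Delta\cdot\log^2\Delta)$ overall. The only cosmetic difference is in Step 1, where the paper assigns edge indices by sorting each vertex's higher-id neighbors in $O(\log\Delta)$ time rather than your $O(1)$ claim; either way this term is dominated by the merge stage, so the stated bound is unaffected.
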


\begin{proof}
    First, we analyse the assignment of indexes to the edges.
    To this end, we assign a processor to each $v\in V$ and a neighbor $u$ of $v$. These processors sort the neighbors of $v$ that have a higher-id than $v$, and assign to the edge that connects $v$ with a higher-id neighbor $u$ its index in the sorted order. This process requires $O(\log\Delta)$ time using $O(m)$ processors overall.
    Note that on step 2, all the forests $G_i^{(0)}$, $i\in\{1,2,...,2^h\}$, are edge-disjoint. Initializing vertex-colors in all these forests requires $O(1)$ time using $O(n)$ processors. Denote by $m_i$ the number of edges in $G^{(0)}_i$, for every $i\in\{1,2,...,2^h\}$. The rest of the parallel executions of Procedure \textsc{3-Vertex-Coloring-Forests} on all these forests $G^{(0)}_1,...,G^{(0)}_{2^h}$ requires (by Theorem \ref{3-vertex-coloring properties}) $O(\log^* n)$ time, and $O\left(\bigcup_{i=1}^{2^h}m_i\right)=O(m)$ processors.
    Finally, by Theorem \ref{Procedure Merge properties}, for each $i\in\left\{1,2,...,\lceil\log\Delta\rceil\right\}$, the merging process in step 3 requires $O(\Delta\cdot\log\Delta)$ time using $O(m)$ processors. Hence, overall Procedure \textsc{Vertex-Coloring} requires $O(\log^* n+\Delta\cdot\log^2\Delta)$ time using $O(m+n)=O(m)$ processors. (We assume, without loss of generality, that the graph is connected.)
\end{proof}

We summarize the main result of this section in the next theorem.

\begin{theorem}[Properties of Procedure \textsc{Vertex-Coloring}]\label{procedure vertex-coloring properties}
    Let $G=(V,E)$ be an $n$-vertex $m$-edge graph with maximum degree $\Delta$. Procedure \textsc{Vertex-Coloring} computes a proper $(\Delta+1)$-vertex-coloring of $G$ in $O(\log^* n+\Delta\cdot\log^2\Delta)$ time using $O(m)$ processors.
\end{theorem}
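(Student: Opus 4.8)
The plan is to obtain the theorem by assembling the three structural lemmas already established for Procedure \textsc{Vertex-Coloring}. I would first dispose of the degenerate case $\Delta\le 1$, where $G$ is a disjoint union of a matching and isolated vertices and a proper $2$-coloring is found greedily in $O(1)$ time with $O(n)$ processors; so from now on assume $\Delta\ge 2$, as noted after Lemma~\ref{delta+1 subcolorings}.

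For correctness I would argue in two stages. First, Lemma~\ref{Forests decomposition} shows that each subgraph $G^{(0)}_i$ created on step~1 is an oriented forest: the ``orient towards the higher-id endpoint'' rule forbids oriented cycles, each vertex has at most one out-edge inside a fixed index class (indices are assigned uniquely by the lower-id endpoint), and any unoriented cycle would force some vertex to have out-degree $2$. Second, I would invoke Lemma~\ref{delta+1 subcolorings}, whose proof is an induction on the merge level $i$: the base case $i=0$ follows from Theorem~\ref{3-vertex-coloring properties}, since a proper $3$-coloring of a forest is a fortiori a proper $(\Delta+1)$-coloring when $\Delta\ge2$; the inductive step applies Theorem~\ref{Procedure Merge properties} to $G^{(i-1)}_{2j-1}\cup G^{(i-1)}_{2j}$, using that $\Delta$ bounds the maximum degree of this union because it is a subgraph of $G$. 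Taking $i=h=\lceil\log\Delta\rceil$ yields that the returned coloring $\varphi^{(h)}_1$ is a proper $(\Delta+1)$-vertex-coloring of $G^{(h)}_1=G$.

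For the resource bounds I would add up the three phases, as in the preceding complexity lemma. Step~1 (index assignment) sorts, for each $v$, its higher-id neighbors and hands out indices in $\{1,\dots,\Delta\}$; this costs $O(\log\Delta)$ time with $\sum_{v}O(\deg v)=O(m)$ processors. Step~2 initializes colors by identifiers in $O(1)$ time with $O(n)$ processors, then runs Procedure \textsc{3-Vertex-Coloring-Forests} in parallel on $G^{(0)}_1,\dots,G^{(0)}_{2^h}$; since these forests are edge-disjoint, $\sum_i m_i=m$, so by Theorem~\ref{3-vertex-coloring properties} this costs $O(\log^* n)$ time and $O(m)$ processors. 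Step~3 performs $h=O(\log\Delta)$ merge rounds; within each round the subgraphs being merged are pairwise edge-disjoint, hence by Theorem~\ref{Procedure Merge properties} each round costs $O(\Delta\log\Delta)$ time and $O(m)$ processors. Summing over the $O(\log\Delta)$ rounds, and using $O(n)=O(m)$ for connected $G$, gives $O(\log^* n+\Delta\log^2\Delta)$ time and $O(m)$ processors.

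The part I would be most careful about is the processor bookkeeping in the merge phase: the per-round processor count must telescope to $O(m)$ rather than pick up a factor equal to the number of subgraphs, which requires that the $G^{(i)}_j$ at any fixed level $i$ be pairwise edge-disjoint; and simultaneously each intermediate union must have maximum degree at most $\Delta$, so that the $(\Delta+1)$-coloring hypothesis of Procedure \textsc{Merge} — and the $O(\Delta^2)\to O(\Delta)$ reduction inside it via Procedure \textsc{Algebraic-Color-Reduction} together with the iterated $\textsc{Reduce-Color}$ steps — is genuinely satisfied. Both facts hold because every $G^{(i)}_j$ is an edge-disjoint union of some of the original forests, hence a subgraph of $G$, but this is exactly the point where the argument must be explicit rather than hand-waved.
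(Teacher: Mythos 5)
Your proposal is correct and follows essentially the same route as the paper: correctness is obtained by combining Lemma~\ref{Forests decomposition}, the induction of Lemma~\ref{delta+1 subcolorings} (via Theorem~\ref{3-vertex-coloring properties} and Theorem~\ref{Procedure Merge properties}), and the complexity bound by summing the index-assignment step, the parallel forest colorings over edge-disjoint forests, and the $O(\log\Delta)$ merge rounds at $O(\Delta\log\Delta)$ time and $O(m)$ processors each. Your explicit remarks on the edge-disjointness of the $G^{(i)}_j$ at each level and on each union being a subgraph of $G$ (so its maximum degree is at most $\Delta$) are exactly the points the paper relies on implicitly, so nothing is missing.
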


\subsection{A Simple Coloring of Graphs with Bounded Arboricity}\label{sec: A Simple Coloring of Graphs with Bounded Arboricity}
In this section we adapt a distributed algorithm of Barenboim and Elkin~\cite{barenboim2008sublogarithmic} for $O(a)$-vertex-coloring of graphs with arboricity at most $a$. \emph{Arboricity} of a graph $G=(V,E)$ is defined by $a(G)=\max_{U\subseteq V,|U|\geq 2}\left\{\frac{|E(U)|}{|U|-1}\right\}$. The algorithm requires $O(a\cdot\log n)$ rounds. 

\begin{definition}[$H$-decomposition~\cite{barenboim2008sublogarithmic}]
    For a parameter $A$, an \emph{$H$-decomposition} of a graph $G=(V,E)$ is a partition $V=\bigcup_{i=1}^{\ell}H_i$, for some $\ell\geq 1$, that satisfies that for every index $i\in \{1,2,...,\ell\}$, and every vertex $v\in H_i$, $v$ has at most $A$ neighbors in $\bigcup_{j=i}^{\ell}H_j$.
\end{definition}

\begin{theorem}[$\mathrm{PRAM}$ computation of $H$-decomposition]\label{H-decomposition pram}
    Let $G=(V,E)$ an $n$-vertex $m$-edge graph with arboricity $a$. A computation of an $H$-decomposition of $G$ with out-degree $A=O(a)$ requires $O(\log n\cdot\log\Delta)$ time using $O(m)$ processors.
\end{theorem}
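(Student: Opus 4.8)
The plan is to implement, in the $\mathrm{CRCW\ PRAM}$ model, the low-degree peeling procedure underlying the $H$-partition of Barenboim and Elkin~\cite{barenboim2008sublogarithmic}. Fix a small constant $\varepsilon>0$ and set $A=\lceil(2+\varepsilon)a\rceil=O(a)$. The algorithm runs in iterations: we maintain a vertex set $V_i$ with $V_1=V$, and in iteration $i$ we let $H_i$ be the set of all $v\in V_i$ whose degree in the induced subgraph $G[V_i]$ is at most $A$, and put $V_{i+1}=V_i\setminus H_i$, halting when $V_{i+1}=\emptyset$; let $\ell$ be the number of iterations. This yields the partition $V=\bigcup_{i=1}^{\ell}H_i$. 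We orient every edge $(v,u)$ with $v\in H_i$, $u\in H_j$ from the endpoint in the earlier layer to the endpoint in the later layer, breaking ties when $i=j$ by vertex identifier.

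For correctness, first note that for $v\in H_i$, all of $v$'s neighbors in $\bigcup_{j\ge i}H_j=V_i$ number at most $A$ by the defining property of $H_i$; hence $v$ has at most $A$ neighbors in $\bigcup_{j\ge i}H_j$, so the partition is an $H$-decomposition with parameter $A=O(a)$, and moreover the induced orientation has out-degree at most $A$ and is acyclic (all edges point to strictly higher $(\text{layer},\text{id})$ keys). It remains to bound $\ell$. Since every induced subgraph of $G$ also has arboricity at most $a$, we have $|E(G[V_i])|\le a\,(|V_i|-1)<a\,|V_i|$, so $\sum_{v\in V_i}\deg_{G[V_i]}(v)=2\,|E(G[V_i])|<2a\,|V_i|$; by a Markov/averaging argument, strictly fewer than $\tfrac{2a|V_i|}{(2+\varepsilon)a}=\tfrac{2}{2+\varepsilon}|V_i|$ vertices of $V_i$ have degree exceeding $A$ in $G[V_i]$, so $|H_i|\ge\tfrac{\varepsilon}{2+\varepsilon}|V_i|$ and $|V_{i+1}|\le\tfrac{2}{2+\varepsilon}|V_i|$. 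Therefore $\ell=O(\log n)$.

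For the complexity, allocate once and for all one processor for each edge and two for each edge–endpoint pair; this is $O(m)$ processors, reused in every iteration. In iteration $i$, each edge processor knows whether both endpoints still lie in $V_i$ (a single Boolean bit per vertex marks whether it has been removed). To compute $\deg_{G[V_i]}(v)$, the $\deg_G(v)$ processors associated with $v$'s incident edges perform a parallel reduction over $v$'s adjacency list counting alive endpoints, which takes $O(\log\deg_G(v))=O(\log\Delta)$ time; $v$ joins $H_i$ iff this count is at most $A$. Marking the vertices of $H_i$ as removed and finalizing the orientation of edges incident to $H_i$ takes $O(1)$ further time (the layer index $i$ and id of each endpoint suffice to orient each such edge). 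Summed over all vertices, each iteration uses $O(m)$ processors and $O(\log\Delta)$ time, so the total is $O(\ell\cdot\log\Delta)=O(\log n\cdot\log\Delta)$ time with $O(m)$ processors, as claimed. The one point requiring care is precisely this processor/time accounting for the repeated degree reductions over the shrinking adjacency lists; since degrees never exceed $\Delta$ and the processor pool is fixed, each reduction costs $O(\log\Delta)$ time, which is exactly where the $\log\Delta$ factor (rather than $\log n$) in the running time comes from.
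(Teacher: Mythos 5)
Your proposal is correct and follows essentially the same route as the paper: the iterative low-degree peeling of Barenboim--Elkin with $A=O(a)$, the standard arboricity/averaging argument giving a constant-fraction removal per phase and hence $\ell=O(\log n)$ phases, and a per-phase cost of $O(\log\Delta)$ time with $O(m)$ edge processors. The paper only sketches this (it maintains degree counters and decrements them, while you recompute induced degrees by a reduction over adjacency lists), but both implementations give the same $O(\log n\cdot\log\Delta)$ time and $O(m)$ processor bounds.
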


\begin{proof}
    The algorithm is composed of $\ell=O(\log n)$ phases. Suppose that at the beginning of a phase $i$, every vertex $v$ knows its degree in $V\setminus \bigcup_{j=1}^{i-1}H_j$. Then the vertices with degree at most $A$ join the set $H_i$. For every vertex $v$ that joins $H_i$, each of its neighbors in $V\setminus \bigcup_{j=1}^{i}H_j$ decrements its degree by 1. By designating a processor to each edge, this can be done in $O(\log\Delta)$ time using $O(n+m)=O(m)$ processors.

\end{proof}

The $O(a)$-vertex-coloring algorithm starts with computing an $H$-decomposition of the input graph $G=(V,E)$ with out-degree $A=O(a)$.
Once the $H$-decomposition is in place, we compute in parallel an $(A+1)$-vertex-coloring $\varphi_i$ of every $H$-set $H_i$, for all $i\in\{1,2,...,\ell\}$.
In distributed setting this can be done in $\Tilde{O}\left(\sqrt{A}\right)+\log^* n$ time via Barenboim's algorithm~\cite{barenboim2016deterministic}. In $\mathrm{PRAM}$ setting we have seen (see Theorem \ref{procedure vertex-coloring properties}) that this can be done in $O(A\cdot\log^2 A+\log^* n)$ time and $O(m)$ processors.
Finally, there is a recoloring step, during which the vertices compute a proper $(A+1)$-vertex-coloring $\psi$ in $O(A\cdot\log n)=O(a\cdot\log n)$ distributed rounds. (In $\mathrm{PRAM}$ setting each of these rounds corresponds to a phase, whose time and work complexities we analyze below.)

The vertices $v$ of $H_{\ell}$ retain their colors, i.e., set $\psi(v)=\varphi_{\ell}(v)$. Next, iteratively we run $A+1$ recoloring phases for each $H_{j}$, $j\in\{1,2,...,\ell-1\}$, starting at $H_{\ell-1}$ until $H_1$. Consider some index $1\leq j<\ell$, and suppose that all vertices of $\bigcup_{h=j+1}^{\ell}H_h$ were already recolored, and we are now recoloring the vertices of $H_j$. We spend one phase for each of the $A+1$ color classes of $\varphi_j$. Suppose that for some $i\in\{0,1,...,A\}$, vertices of the first $i$ color classes of $\varphi_j$ were already recolored, and we are now recoloring vertices $v\in H_j$ with $\varphi_j(v)=i+1$. Each of these vertices has at most $A$ recolored neighbors $\left(\text{in }\bigcup_{h=j}^{\ell}H_h\right)$. Hence there is an available color for $v$ in $[A+1]\setminus\left\{\psi(u)\mid \text{$u\in \bigcup_{h=j}^{\ell}H_h$ is a neighbor of $v$ that was already recolored}\right\}$, and $v$ sets $\varphi(u)$ to be such a color. (All vertices $v\in H_j$ with $\varphi_j(v)=i+1$ do this in parallel. Note that since they form an independent set, the resulting coloring is proper.) 

Next, we analyze the $\mathrm{PRAM}$ running time of this algorithm. Each time a vertex $v$ colors itself with a color $\psi(v)$, it uses $\deg(v)$ processors associated with it to eliminate $\psi(v)$ from palettes of its neighbors. Finding an available color can then be done in $O(\log A)=O(\log a)$ time using $O(A)=O(a)$ processors (per vertex). Hence, overall $(A+1)\cdot\log n$ recoloring phases require $O(A\cdot\log A\cdot\log n)=O(a\cdot\log a\cdot\log n)$ time and $O(|E|)$ processors. Hence, the overall $\mathrm{PRAM}$ time for this step is $O(a\cdot\log a\cdot\log n)$, using $O(|E|)$ processors.
Hence, the overall time is $O\left(a\cdot\log^2 a+\log\log n\right)+O\left(a\cdot\log a\cdot\log n\right)=O\left(a\cdot\log a\cdot\log n\right)$.

We summarize this discussion with the following theorem.
\begin{theorem}[An adaptation of~\cite{barenboim2008sublogarithmic}]\label{Adaptation of BE08}
    Given an $n$-vertex $m$-edge graph $G=(V,E)$ with arboricity $a$, an $O\left(a\right)$-vertex-coloring can be computed in $O\left((a\cdot\log a+\log\Delta)\cdot\log n\right)$ time, using $O\left(m\right)$ processors.
\end{theorem}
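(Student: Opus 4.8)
The plan is to follow the structure of the algorithm described in the paragraphs preceding the theorem: first build an $H$-decomposition with out-degree $A = O(a)$, then color each $H$-set in parallel, then perform the layered recolouring from the bottom layer $H_\ell$ upward. Most of the heavy lifting is already packaged into earlier results, so the proof will be mostly bookkeeping of time and processor bounds, plus a short correctness argument for the recolouring step.

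\textbf{Step 1 (Decomposition).} I would invoke Theorem~\ref{H-decomposition pram} to compute an $H$-decomposition $V = \bigcup_{i=1}^{\ell} H_i$ with out-degree $A = O(a)$, where $\ell = O(\log n)$. This costs $O(\log n \cdot \log \Delta)$ time and $O(m)$ processors. Note $\log\Delta \le O(\log a \cdot \log n)$ is a crude but sufficient bound, so this term is absorbed into the final $O((a\log a + \log\Delta)\log n)$ bound; in fact I will just carry the $\log\Delta\cdot\log n$ term along explicitly.

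\textbf{Step 2 (Colour each layer).} For each $i$, the subgraph $G[H_i]$ has maximum degree at most... actually not bounded by $A$ in general — but each vertex of $H_i$ has at most $A$ neighbours \emph{in higher layers}; within $H_i$ the degree may still be up to $\Delta$. The earlier text instead computes an $(A+1)$-vertex-colouring $\varphi_i$ of $H_i$ by orienting edges of $G[H_i]$ so that every vertex has out-degree $\le A$ (possible since the induced subgraph still has arboricity $\le a$, hence degeneracy $\le 2a-1$ by Claim~\ref{claim: arboricity and degeneracy}, giving an ordering with $\le A$ forward neighbours), and then running the $\mathrm{PRAM}$ colouring of Theorem~\ref{procedure vertex-coloring properties}. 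I would state that this produces, in parallel over all $i$, colourings $\varphi_i$ using palette $\{1,\dots,A+1\}$ in $O(\log^* n + A\log^2 A)$ time and $O(m)$ processors total, since the $H_i$ are vertex-disjoint and each edge is used in at most one such subgraph (edges between different layers are simply ignored here). Actually one must be careful: the induced subgraphs $G[H_i]$ are vertex-disjoint so the processor count is additive and bounded by $O(m)$.

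\textbf{Step 3 (Layered recolouring).} I would describe the iterative recolouring exactly as in the text: vertices of $H_\ell$ keep $\psi(v)=\varphi_\ell(v)$; then for $j = \ell-1$ down to $1$, and for each of the $A+1$ colour classes of $\varphi_j$ in turn, all vertices $v\in H_j$ in the current class (which form an independent set) pick a colour in $[A+1]$ not used by any already-recoloured neighbour in $\bigcup_{h\ge j}H_h$. Correctness: each such $v$ has at most $A$ neighbours in $\bigcup_{h\ge j}H_h$ by the $H$-decomposition property, so $[A+1]$ minus their colours is nonempty; and since the vertices recoloured in one phase form an independent set, no two of them conflict. Hence $\psi$ is a proper $(A+1) = O(a)$-vertex-colouring of $G$.

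\textbf{Complexity of Step 3 and conclusion.} There are $\ell(A+1) = O(a\log n)$ phases. In each phase, every vertex recolouring itself uses its $\deg(v)$ processors to delete its new colour from neighbours' palettes and to scan a length-$(A+1)$ palette for a free colour, which takes $O(\log A) = O(\log a)$ time; the total processor count across all vertices in a phase is $O(\sum_v \deg(v)) = O(m)$. So Step 3 costs $O(a\log a \cdot \log n)$ time and $O(m)$ processors. Adding the three steps, the total is $O(\log\Delta\cdot\log n) + O(\log^* n + a\log^2 a) + O(a\log a\cdot\log n) = O((a\log a + \log\Delta)\cdot\log n)$ time with $O(m)$ processors, which is exactly the claimed bound. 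The main obstacle I anticipate is purely expository: making sure the colouring of each $H_i$ in Step 2 is justified (the induced subgraph has small degeneracy, not small degree, so one needs the arboricity/degeneracy bound and the orientation trick rather than a naive degree bound), and being careful that the per-phase processor count in Step 3 is genuinely $O(m)$ rather than $O(m)$ per layer — this works because in each phase only vertices of a single layer act, and those vertices together have at most $2m$ incident edge-endpoints.
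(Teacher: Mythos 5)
Your overall route is the same as the paper's (compute an $H$-decomposition, color each $H_i$ with $A+1$ colors in parallel, then recolor layer by layer from $H_\ell$ down to $H_1$, one $\varphi_j$-color class per phase), and your Steps 1 and 3, including the complexity accounting of $O(\log n\cdot\log\Delta)$ for the decomposition and $\ell(A+1)=O(a\log n)$ recoloring phases at $O(\log a)$ time and $O(m)$ processors each, match the paper's proof. However, your Step 2 contains a genuine error. You assert that within $H_i$ the degree ``may still be up to $\Delta$'' and then try to repair this with a degeneracy ordering of $G[H_i]$. This misreads the $H$-decomposition: in its definition the union $\bigcup_{j=i}^{\ell}H_j$ starts at $j=i$ and therefore contains $H_i$ itself, so every $v\in H_i$ has at most $A$ neighbors \emph{inside $H_i$}, i.e.\ $\Delta\left(G[H_i]\right)\leq A$. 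That is exactly why the paper can apply Theorem~\ref{procedure vertex-coloring properties} directly to each $G[H_i]$ and obtain an $(A+1)$-coloring in $O\left(\log^* n+A\cdot\log^2 A\right)$ time with $O(m)$ processors overall.

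Your proposed repair does not work as stated: Theorem~\ref{procedure vertex-coloring properties} is a $(\Delta+1)$-vertex-coloring algorithm whose palette size and running time are governed by the maximum degree of the subgraph it is run on, not by an out-degree orientation or a degeneracy ordering. If $\Delta\left(G[H_i]\right)$ really could be as large as $\Delta$, invoking that theorem would only yield a $(\Delta+1)$-coloring in $O\left(\log^* n+\Delta\cdot\log^2\Delta\right)$ time, which gives neither the palette $\{1,\dots,A+1\}$ that your Step 3 needs (the recoloring argument relies on having at most $A$ relevant neighbors versus $A+1$ available colors) nor the claimed time bound; and the orientation-based coloring results in the paper (e.g.\ Corollary~\ref{Adaptation of [BE08], Algorithm Arb-Linial}) give $O\left(A^2\right)$ colors, not $A+1$. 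Once Step 2 is fixed by the observation that the $H$-decomposition already bounds the degree inside each layer by $A$, the rest of your argument coincides with the paper's proof and is correct.
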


\subsection{Computing $\Delta^{1+o(1)}$-Vertex-Coloring in $\mathrm{PRAM}$ Time $O(\log n\cdot\log ^{O(1)}\Delta)$}\label{app:Delta^{1+o(1)}-Vertex-Coloring}
We adapt the distributed algorithm of Barenboim and Elkin~\cite{barenboim2011deterministic}, that given a graph with arboricity $a$ and a parameter $\varepsilon>0$, computes an $O(a^{1+\varepsilon})$-vertex-coloring in $O\left(\frac{1}{\varepsilon}\cdot a^{2\varepsilon}\cdot\log n\right)$ rounds, to the $\mathrm{PRAM}$ setting.

We start with the following definition. 
\begin{definition}[Arbdefective Coloring~\cite{barenboim2011deterministic}]
    A vertex-coloring $\psi$ of a graph $G$ is called \emph{$t$-coloring $q$-arbdefective} (for a pair of parameters $t$ and $q$) if it uses $t$ colors and each color class of $\psi$ induces a graph with arboricity at most $q$. We also say that $q$ is the \emph{arbdefect} of the coloring $\psi$.
\end{definition} 
The algorithm starts with computing an $H$-decomposition with out-degree $A=O(a)$. This step requires $O(\log n)$ distributed rounds, and also (by Theorem \ref{H-decomposition pram}) $O(\log n\cdot\log \Delta)$ $\mathrm{PRAM}$ time with $O(|E|)$ processors. The algorithm now computes (in parallel) a $c\cdot A^{2\varepsilon}$-coloring, $(A^{1-\varepsilon})$-defective $\varphi_i$ (using the parameter $p=A^{\varepsilon}$), of $H_i$, for every $i\in\{1,2,...,\ell\}$, for a fixed constant $c>0$. (Recall that the maximum degree within each $H_i$ is at most $A$.) By Theorem \ref{A variant of defective coloring of BEK14} (in Appendix \ref{sec: Extension of Linial's Algorithm to Defective Coloring}), this step requires $O(\log a\cdot\log^*n+\log\log n)$ time and $O\left(|E|\cdot\frac{a^{\varepsilon}\cdot\log n}{\log(a^{\varepsilon}\cdot\log n)}\right)$ processors. 

Now the algorithm recolors the graph in the following way in $A^{\varepsilon}$ colors. (The new coloring will be called $\psi$.) The algorithm recolors the vertices in each $H_i$ ($i\in\{1,2,...,\ell\}$), starting with recoloring $H_{\ell}$, then $H_{\ell-1}$, and finally it recolors $H_1$. It spends $c\cdot A^{2\varepsilon}$ phases on each one of them. For each $j\in\{1,2,...,\ell\}$, each of the $c\cdot A^{2\varepsilon}$ phases dedicated to $H_j$ is spent on recoloring vertices $v\in H_j$ with a given color class of $\varphi_j$. Specifically, vertices of the first $A^{\varepsilon}$ color classes of $H_{\ell}$ retain their colors, i.e., they set $\psi(v)=\varphi_{\ell}(v)$. Then vertices of the color class $A^{\varepsilon}+1$ with respect to $\varphi_{\ell}$ recolor themselves in parallel. Each $v$ with $\varphi_{\ell}(v)=A^{\varepsilon}+1$ has up to $A$ neighbors that were already recolored (in a color from $\left\{1,2,...,A^{\varepsilon}\right\}$). It now selects a color that is used by at most $A^{1-\varepsilon}$ of its recolored neighbors, and sets $\psi(v)$ to be this color (such a color exists by the pigeonhole principle). Then vertices of the color class $A^{\varepsilon}+2$ with respect to $\varphi_{\ell}$ do the same, etc.,..., and finally, vertices of $\varphi_{\ell}$-color $c\cdot A^{2\varepsilon}$ recolor themselves. Once this is done, vertices of $H_{\ell-1}$-color 1 do the same, then with $\varphi_{\ell-1}$-color 2, etc.,..., and finally, vertices of $\varphi_{\ell-1}$-color $c\cdot A^{2\varepsilon}$. Then the algorithm does the same with $H_{\ell-2},...,H_1$. It is not hard to see that the resulting coloring $\psi$ is $A^{\varepsilon}$-coloring, $O(A^{1-\varepsilon})$-arbdefective. (Indeed, when a vertex $v\in H_i$ is colored by $\psi(v)$, there are at most $A^{1-\varepsilon}$ neighbors that are already recolored with the same $\psi$-color. In addition, there are up to $A^{1-\varepsilon}$ neighbors of $v$ with the same $\varphi_i$-color. Each of these neighbors recolors itself simultaneously with $v$, and may end up be colored by $\psi(v)$ as well.) At this point the algorithm recurses on each of the $A^{\varepsilon}$ color classes in parallel. It does so for $O\left(\frac{1}{\varepsilon}\right)$ recursion levels, up until we are left with $O\left(2^{\frac{1}{\varepsilon}}\cdot A^{1-\varepsilon}\right)$ subgraphs with arboricity $O(A^{\varepsilon})$ each. We then use the algorithm from Appendix \ref{sec: A Simple Coloring of Graphs with Bounded Arboricity} to $O(A^{\varepsilon})$-color each such subgraph within $O\left(\left(a^{\varepsilon}\cdot\log a\cdot\varepsilon+\log\Delta\right)\cdot\log n\right)$ time, $O(|E|)$ processors (see Theorem \ref{Adaptation of BE08}). Overall we obtain $O\left(2^{\frac{1}{\varepsilon}}\cdot A\right)=O\left(2^{\frac{1}{\varepsilon}}\cdot a\right)$-vertex-coloring in $\mathrm{PRAM}$ time $O\left(\frac{1}{\varepsilon}\cdot a^{2\varepsilon}\cdot\log a \cdot\log n+\log\Delta\cdot\log n\right)$, using $O\left(|E|\cdot\frac{a^{\varepsilon}\cdot\log n}{\log\left(a^{\varepsilon}\cdot\log n\right)}\right)$ processors. By setting $\varepsilon=\frac{c\cdot\log\log a}{\log a}$, for a constant parameter $c>0$, we obtain $a\cdot2^{O\left(\frac{\log a}{\log\log a}\right)}=a^{1+o(1)}$-vertex-coloring in $O\left(\frac{\log a}{\log\log a}\cdot\log^{2c+1} a\cdot\log n+\log\Delta\cdot\log n\right)$ time, with $O\left(|E|\cdot\frac{\log^c a\cdot\log n}{\log(\log a\cdot\log n)}\right)=O\left(|E|\cdot\frac{\log^c a\cdot\log n}{\log\log n}\right)$ processors. As a result we can also obtain an independent set of size $\Omega\left(\frac{n}{a^{1+o(1)}}\right)$ within the same time, using the above number of processors.
Since for every graph $a\leq \Delta$, it follows that the algorithm provides also a $\Delta^{1+o(1)}$-vertex-coloring in $O\left(\log^{1+\delta}\Delta\cdot\log n\right)$ time, $O\left(|E|\cdot\frac{\Delta^{\delta}\cdot\log n}{\log(\Delta\cdot\log n)}\right)$ processors, for an arbitrarily small constant $\delta>0$. And as a result we can also obtain an independent set of size $\Omega\left(\frac{n}{\Delta^{1+o(1)}}\right)$ within the same time, using the above number of processors.

We summarize this discussion with the following theorems.
\begin{theorem}[An adaptation of~\cite{barenboim2011deterministic}]\label{Adaptation of BE11 arboricity}
    Given an $n$-vertex $m$-edge graph $G=(V,E)$ with maximum degree $\Delta$ and arboricity $a$, a $a^{1+o\left(1\right)}$-vertex-coloring can be computed in $O\left(\left(\log^{2+\delta} a+\log\Delta\right)\cdot\log n\right)$ time (for an arbitrary small constant $\delta>0$), using $O\left(m\cdot\frac{\log^{\delta}a\cdot\log n}{\log(a\cdot\log n)}\right)$ processors. As $a\leq \Delta$, this is also a $\Delta^{1+o\left(1\right)}$-vertex-coloring, in $O\left(\log^{2+\delta} \Delta\cdot\log n\right)$ time (for an arbitrary small constant $\delta>0$), using $O\left(m\cdot\frac{\log^{\delta}\Delta\cdot\log n}{\log(\Delta\cdot\log n)}\right)$ processors.
\end{theorem}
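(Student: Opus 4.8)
The statement to prove is Theorem \ref{Adaptation of BE11 arboricity}, namely that the distributed algorithm of Barenboim and Elkin~\cite{barenboim2011deterministic} for $a^{1+o(1)}$-vertex-coloring can be adapted to $\mathrm{PRAM}$ with the claimed time and processor bounds.

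\textbf{Plan of the proof.} The strategy is to follow the recursive structure of the distributed algorithm of~\cite{barenboim2011deterministic} and account, at each step, for the $\mathrm{PRAM}$ time and processor costs, invoking the already-established $\mathrm{PRAM}$ adaptations of the subroutines. First I would recall the three ingredients already available in the excerpt: (i) the $\mathrm{PRAM}$ computation of an $H$-decomposition with out-degree $A=O(a)$, which by Theorem \ref{H-decomposition pram} takes $O(\log n\cdot\log\Delta)$ time with $O(m)$ processors; (ii) the $\mathrm{PRAM}$ variant of defective coloring of~\cite{doi:10.1137/12088848X}, which by Theorem \ref{A variant of defective coloring of BEK14} computes, with parameter $p=A^{\varepsilon}$, a $c\cdot A^{2\varepsilon}$-coloring that is $(A^{1-\varepsilon})$-defective in $O(\log\Delta\cdot\log^*n+\log\log n)$ time using $O\!\left(m\cdot\frac{a^{\varepsilon}\cdot\log n}{\log(a^{\varepsilon}\cdot\log n)}\right)$ processors; and (iii) the simple $O(a)$-vertex-coloring of bounded-arboricity graphs from Theorem \ref{Adaptation of BE08}, used at the bottom of the recursion. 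With these in hand, the main body of the argument is the recoloring step: after computing, in parallel over all $H$-sets $H_i$, an $O(A^{2\varepsilon})$-coloring that is $(A^{1-\varepsilon})$-defective, one recolors the vertices layer by layer from $H_\ell$ down to $H_1$, spending one phase per color class of each $\varphi_i$; a vertex $v$ picks a color used by at most $A^{1-\varepsilon}$ of its already-recolored neighbors, which exists by the pigeonhole principle since $v$ has at most $A$ higher-layer neighbors. This yields an $A^{\varepsilon}$-coloring that is $O(A^{1-\varepsilon})$-arbdefective.

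\textbf{Key steps, in order.} (1) Establish that the recoloring step is correct and analyze its $\mathrm{PRAM}$ cost: each color-class phase lets each relevant $v$ find an available color in $O(\log A)=O(\log a)$ time with $O(A)$ processors per vertex, and since phases across layers are sequential but color classes within a layer are handled independently, the total over all $O(A^{2\varepsilon})$ phases per layer and all layers is absorbed into $O(\log\Delta\cdot\log n)$-type terms (the layer count is $\ell=O(\log n)$ and phases per layer $O(a^{2\varepsilon})$, so total time $O(a^{2\varepsilon}\cdot\log a\cdot\log n)$, processors $O(m)$). Here I would be careful that the $\ell=O(\log n)$ layers contribute a $\log n$ factor multiplicatively but do not blow up the processor count, since the $H$-sets are vertex-disjoint. (2) Set up the recursion: recurse on each of the $A^{\varepsilon}$ color classes in parallel, for $O(1/\varepsilon)$ levels, until left with $O(2^{1/\varepsilon}\cdot A^{1-\varepsilon})$ subgraphs of arboricity $O(A^{\varepsilon})$ each. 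Since the subgraphs at each recursion level are edge-disjoint, the processor count per level stays $O(m\cdot\text{(factor)})$, and the time per level is as in step (1) with $a$ replaced by the current arboricity bound, which only decreases; thus the dominant level is the top one. (3) Apply Theorem \ref{Adaptation of BE08} at the leaves to $O(A^{\varepsilon})$-color each residual subgraph in $O((a^{\varepsilon}\cdot\log a\cdot\varepsilon+\log\Delta)\cdot\log n)$ time with $O(m)$ processors, then combine the color palettes across recursion levels to get the final $O(2^{1/\varepsilon}\cdot a)$-coloring. (4) Substitute $\varepsilon=\frac{c\cdot\log\log a}{\log a}$ for a constant $c>0$ so that $2^{1/\varepsilon}=a^{o(1)}$ and $a^{2\varepsilon}=\mathrm{polylog}(a)$, giving the $a^{1+o(1)}$-coloring in $O((\log^{2+\delta}a+\log\Delta)\cdot\log n)$ time (the $\log^{2+\delta}a$ arising from $\frac{1}{\varepsilon}\cdot a^{2\varepsilon}\cdot\log a=O(\log^{2c+1}a)$) using $O\!\left(m\cdot\frac{\log^{\delta}a\cdot\log n}{\log(a\cdot\log n)}\right)$ processors, for an arbitrarily small constant $\delta>0$. (5) Finally, since $a\le\Delta$ for every graph, read off the $\Delta^{1+o(1)}$-coloring in $O(\log^{2+\delta}\Delta\cdot\log n)$ time with $O\!\left(m\cdot\frac{\log^{\delta}\Delta\cdot\log n}{\log(\Delta\cdot\log n)}\right)$ processors, which is exactly the claimed statement.

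\textbf{Main obstacle.} The delicate part is the bookkeeping that keeps the processor count from ballooning through the $O(1/\varepsilon)$ recursion levels and the $O(\log n)$ layers of each $H$-decomposition: one must repeatedly use that the subgraphs produced at each branching are edge-disjoint (so $\sum$ of their edge counts is $m$) and that defective coloring is invoked in parallel on vertex-disjoint $H$-sets, so the processor bound of Theorem \ref{A variant of defective coloring of BEK14} applies once with the total $m$ rather than being multiplied by the number of pieces. A second subtlety is verifying that the per-level time bounds telescope (rather than accumulate badly) because the arboricity bound strictly decreases down the recursion, so the top level dominates; this requires checking that $a^{\varepsilon}\le a$ and that $\log\Delta$, $\log\log n$ terms are additive lower-order contributions. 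Everything else is a faithful transcription of the distributed analysis of~\cite{barenboim2011deterministic} into the $\mathrm{PRAM}$ cost model, using the adaptations already proved earlier in this appendix.
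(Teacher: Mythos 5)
Your proposal follows essentially the same route as the paper's own argument in Appendix~\ref{app:Delta^{1+o(1)}-Vertex-Coloring}: an $H$-decomposition with out-degree $A=O(a)$, the defective-coloring adaptation of Theorem~\ref{A variant of defective coloring of BEK14} with $p=A^{\varepsilon}$, layer-by-layer recoloring into an $A^{\varepsilon}$-coloring that is $O(A^{1-\varepsilon})$-arbdefective, recursion on color classes for $O(1/\varepsilon)$ levels with Theorem~\ref{Adaptation of BE08} at the leaves, and the choice $\varepsilon=\frac{c\cdot\log\log a}{\log a}$, with the same edge-disjointness bookkeeping for the processor count. The only quibble is the arithmetic $\frac{1}{\varepsilon}\cdot a^{2\varepsilon}\cdot\log a=O\left(\log^{2c+1}a\right)$, which should be $O\left(\frac{\log^{2c+2}a}{\log\log a}\right)$ as in the paper, but this does not affect the stated $O\left(\left(\log^{2+\delta}a+\log\Delta\right)\cdot\log n\right)$ bound for a suitably small constant $c$.
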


\subsection{Arbdefective Coloring in Bounded-Degree Graphs}\label{Arbdefective Coloring in Bounded-Degree Graphs}
In this section we adapt to $\mathrm{PRAM}$ setting a distributed algorithm due to~\cite{barenboim2018locally}. In this algorithm we are given and graph $G$ with maximum degree $\Delta$, and a parameter $\rho$, $1\leq\rho\leq\Delta$. The algorithm computes an $O(\rho)$-arbdefective, $O\left(\frac{\Delta}{\rho}\right)$-coloring of $G$ in $O\left(\frac{\Delta}{\rho}\right)$ distributed rounds. We then employ this algorithm in Appendix \ref{sec: Adapting Barenboim's Algorithm to PRAM Model} to adapt Barenboim's distributed algorithm~\cite{barenboim2016deterministic} for $O(\Delta)$-vertex-coloring in $O\left(\sqrt{\Delta}\right)+\log^*n$ time to $\mathrm{PRAM}$ setting. 

The algorithm starts with computing a $\rho$-defective, $O\left(\left(\frac{\Delta}{\rho}\right)^2\right)$-coloring $\varphi$ in $\log^*n +O(1)$ distributed rounds. (See Theorem \ref{adaptation of BEK14}.) Now every vertex $v$ represents its color $\varphi(v)$ as a pair $\langle a,b\rangle=\langle a,b\rangle_{\varphi(v)}$, $a,b=O\left(\frac{\Delta}{\rho}\right)$. Specifically, let $c>0$ be a constant such that $\varphi(v)$ employs at most $c^2\cdot\left(\frac{\Delta}{\rho}\right)^2$ colors. Then let $p$ be a prime such that $2c\cdot\frac{\Delta}{\rho}<p\leq 4c\cdot\frac{\Delta}{\rho}+2$. (It exists by Bertrand-Chebyshev's principle.) Then $\langle a,b\rangle_{\varphi(v)}$ is the representation of $\varphi(v)$ in the basis $p$. 

The algorithm sets $\psi_1(v)=\varphi(v)$, and starts iterating for $p$ rounds. The color that $v$ has at the beginning of round $i$ (among these $p$ rounds, $1\leq i\leq p$) is denoted by $\psi_i(v)$. If $\varphi_i(v)=\langle 0,b\rangle_{\psi(v)}$, this color is said to be \emph{final} or \emph{finalized}, and otherwise it is not yet final. If $v$ changes its color from a not final one to a final one on round $i$, we say that it \emph{finalizes its color} on round $i$. For a pair of neighbors $u,v$ we say that their colors are \emph{in conflict} with one another on some round $i$ if $\psi_i(u)=\langle a_u,b_u\rangle$, $\psi_i(v)=\langle a_v,b_v\rangle$, and $b_u=b_v$.

On each round $i$ every vertex $v$ with a not final color $\psi_i(v)=\langle a,b\rangle$ checks how many neighbors $u$ of $v$ whose original color is different from that of $v$ (i.e., $\varphi(u)\neq\varphi(v)$) are in conflict with $v$. If the number of (such) conflicts is at most $\rho$, than $v$ finalizes its color, i.e., sets $\psi_{i+1}(v)=\langle 0,b\rangle$. Otherwise it sets $\psi_{i+1}(v)=\langle a,b+a\rangle$. (The summation is modulo $p$.) This completes the description of the (distributed) algorithm (due to~\cite{barenboim2018locally}).

Next, we sketch its analysis. Fix a vertex $v$ and consider a neighbor $u$ of $v$ with $\varphi(u)\neq \varphi(v)$. Denote $\langle a_v,b_v\rangle=\varphi(v)$, $\langle a_u,b_u\rangle=\varphi(u)$. Consider the period of time starting with the beginning of the algorithm and until $v$ finalizes (or until the end of $p$ rounds). Within this period of time $u$ may be in conflict with $v$ at most once until $u$ finalizes, and at most once after that. (This is because equation $b_v+i\cdot a_v\equiv b_u+i\cdot a_u\, (\text{mod }p)$ has one solution, and this is also the case for equation $b_v+i\cdot a_v\equiv b'\, (\text{mod }p)$ for any $b'\in GF(p)$.) Hence, overall $v$ may have at most $2\Delta$ conflicts with its neighbors within $p$ rounds, before it finalizes. On the other hand, on each round on which $v$ does not finalize, it has more than $\rho$ conflicts. Thus, it cannot stay unfinalized for more than $\frac{2\Delta}{\rho}$ rounds. Since $p>\frac{2\Delta}{\rho}$, it follows that $v$ (and every other vertex) eventually finalizes.

Now we argue that the resulting coloring $\psi=\psi_{p+1}$ is $O\left(\frac{\Delta}{\rho}\right)$-coloring, $O(\rho)$-arbdefective. For the sake of this argument, we orient every edge $(v,u)$ with $\psi(v)=\psi(u)$ in the following way: if $v$ (respectively, $u$) finalizes on a round $i_v$ (resp., $i_u$), and $i_v>i_u$, orient the edge towards $u$ (i.e., as $\langle v,u\rangle$). If $i_v=i_u$ or if their original colors satisfy $\varphi(v)=\varphi(u)$, then orient the edge towards the endpoint with larger Id (i.e., essentially arbitrarily).

Observe that $\psi$ employs $p=O\left(\frac{\Delta}{\rho}\right)$ colors (as $\psi(v)=\langle 0,b_v\rangle$, for every $v\in V$, and $b_v\in GF(p)$). Also, we now argue that under the above orientation, every vertex $v$ has at most $O(\rho)$ outgoing neighbors $u$ with $\psi(u)=\psi(v)$.

Indeed, recall that $\varphi$ is a $\rho$-defective coloring, and thus $v$ may have at most $\rho$ outgoing neighbors $u$ with $\psi(v)=\psi(u)$ and $\varphi(v)=\varphi(u)$. In addition, for each outgoing neighbor $u$ of $v$ with $\psi(v)=\psi(u)$ and $\varphi(v)\neq \varphi(u)$, the neighbor $u$ finalized its color before (or together with) $v$. Thus, $v$ was in conflict with $u$ when $v$ finalized its color (as $\psi(v)=\psi(u)$). Since $v$ did finalize its color on that round, it was in at most $\rho$ conflicts of that point, and thus there are at most $\rho$ such outgoing neighbors $u$. Hence, arbdefect of the coloring $\psi$ is at most $2\rho=O(\rho)$.

Finally, we discuss the $\mathrm{PRAM}$ implementation of this algorithm. As we have seen in Appendix \ref{sec: Extension of Linial's Algorithm to Defective Coloring} (Theorem \ref{A variant of defective coloring of BEK14}), computing the defective coloring $\varphi$ requires $O(\log\Delta
\cdot\log^* n+\log\log n)$ time, with $O\left(m\cdot\frac{\frac{\Delta}{\rho}\cdot\log n}{\log\left(\frac{\Delta}{\rho}\cdot\log n\right)}\right)$ processors. Each of the $p=O\left(\frac{\Delta}{\rho}\right)$ rounds of the main loop (that converts the initial coloring $\psi_0=\varphi$ into the ultimate coloring $\psi_p=\psi$) can be implemented in the following way: we designate $\deg(v)$ processors to every vertex $v$. These processors count the number of conflicts within $O(\log\deg(v))=O(\log \Delta)$ time, using $O(m)$ processors. Hence the overall time is $O(p\cdot\log\Delta)=O\left(\frac{\Delta}{\rho}\cdot\log\Delta\right)$, and the number of processors is $O\left(m\cdot\frac{p\cdot\log n}{\log(p\cdot\log n)}\right)$.
\begin{theorem}[An adaptation of~\cite{barenboim2018locally}]\label{Adaptation of [BEG18]}
    For any $n$-vertex $m$-edge graph with maximum degree $\Delta$, and a parameter $1\leq\rho\leq\Delta$, an $O\left(\frac{\Delta}{\rho}\right)$-coloring $O(\rho)$-arbdefective can be computed in $$O\left(\frac{\Delta}{\rho}\cdot\log \Delta+\log\Delta\cdot\log^* n+\log\log n\right)=O\left(\log\Delta\cdot\left(\frac{\Delta}{\rho}+\log^* n\right)+\log\log n\right)$$ $\mathrm{PRAM}$ time, using $O\left(m\cdot\frac{\frac{\Delta}{\rho}\cdot\log n}{\log\left(\frac{\Delta}{\rho}\cdot\log n\right)}\right)$ processors.
\end{theorem}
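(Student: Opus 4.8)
The plan is to turn the discussion preceding the statement into a formal argument with three ingredients: (i) the main loop terminates within its $p$ allotted rounds, i.e.\ every vertex finalizes its color; (ii) the output coloring $\psi$ uses $O(\Delta/\rho)$ colors and is $O(\rho)$-arbdefective; and (iii) the claimed $\mathrm{PRAM}$ time and processor bounds. First I would fix the parameters precisely: apply Theorem~\ref{A variant of defective coloring of BEK14} with its parameter set to $\Delta/\rho$ (which lies in $[1,\Delta]$ since $1\le\rho\le\Delta$) to obtain a $\rho$-defective $O((\Delta/\rho)^2)$-coloring $\varphi$; then, letting $c$ be the constant hidden in that $O(\cdot)$, invoke Bertrand--Chebyshev to pick a prime $p$ with $2c\cdot\frac{\Delta}{\rho}<p\le 4c\cdot\frac{\Delta}{\rho}+2$, and write each color $\varphi(v)$ in base $p$ as a pair $\langle a,b\rangle_{\varphi(v)}$. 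This fixes every quantity in the algorithm.

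For termination I would prove the key counting lemma: for a fixed vertex $v$ with $\varphi(v)=\langle a_v,b_v\rangle$ and any neighbor $u$ with $\varphi(u)\neq\varphi(v)$, over the whole run $u$ is in conflict with $v$ on at most two rounds --- at most once while $u$ is not yet finalized and at most once afterwards. The reason is that a congruence of the form $b_v+i\cdot a_v\equiv b_u+i\cdot a_u\pmod{p}$ (the ``both unfinalized'' case, where the $b$- and $a$-components evolve linearly in the round index $i$) and of the form $b_v+i\cdot a_v\equiv b'\pmod{p}$ for a fixed $b'\in GF(p)$ (the ``$u$ already final'' case) each has at most one solution $i\in GF(p)$. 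Hence $v$ accumulates at most $2\Delta$ conflicts (with neighbors of different original color) over the whole run; since on every round on which $v$ fails to finalize it incurs more than $\rho$ such conflicts, $v$ stays unfinalized for at most $2\Delta/\rho<p$ rounds. Applying this to every vertex shows $\psi=\psi_{p+1}$ is well defined and every vertex $v$ has $\psi(v)=\langle 0,b_v\rangle$ with $b_v\in GF(p)$, so $\psi$ uses at most $p=O(\Delta/\rho)$ colors.

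For the arbdefect bound I would introduce the orientation of the monochromatic edges described in the sketch: an edge $(u,v)$ with $\psi(u)=\psi(v)$ is oriented from the endpoint that finalizes later towards the one that finalizes earlier, with ties (equal finalization round, or $\varphi(u)=\varphi(v)$) broken by identifier, so that every monochromatic edge receives exactly one direction. I then bound the out-degree: out-neighbors $u$ of $v$ with $\varphi(u)=\varphi(v)$ number at most $\rho$ because $\varphi$ is $\rho$-defective; and an out-neighbor $u$ of $v$ with $\varphi(u)\neq\varphi(v)$ must have finalized no later than $v$, so that on the round $v$ finalized the second component of $u$'s color already equalled $b_v$ (equal to the final $b_v$ of $v$), meaning $u$ was in conflict with $v$ on that round --- and there are at most $\rho$ such neighbors since $v$ did finalize on that round. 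Thus the out-degree is at most $2\rho$, and by Claim~\ref{claim: arboricity and orientation} each color class of $\psi$ has arboricity $O(\rho)$, i.e.\ $\psi$ is $O(\rho)$-arbdefective. Finally, the complexity: computing $\varphi$ costs $O(\log\Delta\cdot\log^*n+\log\log n)$ time and $O\!\left(m\cdot\frac{(\Delta/\rho)\log n}{\log((\Delta/\rho)\log n)}\right)$ processors by Theorem~\ref{A variant of defective coloring of BEK14}; each of the $p=O(\Delta/\rho)$ rounds of the main loop is implemented by assigning $\deg(v)$ processors to each $v$, which count conflicts and detect the ``$\le\rho$'' event in $O(\log\deg(v))=O(\log\Delta)$ time by a parallel reduction over the incidence list, using $O(m)$ processors per round; summing over rounds and adding the first step gives the stated time, the processor count being dominated by the first step.

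I expect the main obstacle to be making the conflict-counting and orientation arguments airtight --- in particular handling the ``at most twice'' claim exactly across the round where a neighbor finalizes, and choosing the tie-breaking so that every monochromatic edge is oriented and the two cases in the out-degree bound are genuinely exhaustive and disjoint. The $\mathrm{PRAM}$ cost accounting is routine once Theorem~\ref{A variant of defective coloring of BEK14} is in hand, since counting conflicts and extracting a single bit both reduce to parallel prefix/reduction over adjacency lists.
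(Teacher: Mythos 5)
Your proposal is correct and follows essentially the same route as the paper's own argument: the same invocation of the defective-coloring result with parameter $\Delta/\rho$, the same prime $p$ with $2c\cdot\frac{\Delta}{\rho}<p\le 4c\cdot\frac{\Delta}{\rho}+2$ and base-$p$ color representation, the same ``at most two conflicts per differently-$\varphi$-colored neighbor'' counting to show every vertex finalizes within $2\Delta/\rho<p$ rounds, the same finalization-time orientation of monochromatic edges giving out-degree at most $2\rho$ (hence $O(\rho)$ arbdefect), and the same $O(\log\Delta)$-per-round PRAM implementation with the processor count dominated by the defective-coloring step. The points you flag as needing care (the conflict-equation solution count when $a_u=a_v$, and ties $i_u=i_v$ in the orientation) are handled exactly as you sketch and do not create a gap.
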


To optimize the number of processors, we use the trade-off of Theorem \ref{An adaptation of [BEK14], a better trade-off} with the parameter $z$ given by $z=\frac{\sqrt{p}\cdot\log n}{\log(p\cdot\log n)}$, $p=\frac{\Delta}{\rho}$.
Then computing the defective coloring requires
$$
O\left(\frac{p\cdot\log n}{z\cdot\log(p\cdot\log n)}+\log\Delta\cdot\log^*n+\log\log n\right)=O\left(\left(\sqrt{p}+\log^*n\right)\cdot\log\Delta+\log\log n\right)
$$
time, but uses
$O(m\cdot z)=O\left(m\cdot\frac{\sqrt{p}\cdot\log n}{\log(p\cdot\log n)}\right)$
processors. Together with the $p$ rounds of the main loop, the overall running time is
\begin{align*}
    &O\left((\sqrt{p}+\log^*n)\cdot\log\Delta+\log\log n\right)+p\cdot\log\Delta)=\\&O\left((p+\log^*n)\cdot\log\Delta+\log\log n\right)=\\&O\left(\frac{\Delta}{\rho}\cdot\log\Delta+\log\Delta\cdot\log^*n+\log\log n\right),
\end{align*}
and the number of processors is
$$O\left(m\cdot\frac{\sqrt{p}\cdot\log n}{\log(p\cdot\log n)}\right)=O\left(m\cdot \frac{\sqrt{\frac{\Delta}{\rho}}\cdot\log n}{\log\left(\frac{\Delta}{\rho}\cdot\log n\right)}\right).$$

This improves the bound on the number of processors from Theorem \ref{Adaptation of BE08}, while other parameters stay the same.

\begin{corollary}
    For an $n$-vertex $m$-edge graph with maximum degree $\Delta$, and a parameter $1\leq\rho\leq \Delta$, an $O(\rho)$-arbdefective $O\left(\frac{\Delta}{\rho}\right)$-coloring can be computed in $$O\left(\frac{\Delta}{\rho}\cdot\log\Delta+\log\Delta\cdot\log^*n+\log\log n\right)$$ time, using $O\left(m\cdot \frac{\sqrt{\frac{\Delta}{\rho}}\cdot\log n}{\log\left(\frac{\Delta}{\rho}\cdot\log n\right)}\right)$ processors.
\end{corollary}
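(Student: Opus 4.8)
The plan is to reuse, essentially verbatim, the algorithm underlying Theorem~\ref{Adaptation of [BEG18]} — compute a $\rho$-defective $O((\Delta/\rho)^2)$-coloring $\varphi$, set $\psi_1=\varphi$, and run $p=O(\Delta/\rho)$ rounds of the algebraic conflict-resolution recoloring step (each vertex represents its current color as a pair $\langle a,b\rangle$ over $GF(p)$, finalizes to $\langle 0,b\rangle$ once it has at most $\rho$ conflicts among neighbors with a different original $\varphi$-color, and otherwise advances $b\mapsto a+b\pmod p$) — and change \emph{only} the way the initial defective coloring $\varphi$ is produced. Since the combinatorial behavior of the main loop is untouched, the correctness argument is identical to the one already sketched for Theorem~\ref{Adaptation of [BEG18]}: every vertex finalizes within $\tfrac{2\Delta}{\rho}<p$ rounds, and orienting each monochromatic edge towards the endpoint that finalized earlier (ties broken by Id) shows the output is a $p$-coloring with arbdefect $O(\rho)$. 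No new work is needed for this part.

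The single modification is to build $\varphi$ via the processor/time trade-off of Theorem~\ref{An adaptation of [BEK14], a better trade-off} rather than via Theorem~\ref{A variant of defective coloring of BEK14}. Taking $p=\Delta/\rho$ and the parameter $z=\dfrac{\sqrt{p}\cdot\log n}{\log(p\cdot\log n)}$, first I would verify admissibility, i.e. $1\le z=O\!\left(\dfrac{p\cdot\log n}{\log(p\cdot\log n)}\right)$: the upper bound is immediate since $\sqrt p\le p$, and $z=\Omega(1)$ because $\sqrt p\ge 1$ and $\log n\ge\tfrac12\log(p\cdot\log n)$ (using $p\le\Delta\le n$). With this choice, Theorem~\ref{An adaptation of [BEK14], a better trade-off} produces the $\rho$-defective $O(p^2)$-coloring in $O\!\left(\dfrac{p\log n}{z\log(p\log n)}+\log\Delta\cdot\log^*n+\log\log n\right)=O\!\left((\sqrt p+\log^* n)\cdot\log\Delta+\log\log n\right)$ time, using $O(m\cdot z)=O\!\left(m\cdot\dfrac{\sqrt p\cdot\log n}{\log(p\cdot\log n)}\right)$ processors.

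It remains to add the cost of the $p$ rounds of the main loop. As in Theorem~\ref{Adaptation of [BEG18]}, in each round we assign $\deg(v)$ processors to every vertex $v$, count its conflicts in $O(\log\Delta)$ time with $O(m)$ processors in total, so the loop costs $O(p\cdot\log\Delta)$ time and $O(m)$ processors — and the latter is absorbed into $O(mz)$ since $z=\Omega(1)$. Summing with the defective-coloring step, the running time is $O\!\left((\sqrt p+\log^* n)\cdot\log\Delta+\log\log n+p\cdot\log\Delta\right)=O\!\left(p\cdot\log\Delta+\log\Delta\cdot\log^* n+\log\log n\right)=O\!\left(\tfrac{\Delta}{\rho}\cdot\log\Delta+\log\Delta\cdot\log^*n+\log\log n\right)$, and the processor count is $O\!\left(m\cdot\dfrac{\sqrt{\Delta/\rho}\cdot\log n}{\log((\Delta/\rho)\cdot\log n)}\right)$, as claimed.

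The only genuine care-points, and where I expect to spend the most attention, are: (i) confirming that $z$ lies in the range required by Theorem~\ref{An adaptation of [BEK14], a better trade-off} uniformly over $1\le\rho\le\Delta$, in particular that $z\ge 1$; and (ii) checking that the $\sqrt p$ term contributed by the sped-up defective-coloring step is dominated by the $p=\Delta/\rho$ term coming from the main loop, so that replacing the first step does not change the asymptotic running time of Theorem~\ref{Adaptation of [BEG18]}. Everything else is a direct substitution into an analysis that has already been carried out.
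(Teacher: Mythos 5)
Your proposal is correct and follows essentially the same route as the paper: invoke Theorem~\ref{An adaptation of [BEK14], a better trade-off} with $p=\frac{\Delta}{\rho}$ and $z=\frac{\sqrt{p}\cdot\log n}{\log(p\cdot\log n)}$ to obtain the initial $\rho$-defective $O\left(p^2\right)$-coloring, then run the unchanged $O(p)$-round conflict-resolution loop of Theorem~\ref{Adaptation of [BEG18]} and sum the costs. Your extra check that $z$ lies in the admissible range is a harmless addition the paper leaves implicit; otherwise the argument and the resulting bounds coincide with the paper's.
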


A particularly important setting is when $\rho=\sqrt{\Delta}$. (Note that if $\log^* n>\sqrt{\Delta}$ then the term $\log\log n$ dominates the running time.) We then obtain:
\begin{corollary}\label{cor: Adaptation of [BEG18]}
    For any $n$-vertex $m$-edge graph with maximum degree $\Delta$, an $O\left(\sqrt{\Delta}\right)$-arbdefective $O\left(\sqrt{\Delta}\right)$-coloring can be computed in $O\left(\sqrt{\Delta}\cdot\log\Delta+\log\log n\right)$ $\mathrm{PRAM}$ time, using $O\left(m\cdot\frac{\Delta^{1/4}\cdot\log n}{\log\left(\Delta\cdot\log n\right)}\right)$ processors.
\end{corollary}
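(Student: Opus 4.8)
The plan is to derive this corollary as the instantiation $\rho = \sqrt{\Delta}$ of the preceding corollary (the processor-optimized trade-off version of Theorem~\ref{Adaptation of [BEG18]}, itself obtained by plugging $z = \frac{\sqrt{p}\cdot\log n}{\log(p\cdot\log n)}$ with $p = \Delta/\rho$ into Theorem~\ref{An adaptation of [BEK14], a better trade-off}). First I would note that $\rho = \sqrt{\Delta}$ lies in the admissible range $1 \le \rho \le \Delta$, so the corollary applies. With this choice $\Delta/\rho = \sqrt{\Delta}$, hence the produced coloring is $O(\rho) = O(\sqrt{\Delta})$-arbdefective and uses $O(\Delta/\rho) = O(\sqrt{\Delta})$ colors — exactly the claimed guarantee.

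Next I would simplify the running time, which under $\rho = \sqrt{\Delta}$ reads $O\!\left(\sqrt{\Delta}\cdot\log\Delta + \log\Delta\cdot\log^{*}n + \log\log n\right)$, down to $O\!\left(\sqrt{\Delta}\cdot\log\Delta + \log\log n\right)$. The only term needing attention is $\log\Delta\cdot\log^{*}n$. If $\log^{*}n \le \sqrt{\Delta}$ it is at most $\sqrt{\Delta}\cdot\log\Delta$ and is absorbed by the first term. If instead $\log^{*}n > \sqrt{\Delta}$, then $\Delta < (\log^{*}n)^2$, so $\log\Delta = O(\log\log^{*}n)$ and $\log\Delta\cdot\log^{*}n = O(\log^{*}n\cdot\log\log^{*}n) = O(\log\log n)$, using that $\log^{*}n$ grows more slowly than any fixed iterated logarithm of $n$ — this is precisely the parenthetical observation in the text that $\log\log n$ dominates the running time in this regime. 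Either way the middle term is subsumed.

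Finally I would simplify the processor bound $O\!\left(m\cdot\frac{\sqrt{\Delta/\rho}\cdot\log n}{\log(\frac{\Delta}{\rho}\cdot\log n)}\right)$, which becomes $O\!\left(m\cdot\frac{\Delta^{1/4}\cdot\log n}{\log(\sqrt{\Delta}\cdot\log n)}\right)$ since $\sqrt{\Delta/\rho} = \Delta^{1/4}$. Because $\log(\sqrt{\Delta}\cdot\log n) = \tfrac12\log\Delta + \log\log n = \Theta\!\left(\log\Delta + \log\log n\right) = \Theta\!\left(\log(\Delta\cdot\log n)\right)$, I can replace the denominator by $\log(\Delta\cdot\log n)$ without changing the asymptotics, giving the stated $O\!\left(m\cdot\frac{\Delta^{1/4}\cdot\log n}{\log(\Delta\cdot\log n)}\right)$ processors. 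I do not expect any genuine obstacle: the corollary is a direct specialization, and the only slightly delicate point is the elementary asymptotic comparison $\log^{*}n\cdot\log\log^{*}n = O(\log\log n)$ used in the second regime above.
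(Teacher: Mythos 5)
Your proposal is correct and follows essentially the same route as the paper: the corollary is obtained by setting $\rho=\sqrt{\Delta}$ in the preceding processor-optimized trade-off, with the paper's parenthetical remark (that $\log\log n$ dominates when $\log^{*}n>\sqrt{\Delta}$) playing the role of your case analysis for absorbing the $\log\Delta\cdot\log^{*}n$ term. Your explicit verification of that absorption and of the equivalence $\log\bigl(\sqrt{\Delta}\cdot\log n\bigr)=\Theta\bigl(\log(\Delta\cdot\log n)\bigr)$ simply spells out details the paper leaves implicit.
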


\subsection{Adapting Barenboim's Algorithm to $\mathrm{PRAM}$ Model}\label{sec: Adapting Barenboim's Algorithm to PRAM Model}
In this section we adapt a distributed algorithm due to Barenboim~\cite{barenboim2016deterministic} for $O(\Delta)$-vertex-coloring in $\Tilde{O}\left(\sqrt{\Delta}\right)+O\left(\log^* n\right)$ rounds to the $\mathrm{PRAM}$ model. 

We start with describing the algorithm on the distributed model, and sketching its analysis. Then we adapt it to the $\mathrm{PRAM}$ model. 

The algorithm starts with computing an $O(p)$-coloring, $\left(\frac{\Delta}{p}\right)$-arbdefective $\hat{\varphi}$ in $O\left(\frac{\Delta}{p}\right)+\log^* n$ time by the algorithm of~\cite{barenboim2018locally} (see Theorem \ref{Adaptation of [BEG18]}), for a parameter $p$, $1\leq p\leq\Delta$. (We will use $p=\sqrt{\Delta}$.) Let $G_1,G_2,..,G_p$ be the $p$ subgraphs of $G$ induced by the $p$ color classes of $\hat{\varphi}$, i.e., for every $i\in \{1,2,...,p\}$, the subgraph $G_i$ is induced by vertices $v$ with $\hat{\varphi}(v)=i$. Observe that as part of this computation we also obtain a forest decomposition of each $G_i$ into $\left(\frac{\Delta}{p}\right)$ forests (and that arboricity of each $G_i$ is at most $\left(\frac{\Delta}{p}\right)$). We now invoke in parallel the Arb-Linial algorithm on each of them, and obtain an $O\left(\left(\frac{\Delta}{p}\right)^2\right)$-vertex-coloring $\varphi_i$ for each $G_i$ within additional $\log^* n+O(1)$ rounds. Note that when $p=O\left(\sqrt{\Delta}\right)$, $\varphi_i$ is an $O(\Delta)$-coloring of $G_i$, for every $i\in\{1,2,...,p\}$.

Barenboim's algorithm now computes a (proper) $O(\Delta)$-vertex-coloring $\psi$ of the entire graph $G$ within $p$ iterations. (For a general $p$, the number of colors it uses is $\Lambda(\Delta,p)=O\left(\Delta+\frac{\Delta}{p}\cdot\left(\sqrt{\Delta}+\frac{\Delta}{p}\right)\right)$.) Specifically, it starts with initializing the colors $\psi(v)=\varphi_1(v)$ for all vertices $v\in V(G_1)$. It then proceeds with recoloring vertices of $G_2$ in one round, and then vertices of $G_3$ within an additional round,$\ldots$, etc., and finally it recolors the vertices of $G_p$, again in one round. We next describe one single iteration $t$, $2\leq t\leq p$, of this loop. On this iteration the algorithm recolors vertices of $G_t$ in a way consistent with colors of already recolored vertices (that belong to $G_1\cup...\cup G_{t-1}$). In addition, the new coloring $\psi$ of vertices of $G_t$ satisfies that for any $(u,v)\in E$, $u,v\in V(G_t)$, $\psi(u)\neq\psi(v)$. The coloring $\psi$ uses $\Lambda=\Lambda(\Delta,p)$ colors.

At the beginning of this iteration we have a proper $\Lambda$-coloring $\psi$ of $G_1\cup...\cup G_{t-1}$, and a $c\cdot\left(\frac{\Delta}{p}\right)^2$-coloring $\varphi=\varphi_t$ of $G_t$, for a universal constant $c>9$.
In addition, recall that we are given a forest decomposition of $G_t$ into $a=\frac{\Delta}{p}$ forests, and every vertex $v\in V(G_t)$ knows its $a$ parents under this decomposition. Let also $L(v)$ be the set of $\psi$-colors used by neighbors $u$ of $v$ that belong to $G_1\cup...\cup G_{t-1}$.

We set a parameter $\mu$ to be a prime between $\sqrt{\Delta}+\sqrt{c}\cdot\frac{\Delta}{p}$ and $2\left(\sqrt{\Delta}+\sqrt{c}\cdot\frac{\Delta}{p}\right)$. (Such a prime exists by Bertrand-Chebyshev's principle.) The coloring $\psi$ uses
\begin{equation}\label{eq: Lambda}
    \Lambda=\Delta+(2a+1)\cdot\mu
\end{equation} 
colors. Each vertex $v\in V(G_t)$ constructs a set of $\mu$ univariate polynomials $p_{0}^{(\varphi(v))}(x),...,p_{\mu-1}^{(\varphi(v))}(x)$ over $GF(\mu)$. For every $i\in GF(\mu)$, the polynomial $p_{i}^{(\varphi(v))}(x)$ is given by $p_{i}^{(\varphi(v))}(x)=i+a_v\cdot x+b_v\cdot x^2\, (\text{mod } \mu)$, where $\varphi(v)=\langle a_v,b_v\rangle_{\varphi(v)}$. (That is, $\langle a_v,b_v\rangle_{\varphi(v)}$ is the representation of $\varphi(v)$ on the basis $\left\lceil\sqrt{c}\cdot a\right\rceil$.) Observe that for a pair of neighbors $u,v\in V(G_t)$, we have $\varphi(u)\neq \varphi(v)$, and thus $\langle a_v,b_v\rangle_{\varphi(v)}\neq\langle a_u,b_u\rangle_{\varphi(u)}$. Hence their respective sets of polynomials $\left\{p_{0}^{(\varphi(v))}(x),...,p_{\mu-1}^{(\varphi(v))}(x)\right\}$ and $\left\{p_{0}^{(\varphi(u))}(x),...,p_{\mu-1}^{(\varphi(u))}(x)\right\}$ are disjoint. (As for any pair of such polynomials, $j+a_v\cdot x+b_v\cdot x^2$ and $j'+a_u\cdot x+b_u\cdot x^2$, either $a_v\neq a_u$ or $b_v\neq b_u$, or both.) We say that a polynomial $p$ \emph{generates} a color $\gamma\in\{0,1,...,\mu^2-1\}$ if the representation $\langle y,x\rangle$ of $\gamma$ in the basis $\mu$ satisfies $y=p(x)$.

For every vertex $v\in V(G_t)$ and index $i\in\{0,1,...,\mu-1\}$, let 
\begin{equation}\label{eq: generated colors}
    \mathcal{P}_i(v)=\left\{\left\langle p_i^{(\varphi(v))}(k),k\right\rangle\mid k=0,1,...,\mu-1\right\}
\end{equation}
be the set of colors generated by the polynomial $p_i^{(\varphi(v))}(x)$. Also, let $L_i(v)=L(v)\cap \mathcal{P}_i(v)$ be the set of colors used by recolored neighbors of $v$ that are generated by $p_i^{(\varphi(v))}(x)$, and let $\hat{L_i}(v)=\mathcal{P}_i(v)\setminus L_i(v)$ be the set of colors generated by the polynomial $p_i^{(\varphi(v))}(x)$, which are not used by recolored neighbors of $v$.

The vertex $v$ selects an index $i$, $i\in\{0,1,...,\mu-1\}$, so that the set $L_i(v)$ has minimum size, and sends it to all its neighbors. Observe that $|L(v)|\leq \Delta$. Observe also that for every $i\neq j$, $i,j\in\{0,1,...,\mu-1\}$, we have $L_i(v)\cap L_j(v)=\emptyset$, as otherwise we have $p_i^{(\varphi(v))}(k)=p_j^{(\varphi(v))}(k)$, for some $k\in\{0,1,...,\mu-1\}$. But then 
$$i+a_v\cdot k+b_v\cdot k^2\equiv j+a_v\cdot k+b_v\cdot k^2 \,(\text{mod } \mu),$$
contradiction. It follows that the size of $L_i(v)$ for the index $i\in\{0,1,...,\mu-1\}$ that minimizes its size is at most $\frac{\Delta}{\mu}$. Hence, the size of the corresponding set $\hat{L_i}(v)=\mathcal{P}_i(v)\setminus L_i(v)$ is at least 
$$\mu-\frac{\Delta}{\mu}\geq \mu-\sqrt{\Delta}\geq\sqrt{\Delta}+\sqrt{c}\cdot\frac{\Delta}{p}-\sqrt{\Delta}=\sqrt{c}\cdot\frac{\Delta}{p}>2\cdot\frac{\Delta}{p}.$$ 
(We used here $\mu\geq \sqrt{\Delta}+\sqrt{c}\cdot\frac{\Delta}{p}\geq\sqrt{\Delta}$ and $c>9$.) 

This means that there are more than $2\cdot\frac{\Delta}{p}$ colors generated by the polynomial $p_i^{(\varphi(v))}(x)$ (for the selected index $i$) that are not used by the already-recolored neighbors of $v$ (neighbors from $\bigcup_{j=1}^{t-1}G_j$). Also, for every parent $u$ of $v$ in $G_t$ (and there are at most $\frac{\Delta}{p}$ such parents), the polynomial $p_i^{(\varphi(v))}(x)$ selected by $v$ is different from the polynomial $p_{j_u}^{(\varphi(u))}(x)$ selected by $u$, and thus, these two polynomials may intersect in at most two points (as these are degree-2 polynomials). In other words, among more than $2\cdot\frac{\Delta}{p}$ colors in $\hat{L}_i(v)$ (generated by $p_i^{(\varphi(v))}(x)$ and not used by recolored neighbors of $v$), there is at least one color which cannot be generated by polynomials $p_{j_u}^{(\varphi(u))}(x)$, selected by parents $u$ of $v$ in $G_t$. The vertex $v$ now colors itself with such a color $\left\langle p_i^{(\varphi(v))}(k),k\right\rangle\in\hat{L_i}(v)$, for the smallest $k$ among such colors.

The discussion above implies that the resulting coloring $\psi$ is proper, and that it is computed within overall $O(p)$ rounds. To analyse the number of employed colors, observe that the smallest $k$ as above is at most $2\cdot\frac{\Delta}{p}+\frac{\Delta}{\mu}$. (Recall that $|L_i(v)|\leq\frac{\Delta}{\mu}$.) Hence the representation $\left\langle p_i^{(\varphi(v))}(k),k\right\rangle$ of this color on the basis $\mu$ gives rise to a color which is at most $$\mu-1+\mu\cdot\left(\frac{2\Delta}{p}+\frac{\Delta}{\mu}\right)\leq\Delta+\mu\cdot\left(\frac{2\Delta}{p}+1\right)=\Lambda(\Delta,p).$$ (See (\ref{eq: Lambda}).) Hence, the coloring $\psi$ employs $\Lambda(\Delta,p)$ colors. As $\mu=O\left(\sqrt{\Delta}+\frac{\Delta}{p}\right)$, we have $\Lambda=\Delta+O\left(\left(\frac{\Delta}{p}\right)^2\right)+\left(\frac{\Delta^{\frac{3}{2}}}{p}\right)$. By setting $p=\left(\sqrt{\Delta}\right)$, we obtain an $O(\Delta)$-vertex-coloring in $O\left(\sqrt{\Delta}+\log^* n\right)$ rounds.

To implement this algorithm in $\mathrm{PRAM}$ setting, we first compute an $O\left(\sqrt{\Delta}\right)$-coloring $\varphi$, $O\left(\sqrt{\Delta}\right)$-arbdefective within $O\left(\sqrt{\Delta}\cdot\log\Delta+\log\log n\right)$ time, using $O\left(m\cdot\frac{\Delta^{1/4}\cdot\log n}{\log(\Delta\cdot\log n)}\right)$ processors (see Corollary \ref{cor: Adaptation of [BEG18]}). Then, in parallel, we compute an $O(\Delta)$-coloring $\varphi_i$ for each color class $G_i$. By Corollary \ref{An adaptation of [Lin87], a trade-off}, for a parameter $z=O\left(\frac{\sqrt{\Delta}\cdot\log n}{\log(\Delta\cdot\log n)}\right)$, this step requires
$$O\left(\log\Delta\cdot\log^*n+\log \log n+\frac{\sqrt{\Delta}\cdot\log n}{z\cdot\log(\Delta\cdot\log n)}\right)$$
time, using $O\left(|E|\cdot z\right)\cdot O\left(\sqrt{\Delta}\right)$
processors ($O\left(|E|\cdot z\right)$ processors for each of the $O\left(\sqrt{\Delta}\right)$ subgraphs). We set $z = \log\Delta$, and obtain overall running time
$$O\left(\log\Delta\cdot\log^*n + \log\log n+\frac{\sqrt{\Delta}\cdot\log n}{\log\Delta\cdot\log(\Delta\cdot\log n)}\right)=O\left(\frac{\sqrt{\Delta}\cdot\log n}{\log\Delta\cdot\log(\Delta\cdot\log n)}\right),$$
and $O\left(|E|\cdot\left(\sqrt{\Delta}\cdot\log\Delta+\frac{\Delta^{1/4}\cdot\log n}{\log(\Delta\cdot\log n)}\right)\right)$
processors.

Then we have $O(p)=O\left(\sqrt{\Delta}\right)$ recoloring iterations. To implement these iterations, for a fixed vertex $v$, we designate $\mu$ processors to every edge $(v,u)$. These processors are indexed $0,1,...,\mu-1$. Processors indexed $i$ (for every $i$) compute the set $L_i(v)=\mathcal{P}_i(v)\cap L(v)$ in $O(1)$ time. (The set $\mathcal{P}_i(v)$ (see Equation (\ref{eq: generated colors})) is computed by these processors in $O(1)$ time. For each recolored neighbor $u$ of $v$, the $i$'th processor associated with the edge $(v,u)$ removes the color $\psi(u)$ of $u$ from $\mathcal{P}_i(v)$.) Then $\hat{L_i}(v)$ is computed within the same time with these processors. Finally, for each parent $u$ of $v$ in its subgraph $G_t$, (of arboricity at most $\frac{\Delta}{p}$), $\mu$ processors associated with the edge $(v,u)$ solve in $O(1)$ time the equation $p_i^{(\varphi(v))}(x)=p_{j_u}^{(\varphi(u))}(x)$ (where $i$ (respectively, $j_u$) is the index selected by $v$ (resp., by $u$)), and remove its (at most) two solutions from $\hat{L_i}(v)$. (To solve these equation, processor $q\in\{0,1,...,\mu-1\}$ computes $p_i^{(\varphi(v))}(q)$ and $p_{j_u}^{(\varphi(u))}(q)$, and checks if they are equal.) Finding the minimum $k\in\{0,1,...,\mu-1\}$ among colors $\left\langle p_i^{(\varphi(v))}(k),k\right\rangle$, which are still in the set $\hat{L_i}(v)$ requires now $O(\log \Delta)$ time. Hence overall, the recoloring process requires $O\left(\log\Delta\cdot\sqrt{\Delta}\right)$ time, using $O(\mu\cdot m)=O\left(\sqrt{\Delta}\cdot m\right)$ processors. \\
To summarize:
\begin{theorem}[An adaptation of~\cite{barenboim2016deterministic}]\label{Adaptation of Bar16}
    An $O(\Delta)$-vertex-coloring can be computed in $$O\left(\sqrt{\Delta}\cdot\left(\log\Delta+\frac{\log n}{\log\Delta\cdot\log(\Delta\cdot\log n)}\right)\right)$$ $\mathrm{PRAM}$ time using $O\left(m\cdot\left(\sqrt{\Delta}\cdot\log\Delta+\frac{\Delta^{1/4}\cdot\log n}{\log(\Delta\cdot\log n)}\right)\right)$ processors.
\end{theorem}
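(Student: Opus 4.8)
The plan is to transcribe Barenboim's distributed $O(\Delta)$-vertex-coloring algorithm and charge each of its phases in the $\mathrm{CRCW\,\,PRAM}$ model, using the adaptations of its sub-routines developed earlier in this appendix, with $p=\sqrt{\Delta}$ fixed throughout. First I would apply Corollary~\ref{cor: Adaptation of [BEG18]} with $\rho=\sqrt{\Delta}$ to obtain an $O(\sqrt{\Delta})$-coloring $\hat\varphi$ that is $O(\sqrt{\Delta})$-arbdefective, together with, for every color class $G_i$ (the edge subgraph induced by the vertices with $\hat\varphi=i$), a forest decomposition of $G_i$ into $A=O(\sqrt{\Delta})$ forests; this costs $O(\sqrt{\Delta}\cdot\log\Delta+\log\log n)$ time and $O\!\left(m\cdot\frac{\Delta^{1/4}\cdot\log n}{\log(\Delta\cdot\log n)}\right)$ processors. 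Since the $G_i$'s are edge-disjoint, I then run the $\mathrm{PRAM}$ adaptation of Arb-Linial (the trade-off version of Corollary~\ref{An adaptation of [Lin87], a trade-off}/Corollary~\ref{Adaptation of [BE08], Algorithm Arb-Linial}) on all of them in parallel with parameter $z=\log\Delta$, producing an $O(A^2)=O(\Delta)$-vertex-coloring $\varphi_i$ of each $G_i$; the total cost is $O\!\left(\log\Delta\cdot\log^*n+\log\log n+\frac{\sqrt{\Delta}\cdot\log n}{\log\Delta\cdot\log(\Delta\cdot\log n)}\right)$ time and $O\!\left(m\cdot\left(\sqrt{\Delta}\cdot\log\Delta+\frac{\Delta^{1/4}\cdot\log n}{\log(\Delta\cdot\log n)}\right)\right)$ processors (the factor $\sqrt{\Delta}$ coming from there being $O(\sqrt{\Delta})$ subgraphs).

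Next I would carry out the $O(p)=O(\sqrt{\Delta})$ recoloring iterations into a single global palette of size $\Lambda=\Delta+(2A+1)\mu$, where $\mu$ is a prime with $\sqrt{\Delta}+\sqrt{c}\cdot\Delta/p<\mu\le 2(\sqrt{\Delta}+\sqrt{c}\cdot\Delta/p)$ for a fixed constant $c>9$; since $\mu=\Theta(\sqrt{\Delta})$ and $A=O(\sqrt{\Delta})$ we get $\Lambda=O(\Delta)$. I would reproduce Barenboim's correctness argument: in iteration $t$, each $v\in V(G_t)$ builds the $\mu$ degree-$2$ polynomials $p^{(\varphi_t(v))}_i(x)=i+a_v x+b_v x^2$ over $GF(\mu)$ (with $\langle a_v,b_v\rangle$ the representation of $\varphi_t(v)$ in a suitable base), picks the index $i$ minimizing $|L_i(v)|$ (so $|L_i(v)|\le\Delta/\mu$), and colors itself with the lexicographically smallest color generated by $p^{(\varphi_t(v))}_i$ that is used by no already-recolored neighbor and lies in no intersection with the $\le 2$ common points of each of its $\le A$ parents' chosen polynomials. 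Because $|\hat L_i(v)|\ge\mu-\Delta/\mu>2A$ such a color exists; because distinct $G_t$-neighbors own disjoint polynomial families and recolored neighbors are explicitly avoided the coloring remains proper; and because the selected $k$ is at most $2\Delta/p+\Delta/\mu$, the resulting color index is at most $\mu-1+\mu(2\Delta/p+\Delta/\mu)\le\Delta+\mu(2\Delta/p+1)=\Lambda$. For the $\mathrm{PRAM}$ cost of one iteration I would assign, for each incident edge $(v,u)$, $\mu$ processors indexed $0,\dots,\mu-1$: processor $i$ computes $\mathcal P_i(v)$ and subtracts recolored-neighbor colors to form $\hat L_i(v)$ in $O(1)$ time, and for each parent $u$ solves $p^{(\varphi_t(v))}_i(x)=p^{(\varphi_t(u))}_{j_u}(x)$ in $O(1)$ time to delete its $\le 2$ roots; selecting the minimizing index and then the smallest surviving $k$ costs $O(\log\Delta)$ time. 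Thus one iteration is $O(\log\Delta)$ time with $O(\mu\cdot m)=O(\sqrt{\Delta}\cdot m)$ processors, and all $O(\sqrt{\Delta})$ iterations cost $O(\sqrt{\Delta}\cdot\log\Delta)$ time and $O(\sqrt{\Delta}\cdot m)$ processors.

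Summing the two phases and absorbing $\log\log n$ and $\log\Delta\cdot\log^*n$ into the dominating terms yields time $O\!\left(\sqrt{\Delta}\left(\log\Delta+\frac{\log n}{\log\Delta\cdot\log(\Delta\cdot\log n)}\right)\right)$ and $O\!\left(m\left(\sqrt{\Delta}\cdot\log\Delta+\frac{\Delta^{1/4}\cdot\log n}{\log(\Delta\cdot\log n)}\right)\right)$ processors, as claimed. The main obstacle I anticipate is the per-iteration $\mathrm{PRAM}$ implementation: one must verify that all of $\mathcal P_i(v)$, $L_i(v)$, $\hat L_i(v)$, the index minimizing $|L_i(v)|$, and the deletions against up to $A=O(\sqrt{\Delta})$ parent polynomials can be handled concurrently within $O(\log\Delta)$ time using only $\mu=O(\sqrt{\Delta})$ processors per edge --- which works precisely because, once the minimizing index is chosen, the surplus $|\hat L_i(v)|>2A$ guarantees an available color without any search over the other $\mu-1$ indices. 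Setting up the inequalities $|L_i(v)|\le\Delta/\mu\le\sqrt{\Delta}$ and $\mu-\Delta/\mu>2\Delta/p$ carefully at the start is what makes this go through.
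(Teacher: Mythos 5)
Your proposal is correct and follows essentially the same route as the paper's own proof: the same two-phase structure (arbdefective $O(\sqrt{\Delta})$-coloring via the adaptation of~\cite{barenboim2018locally} with $\rho=\sqrt{\Delta}$, then the Linial/Arb-Linial trade-off with $z=\log\Delta$ to obtain $O(\Delta)$-colorings of the color classes), followed by the same $O(\sqrt{\Delta})$ polynomial-based recoloring iterations with palette $\Lambda=\Delta+(2A+1)\mu$, implemented with $\mu$ processors per edge at $O(\log\Delta)$ time per iteration, and the same final accounting of time and processors.
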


\subsection{Large Independent Set}\label{app: Large Independent Set}
We start by stating a result due to Goldberg and Spencer~\cite{goldberg1989constructing} that computes a maximal independent set in a simple graph.
\begin{lemma}[Maximal independent set algorithm~\cite{goldberg1989constructing}]\label{Maximal independent set algorithm}
    Let $G=(V,E)$ an $n$-vertex $m$-edge graph. There is a deterministic $\mathrm{EREW\,\, PRAM}$ algorithm that finds a maximal independent set in $G$ in $O\left(\log^3 n\right)$ time using $O\left(\frac{n+m}{\log n}\right)$ processors.
\end{lemma}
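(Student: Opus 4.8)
The plan is to invoke the parallel maximal-independent-set algorithm of Goldberg and Spencer~\cite{goldberg1989constructing} essentially verbatim; this lemma is a restatement of their main theorem, recorded here for self-containedness, and the ``proof'' amounts to citing it. For the reader's orientation I would first recall the standard reduction of the \textsc{MIS} problem to the task of repeatedly computing a \emph{large} independent set: starting from $G$, one finds an independent set $I$ together with its neighborhood $N(I)$ such that the subgraph induced on $V\setminus(I\cup N(I))$ contains at most a $(1-c)$-fraction of the edges (or vertices) of $G$, for some absolute constant $c>0$; one then adds $I$ to the output and recurses on the residual graph. Since the edge (or vertex) count shrinks by a constant factor each time, after $O(\log(n+m))=O(\log n)$ rounds of this outer loop the residual graph is empty and the accumulated set is a maximal independent set of $G$.

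The technical heart is the deterministic NC subroutine that performs one such round. Here I would follow Goldberg and Spencer, whose idea is to run (a version of) Luby's randomized step over a bounded-independence probability space of only polynomial size, and then to \emph{derandomize it by exhaustive parallel search} over the sample points: because there are $n^{O(1)}$ points, evaluating the ``quality'' (fraction of edges removed) of all of them can be organized as $O(\log n)$ levels of prefix-sum/sorting computations, so a good point is found in $O(\log^2 n)$ time; combined with the $O(\log n)$ outer rounds this yields $O(\log^3 n)$ time. All auxiliary primitives used — prefix sums, sorting of $O(m)$ items, list/graph compaction for forming the residual graph — admit $O(\log n)$-time \textsc{EREW PRAM} implementations, which is what keeps the model at \textsc{EREW} with no concurrent reads.

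To get the processor bound $O((n+m)/\log n)$ rather than the naive $O(n+m)$, I would appeal to the usual work-efficiency argument: each building block above, and hence the whole algorithm, does only $O((n+m)\log n)$ total work (matching its time–processor product), so by Brent's scheduling principle the computation can be simulated by $O((n+m)/\log n)$ processors within the same asymptotic running time. Goldberg and Spencer carry out exactly this bookkeeping, arranging the search over the sample space and the recompaction of the residual graph so that no phase rereads the entire edge set, which is precisely what the amortization needs.

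The step I expect to be the main obstacle — and the reason one really does want to cite~\cite{goldberg1989constructing} rather than reprove it — is this simultaneous optimization of time \emph{and} work inside the large-independent-set subroutine: making the exhaustive search over the (polynomially many) sample points and the compaction of the surviving subgraph both cost only $O((n+m)\log n)$ work in aggregate is delicate, and it is the content of their paper. For our purposes it suffices to quote their result in the form stated above.
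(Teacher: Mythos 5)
Your proposal is correct and matches the paper, which likewise treats this lemma as a black-box citation of Goldberg and Spencer~\cite{goldberg1989constructing} and gives no independent proof. (Your internal sketch of their technique — derandomizing a Luby-style step over a polynomial-size sample space — is not essential to the argument and need not be verified, since both you and the paper ultimately rely on quoting their stated time and processor bounds.)
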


For our purposes, a "large" independent set is sufficient (see Definition \ref{def: independent set}).
Any vertex-coloring algorithm can be used to compute such an independent set. Namely, since each color class in a proper vertex-coloring is an independent set, we can compute the largest color class in the $n$-vertex graph and return it. 
Denote the time required for computing a $\lambda$-vertex-coloring of an $n$-vertex $m$-edge graph with maximum degree $\Delta$ and arboricity $a$ by $VCT_{\lambda}(n,m,\Delta,a)$ and the number of processors it requires by $VCP_{\lambda}(n,m,\Delta,a)$.
We analyse the computation of a large independent set, using vertex-coloring algorithm, in the next lemma.

\begin{lemma}
    Let $G=(V,E)$ be an $n$-vertex $m$-edge graph with maximum degree $\Delta$ and arboricity $a$. An independent set $I$ of $G$ of size $|I|\geq\frac{n}{\lambda}$, i.e., a $\lambda$-large independent set, can be computed in $O(VCT_{\lambda}(n,m,\Delta,a)+\log n)$ time using $O(VCP_{\lambda}(n,m,\Delta,a)+m)$ processors.
\end{lemma}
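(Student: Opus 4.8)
The plan is to prove the lemma directly by exhibiting the reduction from large-independent-set computation to vertex-coloring. First I would invoke the assumed $\lambda$-vertex-coloring algorithm on the input graph $G$, obtaining in $VCT_{\lambda}(n,m,\Delta,a)$ time and with $VCP_{\lambda}(n,m,\Delta,a)$ processors a proper coloring $\psi : V \to \{1,2,\dots,\lambda\}$. Each color class $V_c = \{v \in V : \psi(v) = c\}$ is, by definition of a proper vertex-coloring, an independent set of $G$. Since the $\lambda$ color classes partition $V$, the pigeonhole principle guarantees that the largest color class has size at least $\lceil n/\lambda \rceil \ge n/\lambda$. Returning this largest color class therefore yields a $\lambda$-large independent set in the sense of Definition \ref{def: independent set}.

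The remaining step is to bound the cost of extracting the largest color class in $\mathrm{CRCW\ PRAM}$. I would assign a processor to every vertex and use a standard parallel aggregation: sort the vertices by their color $\psi(v)$ (or, equivalently, compute for each color $c$ the cardinality $|V_c|$ via a semigroup/prefix-sums computation over the array indexed by vertices), which takes $O(\log n)$ time using $O(n)$ processors, and then take the maximum over the $\lambda \le \mathrm{poly}(\Delta, a) \le \mathrm{poly}(n)$ counts, again in $O(\log n)$ time with $O(n)$ processors. Once the winning color $c^\ast$ is identified, each vertex processor checks in $O(1)$ time whether $\psi(v) = c^\ast$ and outputs $v$ into $I$ if so. Altogether this post-processing adds $O(\log n)$ time and $O(n) = O(m)$ processors (recall the standing assumption $m \ge n/2$, so $O(n)$ is absorbed into $O(m)$).

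Combining the two phases, the total running time is $O\big(VCT_{\lambda}(n,m,\Delta,a) + \log n\big)$ and the total number of processors is $O\big(VCP_{\lambda}(n,m,\Delta,a) + m\big)$, which is exactly the claimed bound. I do not expect a genuine obstacle here: the argument is a routine pigeonhole reduction together with elementary parallel primitives (sorting/prefix-sums and a parallel maximum), all of which run within the stated budgets in the $\mathrm{CRCW\ PRAM}$ model. The only point requiring a sentence of care is verifying that $\lambda$ is polynomially bounded so that the reduction over color counts genuinely costs $O(\log n)$ time and $O(n)$ processors rather than something depending badly on $\lambda$; this is immediate from the hypothesis that $\lambda$ is a polynomial function of $\Delta$ and $a$, each of which is at most $n$.
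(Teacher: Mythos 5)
Your proposal is correct and follows essentially the same route as the paper: run the assumed $\lambda$-vertex-coloring algorithm, note each color class is independent and the largest has size at least $n/\lambda$ by pigeonhole, and extract it by sorting/aggregating colors in $O(\log n)$ time with $O(m)$ processors. The extra care you take about $\lambda$ being polynomially bounded and $O(n)=O(m)$ is consistent with the paper's standing assumptions and does not change the argument.
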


\begin{proof}
    As was described above, the algorithm first computes a $\lambda$-vertex-coloring $\varphi$ of $G$, and then returns the largest color class. The computation of the largest color class can be done by sorting the vertices according to their color, and computing the range of indexes of each color in the sorted order. Hence, the whole process requires $O(VCT_{\lambda}(n,m,\Delta,a)+\log n)$ time using $O(VCP_{\lambda}(n,m,\Delta,a)+m)$ processors.
\end{proof}

Using the different vertex-coloring algorithms we presented in this section, we summarize the different bounds on the complexity of computing large independent sets that we get in the next theorem.

\independentSetAlg*

\end{document}